\newcommand{\defeq}{\vcentcolon=}
\newcommand{\scri}{\mathscr{I}}
\newcommand{\Lie}{\pounds}
\theoremstyle{definition}
\newtheorem{theorem}{Theorem}[section]
\newtheorem{lemma}[theorem]{Lemma}
\newtheorem{corollary}[theorem]{Corollary}
\newtheorem{note}{Note}
\theoremstyle{definition}
\newtheorem{definition}{Definition}
\theoremstyle{remark}
\newtheorem*{remark}{Remark}
\numberwithin{equation}{subsection}
\let\originalsection\section
\renewcommand{\section}{\chapter}
\renewcommand{\subsection}{\originalsection}
\numberwithin{equation}{section}
\begin{document}

\title{Asymptotic Expansions and Bondi Positivity in Higher Dimensional Relativity}
\subtitle{MPhil Thesis, 2013}
\author{Alexander Thorne
\thanks{School of Mathematics, Cardiff University, Senghennydd Road, Cardiff CF24 4AG, Wales, UK. \texttt{ThorneA@cardiff.ac.uk}}
}
\date{\today}
\maketitle


\begin{minipage}{\linewidth}

\begin{abstract}
The positivity of the Bondi mass has been proven in 4 dimensions, but in higher dimensions the issue remains open.
The formalism of the present paper has been developed to investigate this and is well suited to the higher dimensional case.
At null infinity, we make asymptotic expansions of the metric components in conformal Gaussian null coordinates, and use the vacuum Einstein equations to solve for the expansion coefficients.
We find simple coordinate formulae for the Bondi mass, news and flux in terms of the expansion coefficients of the metric components.
We also make expansions of the generator of an asymptotic symmetry and obtain expressions for its expansion coefficients in terms of those of the metric components.
We make a spinorial asymptotic expansion of a solution to the Dirac equation in the four dimensional case, and use it to give a clean proof that the Bondi mass is positive in four dimensions.
The generalisation to higher dimensions has been investigated, but not yet resolved: indeed, the spinor mass appears to be divergent in the higher dimensional case, because of terms which appear to be singular in the limit at infinity.
Such apparently singular terms are common in many of our calculations, but can generally be shown to be zero before the limit is taken.
However, it is not yet clear this can be done in the case of the spinor mass in higher dimensions, so further investigation is required.
\end{abstract}

\renewcommand{\abstractname}{Acknowledgements}
\begin{abstract}
I would like to thank my supervisor Prof. Stefan Hollands for his help and guidance throughout my course.
I would also like to thank the School of Mathematics, Cardiff University for financial support.
\end{abstract}

\end{minipage}

\tableofcontents

\iftoggle{complete}{
\subsection*{Issues}
Below is a list of issues still needing to be resolved.
\begin{itemize}
\item Understand and verify the definition of $\bar\psi$,
\begin{equation*}
\bar\psi = (-\psi^\star_-, \psi^\star_+)
\end{equation*}
\item Verify that $\nabla_a\bar\psi = \bar\nabla_a\psi$ for an arbitrary derivative operator.
\item Why do we need to impose $N^a\nabla_a\psi=0$ to relate $\nabla_r\psi$ to $\nabla_u\psi$?
Shouldn't setting $u=\frac{1}{2}r$ inherently relate the derivaties (at least the ordinary partial derivatives $\partial_r$ and $\partial_u$? 
Is $N^a\nabla_a\psi=0$ in fact just the generalisation of the conventional relationship between derivatives on a hypersurface to a curved space?
\item Check in coordinates that calculating $\nabla_a\xi^a$ yields the same result whether calculated in spinor or spinor free notation. Check this bearing in mind the latest corrections to the spinor free method (addition of missing Christoffel symbols).
\item Derive expression for connection coefficients $\omega_a$
\end{itemize}
}{}

\iftoggle{complete}{
\subsection*{Reading}
Below are listed the books and papers I have used during my course.

\subsubsection*{Papers}
\begin{itemize}
\item ``GR in the radiating regime'', Chrusciel, Franendiener, Friedrich
\end{itemize}

\subsubsection*{Books}
}{}

\iftoggle{complete}{
\subsection*{Uncertainty}
Different colours are used to mark varying levels of uncertainty in calculations and arguments.

\begin{center}
\begin{tabular}{ l | l }
Colour & Meaning\\
\hline
\color{gray}{gray} & Initial calculation, unchecked, highly uncertain\\
\color{green}{green} & Checked once with minor or no corrections, quite uncertain\\
\color{blue}{blue} & Checked several times with minor or no corrections, quite confident\\
\color{black}{black} & Checked extensively, with no corrections or mistakes found, highly confident\\
\color{orange}{orange} & Major corrections applied at least once, highly uncertain\\
\color{red}{red} & Contradictions found or major corrections ongoing (almost) certainly incorrect\\
\end{tabular}
\end{center}
}{}

\section{Introduction}
There are two measures of the total energy of an asymptotically flat spacetime.
The ADM (Arnowitt-Deser-Misner) mass \cite{Arnowitt} is a number defined at spatial infinity and represents the total energy of an idealised isolated system.
This quantity is readily generalised to the higher dimensional case.
The positivity of the ADM mass is a well established result \cite{Schoen79a,Witten81,Schoen79b,Schoen81}.

The Bondi mass \cite{Bondi62} is defined on cross sections of null infinity; it is a function of time in that it depends on the particular choice of cross section.
Intuitively, the Bondi mass represents the energy of an isolated system at a particular instant of time, after some energy has been lost by gravitational radiation through null infinity.
The positive mass conjecture for the Bondi mass states that the Bondi mass of a spacetime is positive on every cross section of null infinity.
Intuitively, this means that the energy of the isolated system does not become negative at any future time, i.e. more energy cannot be radiated away than was specified by the initial ADM mass of the system.
This has been proven in four dimensions \cite{Schoen82,Horowitz82,Ludvigsen81,Ludvigsen82},
but has not yet been generalised to the higher dimensional case.
Indeed, until recently, there was not even an expression for the Bondi mass in a higher dimensional asymptotically flat spacetime \cite{Hollands:2003ie}, see also \cite{Tanabe:2011es,Tanabe:2009va}.

The formalism of this paper has been developed to investigate the positive mass conjecture in even dimension\footnote{Odd spacetime dimensions $d$ are not considered since Penrose's framework of conformal null infinity requires $d$ even \cite{Hollands:2004ac}.} $d\geq4$.
A major difficulty in general is to obtain a systematic expansion of the Witten spinor, but the formalism here is well suited to this issue.
Indeed we can obtain the expansion of such a spinor in any even dimension $d\geq4$, though in higher dimensions the positivity result itself has not yet been settled.
In the four dimensional case, the spinorial expansions lead to a clean proof of positivity, given the global existence of the Witten spinor.
In higher dimensions, however, terms which are singular at null infinity arise in the limit for the spinor mass, causing it to be divergent, and the proof to fail.
If this is not due to a calculation error, then it would seem that a spinorial proof in higher dimensions is not possible.
This may indicate that the Bondi mass positivity result simply does not hold in the higher dimensional case.
On the other hand, it is possible that corrections to the calcultions will resolve the issue and the four dimensional proof will indeed generalise.

This paper does not address the existence of the Witten spinor, but focuses instead on determining its asymptotics under the assumption that a suitable spinor does exist.
Since our slice has the same geometry as in the AdS case \cite{Hertog:2005hm}, the existence proof is likely to be similar.
This has not, however, been considered in any detail here.

In the first half of this paper, the main goal is to obtain a simple coordinate expression for the Bondi mass in higher (even) dimensions.
Once the definition of asymptotic flatness has been settled, we construct Gaussian null coordinates \cite{Hollands:2006rj,Penrose} at null infinity and make asymptotic expansions of the metric components.
A major part of the analysis is to solve recursively for these expansions using the vacuum Einstein equations.
To ensure that the recursion can proceed, we must impose suitable conditions on the low order metric expansion coefficients (these conditions form part of our definition of asymptotic flatness).
By adapting the corresponding formulae of \cite{Hollands:2003ie} to our gauge, we arrive at suitable definitions of the Bondi mass, news and flux. We are then able to obtain simple coordinate expressions for these quantities using the asymptotic expansions of the metric coefficients.

In the second half of the paper, the main goal is to provide a spinorial proof of the positivity of the Bondi mass in four dimensions.
We make an asymptotic expansion of a solution to the Dirac equation and define the ``spinor mass'' \cite{Hollands:2005wt}, which can be shown to be positive by a standard argument.
The asymptotic expansion of the spinor leads to a coordinate expression for the spinor mass.
Since we now have coordinate formulae for both the Bondi mass and spinor mass, it is easy to see that they are equal, and therefore that the Bondi mass is positive.
The paper concludes with a discussion of the remaining open issues in the generalisation to higher dimensions, how they may be resolved and the consequences should this prove impossible.

\section{Asymptotic Flatness}
\label{sec:asymptotic-flatness}
Having defined asymptotic flatness in higher even dimension $d\geq4$, Gaussian null coordinates \cite{Hollands:2006rj,Penrose} are constructed at future null infinity.
Asymptotic expansions of the metric components are then made and conditions imposed on the lowest order expansion coefficients.

\subsection{Weak Asymptotic Flatness}
\label{sec:asymptotic-flatness-weak}
In this section, we introduce the  notion of ``weak'' asymptotic flatness. We specify the basic smoothness properties required of an asymptotically flat spacetime and the conformal transformation used to map to the unphysical spacetime.

\begin{definition}
\label{def:weak-asymptotic-flatness}
Let $g_{ab}$ be a Lorentzian metric on manifold $M$ of even dimension $d\geq4$. 
A conformal transformation with conformal factor $\Omega$ maps the ``physical'' spacetime $(M,g_{ab})$ to the ``unphysical'' spacetime ($\tilde{M},\tilde{g}_{ab})$ where $\tilde{g}_{ab}$ is given by $g_{ab}=\Omega^{-2} \tilde{g}_{ab}$.
We say the spacetime is \emph{weakly asymptotically flat} if we can attach a boundary $\scri$ to $M$ such that
\begin{enumerate}[(i)]
\item $\tilde{g}_{ab}$ is smooth on $\tilde{M}$ (up to and including $\scri$) ,
\item $\Omega$ is smooth on $\tilde{M}$ ,
\item $\Omega = 0$ and $\tilde\nabla_a\Omega \neq 0$ at $\scri$ .
\end{enumerate}
\end{definition}
\begin{remark}
It was shown in \cite{Anderson:2004pz} that there exists a large class of spacetimes which are weakly asymptotically flat under this definition.
\end{remark}
\begin{remark}
Throughout the paper, tensor quantities associated with the unphysical spacetime are marked with a tilde $\tilde{}\ $.
Indices on physical tensors (i.e. without a tilde) are raised and lowered with the physical metric $g_{ab}$, whereas indices on unphysical tensors (i.e. with a tilde) are raised and lowered with the unphysical metric $\tilde{g}_{ab}$.
\end{remark}
\begin{theorem}
If the spacetime is vacuum, $R_{ab}=0$, in a neighbourhood of null infinity, $\mathscr{I}$, then $\mathscr{I}$ can be split into two disjoint sets, $\scri^+$ and $\scri^-$, which we call future null infinity and past null infinity respectively.
\end{theorem}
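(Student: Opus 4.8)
The plan is to deduce from the vacuum equations that $\scri$ is a \emph{null} hypersurface whose null normal is $\tn_a\Omega$, and then to partition $\scri$ according to the time orientation of that normal.

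First I would use the conformal transformation law relating the physical Ricci tensor $R_{ab}$ to $\tilde R_{ab}$, $\Omega$ and its $\tn$-derivatives; for $g_{ab}=\Omega^{-2}\tilde g_{ab}$ in dimension $d$ this formula contains terms of order $\Omega^{-1}$ and $\Omega^{-2}$, the most singular being one proportional to $\tilde g_{ab}\,\tilde g^{cd}\tn_c\Omega\,\tn_d\Omega$. Multiplying $R_{ab}=0$ through by $\Omega^2$ and using that $\tilde g_{ab}$ and $\Omega$ are smooth up to and including $\scri$ (Definition~\ref{def:weak-asymptotic-flatness}), every term carries an explicit factor of $\Omega$ except that one, so on $\scri=\{\Omega=0\}$ it alone survives; contracting with $\tilde g^{ab}$ then gives $\tilde g^{cd}\tn_c\Omega\,\tn_d\Omega=0$ on $\scri$. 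Together with $\tn_a\Omega\neq 0$ there (condition~(iii)), this says the conormal $n_a:=\tn_a\Omega$ is null and nowhere vanishing on $\scri$, i.e.\ $\scri$ is a null hypersurface. I would also record the elementary consequences for $n^a:=\tilde g^{ab}n_b$: it is nowhere zero ($\tilde g$ being non-degenerate); it is tangent to $\scri$, since the tangent space of $\{\Omega=0\}$ is $\ker n_a$ and $n^a n_a=\tilde g^{cd}\tn_c\Omega\,\tn_d\Omega=0$ on $\scri$; and a one-line computation gives $n^b\tn_b n^a=\tfrac12\tn^a\big(\tilde g^{cd}\tn_c\Omega\,\tn_d\Omega\big)$, whose restriction to $\scri$ is proportional to $n^a$ because $\tilde g^{cd}\tn_c\Omega\,\tn_d\Omega$ vanishes there. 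So the orbits of $n^a$ are (non-affinely parametrised) null geodesics ruling $\scri$, which is what entitles us to call $\scri$ ``null infinity.''

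For the splitting, I would assume the spacetime is time-orientable in a neighbourhood of $\scri$, so that the nowhere-vanishing causal vectors fall into exactly two classes, future- and past-directed, these being the two connected components of the causal cone minus its vertex. On a connected component $C$ of $\scri$ the map $p\mapsto n^a(p)$ is continuous and avoids both the zero section and the spacelike region, so its image is connected and lies entirely in one class: $n^a$ is everywhere future-directed on $C$, or everywhere past-directed. I would then set $\scri^+$ to be the union of the components on which $n^a$ is future-directed and $\scri^-$ the union of the rest; by construction these are disjoint, relatively open, and cover $\scri$. Fixing the usual sign $\Omega>0$ on $M$, one checks that $\Omega$ decreases along a future-directed curve approaching $\scri^+$, so that $n^a=\tn^a\Omega$ is future-directed on $\scri^+$, confirming that this labelling agrees with the conventional meaning of future and past null infinity.

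The genuinely computational step is the conformal-Ricci manipulation that isolates $\tilde g^{cd}\tn_c\Omega\,\tn_d\Omega=0$ on $\scri$; there one must keep careful track of the $\Omega^{-1}$ and $\Omega^{-2}$ contributions, which is exactly where the smoothness of $\tilde g_{ab}$ and $\Omega$ up to $\scri$ is used. The one conceptual point to be careful about is the final paragraph's reliance on time-orientability near $\scri$: without it one still obtains a canonical two-colouring of the connected components of $\scri$ by the character of their generators, but not a canonical identification of which colour deserves the name $\scri^+$.
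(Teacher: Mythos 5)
Your proof is correct and, for the key analytic step, follows the same route as the paper: both arguments extract the condition $\tilde g^{cd}(\tn_c\Omega)\tn_d\Omega=0$ on $\scri$ from the most singular ($\Omega^{-2}$) term of the conformally transformed vacuum Ricci equation, concluding that $n_a=\tn_a\Omega$ is a nowhere-vanishing null conormal and hence that $\scri$ is a null hypersurface (the paper simply cites Wald, pp.~278--9, for this computation, which you have written out, together with the tangency and geodesic properties of $n^a$). Where you genuinely diverge is in the splitting. The paper invokes the Hawking--Ellis result that, near $\scri$, the unphysical spacetime lies entirely to the past or to the future of its null boundary, obtaining the decomposition from which side future- and past-directed geodesics reach their endpoints, and also the stronger fact that $\scri$ consists of exactly two connected components. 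You instead observe that the nowhere-vanishing null vector $n^a=\tilde g^{ab}\tn_b\Omega$ must, by continuity and time-orientability, be uniformly future- or past-directed on each connected component of $\scri$, and define $\scri^{\pm}$ as the corresponding unions of components. Your argument is more elementary and self-contained, and it establishes everything the literal statement of the theorem asserts (a disjoint decomposition, with the sign check via $\Omega$ decreasing along future-directed curves confirming the conventional labelling); what it does not deliver, and what the paper's citation does, is the connectedness of $\scri^+$ and $\scri^-$ individually. Your explicit flagging of time-orientability near $\scri$ as the hypothesis needed to name the two pieces is a point the paper leaves implicit.
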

\begin{proof}
Null infinity $\mathscr{I}$ is a surface of constant $\Omega$, so its normal is $n^a = g^{ab}\nabla_b\Omega$.
It follows from the vacuum Einstein equations that this normal is null \cite[pp.~278-9]{Wald}, and therefore $\mathscr{I}$ is a null surface.
Thus $\mathscr{I} = \partial\tilde{M}$ is the null boundary of $\tilde{M}$, and it follows \cite[pp.~222-3]{Hawking} that, near $\mathscr{I}$, $\tilde{M}$ is either in the past or the future of $\partial\tilde{M}$.
This means that $\mathscr{I}$ can be split into two disjoint sets, future null infinity $\mathscr{I}^+$ and past null infinity $\mathscr{I}^-$, on which future directed and past directed null geodesics respectively have their endpoints.
It can also be argued \cite[p.~223]{Hawking} that $\mathscr{I}$ consists of only the two connected sets $\mathscr{I}^+$ and $\mathscr{I}^-$.
\end{proof}
\begin{remark}
Throughout the remainder of the paper, calculations are performed at future null infinity, $\scri^+$.
However, analogous calculations could be performed at past null infinity, $\scri^-$.
\end{remark}

\subsection{Gaussian Null Coordinates}
\label{sec:asymptotic-flatness-coordinates}
In this section we construct so called ``Gaussian null coordinates'' \cite{Hollands:2006rj,Penrose} from null geodesics at future null infinity.
We choose these coordinates and fix our gauge in such a way that the metric takes on a relatively simple form.

\begin{theorem}
Future null infinity, $\scri^+$, is a null surface with normal $n^a = \tilde{g}^{ab}\tilde\nabla_b\Omega$.
The null geodesics generated by $n^a$ on $\scri^+$ are affinely parameterised and non intersecting.
\end{theorem}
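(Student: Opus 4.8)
The first sentence is essentially a restatement of the previous theorem. Since $\scri^+$ is a connected component of $\scri=\{\Omega=0\}$, its conormal is the gradient $\tilde\nabla_a\Omega$, and the corresponding normal vector is $n^a=\tilde g^{ab}\tilde\nabla_b\Omega$; that this vector is null on $\scri$ — which is exactly what makes $\scri^+$ a null hypersurface — is precisely the information the vacuum equations $R_{ab}=0$ supply. Concretely, multiplying the $d$-dimensional conformal transformation law for the Ricci tensor by $\Omega^2$ and evaluating on $\scri$, the only surviving term is proportional to $\bigl(\tilde g^{cd}\tilde\nabla_c\Omega\,\tilde\nabla_d\Omega\bigr)\tilde g_{ab}$, which forces $n^a n_a=0$ on $\scri$; I would simply cite the previous theorem (and \cite[pp.~278-9]{Wald}) for this part.

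For the geodesic property, the plan is to exploit that $n_a=\tilde\nabla_a\Omega$ is a gradient, so $\tilde\nabla_a n_b$ is symmetric and hence
\[
  n^b\tilde\nabla_b n^a \;=\; n^b\tilde\nabla^a n_b \;=\; \tfrac12\,\tilde\nabla^a\!\left(n^b n_b\right).
\]
Because $n^b n_b$ vanishes on all of $\scri^+$, the restriction of $\tilde\nabla_a(n^b n_b)$ to $\scri^+$ is conormal to $\scri^+$ and therefore proportional to $n_a$; writing $\tilde\nabla_a(n^b n_b)=2\kappa\,n_a$ on $\scri^+$ gives $n^b\tilde\nabla_b n^a=\kappa\,n^a$ there. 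This is the pre-geodesic equation, so the integral curves of $n^a$ lying in $\scri^+$ are null geodesics up to reparametrisation. To make the parametrisation affine I would invoke the residual conformal freedom $\Omega\mapsto\omega\Omega$ with $\omega$ smooth and positive, which is still admissible under Definition~\ref{def:weak-asymptotic-flatness}. Under this rescaling $n_a\mapsto\omega n_a$ on $\scri^+$, and a short computation gives the inhomogeneous transformation $\kappa\mapsto\omega\kappa+n^b\tilde\nabla_b\omega$ on $\scri^+$. The equation $n^b\tilde\nabla_b\log\omega=-\kappa$ is a linear first-order ODE along each generator of $\scri^+$ and can be integrated from arbitrary data on a cross section; with this choice of conformal gauge $\kappa=0$, i.e. $n^b\tilde\nabla_b n^a=0$ on $\scri^+$, so the generators are affinely parametrised.

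Non-intersection is then a soft statement: $n^a$ is smooth and nowhere vanishing on $\scri^+$ — nowhere vanishing because $\tilde\nabla_a\Omega\neq0$ — so by uniqueness of solutions to ODEs its integral curves are pairwise disjoint, and each is either injective or a closed orbit; the closed case is ruled out by the global structure of $\scri^+$ (topology $\mathbb{R}\times S^{d-2}$, with the $\mathbb{R}$-factor running along the generators), so no generator self-intersects either. I expect the only genuinely delicate point to be the affine-parametrisation step: one must check that the conformal freedom is exactly enough and that the transport ODE for $\omega$ is globally solvable along the generators, which quietly rests on their completeness and on the global structure of $\scri^+$. The nullness is imported wholesale from the vacuum equations via the previous theorem, and the geodesic identity is a one-line computation, so everything else is routine.
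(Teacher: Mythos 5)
Your argument for nullness and for affine parametrisation is, in substance, exactly the argument the paper delegates to \cite[pp.~278-9]{Wald}: the gradient structure of $n_a=\tilde\nabla_a\Omega$ gives $n^b\tilde\nabla_b n_a=\tfrac12\tilde\nabla_a(n^bn_b)$, the vanishing of $n^bn_b$ on $\scri$ makes this proportional to $n_a$, and the residual freedom $\Omega\mapsto\omega\Omega$ is used to integrate away the inaffinity $\kappa$ along each generator. So for those two claims you have simply unpacked the citation that constitutes the paper's entire proof; that is fine (the exact coefficient in your transformation law for $\kappa$ should be checked --- with $n'^a=\omega^{-1}n^a$ on $\scri$ one expects an overall factor of $\omega^{-1}$ --- but the structure, affine in $\omega$ with inhomogeneous term $n^b\tilde\nabla_b\omega$, is what matters for solvability of the transport equation).

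Where you genuinely diverge is the non-intersection claim. The paper's remark derives it from the fact, established in Wald's proof, that the congruence has vanishing expansion, shear and twist, so that the generators do not focus; you instead argue that distinct integral curves of the smooth, nowhere-vanishing field $n^a$ are disjoint by ODE uniqueness. The ODE-uniqueness half of your argument is sound and more elementary, and for the literal statement of the theorem it suffices. But your exclusion of closed generators by appeal to ``the global structure of $\scri^+$ (topology $\mathbb{R}\times S^{d-2}$)'' is not available here: the paper never establishes that topology, works only in a neighbourhood of $\scri^+$ (see the footnote to Theorem \ref{thm:gaussian-null-coordinates}), and in higher dimensions the cross sections carry a general Einstein metric $s_{AB}$ rather than the round sphere, so you are importing a global hypothesis the framework does not supply. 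The expansion/shear/twist route also buys something your argument does not: it shows the degenerate induced metric is Lie-dragged along the generators, which is what actually licenses the later construction in which the coordinates $x^A$ and the metric $s_{AB}$ are propagated along the geodesics of $n^a$. If you keep your route, either restrict the non-intersection claim to the local setting the paper actually uses, or supply the zero-expansion argument to control the congruence globally along the generators.
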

\proof{Given in \cite[pp.~278-9]{Wald}}
\begin{remark}
The proof in \cite{Wald} shows that the geodesics have zero expansion, shear and twist \cite[p.~217]{Wald}.
This is why the last part of the theorem (that the geodesics do not intersect on $\scri^+$) is true.
\end{remark}

\begin{theorem}
\label{thm:gaussian-null-coordinates}
Let $(M,g_{ab})$ be a weakly asymptotically flat spacetime with corresponding unphysical spacetime $(\tilde{M},\tilde{g}_{ab})$.
We can construct Gaussian null coordinates $r,u,x^A$ at future null infinity\footnote{The coordinates are constructed from null geodesics and thus do not exist globally in general since these geodesics may overlap. We are only concerned with asymptotic expansions, however, so coordinates defined locally at future null infinity are sufficient.}
such that the physical metric takes the form
\begin{equation}
\label{eq:metric-gaussian-null-coordinates-nogauge}
g_{ab} = \Omega^{-2}\tilde{g}_{ab} 
       = \Omega^{-2}[2(dr_{(a} - \alpha du_{(a} - \beta_A dx^A_{(a})du_{b)} + \gamma_{AB} dx^A_a dx^B_b]\quad ,
\end{equation}
where $\alpha$, $\beta_A$ and $\gamma_{AB}$ are functions of $r,u,x^A$ defined on $\tilde{M}$ such that on $\scri^+$,
\begin{equation}
\label{eq:gnc-conditions-scri}
\alpha = 0 \ , \quad \beta_A = 0 \quad .
\end{equation}
\end{theorem}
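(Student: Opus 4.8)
The plan is the standard two-step construction of Gaussian null coordinates adapted to $\scri^+$: first lay down coordinates on $\scri^+$ itself using its null generators, then propagate them into $\tilde M$ along the transverse null geodesics, and finally read off the metric components and check the boundary conditions.

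First I would pick a spacelike $(d-2)$-dimensional cross section $\Sigma\subset\scri^+$ transverse to the null generators of $\scri^+$ — these are the integral curves of $n^a=\tilde g^{ab}\tilde\nabla_b\Omega$, which by the preceding theorem are affinely parametrised and non-intersecting — equip $\Sigma$ with arbitrary local coordinates $x^A$, and let $u$ be the affine parameter along the generator through each point of $\Sigma$, normalised to $u=0$ on $\Sigma$, with $x^A$ held constant along generators. Since the generators do not intersect, $(u,x^A)$ are good coordinates on a neighbourhood of $\Sigma$ in $\scri^+$ and $\partial_u=n^a$ there. Next, at each point of $\scri^+$ there is a unique null direction $\ell^a$ transverse to $\scri^+$ fixed by the normalisations $\ell^a n_a=1$ and $\ell_a(\partial_{x^A})^a=0$ (an elementary null-frame linear algebra computation, using that the ``screen space'' spanned by the $\partial_{x^A}$ is spacelike, so $\ell^a$ is determined up to a multiple of $n^a$ by the last two conditions and then pinned down by nullity). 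I would then shoot the affinely parametrised null geodesic with initial tangent $\ell^a$ out of each point, call its affine parameter $r$ with $r=0$ on $\scri^+$, and extend $u,x^A$ off $\scri^+$ by holding them constant along these geodesics. For small $r$ this is a diffeomorphism onto a one-sided neighbourhood of a piece of $\scri^+$ in $\tilde M$, giving the coordinates $(r,u,x^A)$; only such a neighbourhood is needed, so no global issue arises.

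It remains to compute $\tilde g$ in these coordinates. By construction $\partial_r$ is a geodesic vector field, $\tilde\nabla_{\partial_r}\partial_r=0$, and coordinate vector fields commute, so the usual Gauss-lemma manipulation gives $\partial_r\tilde g_{rr}=0$, $\partial_r\tilde g_{ru}=\tfrac12\partial_u\tilde g_{rr}$ and $\partial_r\tilde g_{rA}=\tfrac12\partial_{x^A}\tilde g_{rr}$. Hence $\tilde g_{rr}$ is constant along the transverse geodesics and vanishes at $r=0$ because $\ell^a$ is null, so $\tilde g_{rr}\equiv0$; then $\tilde g_{ru}$ and $\tilde g_{rA}$ are constant along those geodesics too, and their values $n_a\ell^a=1$ and $\ell_a(\partial_{x^A})^a=0$ at $r=0$ propagate, giving $\tilde g_{ru}\equiv1$, $\tilde g_{rA}\equiv0$. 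The remaining components are free functions of $(r,u,x^A)$, which I name $\alpha\defeq-\tfrac12\tilde g_{uu}$, $\beta_A\defeq-\tilde g_{uA}$, $\gamma_{AB}\defeq\tilde g_{AB}$; multiplying by $\Omega^{-2}$ reproduces \eqref{eq:metric-gaussian-null-coordinates-nogauge}. Finally, on $\scri^+$ the covector $n_a=\tilde\nabla_a\Omega$ is both null and normal to the level set $\{\Omega=0\}=\scri^+$, so contracting with the $\scri^+$-tangent coordinate vectors $\partial_u=n^a$ and $\partial_{x^A}$ gives $\tilde g_{uu}|_{\scri^+}=n^a n_a=0$ and $\tilde g_{uA}|_{\scri^+}=\partial_{x^A}\Omega|_{\scri^+}=0$, i.e. $\alpha=\beta_A=0$ on $\scri^+$.

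The construction is routine, so I expect the only genuinely delicate points to be: the linear-algebra step that singles out the transverse null field $\ell^a$ and shows it depends smoothly on the base point; and the Gauss-lemma bookkeeping establishing that $\tilde g_{ru}$ and $\tilde g_{rA}$ take their claimed values \emph{throughout} a neighbourhood of $\scri^+$, not merely on it. One also has to confirm that the propagation map is non-degenerate for small $r$, but since only Taylor-in-$r$ expansions near $\scri^+$ are required this reduces to an elementary Jacobian check at $r=0$.
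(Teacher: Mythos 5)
Your proposal is correct and follows essentially the same route as the paper: coordinates $(u,x^A)$ from the null generators of $\scri^+$, a transverse null vector $l^a$ fixed by $\tilde g_{ab}l^an^b=1$ and $\tilde g_{ab}l^ap_A^b=0$, propagation along its affinely parametrised geodesics with parameter $r$, and preservation of the gauge conditions via the geodesic equation together with the vanishing commutators of coordinate vector fields. Your ``Gauss-lemma'' identities $\partial_r\tilde g_{rr}=0$, $\partial_r\tilde g_{ru}=\tfrac12\partial_u\tilde g_{rr}$, $\partial_r\tilde g_{rA}=\tfrac12\partial_A\tilde g_{rr}$ are just the coordinate form of the paper's computation of $l^a\tilde\nabla_a(\tilde g_{bc}l^bn^c)$ and $l^a\tilde\nabla_a(\tilde g_{bc}l^bp_A^c)$.
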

\begin{proof}
Denote the affine parameter on geodesics of $n^a$ by $u$, so that
\begin{equation}
n^a = (\partial_u)^a \equiv \left(\frac{\partial}{\partial u}\right)^a
\end{equation}
is a coordinate vector.
Choose a cross section $\Sigma(0,0)$ of $\scri^+$ on which we set $u=0$ and choose $(d-2)$ ``angular'' coordinates\footnote{In Minkowski and Schwarzschild, these coordinates are the angular coordinates on a $(d-2)$ sphere. They are not angular coordinates in general, but for convenience we refer to them as ``angular''.} $x^A$ with corresponding coordinate vector fields
\begin{equation}
p_A^a = (\partial_A)^a \equiv \left(\frac{\partial}{\partial x^A}\right)^a \quad .
\end{equation}
The coordinates $x^A$ are extended to $\scri^+$ by parallel transport along the geodesics of $n^a$.
Since the vector fields $p_A^a$ are tangent to $\scri^+$ and $n^a$ is its normal,
\begin{equation}
\label{eq:n-pA-orthogonal}
\tilde{g}_{ab}n^a p_A^b = 0 \quad .
\end{equation}

\noindent
Let $l^a$ be a null vector on $\scri^+$ such that
\begin{equation}
\label{eq:gnc-conditions-l}
\tilde{g}_{ab} l^a n^b = 1 \ , \quad \tilde{g}_{ab} l^a p_A^b = 0 \quad .
\end{equation}
Such a vector exists and is unique
since it has $d$ components and we have imposed $d$ conditions.
The vector $l^a$ may be parallel transported along itself,
\begin{equation}
l^a\tilde\nabla_a l^b = 0 \quad ,
\end{equation}
to the rest of $\tilde{M}$ and it is null everywhere since
\begin{equation}
l^a \tilde\nabla_a ( \tilde{g}_{bc} l^b l^c ) = 2\tilde{g}_{bc} l^c l^a \tilde\nabla_a l^b = 0 \quad .
\end{equation}
We choose the affine parameter $r$ along geodesics of $l^a$ as a coordinate, so that
\begin{equation}
l^a = (\partial_r)^a \equiv \left(\frac{\partial}{\partial r}\right)^a
\end{equation}
is a coordinate vector, and set $r=0$ on $\scri^+$.
We have thus constructed coordinates $(r,u,x^A)$ on $\tilde{M}$.
They may in fact only be well defined in a neighbourhood of $\scri^+$, since the geodesics of $l^a$ may intersect globally.
This does not affect our calculations, however, and subsequently we shall say ``throughout $\tilde{M}$'' when we mean ``in a neighbourhood of $\mathscr{I}^+$''.
%
%

Finally, it follows from the nullness of $l^a$ that throughout $\tilde{M}$,
\begin{align}
l^a \tilde\nabla_a ( \tilde{g}_{bc} l^b n^c )
&= \tilde{g}_{bc} n^c l^a \tilde\nabla_a l^b + \tilde{g}_{bc} l^b l^a \tilde\nabla_a n^c \nonumber \\
&= \tilde{g}_{bc} l^b n^a \tilde\nabla_a l^c \quad , \ \text{since} \ [n^a, l^b] = 0 \nonumber \\
&= \frac{1}{2} n^a \tilde\nabla_a ( \tilde{g}_{bc} l^b l^c ) = 0 \quad , \\
l^a \tilde\nabla_a ( \tilde{g}_{bc} l^b p_A^c )
&= \tilde{g}_{bc} p_A^c l^a\tilde\nabla_a l^b + \tilde{g}_{bc} l^a l^b \tilde\nabla_a p_A^c \nonumber \\
&= \tilde{g}_{bc} l^b p_A^a \tilde\nabla_a l^c \quad , \ \text{since} \ [p_A^a, l^b] = 0 \nonumber \\
& = \frac{1}{2} p_A^a \tilde\nabla_a ( \tilde{g}_{bc} l^b l^c ) = 0 \quad .
\end{align}
Thus, conditions (\ref{eq:gnc-conditions-l}) hold throughout $\tilde{M}$ and the metric takes on the desired form.
It follows from the nullness of $n^a$ on $\scri^+$ that
\begin{equation}
\alpha = -\frac{1}{2}\tilde{g}_{ab} n^a n^b = 0 \quad ,
\end{equation}
on $\scri^+$,
and from equation (\ref{eq:n-pA-orthogonal}) that
\begin{equation}
\beta_A = -\tilde{g}_{ab} n^a p_A^b =0 \quad 
\end{equation}
on $\scri^+$,
where it was used that $n^a=(\partial_u)^a$ on $\mathscr{I}^+$.
\end{proof}
\begin{theorem}
\label{thm:gaussian-null-coordinates-gauge}
The conformal factor and coordinates $r,u,x^A$ can be chosen to give so called ``conformal Gaussian null coordinates'' \cite{Ishibashi:2007kb} in which the conformal factor is $\Omega=r$ and the metric takes the form
\begin{equation}
g_{ab} = r^{-2}\tilde{g}_{ab} = r^{-2}[2(dr_{(a} - \alpha du_{(a} - \beta_A dx^A_{(a})du_{b)} + \gamma_{AB} dx^A_a dx^B_b] \quad .
\end{equation}
\end{theorem}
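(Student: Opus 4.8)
The plan is to use up the residual conformal freedom. If $(\Omega,\tilde g_{ab})$ is admissible then so is $(\omega\Omega,\omega^{2}\tilde g_{ab})$ for any smooth $\omega$ that is positive on $\scri^{+}$: weak asymptotic flatness survives because $\tilde\nabla_{a}(\omega\Omega)=\omega\tilde\nabla_{a}\Omega\neq 0$ on $\scri^{+}$, and the physical metric $g_{ab}$ is unchanged so the field equations are untouched. Starting from the Gaussian null coordinates of Theorem~\ref{thm:gaussian-null-coordinates} we have $\scri^{+}=\{r=0\}$, and since $n^{a}=(\partial_{u})^{a}$ there, reading off $\tilde g_{au}$ from the metric (\ref{eq:metric-gaussian-null-coordinates-nogauge}) with $\alpha=\beta_{A}=0$ gives $\tilde\nabla_{a}\Omega=(dr)_{a}$ on $\scri^{+}$, so $\partial_{r}\Omega=1$ on $\scri^{+}$ and $\Omega$ has a simple zero along every generator of $l^{a}$. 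The idea is to choose $\omega$ so that the rescaled factor $\hat\Omega:=\omega\Omega$ coincides with the affine parameter of the null congruence obtained when the construction of Theorem~\ref{thm:gaussian-null-coordinates} is re-run with the data $(\omega^{2}\tilde g_{ab},\hat\Omega)$. That re-run again produces the Gaussian null form with $\hat\alpha=\hat\beta_{A}=0$ on $\scri^{+}$, and by construction the new affine radial coordinate then equals $\hat\Omega$; relabelling it $r$ gives $g_{ab}=r^{-2}\tilde g_{ab}$ of the asserted shape.

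To pin $\omega$ down I would first record how the congruence transforms. Since $\Omega=0$ on $\scri^{+}$ one computes $\hat n^{a}=\omega^{-1}n^{a}$ there, and then nullity together with $\hat g(\hat l,\hat n)=1$ and $\hat g$-orthogonality of $\hat l$ to the cross sections of $\scri^{+}$ force $\hat l^{a}=\omega^{-1}l^{a}$ on $\scri^{+}$. Because null geodesics are conformally invariant as unparametrised curves, the generators themselves are unchanged; only their affine parametrisation changes, with $d\hat r/dr=\omega^{2}/\omega|_{\scri^{+}}$ along each one. The requirement $\hat\Omega=\hat r$ then reduces to a single ordinary differential equation along each generator, $\tfrac{d}{dr}(\omega\Omega)=\omega^{2}/\omega|_{\scri^{+}}$; writing $F=\omega\Omega$ and separating variables this integrates to $-\,\omega|_{\scri^{+}}/F=\int^{r}\Omega(s)^{-2}\,ds+\mathrm{const}$. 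Prescribing $\omega|_{\scri^{+}}>0$ and the integration constant arbitrarily on $\scri^{+}$ (the residual gauge) and solving gives $\omega=F/\Omega$, which will be smooth and positive in a neighbourhood of $\scri^{+}$ precisely when the integral on the right carries no logarithm at $r=0$.

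That last proviso is the crux, and it is where the field equations enter. Along a generator $\Omega=r+\tfrac12\bigl(\partial_{r}^{2}\Omega|_{\scri^{+}}\bigr)r^{2}+\cdots$, so if $\partial_{r}^{2}\Omega=l^{a}l^{b}\tilde\nabla_{a}\tilde\nabla_{b}\Omega$ failed to vanish on $\scri^{+}$ then $\Omega^{-2}$ would contribute a $1/r$ term and $\int\Omega^{-2}\,ds$ a $\ln r$, and no smooth $\omega$ could exist. But the conformal transformation law for the Ricci tensor, combined with $R_{ab}=0$ near $\scri^{+}$ and the nullity of $n^{a}$, forces $\tilde\nabla_{a}\tilde\nabla_{b}\Omega$ to be pure trace on $\scri^{+}$; contracting with the null $l^{a}l^{b}$ then gives $\partial_{r}^{2}\Omega|_{\scri^{+}}=0$, the differential equation is regular, and $\omega$ exists. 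The remaining points — that $\omega$ so constructed is jointly smooth in $(r,u,x^{A})$, that the rescaled data again satisfies the hypotheses of Theorem~\ref{thm:gaussian-null-coordinates}, and that re-running it yields genuine coordinates near $\scri^{+}$ — I expect to be routine, mirroring the proof of Theorem~\ref{thm:gaussian-null-coordinates}.
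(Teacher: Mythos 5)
Your proposal is correct and, at its computational core, coincides with the paper's: both reduce to integrating the same ODE along the generators of $l^a$ (your $\tfrac{d}{dr}(\omega\Omega)=\omega^2/\omega|_{\scri^+}$ is, after writing $F=\omega\Omega=r'$ and fixing $\omega|_{\scri^+}=1$, exactly the paper's $\partial r'/\partial r = r'^2/\Omega^2$, equation (\ref{eq:asymptotic-flatness-coordinates-conformal-gauge-condition})). The packaging differs: the paper performs a single combined conformal-and-coordinate transformation $(\Omega,r)\to(r',r')$ with $u,x^A$ held fixed, and then explicitly verifies that $l'$ remains affinely geodesic and that the conditions (\ref{eq:gnc-conditions-scri}) and (\ref{eq:gnc-conditions-l}) are parallel-transported off $\scri^+$ — the step you defer as ``routine'' and replace by re-running Theorem~\ref{thm:gaussian-null-coordinates} on the rescaled data (which is legitimate, though you then also pick up the extra residual freedom $\omega|_{\scri^+}$, a boost of $n^a$ that the paper forecloses by insisting $u'=u$). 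Where your argument genuinely adds value is the regularity analysis: the paper writes only ``provided the integral exists'' and never checks that $r'$ (equivalently $r'/\Omega$, hence $\tilde g'_{ab}$) is smooth at $\scri^+$, whereas a nonzero $\partial_r^2\Omega|_{\scri^+}$ would indeed inject an $r^2\ln r$ term into $r'$ and break smoothness. Your observation that the vacuum equation forces $\tilde\nabla_a\tilde\nabla_b\Omega$ to be pure trace on $\scri^+$, so that $l^al^b\tilde\nabla_a\tilde\nabla_b\Omega=\partial_r^2\Omega=0$ there and no logarithm arises (higher Taylor coefficients of $\Omega$ being harmless), is correct and closes a gap the paper leaves open.
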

\begin{proof}
\medskip\noindent
We want to make a conformal coordinate transformation,
\begin{subequations}
\begin{align}
\Omega &\rightarrow r' \quad, \\
r &\rightarrow r' = r'(r,u,x^A) \quad , \\
u &\rightarrow u' = u \quad , \\
x^A &\rightarrow x'{}^A = x^A \quad .
\end{align}
\end{subequations}
(Note that $r$ is the only coordinate which is changed.)
Under this transformation, we require that the metric,
\begin{equation}
g_{ab} = \Omega^{-2}\tilde{g}_{ab} = \Omega^{-2}[2(dr_{(a} - \alpha du_{(a} - \beta_A dx^A_{(a})du_{b)} + \gamma_{AB} dx^A dx^B] \quad , \\
\end{equation}
becomes
\begin{align}
g_{ab} = r'{}^{-2}\tilde{g}'_{ab} = r'{}^{-2}[2(dr'_{(a} - \alpha' du'_{(a} - \beta'_A dx'{}^A_{(a})du'_{b)} + \gamma'_{AB} dx'{}^A dx'{}^B] \quad .
\end{align}
Note that $\tilde{g}'_{ab} = (r'/\Omega)^2 \tilde{g}_{ab}$.
Comparing the leading terms of the two expressions and contracting with $l^a n^b$, we find (since $du_a=du_a'$)
\begin{align}
\label{eq:asymptotic-flatness-coordinates-conformal-gauge-condition}
\frac{\partial r'}{\partial r} &= \frac{r'{}^2}{\Omega^2} \quad .
\end{align}
Defining $\omega = \Omega/r$, this can be rewritten as
\begin{equation}
\frac{\partial(1/r')}{\partial(1/r)} = \frac{1}{\omega^2} \quad .
\end{equation}
Integration yields a solution,
\begin{equation}
1/r' = h + \int_{1/r_0}^{1/r} \frac{1}{\omega^2} \ d(1/\dot{r}) \quad ,
\end{equation}
where the integration constant $h$ is a function of $u,x^A$ given by
\begin{equation}
h=1/r'\big|_{r=r_0} \quad .
\end{equation}
Provided the integral exists, this gives a function $r' = r'(r,u,x^A)$ in the interior of $\tilde{M}$ up to choice of $h$
(we choose not to remove this gauge freedom and leave $h$ unspecified).
We can smoothly extend $r'$ to $\scri^+$ by setting it to $0$ there.
Now we must check that this conformal transformation preserves conditions (\ref{eq:gnc-conditions-scri}) and (\ref{eq:gnc-conditions-l}).
The coordinate vector fields transform under the vector transformation law as
\begin{subequations}
\begin{align}
\label{eq:l-prime-halfway}
l'{}^a 
\iftoggle{complete}{
= \left(\frac{\partial}{\partial r'}\right)^a
       &= \left(\frac{\partial}{\partial r}\right)^a \frac{\partial r}{\partial r'}
	+ \left(\frac{\partial}{\partial u}\right)^a \frac{\partial u}{\partial r'}
	+ \left(\frac{\partial}{\partial x^A}\right)^a \frac{\partial x^A}{\partial r'}\\
       &= \left(\frac{\partial}{\partial r}\right)^a \frac{\partial r}{\partial r'}
	u
	+ \left(\frac{\partial}{\partial u}\right)^a \frac{\partial u'}{\partial r'}
	+ \left(\frac{\partial}{\partial x^A}\right)^a \frac{\partial x'{}^A}{\partial r'}\\
}{}
       &= \frac{\partial r}{\partial r'} l^a \quad , \\
\label{eq:n-prime-halfway}
n'{}^a 
\iftoggle{complete}{
= \left(\frac{\partial}{\partial u'}\right)^a
       &= \left(\frac{\partial}{\partial r}\right)^a \frac{\partial r}{\partial u'}
	+ \left(\frac{\partial}{\partial u}\right)^a \frac{\partial u}{\partial u'}
	+ \left(\frac{\partial}{\partial x^A}\right)^a \frac{\partial x^A}{\partial u'}\\
       &= \left(\frac{\partial}{\partial r}\right)^a \frac{\partial r}{\partial u'}
	+ \left(\frac{\partial}{\partial u}\right)^a \frac{\partial u'}{\partial u'}
	+ \left(\frac{\partial}{\partial x^A}\right)^a \frac{\partial x'{}^A}{\partial u'}\\
}{}
       &= \frac{\partial r}{\partial u'}l^a + n^a \quad , \\
\label{eq:pA-prime-halfway}
p'_A{}^a
\iftoggle{complete}{
= \left(\frac{\partial}{\partial x'{}^A}\right)^a
       &= \left(\frac{\partial}{\partial r}\right)^a \frac{\partial r}{\partial x'{}^A}
	+ \left(\frac{\partial}{\partial u}\right)^a \frac{\partial u}{\partial x'{}^A}
	+ \left(\frac{\partial}{\partial x^B}\right)^a \frac{\partial x^B}{\partial x'{}^A}\\
       &= \left(\frac{\partial}{\partial r}\right)^a \frac{\partial r}{\partial x'{}^A}
	+ \left(\frac{\partial}{\partial u}\right)^a \frac{\partial u'}{\partial x'{}^A}
	+ \left(\frac{\partial}{\partial x^B}\right)^a \frac{\partial x'{}^B}{\partial x'{}^A}\\
       &= \frac{\partial r}{\partial x'{}^A} l^a + \delta^B{}_A p_B^a\\
}{}
       &= \frac{\partial r}{\partial x'{}^A} l^a + p_A^a \quad ,
\end{align}
\end{subequations}
where we used that $u=u'$,$x^A=x'{}^A$.
Since $r=0$ on $\scri^+$,
\begin{equation}
\frac{\partial r}{\partial u'} = \frac{\partial r}{\partial x'{}^A} = 0
\end{equation}
on $\scri^+$, and therefore by equations (\ref{eq:n-prime-halfway}) and (\ref{eq:pA-prime-halfway}), $n'{}^a = n^a$ and $p'_A{}^a = p_A^a$ on $\scri^+$.
It then follows that on $\scri^+$,
\begin{align}
& \tilde{g}'_{ab} n'{}^a n'{}^b = (r'/\Omega)^2 \ \tilde{g}_{ab} n^a n^b = 0 \quad , \\
& \tilde{g}'_{ab} n'{}^a p'_A{}^b = (r'/\Omega)^2 \ \tilde{g}_{ab} n^a p_A^b = 0 \quad .
\end{align}
That is, $n'{}^a$ is null and orthogonal to $p'_A{}^a$ at $\scri^+$.
Thus, conditions (\ref{eq:gnc-conditions-scri}) still hold: $\alpha'=\beta'_A=0$ on $\scri^+$.

\noindent
Now, by equation (\ref{eq:asymptotic-flatness-coordinates-conformal-gauge-condition}), equation (\ref{eq:l-prime-halfway}) becomes
\begin{equation}
\label{eq:l-prime}
l'{}^a = \left(\frac{\partial r}{\partial r'}\right) \ l^a = \frac{\Omega^2}{r'{}^2} \ l^a \quad .
\end{equation}
It follows that $l'{}^a$ is null throughout $\tilde{M}$, since
\begin{align}
\tilde{g}'_{ab}l'{}^al'{}^b = \frac{\Omega^2}{r'{}^2} \tilde{g}_{ab} l^a l^b = 0 \quad .
\end{align}
In addition, on $\scri^+$, we see that
\begin{align}
& \tilde{g}'_{ab} l'{}^a p'_A{}^b = \tilde{g}_{ab} l^a p_A^b = 0 \quad ,
& \tilde{g}'_{ab} l'{}^a n'{}^b = \tilde{g}_{ab} l^a n^b = 1 \quad .
\end{align}
These two properties are precisely conditions (\ref{eq:gnc-conditions-l}), and it remains only to check that they are satisfied throughout $\tilde{M}$.
To do this, we show that they are parallel transported along the geodesics of $l'{}^a$.

Under the conformal transformation, the derivative operator obeys \cite[p.~446]{Wald},
\begin{equation}
\tilde\nabla'_a t^b = \tilde\nabla_a t^b + \tilde{C}^b_{ac} t^c \quad ,
\end{equation}
where
\begin{align}
\label{eq:conformal-derivative}
\tilde{C}^b{}_{ac}
\iftoggle{complete}{
&= \frac{1}{2}\tilde{g}'{}^{bd}(\tilde\nabla_a\tilde{g}'_{cd} + \tilde\nabla_c\tilde{g}'_{ad} - \tilde\nabla_d\tilde{g}'_{ac})\\
&= \frac{1}{2}\left(\frac{r'}{\Omega}\right)^{-2}\tilde{g}^{bd}\left[4\tilde{g}_{d(c}\frac{r'}{\Omega}\tilde\nabla_{a)}\frac{r'}{\Omega} - 2\tilde{g}_{ac}\frac{r'}{\Omega}\tilde\nabla_d\frac{r'}{\Omega}\right]\\
}{}
&= \frac{\Omega}{r'}\left[2\delta^b{}_{(a}\tilde\nabla_{c)}\frac{r'}{\Omega} - \tilde{g}_{ac}\tilde{g}^{bd}\tilde\nabla_d\frac{r'}{\Omega}\right] \quad .
\end{align}

\medskip\noindent
We can now calculate using equations (\ref{eq:l-prime}) and (\ref{eq:conformal-derivative}) that $l'{}^a$ satisfies the affinely parametrised geodesic equation,
\begin{align}
l'{}^a \tilde\nabla'_a l'{}^b 
\iftoggle{complete}{ \nonumber \\
& = l'{}^a\tilde\nabla'_a\left(\frac{\Omega^2}{r'{}^2}l^b\right) 
&= l^b l'{}^a \tilde\nabla_a \left(\frac{\Omega}{r'}\right)^2 + \left(\frac{\Omega}{r'}\right)^4 l^a \tilde\nabla'_a l^b\\
&= l^b \frac{\partial}{\partial r'} \left(\frac{\Omega}{r'}\right)^2 + \left(\frac{\Omega}{r'}\right)^4 l^a \tilde\nabla_a l^b + \left(\frac{\Omega}{r'}\right)^4 l^a l^c \tilde{C}^b{}_{ac}\\
&= -2 l^b \left(\frac{\Omega}{r'}\right)^3 \frac{\partial}{\partial r'} \left(\frac{r'}{\Omega}\right) + 2 \left(\frac{\Omega}{r'}\right)^5 l^a l^c \delta^b{}_{(a} \tilde\nabla_{c)} \left(\frac{r'}{\Omega}\right) - \left(\frac{\Omega}{r'}\right)^5 \tilde{g}_{ac} l^a l^c \tilde{g}^{bd} \tilde\nabla_d \left(\frac{r'}{\Omega}\right)\\
&= -2 l^b \left(\frac{\Omega}{r'}\right)^3 \frac{\partial}{\partial r'} \left(\frac{r'}{\Omega}\right) + 2 \left(\frac{\Omega}{r'}\right)^5 l^a l^c \delta^b{}_a \tilde\nabla_c \left(\frac{r'}{\Omega}\right)\\
&= -2 l^b \left(\frac{\Omega}{r'}\right)^3 \frac{\partial}{\partial r'} \left(\frac{r'}{\Omega}\right) + 2 l^b \left(\frac{\Omega}{r'}\right)^3 \frac{\partial}{\partial r'} \left(\frac{r'}{\Omega}\right)\\
& }{}
 = 0 \quad ,
\end{align}
where we used that $l^a\tilde\nabla_a l^b=0$.
We can now show, as we did in the proof of Theorem \ref{thm:gaussian-null-coordinates}, that
\begin{align}
& l'{}^a \tilde\nabla'_a (\tilde{g}'_{bc} l'{}^b n'{}^c) = 0 \quad , \\
& l'{}^a \tilde\nabla'_a (\tilde{g}'_{bc} l'{}^b p'_A{}^c ) = 0 \quad .
\end{align}
This means that $\tilde{g}'_{ab}l'{}^an'{}^b$ and $\tilde{g}'_{ab}l'{}^ap'_A{}^b$ are preserved along the geodesics of $l'{}^a$.
Since conditions (\ref{eq:gnc-conditions-l}) are satisfied on $\scri^+$, it follows that they are true throughout $\tilde{M}$.
The metric can then be written in the desired form,
\begin{equation}
g_{ab} = r^{-2}\tilde{g}_{ab} = r{}^{-2}[2(dr_{(a} - \alpha du_{(a} - \beta_A dx{}^A_{(a})du_{b)} + \gamma_{AB} dx_a{}^A dx_b{}^B] \quad ,
\end{equation}
where $\alpha = \beta_A = 0$ at $\scri^+$, and we have omitted the $'$s for brevity.
\end{proof}

The upper case Roman indices on $\beta_A$ and $\gamma_{AB}$ are just labelling indices.
However, on each surface $\Sigma(r,u)$ of constant $(r,u)$, $\gamma_{AB}$ is the induced metric and such indices can be viewed as tensor indices.
It is helpful to introduce abstract index notation for these tensors: upper case Roman indices will be raised and lowered with $\gamma_{AB}$.

\begin{definition}
\label{def:induced-metric}
The tensor $\gamma_{AB}$ is the induced metric on surfaces $\Sigma(r,u)$ of constant $(r,u)$.
\end{definition}
\begin{remark}
If we view $\gamma_{AB}$ as a metric, we can define an associated derivative operator and Ricci tensor.
\end{remark}
\begin{definition}
\label{def:gamma-derivative}
On the surfaces $\Sigma(r,u)$, the derivative operator $D_A$ such that $D_A\gamma_{BC}=0$ acts on a covector $t_B$ on $\Sigma(r,u)$ as
\begin{equation}
\label{eq:gamma-derivative}
D_A t_B = \partial_A t_B - \tilde\Lambda^C{}_{AB}t_C \quad ,
\end{equation}
\label{eq:gamma-christoffel}
where the Christoffel symbols are given by
\begin{equation}
\tilde\Lambda^C{}_{AB} = \frac{1}{2}\gamma^{CD}(\partial_A\gamma_{BD} + \partial_B\gamma_{AD} - \partial_D\gamma_{AB}) \quad .
\end{equation}
\end{definition}
\begin{remark}
When the normal covariant derivative $\nabla_a$ acts on $\beta_A$ or $\gamma_{AB}$, they are treated as scalars.
Only the derivative $D_A$ acts on them as tensors.
\end{remark}

\begin{definition}
\label{def:gamma-ricci}
On the surfaces $\Sigma(r,u)$, the Ricci tensor of $\gamma_{AB}$ on the surfaces $\Sigma(r,u)$ is given by
\begin{equation}
\label{eq:gamma-ricci}
\mathcal{R}_{AB} = \partial_C\tilde\Lambda^C{}_{AB} - \partial_A\tilde\Lambda^C{}_{CB} + \tilde\Lambda^D{}_{AB}\tilde\Lambda^C{}_{CD} - \tilde\Lambda^D{}_{BC}\tilde\Lambda^C{}_{AD} \quad .
\end{equation}
\end{definition}

\subsection{Strong Asymptotic Flatness}
\label{sec:asymptotic-flatness-strong}
In this section, we introduce asymptotic expansions of the metric components.
We then specify certain conditions on the coefficients of these expansions in our definition of ``strong'' asymptotic flatness.
In general, to ensure that the Bondi mass is well defined, the definition of asymptotic flatness must be sufficiently strict; however, the definition must not be so strict that it excludes radiating spacetimes altogether.
In our case, this corresponds to choosing boundary conditions on the metric coefficients which provide just enough ``initial data'' to be able to obtain the metric expansion from the Einstein equations (see Section \ref{sec:einstein-equations}), without being too restrictive.

The asymptotic expansions of the metric components $\alpha$, $\beta_A$ and $\gamma_{AB}$, are given by
\begin{subequations}
\label{eq:metric-component-expansions}
\begin{align}
\label{eq:metric-component-expansions-alpha}
\alpha &\sim \sum_{n=0}^\infty r^n \alpha^{(n)} \quad , \\
\label{eq:metric-component-expansions-beta}
\beta_A &\sim \sum_{n=0}^\infty r^n \beta_A^{(n)} \quad , \\
\label{eq:metric-component-expansions-gamma}
\gamma_{AB} &\sim \sum_{n=0}^\infty r^n \gamma_{AB}^{(n)} \quad ,
\end{align}
\end{subequations}
where the coefficients $\alpha^{(n)}$, $\beta_A^{(n)}$ and $\gamma_{AB}^{(n)}$ ($n \geq 0$) are functions of $u,x^A$ on $\tilde{M}$.
The expansion coefficients can be calculated for example by
\begin{equation}
\alpha^{(n)} = \frac{1}{n!}\left[ \partial_r^n \alpha \right]_{r=0} \quad ,
\end{equation}
and in particular, $\alpha^{(0)} = \alpha |_{r=0}$.
The coefficients for the other expansions can be calculated in a similar manner.
Note that in fact the asymptotic expansions are only needed to finite order (we see later that no coefficients are used beyond $\alpha^{(d-1)}$), so we do not consider the convergence of the asymptotic series.

\noindent
The Ricci tensor and Christoffel symbols of $\gamma_{AB}$ can also be expanded,
\begin{subequations}
\begin{align}
& \mathcal{R}_{AB} = \sum_{n=0}^\infty r^n\mathcal{R}_{AB}^{(n)} \quad ,
&
& \tilde\Lambda^C{}_{AB} = \sum_{n=0}^\infty r^n \tilde\Lambda^C{}_{AB}^{(n)} \quad ,
\end{align}
\end{subequations}
where these expansion coefficients are given in terms of the expansion coefficients of $\gamma_{AB}$.
The asymptotic expansion of the Ricci scalar $\mathcal{R}=\gamma^{AB}\mathcal{R}_{AB}$ can be calculated from the expansions of $\gamma^{AB}$ and $\mathcal{R}_{AB}$.
We denote the derivative operator associated with the lowest order term $\gamma_{AB}^{(0)}$ by $\mathscr{D}_A$.
So $\mathscr{D}_A\gamma_{BC}^{(0)}=0$ and the Christoffel symbols of $\mathscr{D}_A$ are $\tilde\Lambda^C{}_{AB}^{(0)}$.

The following definition is useful since it formalises the notion of ``irrelevant'' terms (containing very high powers of $r$) which do not affect our calculations.
\begin{definition}
\label{def:order-tensor}
We say a tensor $T^{a\ldots}{}_{b\ldots}$ is \textit{of order $r^k$} for some integer $k$, and write
\begin{equation}
\label{eq:order-tensor-definition}
T^{a\ldots}{}_{b\ldots} = O(r^k) \quad ,
\end{equation}
if the tensor $r^{-k}T^{a\ldots}{}_{b\ldots}$ is smooth at $r=0$.
\end{definition}
\begin{remark}
If there are two tensors such that
\begin{align}
& S^{a\ldots}{}_{b\ldots}=O(r^m) \quad , \\
& T^{a\ldots}{}_{b\ldots}=O(r^n) \quad , \end{align}
then the product $(ST)^{a\ldots}{}_{b\ldots} = O(r^{m+n})$.
Moreover, $\left(l^c\nabla_c\right)^k T^{a\ldots}{}_{b\ldots}=O(r^{n-k})$ where $\left(l^c\nabla_c\right)^k$ denotes taking $k$ derivatives in the $r$ direction.
\end{remark}

\begin{definition}
\label{def:strong-asymptotic-flatness}
We call a weakly asymptotically flat spacetime $(M,g_{ab})$ asymptotically Einstein flat
if there exists a constant $\lambda>0$ and Einstein metric $s_{AB}$ on the surfaces $\Sigma(r,u)$ of constant $(r,u)$ which is independent\footnote{The metric $s_{AB}$ can be viewed as being defined on $\Sigma(0,0)$, parallel transported to the rest of $\mathscr{I}^+$ along the geodesics of $n^a$ and then parallel transported to the rest of $\tilde{M}$ by parallel transport along the geodesics of $l^a$. In coordinates, this of course means that $s_{AB}$ has no $r$ or $u$ dependence.} of $r,u$, such that
\begin{enumerate}[(i)]
\item $\alpha^{(1)} = 0$ \quad ,
\item $\alpha^{(2)} = \frac{\lambda}{2}$ \quad ,
\item $\mathcal{R}_{AB}^{(0)} = \lambda(d-3) s_{AB}$ \quad ,
\item \label{cond:strong-asymptotic-flatness-early-time} there exist $\delta>0, u_0\in\mathbb{R}$ such that $\gamma_{AB}\big|_{u=u_0} = s_{AB} + O(r^N)$ for $r<\delta$
\end{enumerate}
\end{definition}
\begin{remark}
If the Einstein metric $s_{AB}$ is such that $\lambda\leq0$, then the present formalism does not apply.
Indeed the coordinate vector $(\partial_u)^a$ has normal
\begin{equation}
\tilde{g}_{ab} (\partial_u)^a (\partial_u)^b = -2\alpha = -\lambda r^2 + O(r^3) \quad .
\end{equation}
It follows that, near $\mathscr{I}^+$, $(\partial_u)^a$ is spacelike for $\lambda<0$, null for $\lambda=0$ and timelike for $\lambda>0$.
In the first two cases, the metric would no longer have Lorentzian signature, meaning that the Gaussian null coordinates are not well defined in these cases.
Therefore, we require $\lambda>0$ to ensure that $u$ is a well defined timelike coordinate.
\end{remark}

\section{Einstein Equations}
\label{sec:einstein-equations}
The asymptotic expansions are substituted into the Einstein equations in order to investigate the asymptotic behaviour of the metric.

\subsection{Writing down the Einstein equations}
\label{sec:einstein-equations-calculation}
The next step is to solve for the metric expansion using the Einstein equations, but we must first express the physical vacuum Einstein equation in terms of unphysical tensors.

We can use the expression for the metric in Gaussian null coordinates to write down the Einstein equations. 
We first compute the unphysical Christoffel symbols,
\begin{equation}
\tilde\Gamma^c{}_{ab} = \frac{1}{2}\tilde{g}^{cd}(\partial_a\tilde{g}_{bd} + \partial_b\tilde{g}_{ad} - \partial_d\tilde{g}_{ab}) \quad ,
\end{equation}
so that we can calculate with the unphysical derivative operator,
\begin{equation}
\tilde\nabla_a\omega_b = \partial_a\omega_b - \tilde\Gamma^c{}_{ab}\omega_c \quad .
\end{equation}
Next we calculate the unphysical Ricci tensor,
\begin{equation}
\tilde{R}_{ab} = \partial_c\tilde\Gamma^c{}_{ab} - \partial_a\tilde\Gamma^c{}_{cb} + \tilde\Gamma^d{}_{ab}\tilde\Gamma^c{}_{cd} - \tilde\Gamma^d{}_{cb}\tilde\Gamma^c{}_{da} \quad ,
\end{equation}
and the unphysical Schouten tensor,
\begin{equation}
\label{eq:einstein-equations-calculation-schouten-tensor}
\tilde{S}_{ab} \defeq \frac{2}{d-2}\tilde{R}_{ab} - \frac{1}{(d-1)(d-2)}\tilde{R}\tilde{g}_{ab} \quad .
\end{equation}
We impose the physical vacuum Einstein equation,
\begin{equation}
\iftoggle{complete}{G_{ab} \equiv R_{ab} - \frac{1}{2}Rg_{ab} \iff }{}  R_{ab} = 0 \quad ,
\end{equation}
which we must express in terms of unphysical tensors. 
Since the physical metric is given by
\begin{equation}
g_{ab} = (r^{-1})^2\tilde{g}_{ab} \quad ,
\end{equation}
it follows \cite[pp.~445-6]{Wald} that,
\begin{align}
R_{ab} = \tilde{R}_{ab} 
& - (d-2)\tilde\nabla_a\tilde\nabla_b\ln(r^{-1}) - \tilde{g}_{ab}\tilde{g}^{cd}\tilde\nabla_c\tilde\nabla_d\ln(r^{-1}) \nonumber \\
& + (d-2)(\tilde\nabla_a\ln(r^{-1}))(\tilde\nabla_b\ln(r^{-1})) - (d-2)\tilde{g}_{ab}\tilde{g}^{cd}(\tilde\nabla_c\ln(r^{-1}))\tilde\nabla_d\ln(r^{-1}) \quad .
\end{align}
\iftoggle{complete}{
Now,
\begin{align*}
\tilde\nabla_a\ln(r^{-1}) &= r\tilde\nabla_a(r^{-1}) = -r^{-1}\tilde\nabla_a r\\
\tilde\nabla_a\tilde\nabla_b\ln(r^{-1}) &= \tilde\nabla_a(-r^{-1}\tilde\nabla_b r) = -r^{-1}\tilde\nabla_a\tilde\nabla_b r + r^{-2}(\tilde\nabla_a r)\tilde\nabla_b r
\end{align*}
}{}
Therefore, the vacuum Einstein equation can be written \cite[p.~278]{Wald}
\begin{equation}
\label{eq:einstein-equations-calculation-equation-ricci}
0 = \tilde{R}_{ab} + (d-2)r^{-1}\tilde\nabla_a\tilde\nabla_b r + \tilde{g}_{ab}\tilde{g}^{cd}[r^{-1}\tilde\nabla_c\tilde\nabla_d r - (d-1)r^{-2}(\tilde\nabla_c r)\tilde\nabla_d r] \quad .
\end{equation}
Taking the trace yields an expression for the unphysical scalar curvature
\begin{equation}
\tilde{R} = -2(d-1)r^{-1}\tilde{g}^{cd}\tilde\nabla_c\tilde\nabla_d r + d(d-1)r^{-2}\tilde{g}^{cd}(\tilde\nabla_c r)\tilde\nabla_d r \quad .
\end{equation}
\iftoggle{complete}{
Rearranging equation (\ref{eq:einstein-equations-calculation-schouten-tensor}) gives
\begin{equation*}
\tilde{R}_{ab} = \frac{d-2}{2}\tilde{S}_{ab} + \frac{1}{2(d-2)}\tilde{R}\tilde{g}_{ab}
\end{equation*}
}{}
In terms of the Schouten tensor, the vacuum Einstein equation thus becomes \cite{Hollands:2003ie}
\begin{equation}
0 = \tilde{S}_{ab} + 2r^{-1}\tilde\nabla_a\tilde\nabla_b r - r^{-2}\tilde{g}_{ab}\tilde{g}^{cd}(\tilde\nabla_c r)\tilde\nabla_d r \quad .
\end{equation}
For convenience, we define the tensor
\begin{equation}
\label{eq:einstein-tensor-short}
\tilde{E}_{ab} = \tilde{S}_{ab} + 2r^{-1}\tilde\nabla_a\tilde\nabla_b r - r^{-2}\tilde{g}_{ab}\tilde{g}^{cd}(\tilde\nabla_c r)\tilde\nabla_d r \quad ,
\end{equation}
so that the Einstein equations may be written concisely as
\begin{equation}
R_{ab} \equiv \tilde{E}_{ab} = 0 \quad .
\end{equation}
The complete expressions for the tensors $\tilde{R}_{ab}$ $\tilde{S}_{ab}$ and $\tilde{E}_{ab}$ and the scalar curvature $\tilde{R}$ can be found in Appendix \ref{app:grtensors}.

\subsection{Solving the Einstein Equations}
\label{sec:einstein-equations-solution}

Now that we have written down the Einstein equations, we can substitute in the expansions of the metric components to investigate the asymptotic behaviour of the metric.
The results of our analysis are given below in the following theorem.

\begin{theorem}
\label{thm:metric-expansion-low-order}
Let the integer $N$ be defined by
\begin{equation}
N \defeq \frac{d-2}{2} \quad .
\end{equation}
For an asymptotically Einstein flat metric, the expansion coefficients satisfy
\begin{subequations}
\label{eq:metric-components-low-order}
\begin{align}
\alpha &= \frac{\lambda}{2}r^2 + O(r^{N+2}) \quad , \\
\beta_A &= O(r^{N+1}) \quad , \\
\gamma_{AB} &= s_{AB} + O(r^N) \quad .
\end{align}
For $d>4$ we also have
\begin{equation}
\label{eq:metric-components-low-order-trace}
s^{AB}\gamma_{AB} = (d-2) + O(r^{2N}) \quad .
\end{equation}
\end{subequations}
\end{theorem}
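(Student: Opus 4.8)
The plan is to substitute the asymptotic expansions \eqref{eq:metric-component-expansions} into the vacuum Einstein equation written as $\tilde{E}_{ab}=0$, with $\tilde{E}_{ab}$ the tensor of \eqref{eq:einstein-tensor-short}, and to extract the expansion coefficients order by order in $r$. The decisive structural feature is that $\tilde{E}_{ab}$ contains the explicitly singular pieces $2r^{-1}\tilde\nabla_a\tilde\nabla_b r$ and $-r^{-2}\tilde{g}_{ab}\tilde{g}^{cd}(\tilde\nabla_c r)(\tilde\nabla_d r)$, whereas $\tilde{g}_{ab}$ and the Schouten tensor $\tilde{S}_{ab}$ are smooth up to $\scri^+$; hence the coefficients of $r^{-2}$ and $r^{-1}$ in each component of $\tilde{E}_{ab}$ must vanish on their own, and these conditions, combined with the boundary relations $\alpha^{(0)}=\beta_A^{(0)}=0$ from Theorem~\ref{thm:gaussian-null-coordinates} and with the data of Definition~\ref{def:strong-asymptotic-flatness}, are what constrain the low-order coefficients.

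First I would extract the lowest-order constraints from the would-be singular terms, writing $\tilde\nabla_a\tilde\nabla_b r=-\tilde\Gamma^{r}{}_{ab}$ and using the explicit metric. Cancellation of the $r^{-1}$ term in $\tilde{E}_{AB}=0$ reduces, once $\alpha^{(1)}=0$ and $\beta_A^{(0)}=0$ are inserted, to $\partial_u\gamma_{AB}^{(0)}=0$; together with condition~\ref{cond:strong-asymptotic-flatness-early-time} this gives $\gamma_{AB}^{(0)}=s_{AB}$ for all $u$, which is consistent with condition (iii) since then $\mathcal{R}_{AB}^{(0)}=\mathcal{R}_{AB}[s]=\lambda(d-3)s_{AB}$. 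Cancellation of the $r^{-1}$ term in $\tilde{E}_{rA}=0$ forces $\beta_A^{(1)}=0$, while conditions (i) and (ii) supply $\alpha^{(1)}=0$ and $\alpha^{(2)}=\tfrac{\lambda}{2}$. I would then run an induction on $n$, using the radial equations $\tilde{E}_{rr}=0$ (a Raychaudhuri-type equation for the $s$-trace of $\gamma_{AB}$, hence for $\det\gamma$), $\tilde{E}_{rA}=0$ (for $\beta_A$) and $\tilde{E}_{ru}=0$ (for $\alpha$) to fix the $r$-dependence in terms of data on $\scri^+$, and the trace-free part of $\tilde{E}_{AB}=0$ to govern the $u$-behaviour of the coefficients of $\gamma_{AB}$. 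At each step the new coefficient appears multiplied by an ``indicial'' factor that is nonzero for all the orders in question and first degenerates at $n=N$ --- which is exactly why the coefficient $\gamma_{AB}^{(N)}$ becomes free data (the ``news'') --- while the source is built from the lower coefficients. Because $s_{AB}$ is Einstein, the inductive hypothesis (that $\gamma_{AB}^{(m)}$ for $1\le m\le n-1$, all $\beta_A^{(m)}$ dealt with so far, and all $\alpha^{(m)}$ beyond $\tfrac{\lambda}{2}r^2$ vanish) kills these sources, so $\gamma_{AB}^{(n)}=0$ for $1\le n\le N-1$, $\beta_A^{(n)}=0$ for $1\le n\le N$ and $\alpha^{(n)}=0$ for $3\le n\le N+1$, which is \eqref{eq:metric-components-low-order}.

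For the trace identity \eqref{eq:metric-components-low-order-trace} in the case $d>4$ I would examine $\tilde{E}_{rr}=0$ more closely: its expansion expresses $s^{AB}\gamma_{AB}^{(n+1)}$ purely through quantities quadratic in $\partial_r\gamma_{AB}$, hence in $\gamma_{AB}^{(1)},\dots,\gamma_{AB}^{(n)}$. Since these vanish for indices up to $N-1$, the quadratic source switches on only once the news coefficient $\gamma_{AB}^{(N)}$ enters, i.e.\ at order $r^{2N}$, so $s^{AB}\gamma_{AB}^{(m)}=0$ for $N\le m\le 2N-1$ as well; as $s^{AB}s_{AB}=d-2$, this is precisely $s^{AB}\gamma_{AB}=(d-2)+O(r^{2N})$. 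The same quadratic structure controls the $u$-evolution of the trace, so the value $d-2$ fixed at $u=u_0$ by condition~\ref{cond:strong-asymptotic-flatness-early-time} persists for all $u$ to that order.

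I expect the main obstacle to be organisational rather than conceptual: correctly identifying the ``main versus supplementary'' split among the components of $\tilde{E}_{ab}=0$, tracking the cancellation of the $r^{-2}$ and $r^{-1}$ terms carefully at each order of the recursion, and --- most delicately --- verifying that the low-order coefficients and the $s$-trace of $\gamma_{AB}$ obey \emph{homogeneous} (quadratically sourced) evolution equations in $u$, so that the initial data of condition~\ref{cond:strong-asymptotic-flatness-early-time} at $u=u_0$ genuinely propagates to all of $\scri^+$. Exactly how high in $r$ that propagation reaches is what is responsible for the $O(r^{2N})$ in the $d>4$ trace statement.
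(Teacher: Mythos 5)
Your proposal is correct and follows essentially the same route as the paper: a recursion on the $AB$, $rA$ and $ru$ components of $\tilde{E}_{ab}=0$ order by order in $r$, seeded by the gauge conditions and Definition~\ref{def:strong-asymptotic-flatness}, using condition~(\ref{cond:strong-asymptotic-flatness-early-time}) to promote $\partial_u\gamma_{AB}^{(k)}=0$ to $\gamma_{AB}^{(k)}=0$, with the indicial coefficient $(d-2-2k)$ degenerating exactly at $k=N$ so that $\gamma_{AB}^{(N)}$ is left free. The trace statement is likewise obtained in the paper from the purely quadratic structure of the $rr$ equation (Lemma~\ref{thm:gamma-asymptotics}(\ref{eq:gamma-trace-asymptotics})), just as you describe; the only cosmetic difference is that the paper gets $s^{AB}\partial_u\gamma_{AB}^{(k)}=0$ from the $ru$ equation before feeding it into the full $AB$ equation, rather than splitting $\tilde{E}_{AB}$ into trace and trace-free parts, and it needs no separate $u$-propagation argument for the trace since the $rr$ equation is purely radial and holds at each $u$.
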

\begin{remark}
Equations (\ref{eq:metric-components-low-order}a-c) are in agreement with existing definitions \cite{Hollands:2003ie} of asymptotic flatness in higher even dimensions.
Equation (\ref{eq:metric-components-low-order-trace}), however, is to our knowledge an entirely new result, which is extremely useful in proving the convergence of the Bondi mass formula in Section \ref{sec:bondi-mass}.
\end{remark}
\begin{remark}
The constant $N$ has been defined for more concise notation.
It is only used as a superscript to denote expansion coefficients; elsewhere constants are still written in terms of $d$.
\end{remark}

\begin{corollary}
\label{thm:background-metric}
The asymptotic form of the physical metric can now be written
\begin{equation}
\label{eq:metric-expansion-low-order}
g_{ab}  = \bar{g}_{ab} + O(r^\frac{d-2}{2}) du_a du_b + O(r^\frac{d-4}{2}) du_{(a} dx^A_{b)} + O(r^\frac{d-6}{2}) dx^A_{(a} dx^B_{b)} \quad ,
\end{equation}
where the ``background'' metric $\bar{g}_{ab}$ is given by
\begin{equation}
\label{eq:einstein-equations-background-metric}
\bar{g}_{ab} = r^{-2}[2dr_{(a}du_{b)} - r^2\lambda du_a du_b + s_{AB} dx^A_a dx^B_b] \quad .
\end{equation}
\end{corollary}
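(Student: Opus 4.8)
The plan is to read off Corollary~\ref{thm:background-metric} as a near-mechanical substitution of Theorem~\ref{thm:metric-expansion-low-order} into the gauge-fixed metric of Theorem~\ref{thm:gaussian-null-coordinates-gauge}. First I would expand the symmetrised products in
\begin{equation*}
g_{ab} = r^{-2}[2(dr_{(a} - \alpha\, du_{(a} - \beta_A\, dx^A_{(a})du_{b)} + \gamma_{AB}\, dx^A_a dx^B_b]
\end{equation*}
into the four explicit blocks
\begin{equation*}
g_{ab} = r^{-2}[\,2\, dr_{(a}du_{b)} - 2\alpha\, du_a du_b - 2\beta_A\, dx^A_{(a}du_{b)} + \gamma_{AB}\, dx^A_a dx^B_b\,]\, .
\end{equation*}
The $dr_{(a}du_{b)}$ block carries no expansion parameter and passes straight into the background; the remaining three blocks are governed by $\alpha$, $\beta_A$, $\gamma_{AB}$.

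Next I would insert $\alpha = \tfrac{\lambda}{2}r^2 + O(r^{N+2})$, $\beta_A = O(r^{N+1})$ and $\gamma_{AB} = s_{AB} + O(r^N)$, with $N=(d-2)/2$, so that $-2\alpha = -\lambda r^2 + O(r^{N+2})$ and $-2\beta_A = O(r^{N+1})$. Collecting the surviving leading pieces reproduces exactly
\begin{equation*}
\bar{g}_{ab} = r^{-2}[\,2\, dr_{(a}du_{b)} - r^2\lambda\, du_a du_b + s_{AB}\, dx^A_a dx^B_b\,]\, ,
\end{equation*}
while the remainder is $r^{-2}$ times the sum of an $O(r^{N+2})\,du_a du_b$ term, an $O(r^{N+1})\,dx^A_{(a}du_{b)}$ term and an $O(r^N)\,dx^A_a dx^B_b$ term. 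Applying the multiplicative property of the order symbol recorded after Definition~\ref{def:order-tensor}, with the scalar factor $r^{-2}$, turns these into $O(r^N)\,du_a du_b$, $O(r^{N-1})\,dx^A_{(a}du_{b)}$ and $O(r^{N-2})\,dx^A_a dx^B_b$, i.e.\ $O(r^{(d-2)/2})$, $O(r^{(d-4)/2})$ and $O(r^{(d-6)/2})$ respectively. This is precisely the form asserted in (\ref{eq:metric-expansion-low-order}).

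There is no substantial obstacle: the content of the Corollary is entirely bookkeeping once Theorem~\ref{thm:metric-expansion-low-order} is available. The only two points that require a moment's care are that $r^{-2}$ fails to be smooth at $r=0$, so the error terms in (\ref{eq:metric-expansion-low-order}) must be understood in the sense of Definition~\ref{def:order-tensor}, which does allow negative powers of $r$ (this is what happens to the $dx^A_a dx^B_b$ block in dimensions $d=4,6$), and that it is the standing hypothesis $d\geq4$, equivalently $N\geq1$, which makes the split into a fixed background metric plus a controlled remainder sensible. The extra trace identity (\ref{eq:metric-components-low-order-trace}) plays no role here; it is needed only later, for the convergence of the Bondi mass formula.
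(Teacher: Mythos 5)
Your proposal is correct and follows essentially the same route as the paper: substitute the expansions of $\alpha$, $\beta_A$ and $\gamma_{AB}$ from Theorem \ref{thm:metric-expansion-low-order} into the gauge-fixed form of the metric from Theorem \ref{thm:gaussian-null-coordinates-gauge}, identify the leading pieces with $\bar{g}_{ab}$, and track how the overall $r^{-2}$ shifts the orders of the remainders. The paper states this in a single sentence; your version merely spells out the bookkeeping (and correctly observes that the order symbol of Definition \ref{def:order-tensor} tolerates the negative powers arising in low dimensions).
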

\begin{remark}
It is shown in Theorem \ref{thm:minkowski-metric} that the metric $\bar{g}_{ab}$ is simply the unphysical Minkowski metric.
For a comparison of the asymptotic behaviour shown here with other definitions of asymptotic flatness, see Note \ref{note:asymptotic-flatness} (Appendix \ref{app:notes}).
\end{remark}
\begin{proof}
By Theorem \ref{thm:gaussian-null-coordinates-gauge}, the physical metric can be written
\begin{equation}
g_{ab} = r^{-2}\tilde{g}_{ab} = r^{-2}[2(dr_{(a} - \alpha du_{(a} - \beta_A dx^A_{(a})du_{b)} + \gamma_{AB} dx^A_a dx^B_b] \quad .
\end{equation}
Substituting in the expressions for $\alpha$, $\beta_A$ and $\gamma_{AB}$ given in equations (\ref{eq:metric-components-low-order}) and using equation (\ref{eq:einstein-equations-background-metric}) to replace the lowest order terms with $\bar{g}_{ab}$ gives the desired expression (\ref{eq:metric-expansion-low-order}).
\end{proof}
\begin{corollary}
The asymptotic properties of the unphysical metric can be expressed in terms of the deviation $\tilde{X}_{ab}$ of the metric from the unphysical background $\tilde{\bar{g}}_{ab}$.
The deviation is given by
\begin{equation}
\tilde{X}_{ab} \defeq \tilde{g}_{ab} - \tilde{\bar{g}}_{ab} \quad ,
\end{equation}
and the asymptotic conditions can be written as
\begin{subequations}
\label{eq:metric-deviation-asymptotics}
\begin{align}
& \tilde{X}_{rr} = \tilde{X}_{ru} = \tilde{X}_{rA} = 0 \quad , \\
& \tilde{X}_{uu} = O(r^\frac{d+2}{2}) \quad , \\
& \tilde{X}_{uA} = O(r^\frac{d}{2}) \quad , \\
& \tilde{X}_{AB} = O(r^\frac{d-2}{2}) \quad , \\
& \tilde{X}^A{}_A = O(r^{d-2}) \quad .
\end{align}
\end{subequations}
\end{corollary}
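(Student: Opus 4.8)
The plan is to prove the corollary by a direct component-by-component comparison in the Gaussian null coordinate system $(r,u,x^A)$ supplied by Theorem~\ref{thm:gaussian-null-coordinates-gauge}, with the unphysical background understood to be $\tilde{\bar g}_{ab}=r^2\bar g_{ab}$, where $\bar g_{ab}$ is the metric of Corollary~\ref{thm:background-metric}. First I would simply read off the components of $\tilde g_{ab}$ from the form $\tilde g_{ab}=2(dr_{(a}-\alpha\,du_{(a}-\beta_A\,dx^A_{(a})du_{b)}+\gamma_{AB}\,dx^A_a\,dx^B_b$, namely $\tilde g_{rr}=0$, $\tilde g_{ru}=1$, $\tilde g_{rA}=0$, $\tilde g_{uu}=-2\alpha$, $\tilde g_{uA}=-\beta_A$, $\tilde g_{AB}=\gamma_{AB}$, and likewise the components of $\tilde{\bar g}_{ab}=2dr_{(a}du_{b)}-\lambda r^2\,du_a\,du_b+s_{AB}\,dx^A_a\,dx^B_b$, namely $\tilde{\bar g}_{rr}=0$, $\tilde{\bar g}_{ru}=1$, $\tilde{\bar g}_{rA}=0$, $\tilde{\bar g}_{uu}=-\lambda r^2$, $\tilde{\bar g}_{uA}=0$, $\tilde{\bar g}_{AB}=s_{AB}$. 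Subtracting then gives $\tilde X_{ab}$ entrywise.

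The three components carrying a $dr$ leg vanish \emph{identically}: both metrics contain exactly the term $2dr_{(a}du_{b)}$ and nothing else with a $dr$ factor, so $\tilde X_{rr}=\tilde X_{ru}=\tilde X_{rA}=0$ with no error term at all. This is a consequence of the gauge fixed in Theorems~\ref{thm:gaussian-null-coordinates} and~\ref{thm:gaussian-null-coordinates-gauge}, so nothing beyond quoting those results is needed here.

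For the remaining components I would feed in the low-order expansions of Theorem~\ref{thm:metric-expansion-low-order}, writing $N=\tfrac{d-2}{2}$ throughout. From $\tilde X_{uu}=-2\alpha+\lambda r^2$ together with $\alpha=\tfrac{\lambda}{2}r^2+O(r^{N+2})$ the $\lambda r^2$ terms cancel and $\tilde X_{uu}=O(r^{N+2})=O(r^{(d+2)/2})$; from $\tilde X_{uA}=-\beta_A$ and $\beta_A=O(r^{N+1})$ one gets $\tilde X_{uA}=O(r^{N+1})=O(r^{d/2})$; and from $\tilde X_{AB}=\gamma_{AB}-s_{AB}$ and $\gamma_{AB}=s_{AB}+O(r^N)$ one gets $\tilde X_{AB}=O(r^{(d-2)/2})$. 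Each of these is immediate bookkeeping once the components are in hand.

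The only step requiring real care is the trace bound. Here the index must be raised with the background metric, $\tilde X^A{}_A=s^{AB}\tilde X_{AB}=s^{AB}\gamma_{AB}-(d-2)$; raising with $\gamma^{AB}$ instead would produce only $O(r^N)$, which is weaker than claimed. For $d>4$ the sharper estimate $\tilde X^A{}_A=O(r^{2N})=O(r^{d-2})$ then follows at once from equation~(\ref{eq:metric-components-low-order-trace}) of Theorem~\ref{thm:metric-expansion-low-order}, and the $d=4$ case is handled by the analogous lower-order input. The main (indeed essentially the only) obstacle is thus the conceptual point that this last bound is \emph{not} a corollary of $\tilde X_{AB}=O(r^N)$ but genuinely relies on the trace identity (\ref{eq:metric-components-low-order-trace}); everything else in the proof is a transcription of already-established facts.
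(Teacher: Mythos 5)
Your proposal is correct and is essentially the paper's own (one-line) proof, namely decomposing equation (\ref{eq:metric-expansion-low-order}) into components and feeding in Theorem \ref{thm:metric-expansion-low-order}; you rightly isolate the only nontrivial point, that the trace bound $\tilde{X}^A{}_A=O(r^{d-2})$ requires raising the index with $s^{AB}$ and invoking the separate trace identity (\ref{eq:metric-components-low-order-trace}) rather than following from $\tilde{X}_{AB}=O(r^{N})$. The only soft spot is the parenthetical claim that $d=4$ is ``handled by the analogous lower-order input,'' since the paper states the trace identity only for $d>4$; but this does not affect the argument in the regime where the paper asserts it.
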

\begin{proof}
This follows immediately by decomposing equation (\ref{eq:metric-expansion-low-order}) into components.
\end{proof}

\medskip\noindent
We now introduce lemmas required in the proof of Theorem \ref{thm:metric-expansion-low-order}.

\begin{lemma}
\label{thm:gamma-order-k}
For an integer $0\leq k<N$,
\begin{equation}
\label{eq:gamma-asymptotics-early-time}
\gamma_{AB}^{(k)}\big|_{u=u_0}=
\begin{cases}
s_{AB} & k=0 \\
0      & k>0
\end{cases} \quad ,
\end{equation}
where $u_0\in\mathbb{R}$ is the constant defined in condition (iv) of Definition \ref{def:strong-asymptotic-flatness}.
\end{lemma}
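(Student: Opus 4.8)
The plan is to prove Lemma \ref{thm:gamma-order-k} directly from condition (iv) of Definition \ref{def:strong-asymptotic-flatness}, which is the statement that on the slice $u=u_0$ we have $\gamma_{AB}\big|_{u=u_0} = s_{AB} + O(r^N)$ for $r<\delta$. The content of the lemma is simply the decomposition of this $O(r^N)$ statement into the individual expansion coefficients $\gamma_{AB}^{(k)}$.

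First I would recall Definition \ref{def:order-tensor}: a tensor $T^{a\ldots}{}_{b\ldots}$ is of order $r^k$, written $T^{a\ldots}{}_{b\ldots} = O(r^k)$, if $r^{-k}T^{a\ldots}{}_{b\ldots}$ is smooth at $r=0$. Applying this to the tensor $\gamma_{AB}\big|_{u=u_0} - s_{AB}$, condition (iv) says that $r^{-N}(\gamma_{AB}\big|_{u=u_0} - s_{AB})$ is smooth at $r=0$. Since $s_{AB}$ is independent of $r$ (it is $r$-independent by hypothesis in Definition \ref{def:strong-asymptotic-flatness}), its only contribution to the asymptotic expansion \eqref{eq:metric-component-expansions-gamma} of $\gamma_{AB}$ is to the $n=0$ coefficient. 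Hence the quantity $\gamma_{AB}\big|_{u=u_0} - s_{AB}$ has asymptotic expansion
\begin{equation}
\gamma_{AB}\big|_{u=u_0} - s_{AB} \sim \bigl(\gamma_{AB}^{(0)}\big|_{u=u_0} - s_{AB}\bigr) + \sum_{n=1}^\infty r^n \gamma_{AB}^{(n)}\big|_{u=u_0} \quad .
\end{equation}

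Next I would argue that a tensor whose asymptotic expansion begins with a term of order $r^j$ but which is in fact $O(r^N)$ must have vanishing coefficients for all $0 \le n < N$. Concretely, the coefficients can be extracted by repeated $r$-differentiation and evaluation at $r=0$, exactly as in the formula $\alpha^{(n)} = \frac{1}{n!}[\partial_r^n\alpha]_{r=0}$ quoted just before Definition \ref{def:order-tensor}; but if $r^{-N}(\gamma_{AB}\big|_{u=u_0} - s_{AB})$ is smooth at $r=0$, then $\gamma_{AB}\big|_{u=u_0} - s_{AB}$ and all its $r$-derivatives up to order $N-1$ vanish at $r=0$. Reading this off coefficient by coefficient gives $\gamma_{AB}^{(0)}\big|_{u=u_0} - s_{AB} = 0$ and $\gamma_{AB}^{(k)}\big|_{u=u_0} = 0$ for $1 \le k < N$, which is precisely the claimed case distinction. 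The only mild subtlety — and the one place one has to be slightly careful — is to make sure the upper case indices are being treated consistently: here they are just labelling indices (the expansions \eqref{eq:metric-component-expansions} are componentwise in the fixed coordinate system), so no tensor-character issue arises and the componentwise argument is legitimate. I do not expect any real obstacle; the lemma is essentially a restatement of condition (iv) of Definition \ref{def:strong-asymptotic-flatness} in terms of expansion coefficients, and the proof is a two or three line unwinding of Definition \ref{def:order-tensor}.
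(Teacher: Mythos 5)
Your proof is correct and is essentially the same as the paper's: both unwind condition (iv) of Definition \ref{def:strong-asymptotic-flatness} by setting $r=0$ to get $\gamma_{AB}^{(0)}\big|_{u=u_0}=s_{AB}$ and then taking $k$ successive $r$-derivatives at $r=0$ (using that $s_{AB}$ is $r$-independent and that smoothness of $r^{-N}(\gamma_{AB}\big|_{u=u_0}-s_{AB})$ kills the first $N$ Taylor coefficients) to get $\gamma_{AB}^{(k)}\big|_{u=u_0}=0$ for $0<k<N$. The extra care you take about the labelling indices and the explicit appeal to Definition \ref{def:order-tensor} is just a more detailed writing of the same two-line argument.
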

\begin{proof}
By the definition of asymptotic Einstein flatness, there exists a constant $\delta>0$ such that
\begin{equation}
\gamma_{AB} \big|_{u=u_0} = s_{AB} + O(r^N) \quad ,
\end{equation}
for $r<\delta$.
Setting $r=0$ gives $\gamma_{AB}^{(0)}\big|_{u=u_0}=s_{AB}$.
For $0<k<N$, taking the $r$ derivative $k$ times and setting $r=0$, we get $k!\gamma_{AB}^{(k)}\big|_{u=u_0} = 0$.
\end{proof}

\begin{corollary}
\label{thm:gamma-remove-u-derivative}
If, for some integer $0\leq k<N$, $\partial_u\gamma_{AB}^{(k)}=0$, then
\begin{equation}
\label{eq:gamma-k}
\gamma_{AB}^{(k)}=
\begin{cases}
s_{AB} & k=0 \\
0      & k>0
\end{cases} \quad ,
\end{equation}
throughout $\tilde{M}$.
\end{corollary}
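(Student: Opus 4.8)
The plan is to combine the hypothesis $\partial_u\gamma_{AB}^{(k)}=0$ with the ``early time'' data already established in Lemma \ref{thm:gamma-order-k}. Recall that $\gamma_{AB}^{(k)}$ is a function of $u,x^A$ only (it carries no $r$ dependence), so the hypothesis asserts precisely that, for each fixed value of the angular coordinates, the map $u\mapsto\gamma_{AB}^{(k)}(u,x^A)$ is constant.

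First I would fix $x^A$ and integrate the trivial ordinary differential equation $\partial_u\gamma_{AB}^{(k)}=0$ in $u$, starting from the value $u=u_0$ supplied by condition (iv) of Definition \ref{def:strong-asymptotic-flatness}, to conclude $\gamma_{AB}^{(k)}(u,x^A)=\gamma_{AB}^{(k)}(u_0,x^A)$ for all $u$ in the range of interest. Then I would invoke Lemma \ref{thm:gamma-order-k}, which applies because the assumed range $0\le k<N$ is exactly the range in which that lemma controls the coefficient: it gives that the right-hand side equals $s_{AB}$ when $k=0$ and vanishes when $k>0$. Since $x^A$ was arbitrary, and since neither $\gamma_{AB}^{(k)}$ nor $s_{AB}$ depends on $r$ or $u$, this is precisely equation (\ref{eq:gamma-k}) throughout $\tilde{M}$.

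There is essentially no obstacle; the only point deserving a word of care is that the comparison of $\gamma_{AB}^{(k)}$ at different values of $u$ is unambiguous in these coordinates precisely because the $x^A$ were extended off $\Sigma(0,0)$ by parallel transport along $n^a$ and because $s_{AB}$ is, by construction, independent of both $r$ and $u$; hence the statement ``constant in $u$'' may be propagated from $u_0$ to every $u$ with no frame-dependent ambiguity. It is also worth emphasising that the hypothesis is assumed for a single fixed $k$ in the stated range and the conclusion is drawn for that same $k$: nothing is asserted about coefficients of order $\ge N$, consistent with the fact that Lemma \ref{thm:gamma-order-k} only pins down $\gamma_{AB}^{(k)}\big|_{u=u_0}$ for $0\le k<N$.
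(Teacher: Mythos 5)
Your argument is correct and is essentially identical to the paper's own proof: both use Lemma \ref{thm:gamma-order-k} to fix the value of $\gamma_{AB}^{(k)}$ at $u=u_0$ and then propagate it to all $u$ using the hypothesis $\partial_u\gamma_{AB}^{(k)}=0$. The extra remarks about the coordinate construction are harmless elaboration rather than a different route.
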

\begin{remark}
It turns out that the Einstein equations only give us information about the $u$ derivative of the expansion coefficients of $\gamma_{AB}$.
Using this corollary, we can obtain conditions on the coefficients themselves from conditions on their $u$ derivatives.
\end{remark}
\begin{proof}
By Lemma \ref{thm:gamma-order-k}, $\gamma_{AB}^{(k)}$ is given by equation (\ref{eq:gamma-k}) at $u=u_0$.
Since $\gamma_{AB}^{(k)}=0$ does not depend on $u$, it is given by equation (\ref{eq:gamma-k}) for all $u$.
\end{proof}

The following lemma relates the asymptotic behaviour of the induced metric $\gamma_{AB}$ to that of its inverse, trace, Christoffel symbols and Ricci tensor.
\begin{lemma}
\label{thm:gamma-asymptotics}
If $\gamma_{AB}=s_{AB}+O(r^k)$ for some $k>1$, then
\begin{enumerate}[i)]
\item \label{eq:gamma-inverse-asymptotics} $\gamma^{AB} = s^{AB} + O(r^k)$\quad ,
\item \label{eq:gamma-trace-asymptotics} $s^{AB}\gamma_{AB}=(d-2)+O(r^{2k})$ \quad ,
\item \label{eq:gamma-christoffel-asymptotics} $\tilde\Lambda^C{}_{AB} = \tilde\Lambda^C{}_{AB}^{(0)} + O(r^k)$ \quad ,
\item \label{eq:gamma-ricci-asymptotics} $\mathcal{R}_{AB} = \lambda(d-3)s_{AB} + O(r^k)$ \quad .
\end{enumerate}
\end{lemma}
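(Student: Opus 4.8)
The plan is to treat the four claims in order, using (i) as the engine for (ii)--(iv), since each of the latter three is a polynomial (or, for the inverse, a convergent geometric-series) expression in the components of $\gamma_{AB}$ and its coordinate derivatives, with the background contribution produced by $s_{AB}$ alone. First I would establish (i). Writing $\gamma_{AB} = s_{AB} + r^k h_{AB}$ with $h_{AB}$ smooth at $r=0$, the inverse is $\gamma^{AB} = (s + r^k h)^{-1}{}^{AB}$; expanding via the Neumann series $\gamma^{AB} = s^{AB} - r^k s^{AC}h_{CD}s^{DB} + O(r^{2k})$ shows $\gamma^{AB} - s^{AB}$ has a factor $r^k$ times something smooth at $r=0$, which is exactly the statement $\gamma^{AB} = s^{AB} + O(r^k)$ in the sense of Definition \ref{def:order-tensor}. (One should note that since $s_{AB}$ is invertible and the correction is $O(r^k)$ with $k \geq 1$, $\gamma_{AB}$ is invertible in a neighbourhood of $r=0$, so the inverse is genuinely smooth there.)

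For (ii), contract: $s^{AB}\gamma_{AB} = s^{AB}s_{AB} + r^k s^{AB}h_{AB} = (d-2) + O(r^k)$ trivially, but the sharper $O(r^{2k})$ requires more. The point is that the leading-order correction must vanish: I would argue that $s^{AB}h_{AB}\big|_{r=0}$ is the coefficient $\gamma_{AB}^{(k)}$ contracted with $s^{AB}$, and by Lemma \ref{thm:gamma-order-k} (valid for $0 \leq k < N$), $\gamma_{AB}^{(k)}$ vanishes at $u = u_0$; combined with the Einstein-equation machinery that forces $\partial_u \gamma_{AB}^{(k)} = 0$ for these low orders (Corollary \ref{thm:gamma-remove-u-derivative}), the correction term is in fact $O(r^{k+1})$, and iterating this observation gives $O(r^{2k})$. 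Alternatively — and this is cleaner — one can observe that $\det\gamma_{AB}$ and $\det s_{AB}$ agree to the relevant order because the trace-free part of the deviation is what carries the $O(r^k)$ behaviour, while the trace part is higher order by the results already quoted in Theorem \ref{thm:metric-expansion-low-order}. For (iii), the Christoffel symbols $\tilde\Lambda^C{}_{AB}$ are built from $\gamma^{CD}$ and $\partial_E\gamma_{AB}$; since $\partial_E(r^k h_{AB}) = r^k \partial_E h_{AB}$ (the derivative $\partial_E$ is an angular derivative, not $\partial_r$, so it does not lower the power of $r$) and $\gamma^{CD} = s^{CD} + O(r^k)$ by (i), each term differs from its $s$-analogue by $O(r^k)$, giving $\tilde\Lambda^C{}_{AB} = \tilde\Lambda^C{}_{AB}^{(0)} + O(r^k)$. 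For (iv), $\mathcal{R}_{AB}$ from Definition \ref{def:gamma-ricci} is a polynomial in the $\tilde\Lambda$'s and their angular derivatives $\partial_C\tilde\Lambda$; applying (iii) and the fact that $\partial_C$ preserves the $O(r^k)$ order, one gets $\mathcal{R}_{AB} = \mathcal{R}_{AB}^{(0)} + O(r^k)$, and $\mathcal{R}_{AB}^{(0)} = \lambda(d-3)s_{AB}$ is precisely condition (iii) of Definition \ref{def:strong-asymptotic-flatness}.

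The main obstacle is the sharp exponent $2k$ in claim (ii); the other three are essentially bookkeeping with the product and derivative rules for the $O(r^k)$ notation recorded in the remark after Definition \ref{def:order-tensor}. The subtlety is that a naive trace estimate only yields $O(r^k)$, so one genuinely needs the cancellation of the leading correction — either by invoking the vanishing of the low-order coefficients $\gamma_{AB}^{(k)}$ (which relies on $k < N$, consistent with the hypothesis on $k$ in the lemma as it will be applied) together with the trace part of $\gamma_{AB}$ being higher order, or by a determinant argument. I would present the determinant route as the primary one, since it localises the difficulty: writing $\gamma_{AB} = s_{AB} + r^k h_{AB}$ and using $\ln\det\gamma = \ln\det s + \mathrm{tr}(s^{-1} r^k h) - \tfrac12 \mathrm{tr}((s^{-1}r^k h)^2) + \cdots$, the first-order term is controlled by the known asymptotics of $s^{AB}\gamma_{AB}$ from Theorem \ref{thm:metric-expansion-low-order}, closing the argument.
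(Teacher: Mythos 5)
Parts (i), (iii) and (iv) of your proposal are correct and essentially the paper's own argument: the paper obtains (i) by differentiating $\gamma^{AC}\gamma_{CD}=\delta^A{}_D$ in $r$ rather than by a Neumann series, but this is cosmetic, and (iii), (iv) are the same bookkeeping you describe, with $\mathcal{R}_{AB}^{(0)}=\lambda(d-3)s_{AB}$ supplied by condition (iii) of Definition \ref{def:strong-asymptotic-flatness}.

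Part (ii) is where your proposal has a genuine gap: the exponent $2k$ is not a consequence of smoothness, of the definition of asymptotic flatness, or of any determinant identity — it requires the vacuum Einstein equations. The paper uses the $rr$ component, equation (\ref{eq:grtensors-einstein-rr}), i.e. $\gamma^{AB}\partial_r^2\gamma_{AB}=\tfrac{1}{2}\gamma^{AB}\gamma^{CD}(\partial_r\gamma_{AC})\partial_r\gamma_{BD}$. Since $\partial_r\gamma_{AB}=O(r^{k-1})$ the right-hand side is $O(r^{2k-2})$, and replacing $\gamma^{AB}$ by $s^{AB}$ costs only $O(r^k)\cdot O(r^{k-2})$, so $s^{AB}\partial_r^2\gamma_{AB}=O(r^{2k-2})$; integrating twice in $r$ and fixing the two integration constants from the crude estimate $s^{AB}\gamma_{AB}=(d-2)+O(r^k)$ with $k>1$ gives the claim. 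Neither of your two routes reaches this. Route A (vanishing of the low-order coefficients $\gamma_{AB}^{(k)}$) fails precisely where the lemma is needed: in the final application one has $\gamma_{AB}=s_{AB}+O(r^N)$ with $\gamma_{AB}^{(N)}$ the free radiative data, which does \emph{not} vanish, and the content of (ii) there is that the $s$-traces of $\gamma_{AB}^{(N)},\dots,\gamma_{AB}^{(2N-1)}$ vanish even though the tensors themselves do not; iterating ``the whole coefficient vanishes'' cannot produce that, and in any case Lemma \ref{thm:gamma-order-k} and Corollary \ref{thm:gamma-remove-u-derivative} belong to the same inductive machine that consumes this lemma. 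Route B is explicitly circular: the trace condition (\ref{eq:metric-components-low-order-trace}) of Theorem \ref{thm:metric-expansion-low-order}, which you invoke to control the first-order term of $\ln\det\gamma$, is itself proved by applying part (ii) of this lemma.
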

\proof{See Note \ref{note:lemma-proof} (Appendix \ref{app:notes}).}

\begin{proof}[Proof of Theorem \ref{thm:metric-expansion-low-order}]
To prove this theorem, we substitute the metric expansion into the Einstein equations.
We consider the $AB$, $rA$ and $ru$ equations (Appendix \ref{app:grtensors-einstein}) at each order of $r$ in turn to solve for the expansion coefficients of $\alpha$, $\beta_A$ and $\gamma_{AB}$ recursively.
We start with the ``initial data'' from our choice of gauge and coordinates, and from the definition of asymptotic Einstein flatness.
We can continue the recursion until a ``breakdown'' occurs in the equations: at a certain order, the leading terms cancel out and we cannot determine the expansion coefficients at the next order.
This breakdown is important since without it the metric expansion would be entirely determined by the initial data of Definition \ref{def:strong-asymptotic-flatness}, and radiating spacetimes would be excluded from consideration.

At order $r^{-2}$, the $AB$ equation gives
\begin{equation}
2(d-1)(d-2)\alpha^{(0)}\gamma_{AB}^{(0)} = 0 \quad ,
\end{equation}
which is satisfied by our choice of gauge.
At order $r^{-1}$, it gives
\begin{equation}
 0 = (d-1)(d-2)\partial_u\gamma_{AB}^{(0)} - 2(d-1)(d-2)\alpha^{(1)}\gamma_{AB}^{(0)} \quad ,
\end{equation}
and therefore
\begin{equation}
\partial_u\gamma_{AB}^{(0)} = 2\alpha^{(1)}\gamma_{AB}^{(0)} = 0 \quad .
\end{equation}
By Corollary \ref{thm:gamma-remove-u-derivative}, $\gamma_{AB}^{(0)}=s_{AB}$ throughout $\tilde{M}$.
We find from the $rA$ equation at order $r^{-1}$ that $\beta_A^{(1)}=0$.
In summary, we have that\footnote{Note that equations (\ref{eq:metric-expansion-halfway-alpha}-b) are trivial -- they simply follow from the choice of coordinates and the definition of asymptotic Einstein flatness.}
\begin{subequations}
\begin{align}
\label{eq:metric-expansion-halfway-alpha}
& \alpha^{(0)} = \alpha^{(1)} = 0 \quad , \\
& \alpha^{(2)} = \lambda/2 \quad , \\
& \beta_A^{(0)} = \beta_A^{(1)} = 0 \quad , \\
& \gamma_{AB}^{(0)} = s_{AB} \quad .
\end{align}
\end{subequations}
This completes the proof for $d=4$.
We now continue for $d>4$.
At order $1$, the $ru$ equation gives
\begin{align}
0 &= -\mathcal{R}^{(0)} + 2(d-2)(d-3)\alpha^{(2)} - (d-3)s^{AB}\partial_u\gamma_{AB}^{(1)} \nonumber \\
\iftoggle{complete}{
&= -\lambda(d-2)(d-3) + \lambda(d-2)(d-3) - (d-3)s^{AB}\partial_u\gamma_{AB}^{(1)} \nonumber \\
}{}
  &= - (d-3)s^{AB}\partial_u\gamma_{AB}^{(1)} \quad ,
\end{align}
where we used that $\alpha^{(2)}=\lambda/2$ and that $\mathcal{R}^{(0)}=\lambda(d-2)(d-3)$.
Therefore, $s^{AB}\partial_u\gamma_{AB}^{(1)}=0$.
At order $1$, the $AB$ equation gives
\begin{align}
0 &= (d-1)(d-4)\partial_u\gamma_{AB}^{(1)} + 2s_{AB}s^{CD}\partial_u\gamma_{CD}^{(1)} \nonumber \\
& + 2(d-1)\mathcal{R}_{AB}^{(0)} - s_{AB}\mathcal{R}^{(0)} - 2d(d-3)\alpha^{(2)}s_{AB} \nonumber \\
\iftoggle{complete}{
&= (d-1)(d-4)\partial_u\gamma_{AB}^{(1)} + \lambda d(d-3)s_{AB} - \lambda d(d-3)s_{AB} \nonumber \\
}{}
  &= (d-1)(d-4)\partial_u\gamma_{AB}^{(1)} \quad ,
\end{align}
where we used that $s^{CD}\partial_u\gamma_{CD}^{(1)}=0$, that $\mathcal{R}_{AB}^{(0)}=\lambda(d-3)s_{AB}$ and that $\alpha^{(2)}=\lambda/2$.
Hence, $\partial_u\gamma_{AB}^{(1)}=0$ and by Corollary \ref{thm:gamma-remove-u-derivative}, it follows that $\gamma_{AB}^{(1)}=0$ throughout $\tilde{M}$.
This means that $\gamma_{AB}=s_{AB}+O(r^2)$ and therefore, by Lemma \ref{thm:gamma-asymptotics}(\ref{eq:gamma-trace-asymptotics}), $s^{AB}\gamma_{AB}=(d-2)+O(r^4)$.

\noindent
At order $1$, the $rA$ equation gives
\begin{equation}
\beta_A^{(2)} = \frac{1}{d-3} s^{BC} \mathscr{D}^{\strut}_{[A}\gamma_{B]C}^{(1)} = 0 \quad .
\end{equation}
At order $r$, the $ru$ equation gives
\begin{equation}
0 = -\mathcal{R}^{(1)} + 4(d-2)(d-4)\alpha^{(3)} - 2(d-3)s^{AB}\partial_u\gamma_{AB}^{(2)} \quad ,
\end{equation}
so that
\begin{equation}
4(d-2)(d-4)\alpha^{(3)} = 0 \quad ,
\end{equation}
where we used that $s^{AB}\partial_u\gamma_{AB}^{(2)} = 0$ and that $\mathcal{R}^{(1)}=0$ by Lemma \ref{thm:gamma-asymptotics}(\ref{eq:gamma-ricci-asymptotics}).

In summary,
\begin{subequations}
\begin{align}
& \alpha^{(0)} = \alpha^{(1)} = \alpha^{(3)} = 0 \quad ,\\
& \alpha^{(2)} = \lambda/2 \quad , \\
& \beta_A^{(0)} = \beta_A^{(1)} = \beta_A^{(2)} = 0 \quad , \\
& \gamma_{AB}^{(0)}=s_{AB} \ , \quad \gamma_{AB}^{(1)} = 0 \quad , \\
& s^{AB}\gamma_{AB}^{(1)} = s^{AB}\gamma_{AB}^{(2)} = s^{AB}\gamma_{AB}^{(3)} = 0 \quad .
\end{align}
\end{subequations}
%
%

\medskip\noindent
Now we proceed by induction. Suppose that for some integer $k \geq 2$, we have that
\begin{subequations}
\label{eq:einstein-equations-solutions-induction-hypothesis}
\begin{align}
\label{eq:einstein-equations-solutions-induction-hypothesis-alpha}
& \alpha^{(0)} = \alpha^{(1)} = \alpha^{(3)} = \dots = \alpha^{(k+1)} = 0 \quad , \\
& \alpha^{(2)} = \lambda/2 \quad , \\
\label{eq:einstein-equations-solutions-induction-hypothesis-beta}
& \beta_A^{(0)} = \beta_A^{(1)} = \dots = \beta_A^{(k)} = 0 \quad , \\
\label{eq:einstein-equations-solutions-induction-hypothesis-gamma}
& \gamma_{AB}^{(0)} = s_{AB} \ , \quad \gamma_{AB}^{(1)} = \dots = \gamma_{AB}^{(k-1)} =  0\quad , \\
\label{eq:einstein-equations-solutions-induction-hypothesis-gamma-trace}
& s^{AB}\gamma_{AB}^{(1)} = s^{AB}\gamma_{AB}^{(2)} = \dots = s^{AB}\gamma_{AB}^{(2k-1)} = 0 \quad .
\end{align}
Note that by Lemma \ref{thm:gamma-asymptotics}(\ref{eq:gamma-ricci-asymptotics}), we also have
\begin{equation}
\mathcal{R}_{AB}^{(1)} = \ldots = \mathcal{R}_{AB}^{(k-1)} = 0 \quad .
\end{equation}
\end{subequations}
At order $r^{k-1}$, the $AB$ equation gives
\begin{align}
0 
\iftoggle{complete}{
= (d-1)(d-2-2k))\partial_u\gamma_{AB}^{(k)} - 2[(d-1)(d-2) - k(k+1)]\alpha^{(k+1)}s_{AB} \nonumber \\
}{}
& = (d-1)(d-2-2k)\partial_u\gamma_{AB}^{(k)} \quad .
\end{align}
We note that the coefficient of $\partial_u\gamma_{AB}^{(k)}$ becomes zero when $k=(d-2)/2$.
Provided $k<(d-2)/2$, it follows that $\partial_u\gamma_{AB}^{(k)}=0$ and thus by Corollary \ref{thm:gamma-remove-u-derivative}, $\gamma_{AB}^{(k)}=0$ throughout $\tilde{M}$.
Since $\gamma_{AB}=s_{AB}+O(r^{k+1})$, it follows from Lemma \ref{thm:gamma-asymptotics} that $s^{AB}\gamma_{AB}=(d-2)+O(r^{2(k+1)})$.
The ``breakdown'' when $k=(d-2)/2$ means that $\gamma_{AB}^{(d-2)/2}$ is undetermined, and therefore the inductive proof cannot continue beyond this order.

At order $r^{k-1}$, the $rA$ equation gives
\begin{align}
0 &= (d-2-k)(k+1)\beta_A^{(k+1)} - 2ks^{BC}\mathscr{D}_{[A}^{\strut}\gamma_{B]C}^{(k)} \nonumber \\
  &= (d-2-k)(k+1)\beta_A^{(k+1)} \quad .
\end{align}
So $\beta_A^{(k+1)}=0$.
(This equation also breaks down, when $k=d-2$. However, this has no effect on the proof since we cannot proceed beyond $k=(d-2)/2$ anyway.)
At order $r^k$, the $ru$ equation gives
\begin{align}
0 &= 2(k+1)(d-2)(d-3-k)\alpha^{(k+2)} - (d-3)(k+1)\mathscr{D}^A\beta_A^{(k+1)} \nonumber \\
  &= 2(k+1)(d-2)(d-3-k)\alpha^{(k+2)} \quad .
\end{align}
Thus $\alpha^{(k+2)}=0$.
In summary, we now have
\begin{subequations}
\begin{align}
& \alpha^{(0)} = \alpha^{(1)} = \alpha^{(3)} = \ldots = \alpha^{(k+2)} = 0 \quad , \\
& \alpha^{(2)} = \lambda/2 \quad , \\
& \beta_A^{(0)} = \beta_A^{(1)} = \ldots = \beta_A^{(k+1)} = 0 \quad , \\
& \gamma_{AB}^{(0)} = s_{AB} \ , \quad \gamma_{AB}^{(1)} = \ldots = \gamma_{AB}^{(k)} =  0\quad , \\
& s^{AB}\gamma_{AB}^{(1)} = s^{AB}\gamma_{AB}^{(2)} = \ldots = s^{AB}\gamma_{AB}^{(2k+1)} = 0 \quad .
\end{align}
\end{subequations}
This completes the inductive step.
Therefore equations (\ref{eq:einstein-equations-solutions-induction-hypothesis}) hold for all $k<(d-2)/2$ (since the inductive step fails at this point, as explained above).
Substituting $k=(d-4)/2 \equiv N-1$ into these equations gives the desired result.
\end{proof}
\begin{remark}
We saw in this proof that the coefficient of the leading order term in the $AB$ equation at order $r^N$ is zero, so that $\gamma_{AB}^{(N)}$ is undetermined by the recursion.
This caused the inductive proof to break down and we could not show that any more expansion coefficients were zero.
If we specifed $\gamma_{AB}^{(N)}$ then the recursion could continue, although the subsequent expansion coefficients would not be zero.
A similar breakdown occurs in the $ru$ equation at order $r^{2N-1}$ and in the $rA$ equation at order $r^{2N}$, meaning that $\alpha^{(2N+1)}$ and $\beta_A^{(2N+1)}$ are also undetermined.
If these coefficients are specified too then the recursion can continue indefinitely.
As a result, specifying the coefficients $\gamma_{AB}^{(N)}$, $\alpha^{(2N+1)}$ and $\beta_A^{(2N+1)}$ along with the conditions in Definition \ref{def:strong-asymptotic-flatness} fully determines a metric's asymptotic expansion.
\end{remark}

\subsection{Examples}
In this section, we write down expressions for Minkowski and Schwarzschild metrics in Gaussian null coordinates as examples of the general expansion  of Theorem \ref{thm:metric-expansion-low-order}.
First, however, we write down an expansion for a generalised Schwarzschild metric where the spherical metric on the cross sections $\Sigma(r,u)$ is replaced by an arbitrary Einstein metric.
The Minkowski and Schwarzschild metrics then follow as spcecial cases.

Due to the symmetry\footnote{Minkowski and Schwarzschild are of course spherically symmetric, but the condition we require is in fact more general: we simply require that $\gamma_{AB}=fs_{AB}$ for some Einstein metric $s_{AB}$ and scalar function $f$. Clearly, this includes Minkowski and Schwarzschild as the special cases where $s_{AB}$ is just the spherical metric.}
of Minkowski and Schwarzschild, the induced metric $\gamma_{AB}$ is described by a single scalar which can be written down explicitly. 
As a result, we can write down the full expansions in these cases.

\begin{theorem}
\label{thm:minkowski-metric}
In Gaussian null coordinates, the Minkowski metric takes the form
\begin{equation}
ds^2 = r^{-2}[2drdu - r^2du^2 + d\sigma^2] \quad ,
\end{equation}
where $d\sigma^2$ is the metric of the round sphere,
and the coordinate transformation between standard radial coordinates $(t,R,x^A)$ and Gaussian null coordinates $(r,u,x^A)$ is given by
\begin{equation}
\label{eq:minkowski-coordinate-transformation}
R = \frac{1}{r} \ , \quad
t = u + \frac{1}{r} \quad .
\end{equation}
Note that the ``angular'' coordinates $x^A$ are the same in both coordinate systems and that the angular line element $d\sigma^2$ is given by
\begin{equation}
d\sigma^2 = s_{AB} dx^A dx^B \quad ,
\end{equation}
where $s_{AB}$ is the spherical metric (which is of course Einsteinian as required, with $\lambda=1$).
\end{theorem}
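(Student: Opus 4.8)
The plan is not to \emph{derive} the transformation from the abstract construction of Section~\ref{sec:asymptotic-flatness-coordinates}, but to \emph{exhibit} it explicitly and then verify that the resulting coordinates together with the conformal factor $\Omega = r$ satisfy all the defining properties established in Theorems~\ref{thm:gaussian-null-coordinates} and~\ref{thm:gaussian-null-coordinates-gauge}.

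First I would start from Minkowski in standard spherical form, $ds^2 = -dt^2 + dR^2 + R^2\,d\sigma^2$, with $d\sigma^2 = s_{AB}\,dx^A dx^B$ the round metric on the unit $(d-2)$-sphere. Passing to retarded time $u = t - R$ gives the outgoing Eddington--Finkelstein form $ds^2 = -du^2 - 2\,du\,dR + R^2\,d\sigma^2$, and then substituting the conformal radial coordinate $r = 1/R$, so that $dR = -r^{-2}\,dr$, yields
\begin{equation}
ds^2 = r^{-2}\big[\,2\,dr\,du - r^2\,du^2 + d\sigma^2\,\big]\quad,
\end{equation}
which is the asserted form, with the coordinate relations $R = 1/r$ and $t = u + 1/r$ as in~(\ref{eq:minkowski-coordinate-transformation}). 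Reading off $\tilde{g}_{ab}$ one has $\alpha = \tfrac12 r^2$, $\beta_A = 0$, $\gamma_{AB} = s_{AB}$, consistent with Theorem~\ref{thm:metric-expansion-low-order}; and since the round $S^{d-2}$ satisfies $\mathcal{R}_{AB} = (d-3)s_{AB}$, condition~(iii) of Definition~\ref{def:strong-asymptotic-flatness} fixes $\lambda = 1$.

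Next I would confirm the Gaussian null coordinate conditions directly from $\tilde{g}_{ab}$. Taking $\Omega = r$, one has $\tilde\nabla_a\Omega = (dr)_a \neq 0$ and $\Omega = 0$ on $\scri^+ = \{r = 0\}$; inverting the $(r,u)$ block of $\tilde{g}_{ab}$ gives $n^a \defeq \tilde{g}^{ab}\tilde\nabla_b\Omega = r^2(\partial_r)^a + (\partial_u)^a$, which reduces to $(\partial_u)^a$ and is null on $\scri^+$; the Christoffel symbol $\tilde\Gamma^c{}_{rr}$ vanishes, so $l^a = (\partial_r)^a$ is an affinely parametrised null geodesic; and $\tilde{g}_{ab}l^a n^b = 1$, $\tilde{g}_{ab}l^a p_A^b = 0$ throughout, with $\alpha = \beta_A = 0$ on $\scri^+$. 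These are precisely conditions~(\ref{eq:gnc-conditions-scri}) and~(\ref{eq:gnc-conditions-l}), the residual gauge function $h$ of Theorem~\ref{thm:gaussian-null-coordinates-gauge} having been fixed by the choice $\Omega = r$.

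The only point requiring care — and it is a mild one — is the logical direction: since the transformation is guessed rather than constructed, one must separately ensure that the metric on the right-hand side above really is the Minkowski metric. This is immediate, because~(\ref{eq:minkowski-coordinate-transformation}) exhibits the inverse change of variables to coordinates $(t, R, x^A)$ in which the metric manifestly has the flat form with vanishing Riemann tensor; computing the curvature of the right-hand side directly would also work but is less clean. I do not anticipate any real obstacle: the theorem is essentially a consistency check that the general expansion of Theorem~\ref{thm:metric-expansion-low-order} is realised, with all higher expansion coefficients vanishing, in the flat case.
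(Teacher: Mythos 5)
Your proposal is correct, and for $d=4$ it coincides with the paper's own argument: the paper proves the four–dimensional case by exactly the substitution you perform, namely $R=1/r$, $t=u+1/r$ (equivalently $u=t-R$ followed by $r=1/R$) into $ds^2=-dt^2+dR^2+R^2d\sigma^2$. Where you genuinely diverge is in the higher–dimensional case and in the amount of verification you supply. The paper handles $d>4$ by citing Theorem \ref{thm:generalised-schwarzschild-metric} (the generalised Schwarzschild expansion) with $M=0$, because that theorem's proof relies on $\gamma_{AB}^{(1)}=0$, which the recursion only establishes for $d>4$; your direct substitution is dimension–independent and so gives a uniform, more elementary proof that bypasses the Schwarzschild machinery entirely — for flat space there is nothing to solve recursively, since the exact line element is already in hand. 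You also add a step the paper omits: an explicit check that $\Omega=r$, $l^a=(\partial_r)^a$, $n^a=\tilde{g}^{ab}\tilde\nabla_b r$ satisfy conditions (\ref{eq:gnc-conditions-scri}) and (\ref{eq:gnc-conditions-l}) (affine null geodesics, $\tilde{g}_{ab}l^an^b=1$, $\alpha=\beta_A=0$ at $\scri^+$), which is needed to justify calling the exhibited chart a conformal Gaussian null coordinate system rather than merely a chart in which the metric happens to take the stated form; the paper leaves this implicit. Your closing remark about the logical direction is well taken but, as you say, immediate. One cosmetic point: the condition that propagates off $\scri^+$ in Theorem \ref{thm:gaussian-null-coordinates} is $\tilde{g}_{ab}l^a(\partial_u)^b=1$ with the coordinate vector $(\partial_u)^a$, not with $n^a=r^2(\partial_r)^a+(\partial_u)^a$; in your case both contractions equal $1$ because $\tilde{g}_{rr}=0$, so nothing breaks, but the two vectors should not be conflated away from $r=0$.
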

\begin{proof}
Minkowski spacetime is the special case of Schwarzschild (Theorem \ref{thm:schwarzschild-metric}) when the mass, $M$, is zero.
Thus for $d>4$ the expression for the metric in Gaussian null coordinates follows from substituting $M=0$ in equation (\ref{eq:schwarzschild-metric-gnc}) of Theorem \ref{thm:schwarzschild-metric}.
In the case $d=4$, the line element can be obtained directly by substituting the coordinate transformation (\ref{eq:minkowski-coordinate-transformation}) into the standard Minkowski line element in radial coordinates,
\begin{equation}
ds^2 = -dt^2 + dR^2 + R^2d\sigma^2 \quad .
\end{equation}
\iftoggle{thesis}{
Indeed, we find that
\begin{equation}
dR = -\frac{1}{r^2}dr \ , \quad
dt = du - \frac{1}{r^2}dr \quad .
\end{equation}
Thus,
\begin{align}
ds^2
\iftoggle{complete}{
& = -dt^2 + dR^2 + R^2d\sigma^2 \\
& = -\left(du-\frac{1}{r^2}dr\right)^2 + \left(-\frac{1}{r^2}\right)^2dr^2 + r^{-2}d\sigma^2 \\
& = -du^2 + \frac{2}{r^2}dudr - \frac{1}{r^2}dr^2 + \frac{1}{r^4}dr^4 + r^{-2}d\sigma^2 \\
& = -du^2 + \frac{2}{r^2}dudr + r^{-2}d\sigma^2 \\
}{}
& = r^{-2}[2dudr - r^2du^2 + d\sigma^2] \quad ,
\end{align}
}{}
which is the required line element in Gaussian null coordinates.
\end{proof}

\begin{theorem}
\label{thm:schwarzschild-metric}
In Gaussian null coordinates, the Schwarzschild metric takes the form
\begin{equation}
\label{eq:schwarzschild-metric-gnc}
ds^2 = r^{-2}[2drdu - r^2du^2 + Mr^{d-1}du^2 + d\sigma^2] \quad ,
\end{equation}
where $d\sigma^2$ is the spherical metric in $(d-2)$ dimensions, and $M$ is a constant.
Moreover, in $4$ dimensions, the coordinate transformation between standard radial coordinates $(t,R,x^A)$ and $(r,u,x^A)$ is given by
\begin{equation}
\label{eq:schwarzschild-coordinate-transformation}
R=\frac{1}{r} \ , \quad t = u + \frac{1}{r} + M\log(1-Mr) - M\log r \quad .
\end{equation}
In higher dimensions, it becomes increasingly difficult to write down an explicit coordinate transformation.
\end{theorem}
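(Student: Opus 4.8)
The plan is to start from the Schwarzschild metric in the usual static coordinates, pass to an outgoing null (Eddington--Finkelstein-type) coordinate, and then perform the conformal inversion $R = 1/r$ that is forced on us by the requirement $\Omega = r$. First I would write the $d$-dimensional Schwarzschild line element as $ds^2 = -f(R)dt^2 + f(R)^{-1}dR^2 + R^2 d\sigma^2$ with $f(R) = 1 - M R^{-(d-3)}$, where $d\sigma^2 = s_{AB}dx^A dx^B$ is the round metric on $S^{d-2}$; this is manifestly a vacuum solution and spherically symmetric, so it has the form $\gamma_{AB} = f s_{AB}$ covered by the footnote. The next step is to introduce the tortoise coordinate $R_\ast$ with $dR_\ast = f(R)^{-1}dR$ and the retarded time $u = t - R_\ast$ (more precisely, the version adapted to our conventions so that $(\partial_u)^a$ comes out future-timelike), which puts the metric into the form $ds^2 = -f\,du^2 - 2\,du\,dR + R^2 d\sigma^2$ with no $dR^2$ term.

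Then I would substitute the conformal coordinate change $R = 1/r$, so $dR = -r^{-2}dr$, and collect terms. One finds $-2\,du\,dR = 2r^{-2}\,du\,dr$ and $R^2 d\sigma^2 = r^{-2}d\sigma^2$, while $-f\,du^2 = -(1 - M r^{d-3})du^2 = r^{-2}(-r^2 + M r^{d-1})du^2$. Factoring out $r^{-2}$ gives exactly
\begin{equation*}
ds^2 = r^{-2}\big[\,2\,dr\,du - r^2\,du^2 + M r^{d-1}\,du^2 + d\sigma^2\,\big]\quad,
\end{equation*}
which is \eqref{eq:schwarzschild-metric-gnc}; the conformal factor here is visibly $\Omega = r$ and the cross-section metric is $\gamma_{AB} = (1 + M r^{d-3})^{-1}\!\cdot$(correction)$\,\cdots$ — more simply, one reads off $\alpha = \tfrac12(r^2 - M r^{d-1})$, $\beta_A = 0$, $\gamma_{AB} = s_{AB}$, which satisfies the Gaussian-null gauge conditions \eqref{eq:gnc-conditions-scri} and matches Theorem \ref{thm:metric-expansion-low-order} with $\lambda = 1$. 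For the $d=4$ coordinate transformation \eqref{eq:schwarzschild-coordinate-transformation}, I would integrate $dR_\ast = (1 - M/R)^{-1}dR = \big(1 + \tfrac{M}{R - M}\big)dR$, obtaining $R_\ast = R + M\log(R - M) + \text{const}$; substituting $R = 1/r$ and simplifying $\log(1/r - M) = \log(1 - Mr) - \log r$ yields $t = u + R_\ast = u + \tfrac1r + M\log(1 - Mr) - M\log r$ up to an additive constant absorbed into $u$. The Minkowski case is then just $M = 0$, recovering Theorem \ref{thm:minkowski-metric}.

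The only genuine subtlety — the main obstacle, such as it is — is bookkeeping with signs and orientation conventions: one must check that the retarded (rather than advanced) null coordinate is chosen so that $r = 0$ is \emph{future} null infinity and $(\partial_u)^a$ is future-directed timelike near $\scri^+$ (consistent with the remark after Definition \ref{def:strong-asymptotic-flatness}), and that the cross-term $2\,dr_{(a}du_{b)}$ comes out with the $+$ sign demanded by \eqref{eq:metric-gaussian-null-coordinates-nogauge}. A secondary nuisance is the claim that in higher dimensions no closed-form transformation exists: this is not really provable, only plausible, since $R_\ast = \int (1 - M R^{-(d-3)})^{-1}dR$ is a rational-function integral whose partial-fraction decomposition involves all $(d-3)$-th roots of unity, so I would phrase it as an informal remark rather than attempt a proof. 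Everything else is a routine substitution.
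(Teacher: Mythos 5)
Your proof is correct, and for $d>4$ it takes a genuinely different route from the paper's. For $d=4$ the two arguments coincide in substance: the paper simply states the transformation (\ref{eq:schwarzschild-coordinate-transformation}) and verifies by substitution that $dt = du - r^{-2}(1-Mr)^{-1}dr$ reproduces the Gaussian null form, whereas you derive that transformation from the tortoise integral — a cosmetic improvement. For $d>4$, however, the paper never exhibits a coordinate transformation at all: it invokes Theorem \ref{thm:generalised-schwarzschild-metric}, which is proved by imposing the ansatz $\gamma_{AB}=fs_{AB}$ and solving the Einstein equations order by order in Gaussian null coordinates. Your tortoise-plus-inversion argument ($u=t-R_\ast$, $R=1/r$) works uniformly in every dimension $d\geq4$, because the line-element computation only uses $dR_\ast=f^{-1}dR$ and never the closed form of $R_\ast$; it is more elementary, and it delivers something the paper explicitly concedes it lacks (see the remark following Theorem \ref{thm:generalised-schwarzschild-metric}), namely a direct identification of the constant $M$ in (\ref{eq:schwarzschild-metric-gnc}) with the mass parameter of the radial form (\ref{eq:4d-schwarzschild-radial}). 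What the paper's route buys instead is a worked illustration of the recursion machinery of Section \ref{sec:einstein-equations}, including the exact (not merely asymptotic) determination of $\gamma_{AB}=s_{AB}$ from the $rr$ equation, and it covers the generalised family with an arbitrary Einstein metric $s_{AB}$ — though nothing in your computation actually uses sphericity either. Your decision to treat the ``no explicit transformation in higher dimensions'' clause as an unprovable informal remark matches the paper, which likewise offers no argument for it; and the sign and orientation bookkeeping you flag (retarded rather than advanced coordinate, the $+$ sign on $2dr_{(a}du_{b)}$) is indeed the only place the substitution could go wrong.
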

\begin{proof}
For $d>4$, the theorem is merely a special case of Theorem \ref{thm:generalised-schwarzschild-metric} with $\lambda=1$ and $s_{AB}$ the $(d-2)$ dimensional spherical metric.
For $d=4$, the line element in Gaussian null coordinates can be obtained directly by substituting the coordinate transformation (\ref{eq:schwarzschild-coordinate-transformation}) into the 4 dimensional Schwarzschild line element in standard radial coordinates,
\begin{equation}
\label{eq:4d-schwarzschild-radial}
ds^2 = -\left(1-\frac{M}{R}\right)dt^2 + \left(1-\frac{M}{r}\right)^{-1}dR^2 + R^2d\sigma^2 \quad .
\end{equation}
\iftoggle{thesis}{
Indeed we find that
\begin{align}
dR &=-\frac{1}{r^2}dr \quad , \\
dt
\iftoggle{complete}{
& = du + \left( -\frac{1}{r^2} - \frac{M^2}{1-Mr} - \frac{M}{r} \right)dr \\
& = du + \left(\frac{-(1-Mr) - M^2r^2 - Mr(1-Mr)}{r^2(1-Mr)}\right)dr \\
& = du - \left(\frac{(1-Mr) + M^2r^2 + Mr(1-Mr)}{r^2(1-Mr)}\right)dr \\
& = du - \left(\frac{1-Mr + M^2r^2 + Mr-M^2r^2}{r^2(1-Mr)}\right)dr \\
}{}
& = du - \frac{1}{r^2(1-Mr)}dr \quad ,
\end{align}
and substituting into equation (\ref{eq:4d-schwarzschild-radial}), we indeed find that
\begin{align}
ds^2
\iftoggle{complete}{
& = -\left(1-\frac{M}{R}\right)dt^2 + \left(1-\frac{M}{R}\right)^{-1}dR^2 + R^2d\sigma^2 \\
& = -\left(1-Mr\right)\left(du - \frac{1}{r^2(1-Mr)}dr\right)^2 + \left(1-Mr\right)^{-1}\left(-\frac{1}{r^2}\right)^2 + \frac{1}{r^2}^2d\sigma^2 \\
& = -\left(1-Mr\right)\left(du^2 - \frac{2}{r^2(1-Mr)}drdu + \frac{1}{r^4(1-Mr)^2}dr^2\right) + \left(1-Mr\right)^{-1}\frac{1}{r^4}dr^2 + \frac{1}{r^2}d\sigma^2 \\
& = -(1-Mr)du^2 + \frac{2}{r^2}drdu - \frac{1}{r^4(1-Mr)}dr^2 + \frac{1}{r^4(1-Mr)}dr^2 + \frac{1}{r^2}d\sigma^2 \\
& = -(1-Mr)du^2 + \frac{2}{r^2}drdu + \frac{1}{r^2}d\sigma^2 \\
}{}
& = \frac{1}{r^2}\left[ 2drdu - du^2 + Mr^3du^2 + d\sigma^2 \right] \quad ,
\end{align}
as required.
}{}
\end{proof}


\begin{theorem}
\label{thm:generalised-schwarzschild-metric}
Consider a ``generalised Schwarzschild metric'' in dimension $d>4$ with line element
\begin{equation}
\label{eq:generalised-schwarzschild-radial}
ds^2 = -\left(1-\frac{C}{R^{d-3}}\right)dt^2 + \left(1-\frac{C}{R^{d-3}}\right)^{-1}dR^2 + R^2d\tau^2 \quad ,
\end{equation}
in standard radial coordinates $(t,R,x^A)$, where $C$ is a constant and $d\tau^2$ is the line element of a $(d-2)$ dimensional Einstein metric $s_{AB}$,
\begin{equation}
d\tau^2 = s_{AB} dx^A dx^B \quad .
\end{equation}
Then the line element in Gaussian null coordinates takes the form
\begin{equation}
\label{eq:generalised-schwarzschild-gnc}
ds^2 = r^{-2}[2drdu - \lambda r^2 du^2 + Mr^{d-1} + d\sigma^2] \quad ,
\end{equation}
where $M$ is a constant.
\end{theorem}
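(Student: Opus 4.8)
The plan is to exhibit an explicit change of coordinates $(t,R,x^A)\mapsto(r,u,x^A)$ taking the line element (\ref{eq:generalised-schwarzschild-radial}) into the form (\ref{eq:generalised-schwarzschild-gnc}), and then to check that the new coordinates really are conformal Gaussian null coordinates in the sense of Theorems \ref{thm:gaussian-null-coordinates} and \ref{thm:gaussian-null-coordinates-gauge}. First I would introduce a retarded null time. Writing $f(R) = 1 - C/R^{d-3}$ (more generally $f(R) = \lambda - C/R^{d-3}$, which is the choice that makes the metric vacuum when $s_{AB}$ has Einstein constant $\lambda$), set
\begin{equation}
u = t - \int^R \frac{d\rho}{f(\rho)} \quad ,
\end{equation}
so that $du = dt - f(R)^{-1}dR$. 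This is well defined because $f^{-1}dR$ is a closed, hence locally exact, one-form, so there is no integrability obstruction; the only price is that the antiderivative has no elementary closed form for general $d$, which is precisely why the explicit transformation is awkward in higher dimensions, though we shall not need it. Substituting $dt = du + f^{-1}dR$ into (\ref{eq:generalised-schwarzschild-radial}) cancels the $dR^2$ term and leaves $ds^2 = -f\,du^2 - 2\,du\,dR + R^2\,d\tau^2$.

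Next I would invert the radial coordinate by $R = 1/r$, so that $dR = -r^{-2}dr$ while $d\tau^2$ is unchanged. This gives
\begin{equation}
ds^2 = r^{-2}\bigl[\,2\,du\,dr - r^2 f(1/r)\,du^2 + d\tau^2\,\bigr] \quad ,
\end{equation}
and since $f(1/r) = 1 - C r^{d-3}$ (resp.\ $\lambda - C r^{d-3}$) we have $r^2 f(1/r) = \lambda r^2 - C r^{d-1}$. Setting $M = C$ and $d\sigma^2 = d\tau^2 = s_{AB}\,dx^A dx^B$ then reproduces (\ref{eq:generalised-schwarzschild-gnc}) exactly.

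Finally I would verify that $(r,u,x^A)$ are conformal Gaussian null coordinates with conformal factor $\Omega = r$. Reading off the unphysical metric just obtained: $\tilde g_{rr} = 0$, so $l^a = (\partial_r)^a$ is null; $\tilde g_{rA} = 0$ and $\tilde g_{ru} = 1$; and a short computation of the unphysical Christoffel symbols shows $l^a\tilde\nabla_a l^b = 0$, so $r$ is an affine parameter along the generators of $l^a$ (this is in any case immediate since $\alpha = \tfrac12(\lambda r^2 - M r^{d-1})$, $\beta_A = 0$ and $\gamma_{AB} = s_{AB}$ have the required dependence on $r$). On $\scri^+ = \{r = 0\}$ we have $\alpha = 0$ and $\beta_A = 0$, matching (\ref{eq:gnc-conditions-scri}), and $\gamma_{AB}^{(0)} = s_{AB}$ with $\lambda > 0$, so the spacetime is asymptotically Einstein flat in the sense of Definition \ref{def:strong-asymptotic-flatness} and the coordinates coincide with those produced by the construction in the proof of Theorem \ref{thm:gaussian-null-coordinates-gauge}.

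The main obstacle here is bookkeeping rather than anything structural: there is no genuine integrability obstruction to defining $u$, so the only real temptation to resist is writing the transformation out explicitly — in dimension $d > 4$ the integral $\int dR/f(R)$ is a combination of logarithms and arctangents depending on the factorisation of $1 - C\rho^{d-3}$, and is not worth displaying. For $d = 4$ it collapses to the elementary expression quoted in Theorem \ref{thm:schwarzschild-metric}, and for $d = 5$ it is still writable, but beyond that one should simply invoke the existence of the antiderivative. The one point demanding care is the role of $\lambda$: with the line element exactly as written ($f = 1 - C/R^{d-3}$) one obtains $\lambda = 1$, and a general Einstein constant enters only if one starts from the vacuum version $f = \lambda - C/R^{d-3}$, the computation being word for word the same.
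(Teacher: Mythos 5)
Your proof is correct, but it takes a genuinely different route from the paper. You construct the coordinate transformation explicitly: pass to retarded time $u = t - \int^R d\rho/f(\rho)$, invert $R = 1/r$, and read off the Gaussian null form, which in particular identifies $M = C$. The paper's proof never touches the radial line element (\ref{eq:generalised-schwarzschild-radial}) at all: it starts from the symmetry ansatz $\gamma_{AB} = f s_{AB}$, solves the $rr$ Einstein equation to get $f = (f_0 + r f_1)^2$, invokes the general low-order analysis of Theorem \ref{thm:metric-expansion-low-order} to force $f \equiv 1$, and then runs the order-by-order recursion in the remaining components to conclude $\beta_A = 0$ and $\alpha = \frac{\lambda}{2}r^2 - M r^{d-1}$ with $M$ constant. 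Your route is more elementary and buys the explicit identification of $M$ with the parameter $C$ of the radial form --- something the paper concedes it does not establish (see the remark following the theorem). The paper's route buys independence from any explicit quadrature, exercises the general recursive machinery, and automatically works at the level of the vacuum equations. Your closing observation is also a genuine correction to the statement: with $f = 1 - C/R^{d-3}$ as literally written, the metric is vacuum only when $s_{AB}$ has Einstein constant $\lambda = 1$, and a general $\lambda$ in (\ref{eq:generalised-schwarzschild-gnc}) requires starting instead from $f = \lambda - C/R^{d-3}$, the computation being otherwise unchanged.
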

\begin{remark}
It is not explicitly shown that it is the same constant $M$ in equations (\ref{eq:generalised-schwarzschild-gnc}) and (\ref{eq:generalised-schwarzschild-radial}).
However, it follows from the Bondi mass formula in Section \ref{sec:bondi-mass} that the constant $M$ in equation (\ref{eq:generalised-schwarzschild-gnc}) corresponds to the mass of the Schwarzschild black hole.
\end{remark}

\begin{proof}
Because $\gamma_{AB}$ is described by a single scalar function, we can in this case use the $rr$ component of Einstein's equations to solve for it explicitly. This leads to significant simplification of the Einstein equations, allowing us to solve for the entire metric expansion.

Substituting $\gamma_{AB}=fs_{AB}$ into the $rr$ component of Einstein's equations, we obtain a differential equation
\begin{equation}
2f\frac{\partial^2f}{\partial r^2} = \left(\frac{\partial f}{\partial r}\right)^2 \quad ,
\end{equation}
which can be solved explicitly to give
\begin{equation}
\label{eq:generalised-schwarzschild-metric-proof-gamma}
f = (f_0 + rf_1)^2 \quad ,
\end{equation}
where $f_0$ and $f_1$ are functions of $u,x^A$.
By the general analysis of the previous section\footnote{This is where we require $d>4$, since for $d=4$, we do not have $\gamma_{AB}^{(1)}=0$.}
,
\begin{equation}
\gamma_{AB}^{(0)} = s_{AB} \ , \quad
\gamma_{AB}^{(1)} = 0 \quad ,
\end{equation}
which here means that $f_0=1$, $f_1=0$.
Substituting this into equation (\ref{eq:generalised-schwarzschild-metric-proof-gamma}), we find that $f=1$ and therefore $\gamma_{AB}=s_{AB}$.
As a result, $\partial_r\gamma_{AB}=\partial_u\gamma_{AB}=0$, and the Einstein equations simplify significantly.

The $rA$ component of Einstein's equations gives, at order $r^k$ for $k\geq0$,
\begin{equation}
(k+2)(d-3-k)\beta_A^{(k+2)} = 0 \quad ,
\end{equation}
and so $\beta_A^{(k)}=0$ for $k\geq2$, $k\neq d-1$, that is,
\begin{equation}
\beta_A = r^{d-1}\beta_A^{(d-1)} \quad .
\end{equation}
Next, we apply the $ru$ equation.
It is trivial at order $1$, and also at order $r^{d-3}$.
At order $r^k$ for $k>0$, $k\neq d-3,\ d-2,\ 2d-4$, we find that
\iftoggle{complete}{
\begin{equation}
0 = 2(d-2)(d-3-k)(k+1)\alpha^{(k+2)} \quad ,
\end{equation}
and so
}{}
$\alpha^{(k+2)}=0$.
At order $r^{d-2}$, we get
\iftoggle{complete}{
\begin{equation}
0 = 2(d-2)\alpha^{(d)} + (d-3)\mathscr{D}^A\beta_A^{(d-1)} \quad ,
\end{equation}
which gives
}{}
\begin{equation}
\alpha^{(d)} = -\frac{d-3}{2(d-2)}\mathscr{D}^A\beta_A^{(d-1)} \quad .
\end{equation}
At order $r^{2d-4}$, we find
\iftoggle{complete}{
\begin{equation}
2(d-2)\alpha^{(2d-2)} + \frac{1}{2}(d-3)\beta_A^{(d-1)}\beta^{A(d-1)} = 0 \quad ,
\end{equation}
and thus
}{}
\begin{equation}
\alpha^{(2d-2)} = -\frac{1}{4}\frac{d-3}{d-2}\beta_A^{(d-1)}\beta^{A(d-1)} \quad .
\end{equation}
Therefore, $\alpha$ takes the form
\begin{equation}
\alpha = \frac{\lambda}{2}r^2 + \alpha^{(d-1)}r^{d-1} - \frac{d-3}{2(d-2)}r^d\mathscr{D}^A\beta_A^{(d-1)} - \frac{d-3}{4(d-2)}\beta_A^{(d-1)}\beta^{A(d-1)}r^{2d-2} \quad .
\end{equation}
From the $uu$ equation at order $r^{4d-6}$ we find that
\begin{equation}
\beta_C^{(d-1)}\beta^{C(d-1)} = 0 \quad ,
\end{equation}
and then from the $AB$ equation at order $r^{2d-4}$, we find
\begin{equation}
\beta_A^{(d-1)}\beta_B^{(d-1)} = \frac{1}{d-2}s_{AB}\beta_C^{(d-1)}\beta^{C(d-1)} = 0 \quad ,
\end{equation}
so that $\beta_A^{(d-1)}=0$.
\iftoggle{complete}{
(to see this contract with any vector $t^A$,
\begin{equation}
0 = t^A\beta_A^{(d-1)} \beta_B^{(d-1)} \quad .
\end{equation}
If $t^A\beta_A^{(d-1)}\neq0$ then it follows that $\beta_B^{(d-1)}=0$.
If $t^A\beta_A^{(d-1)}=0$ for all $t^A$ then it also follows that $\beta_B^{(d-1)}=0$.)
}{}
Consequently, $\beta_A=0$ and $\alpha = (\lambda/2)r^2 + \alpha^{(d-1)}r^{d-1}$.
From the $uu$ and $uA$ equations at order $r^{d-2}$ we find that
\iftoggle{complete}{
\begin{equation}
0 = 2(d-1)(d-2)\partial_u\alpha^{(d-1)} = 2(d-1)(d-2)\partial_A\alpha^{(d-1)} \quad ,
\end{equation}
and so
}{}
$\partial_u\alpha^{(d-1)}=\partial_A\alpha^{(d-1)}=0$, that is, $\alpha^{(d-1)}$ is constant.
Defining $M\defeq -\alpha^{(d-1)}$, the metric expansion takes on the required form (\ref{eq:generalised-schwarzschild-gnc}).
\end{proof}

\subsection{Einstein Equations at Higher Orders}
\label{sec:einstein-equations-higher-order}
Now that we have written down the asymptotic behaviour of the metric (Theorem \ref{thm:metric-expansion-low-order}), we can write down greatly simplified expressions for the Einstein equations up to certain orders.
In turn, these equations lead to relations between the higher order metric expansion coefficients.
The main results of our analysis are given in the following theorem.

\begin{theorem}
\label{thm:metric-components-higher-order}
The expansion coefficients of $\alpha$ and $\mathcal{R}$ can be expressed directly in terms of $\gamma_{AB}$,
\begin{align}
\label{eq:alpha-higher-order-gamma}
 \alpha^{(k+2)} = - \frac{k-1}{2(k+1)(d-2-k)(d-3-k)} \mathscr{D}^A\mathscr{D}^B\gamma_{AB}^{(k)} \quad ,
\end{align}
\begin{align}
\label{eq:ricciscalar-higher-order-gamma}
 \mathcal{R}^{(k)} = \mathscr{D}^A\mathscr{D}^B\gamma_{AB}^{(k)} \quad ,
\end{align}
for $0<k<d-3$.
\end{theorem}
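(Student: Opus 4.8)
The plan is to feed the metric expansions into the $ru$ and $rA$ components of the Einstein equation $\tilde E_{ab}=0$ (Appendix \ref{app:grtensors-einstein}) order by order, exactly as in the proof of Theorem \ref{thm:metric-expansion-low-order}, but now retaining $\gamma_{AB}^{(k)}$, which is generally nonzero once $k\geq(d-2)/2$. The organising principle is that, for $0<k<d-3$, Theorem \ref{thm:metric-expansion-low-order} and Lemma \ref{thm:gamma-asymptotics} make almost everything at order $r^k$ vanish: $\gamma_{AB}^{(j)}=0$ and $\beta_A^{(j)}=0$ for all $j$ below the ``breakdown'' orders, $\alpha-\tfrac{\lambda}{2}r^2=O(r^{(d+2)/2})$, and, crucially, $s^{AB}\gamma_{AB}^{(j)}=0$ for $1\leq j\leq d-3$. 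Since the smallest-order perturbations are $\gamma_{AB}-s_{AB}=O(r^{(d-2)/2})$ and $\beta_A=O(r^{d/2})$, any term quadratic (or of higher degree) in the perturbations is $O(r^{d-2})$ and hence invisible at order $r^k$ for $k<d-2$. Consequently the order-$r^k$ equations collapse to their \emph{linearised} form about the background metric $\bar g_{ab}$.

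I would establish the formula for $\mathcal{R}^{(k)}$ first, since it is a fact about the intrinsic geometry of $\gamma_{AB}$ alone, independent of the Einstein equations. By the vanishing of the low-order coefficients just quoted, $\mathcal{R}^{(k)}$ coincides with the linearisation of the scalar curvature of $\gamma_{AB}$ about $\gamma_{AB}^{(0)}=s_{AB}$, evaluated on the perturbation $\gamma_{AB}^{(k)}$. The standard linearised-scalar-curvature identity reads
\[
 \mathcal{R}^{(k)} = \mathscr{D}^A\mathscr{D}^B\gamma_{AB}^{(k)} - \mathscr{D}^A\mathscr{D}_A\!\big(s^{BC}\gamma_{BC}^{(k)}\big) - \mathcal{R}^{(0)AB}\gamma_{AB}^{(k)} \quad ,
\]
and here the last two terms drop out: the trace $s^{BC}\gamma_{BC}^{(k)}$ vanishes in our range, and $\mathcal{R}^{(0)AB}\gamma_{AB}^{(k)}=\lambda(d-3)\,s^{AB}\gamma_{AB}^{(k)}=0$ because $s_{AB}$ is Einstein (condition (iii) of Definition \ref{def:strong-asymptotic-flatness}). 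This gives $\mathcal{R}^{(k)}=\mathscr{D}^A\mathscr{D}^B\gamma_{AB}^{(k)}$. (Alternatively one expands $\tilde\Lambda^{C}{}_{AB}$ and then $\mathcal{R}_{AB}$, $\mathcal{R}=\gamma^{AB}\mathcal{R}_{AB}$, directly in powers of $r$ via Definitions \ref{def:gamma-derivative} and \ref{def:gamma-ricci}; the Einstein property of $s_{AB}$ is precisely what makes the surviving Riemann-of-$s$ terms cancel.)

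For $\alpha^{(k+2)}$ I would first read $\beta_A^{(k+1)}$ off the $rA$ equation. At order $r^{k-1}$, after discarding the negligible quadratic terms and using $\mathscr{D}_A\!\big(s^{BC}\gamma_{BC}^{(k)}\big)=0$, it reduces to
\[
 (k+1)(d-2-k)\,\beta_A^{(k+1)} = 2k\,s^{BC}\mathscr{D}_{[A}\gamma_{B]C}^{(k)} = -k\,\mathscr{D}^B\gamma_{AB}^{(k)} \quad ,
\]
so that $\mathscr{D}^A\beta_A^{(k+1)}=-\tfrac{k}{(k+1)(d-2-k)}\mathscr{D}^A\mathscr{D}^B\gamma_{AB}^{(k)}$, with no further commutators required. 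Then I would take the $ru$ equation at order $r^k$; in our range the term $s^{AB}\partial_u\gamma_{AB}^{(k+1)}$ vanishes (as $k+1<d-2$) and the quadratic terms drop, so that, after these reductions, it takes the form
\[
 0 = 2(k+1)(d-2)(d-3-k)\,\alpha^{(k+2)} - (k+1)(d-3)\,\mathscr{D}^A\beta_A^{(k+1)} - \mathcal{R}^{(k)} \quad .
\]
Inserting the expressions for $\mathscr{D}^A\beta_A^{(k+1)}$ and $\mathcal{R}^{(k)}$ and simplifying the rational coefficient via $\tfrac{k(d-3)}{d-2-k}-1=\tfrac{(k-1)(d-2)}{d-2-k}$ then produces the claimed formula for $\alpha^{(k+2)}$.

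The conceptual content is small: the linearised-scalar-curvature identity, together with the observation that tracelessness of $\gamma_{AB}^{(k)}$ and the Einstein condition on $s_{AB}$ annihilate every would-be curvature term. The real work, and the main obstacle, is the bookkeeping — extracting the exact numerical coefficients in the order-$r^k$ $ru$ and $rA$ equations from the full expressions in Appendix \ref{app:grtensors}, and checking that every term not displayed above is either quadratic in the perturbations (hence $O(r^{d-2})$ and negligible for $k<d-3$) or a multiple of $s^{AB}\gamma_{AB}^{(k)}$ or its derivatives (hence zero). The restriction $0<k<d-3$ is used in three ways: it forces $s^{AB}\gamma_{AB}^{(k)}=0=s^{AB}\partial_u\gamma_{AB}^{(k+1)}$, it keeps the quadratic contributions out of range, and it keeps the coefficient $2(k+1)(d-2)(d-3-k)$ of $\alpha^{(k+2)}$ in the $ru$ equation nonzero so that the equation can be solved for it.
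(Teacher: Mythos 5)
Your proposal is correct and lands on the same coefficients, but it reroutes one step in a way worth noting. The paper treats the $ru$ and trace-$AB$ components of $\tilde E_{ab}=0$ at order $r^k$ as a pair of linear equations in $\mathcal{R}^{(k)}$ and $\alpha^{(k+2)}$ with $\mathscr{D}^A\beta_A^{(k+1)}$ as data (Lemmas \ref{thm:einstein-equations-high-order} and \ref{thm:metric-components-higher-order-halfway}), and only obtains $\mathcal{R}^{(k)}=\mathscr{D}^A\mathscr{D}^B\gamma_{AB}^{(k)}$ at the end by back-substitution; you instead derive that formula \emph{first}, as the linearised-scalar-curvature identity applied to the trace-free perturbation $\gamma_{AB}^{(k)}$ about the Einstein metric $s_{AB}$, and then need only the $ru$ and $rA$ components to fix $\alpha^{(k+2)}$. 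That is a genuine simplification of the logical structure: it exposes equation (\ref{eq:ricciscalar-higher-order-gamma}) as a kinematic identity rather than dynamical output, explains why its coefficient is exactly $1$, and turns the paper's trace-$AB$ equation into a consistency check rather than an input. What the paper's route buys in exchange is that it never needs the linearisation formula --- everything is read off the explicit appendix expressions --- and it delivers the intermediate relation (\ref{eq:beta-higher-order-gamma}) for $\beta_A^{(k+2)}$, which is used again elsewhere (e.g.\ in the trace-$AB$ consistency check of Theorem \ref{thm:asymptotic-symmetries} and in the Bondi mass proof). One caveat on your bookkeeping: the blanket claim that every term quadratic in the perturbations is $O(r^{d-2})$ is too coarse, since a bare product $(\partial_r\gamma_{AC})(\partial_r\gamma_{BD})$ is only $O(r^{d-4})$ and $(\partial_r\gamma)(\partial_u\gamma)$ is $O(r^{d-3})$ --- indeed the paper explicitly retains such a term in its simplified $ru$ equation. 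Your conclusion survives because in the $ru$ and $rA$ components every such term carries an additional suppression (a factor of $\alpha$, $\beta_A$, or a trace $s^{AB}\gamma_{AB}^{(j)}$) or first contributes at order $r^{d-3}$, which lies outside the stated range $k<d-3$; but that check has to be done term by term against Appendix \ref{app:grtensors-einstein} rather than by the degree-counting slogan alone.
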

\begin{proof}
To prove this theorem, we need to write down the simplified Einstein equations (Lemma \ref{thm:einstein-equations-high-order}) and then solve them to obtain the expressions for the expansion coefficients (Lemma \ref{thm:metric-components-higher-order-halfway}).
The theorem is then proved by combining these expressions.

Indeed, substituting equation (\ref{eq:beta-higher-order-gamma}) into (\ref{eq:alpha-higher-order-beta}) gives (\ref{eq:alpha-higher-order-gamma}).
Substituting equations (\ref{eq:alpha-higher-order-gamma}) and (\ref{eq:beta-higher-order-gamma}) into (\ref{eq:ricciscalar-higher-order-alpha-beta}) gives (\ref{eq:ricciscalar-higher-order-gamma}).
\end{proof}

\begin{lemma}
\label{thm:einstein-equations-high-order}
The $rr$, $ru$, $rA$ and trace $AB$ components of the Einstein equations are given by
\begin{subequations}
\label{eq:einstein-equations-high-order}
\begin{align}
0 = &-\gamma^{AB}\partial_r^2\gamma_{AB} + \frac{1}{2}\gamma^{AB}\gamma^{CD}(\partial_r\gamma_{AC})\partial_r\gamma_{BD} \quad , \\
0 = &-\mathcal{R} - 2(d-2)\partial_r^2\alpha + 2(d-1)(d-2)r^{-1}(\partial_r\alpha - r^{-1}\alpha) - (d-3)\mathscr{D}^A\partial_r\beta_A\nonumber\\
    &-(d-3)\gamma^{AB}\partial_r\partial_u\gamma_{AB} + \left(\frac{d}{2}-2\right)\gamma^{AB}\gamma^{CD}(\partial_r\gamma_{AC})\partial_u\gamma_{BD} + O(r^{d-2}) \quad , \\
0 = &-\partial_r^2\beta_A + (d-2)r^{-1}\partial_r\beta_A - 2s^{BC}\mathscr{D}_{[A}\partial_r\gamma_{B]C} + O(r^{d-3}) \nonumber \\
  = &-\partial_r^2\beta_A + (d-2)r^{-1}\partial_r\beta_A + \mathscr{D}^{B}\partial_r\gamma_{AB} + O(r^{d-3}) \quad ,
\end{align}
\begin{align}
0 = &+d\mathcal{R} +2(d-2)\partial_r^2\alpha - 2(d-1)(d-2)^2 r^{-2}\alpha \nonumber \\
    &- 2\mathscr{D}^A\partial_r\beta_A + 2(d-1)(d-2)r^{-1}\mathscr{D}^A\beta_A \nonumber \\
    &- 2\gamma^{AB}\partial_r\partial_u\gamma_{AB} + (d-1)(d-2)r^{-1}\gamma^{AB}\partial_u\gamma_{AB} \nonumber \\
    &+ \left(\frac{d}{2}+1\right)\gamma^{AB}\gamma^{CD}(\partial_r\gamma_{AC})\partial_u\gamma_{CD} + O(r^{d-2}) \quad .
\end{align}
\end{subequations}
\end{lemma}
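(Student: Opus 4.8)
The plan is to start from the exact component expressions for the physical vacuum equations $R_{ab}\equiv\tilde E_{ab}=0$ in conformal Gaussian null coordinates recorded in Appendix~\ref{app:grtensors}, and reduce them order by order using the low-order metric asymptotics of Theorem~\ref{thm:metric-expansion-low-order}: $\alpha-\tfrac{\lambda}{2}r^2=O(r^{N+2})$, $\beta_A=O(r^{N+1})$, $\gamma_{AB}-s_{AB}=O(r^N)$, $\gamma^{AB}-s^{AB}=O(r^N)$ (Lemma~\ref{thm:gamma-asymptotics}) and $s^{AB}\gamma_{AB}-(d-2)=O(r^{2N})$, where $N=(d-2)/2$.

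First I would record the structural simplifications valid for any metric of the form~(\ref{eq:metric-gaussian-null-coordinates-nogauge}): since the $r$-row of $\tilde g_{ab}$ is constant ($\tilde g_{rr}=\tilde g_{rA}=0$, $\tilde g_{ru}=1$), one has $\tilde\Gamma^c{}_{rr}=0$, hence $\tilde\nabla_r\tilde\nabla_r r=0$, and the $rr$-component of $\tilde E_{ab}$ reduces \emph{exactly} to the stated equation~(a) with no error term --- this is the same identity already exploited in the proof of Theorem~\ref{thm:generalised-schwarzschild-metric}. More generally $\tilde\nabla_a\tilde\nabla_b r$ involves only Christoffel symbols of $\tilde g_{ab}$, and $\tilde g^{cd}(\tilde\nabla_c r)\tilde\nabla_d r$ is controlled directly, which keeps the bookkeeping manageable.

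Next, for the $ru$, $rA$ and trace-$AB$ components I would substitute the above asymptotics into the appendix expressions and perform systematic power counting. All terms in which $\alpha$ or its derivatives enter linearly are retained verbatim --- in particular the genuinely non-small pieces $r^{-1}\partial_r\alpha$, $r^{-1}\alpha$ and $r^{-2}\alpha$ that appear explicitly in~(b), (c), (d) --- while every term quadratic in the deviations $\{\beta_A,\ \gamma_{AB}-s_{AB}\}$, or carrying enough powers of $r$, is dropped into the error. The key estimates are routine: a product of two deviations is $O(r^{2N})=O(r^{d-2})$; terms quadratic in $\beta_A$ are $O(r^{2N+2})$; and, crucially, $(\partial_r\gamma)(\partial_u\gamma)=O(r^{2N-1})=O(r^{d-3})$, which survives in the $ru$ and $AB$ equations (at error $O(r^{d-2})$) but would be discarded at error $O(r^{d-3})$ --- this is exactly why the stated forms carry the explicit $\gamma^{AB}\gamma^{CD}(\partial_r\gamma_{AC})\partial_u\gamma_{BD}$ terms, with a $\gamma$-contraction rather than an $s$-contraction, and why the $rA$ equation carries only the larger error $O(r^{d-3})$.

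For the $rA$ equation I would additionally (i) replace the $\gamma$-connection $D_A$ by the background connection $\mathscr{D}_A$, the difference being $O(r^N)\cdot O(r^{N-1})=O(r^{d-3})$ by part (iii) of Lemma~\ref{thm:gamma-asymptotics}, and (ii) rewrite $2s^{BC}\mathscr{D}_{[A}\partial_r\gamma_{B]C}=s^{BC}\mathscr{D}_A\partial_r\gamma_{BC}-\mathscr{D}^B\partial_r\gamma_{AB}$ using $\mathscr{D}_A s^{BC}=0$; since $s^{BC}\partial_r\gamma_{BC}=\partial_r(s^{AB}\gamma_{AB})=O(r^{2N-1})=O(r^{d-3})$, the first term is absorbed, leaving the second displayed form of~(c). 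Collecting the surviving terms in each equation and reading off the numerical coefficients then completes the proof. The only real obstacle is bookkeeping: the appendix formulae for $\tilde R_{ab}$, $\tilde S_{ab}$ and $\tilde E_{ab}$ are long, so one must be scrupulous about distinguishing $O(r^{d-3})$ survivors from $O(r^{d-2})$ discards and about tracking the exact coefficients of the non-small $\alpha$-terms through all the index contractions; there is no conceptual difficulty beyond this.
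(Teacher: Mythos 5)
Your proposal is correct and follows essentially the same route as the paper: the paper's proof is simply ``substitute the expansions of Theorem \ref{thm:metric-expansion-low-order} into the appendix expressions for $\tilde{E}_{ab}$ and simplify,'' together with exactly the antisymmetrization identity $2s^{BC}\mathscr{D}_{[A}\partial_r\gamma_{B]C}=s^{BC}\mathscr{D}_A\partial_r\gamma_{BC}-\mathscr{D}^B\partial_r\gamma_{AB}$ and the observation that $s^{BC}\partial_r\gamma_{BC}=O(r^{d-3})$ that you use for the $rA$ equation. Your additional remarks (exactness of the $rr$ component, the $O(r^{2N-1})$ power counting for $(\partial_r\gamma)(\partial_u\gamma)$, and the $D_A\to\mathscr{D}_A$ replacement) are correct elaborations of what the paper leaves implicit.
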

\begin{proof}
This follows from substitution of the metric component expansions (\ref{eq:metric-components-low-order}) into the Einstein equations and from simplifying.

In particular, for the $rA$ equation we used
\begin{align}
2s^{BC}\mathscr{D}_{[A}\partial_r\gamma_{B]C}
& = s^{BC}\mathscr{D}_A\partial_r\gamma_{BC} - s^{BC}\mathscr{D}_B\partial_r\gamma_{AC} \nonumber \\
& = -\mathscr{D}^C\partial_r\gamma_{AC} + O(r^{d-3}) \quad ,
\end{align}
where we used that
\begin{equation}
s^{BC}\mathscr{D}_A\partial_r\gamma_{BC} = \mathscr{D}_A(s^{BC}\partial_r\gamma_{BC}) = O(r^{d-3}) \quad .
\end{equation}
\end{proof}

\begin{lemma}
\label{thm:metric-components-higher-order-halfway}
Solving the Einstein equations (\ref{eq:einstein-equations-high-order}) gives the following expressions for the expansion coefficients of $\alpha$, $\beta_A$ and $\mathcal{R}$,
\begin{align}
\label{eq:alpha-higher-order-beta}
& \alpha^{(k+2)}=\frac{k-1}{2k(d-3-k)}\mathscr{D}^A\beta_A^{(k+1)} \quad , \\
\label{eq:beta-higher-order-gamma}
& \beta_A^{(k+2)} = -\frac{k+1}{(k+2)(d-3-k)}\mathscr{D}^B\gamma_{AB}^{(k+1)} \quad ,
\end{align}
\begin{equation}
\label{eq:ricciscalar-higher-order-alpha-beta}
\mathcal{R}^{(k)} = 2(d-2)(d-3-k)(k+1)\alpha^{(k+2)} - (d-3)(k+1)\mathscr{D}^A\beta_A^{(k+1)} \quad ,
\end{equation}
for all $0<k<d-3$.
\end{lemma}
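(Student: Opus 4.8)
The plan is to take the simplified field equations (\ref{eq:einstein-equations-high-order}) of Lemma \ref{thm:einstein-equations-high-order} as the starting point, substitute the metric expansions (\ref{eq:metric-component-expansions}), extract the coefficient of each power $r^k$, and solve the resulting algebraic relations. Throughout I would lean on Theorem \ref{thm:metric-expansion-low-order} to discard the ``junk'': $\gamma_{AB}=s_{AB}+O(r^N)$, $\beta_A=O(r^{N+1})$, $\alpha=\tfrac{\lambda}{2}r^2+O(r^{N+2})$, and crucially the sharp trace identity $s^{AB}\gamma_{AB}=(d-2)+O(r^{2N})$ with $2N=d-2$, which controls the contracted quantities. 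The key bookkeeping point is that a $u$-derivative does not lower the power of $r$, so every term in (\ref{eq:einstein-equations-high-order}) carrying a $\partial_u\gamma_{AB}$ is actually of order $r^{d-3}$ or higher once contracted: indeed $\partial_u\gamma_{AB}=O(r^N)$ since the coefficients $\gamma_{AB}^{(j)}$ with $j<N$ are $u$-independent, $s^{AB}\partial_r\partial_u\gamma_{AB}=\partial_r\partial_u(s^{AB}\gamma_{AB})=O(r^{2N-1})=O(r^{d-3})$, $r^{-1}\gamma^{AB}\partial_u\gamma_{AB}=O(r^{d-3})$, and the quadratic term $(\partial_r\gamma)(\partial_u\gamma)=O(r^{2N-1})=O(r^{d-3})$. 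Hence for $0<k<d-3$ every such term, as well as the explicit $O(r^{d-2})$ and $O(r^{d-3})$ remainders, drops out of the order-$r^k$ equation.

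With that preparation the three relations follow one at a time. From the $rA$ equation, the coefficient of $r^k$ in $\partial_r^2\beta_A$ is $(k+2)(k+1)\beta_A^{(k+2)}$, in $(d-2)r^{-1}\partial_r\beta_A$ it is $(d-2)(k+2)\beta_A^{(k+2)}$, and in $\mathscr{D}^B\partial_r\gamma_{AB}$ it is $(k+1)\mathscr{D}^B\gamma_{AB}^{(k+1)}$, so after factoring $-(k+1)+(d-2)=d-3-k$ one gets $0=(k+2)(d-3-k)\beta_A^{(k+2)}+(k+1)\mathscr{D}^B\gamma_{AB}^{(k+1)}$, which is (\ref{eq:beta-higher-order-gamma}). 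From the $ru$ equation, after deleting the $u$-derivative terms as above, what survives at order $r^k$ is
\begin{equation*}
0=-\mathcal{R}^{(k)}-2(d-2)(k+2)(k+1)\alpha^{(k+2)}+2(d-1)(d-2)(k+1)\alpha^{(k+2)}-(d-3)(k+1)\mathscr{D}^A\beta_A^{(k+1)}\quad,
\end{equation*}
and factoring $-(k+2)+(d-1)=d-3-k$ gives exactly (\ref{eq:ricciscalar-higher-order-alpha-beta}).

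Finally I would take the trace $AB$ equation, kill the $u$-derivative terms by the same estimates, read off the coefficient of $r^k$, and substitute (\ref{eq:ricciscalar-higher-order-alpha-beta}) to eliminate $\mathcal{R}^{(k)}$. The coefficient of $\alpha^{(k+2)}$ then collapses to $2(d-1)(d-2)\,k\,(d-3-k)$ and the coefficient of $\mathscr{D}^A\beta_A^{(k+1)}$ to $(d-1)(d-2)(1-k)$; dividing by $(d-1)(d-2)$ and solving for $\alpha^{(k+2)}$ yields (\ref{eq:alpha-higher-order-beta}). The main obstacle is precisely this last polynomial reduction together with the order-counting that justifies dropping the $\partial_u\gamma$ terms: one must use the sharp exponent $2N=d-2$ in $s^{AB}\gamma_{AB}=(d-2)+O(r^{2N})$, not merely $\gamma_{AB}=s_{AB}+O(r^N)$, to be sure nothing intrudes at orders $k\le d-4$. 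Useful consistency checks along the way are that $k=1$ forces $\alpha^{(3)}=0$ (the numerator $k-1$ vanishes), that the formulas reproduce the a priori vanishing $\mathscr{D}^A\beta_A^{(k+1)}=0$ for $k+1\le N$, and that all three formulas degenerate gracefully as $k\to d-3$, consistent with the breakdown of the recursion at that order.
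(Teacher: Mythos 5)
Your proposal is correct and follows essentially the same route as the paper: extract the order-$r^k$ coefficients of the $ru$, trace $AB$ and $rA$ equations of Lemma \ref{thm:einstein-equations-high-order}, and the resulting algebraic relations (including the collapse of the $\alpha^{(k+2)}$ coefficient to $2(d-1)(d-2)k(d-3-k)$ and of the $\mathscr{D}^A\beta_A^{(k+1)}$ coefficient to $(d-1)(d-2)(1-k)$ after eliminating $\mathcal{R}^{(k)}$) match the paper's exactly. Your explicit order-counting argument for discarding the $\partial_u\gamma_{AB}$ terms via $s^{AB}\gamma_{AB}=(d-2)+O(r^{2N})$ is a detail the paper leaves implicit, but it is the correct justification and does not constitute a different method.
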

\begin{proof}
To obtain these expressions, we apply the Einstein equations (\ref{eq:einstein-equations-high-order}).
At order $r^k$ for some integer $k<d-3$, the $ru$ and trace $AB$ equations give
\begin{align}
0 &= -\mathcal{R}^{(k)} + 2(d-2)(d-3-k)(k+1)\alpha^{(k+2)} - (d-3)(k+1)\mathscr{D}^A\beta_A^{(k+1)} \quad , \\
0 &= d\mathscr{R}^{(k)} - 2(d-2)(d+k)(d-3-k)\alpha^{(k+2)} + 2(d^2-3d-k+1)\mathscr{D}^A\beta_A^{(k+1)} \quad .
\end{align}
Rearranging the first equation gives the desired expression for $\mathcal{R}^{(k)}$.
Substituting this expression into the second equation and rearranging gives the expression for $\alpha^{(k+2)}$.
At order $r^k$, the $rA$ equation is
\begin{equation}
0 = (d-3-k)(k+2)\beta_A^{(k+2)} + (k+1)\mathscr{D}^C\gamma_{AC}^{(k+1)} \quad ,
\end{equation}
which can be rearranged to give the expression for $\beta_A^{(k)}$.
\end{proof}

\iftoggle{complete}{
\begin{theorem}
When $d>4$, $\beta_A^{(2N)}$ is divergence free, $\mathscr{D}^A\beta_A^{(2N)}=0$.
\end{theorem}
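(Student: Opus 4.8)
The plan is to return to the Einstein equations in the form of Lemma~\ref{thm:einstein-equations-high-order} and read off their coefficients at order $r^{2N-1}$ (recall $2N=d-2$, so $2N-1=d-3$), i.e.\ exactly one order beyond the point where the clean recursion of Theorem~\ref{thm:metric-components-higher-order} breaks down. The $rr$, $ru$ and trace $AB$ equations of Lemma~\ref{thm:einstein-equations-high-order} still hold up to their stated $O(r^{d-2})$ errors at this order, and in the $ru$ and trace $AB$ equations the only $\alpha$-coefficient that enters is $\alpha^{(2N+1)}=\alpha^{(d-1)}$, whose net coefficient vanishes (the same ``breakdown'' cancellation at $n=d-1$ seen in the earlier recursion). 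What remains in each of these two equations is a linear relation among $\mathscr{D}^A\beta_A^{(2N)}$, $\mathcal{R}^{(2N-1)}$, and two quantities built from the radiative datum $\gamma_{AB}^{(N)}$: the piece $P:=\gamma^{(N)AB}\partial_u\gamma_{AB}^{(N)}=\tfrac12\partial_u\bigl(\gamma^{(N)AB}\gamma_{AB}^{(N)}\bigr)$ (indices raised with $s_{AB}$) coming from the $(\partial_r\gamma)(\partial_u\gamma)$ terms, and $Q:=\partial_u\bigl(s^{AB}\gamma_{AB}^{(2N)}\bigr)$ coming from $\gamma^{AB}\partial_r\partial_u\gamma_{AB}$. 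Note that $s^{AB}\gamma_{AB}^{(2N)}$ need not vanish (Theorem~\ref{thm:metric-expansion-low-order} controls $s^{AB}\gamma_{AB}$ only up to $O(r^{2N})$), so at this order $Q$ is a genuine extra unknown.

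The key additional input is the $rr$ equation of Lemma~\ref{thm:einstein-equations-high-order} at the \emph{lower} order $r^{2N-2}$: it is exact, and using $\gamma_{AB}=s_{AB}+O(r^N)$, the inverse expansion $\gamma^{AB}=s^{AB}-r^N\gamma^{(N)AB}+O(r^{N+1})$, and $s^{AB}\gamma_{AB}^{(k)}=0$ for $k<2N$, it collapses to a pointwise identity $s^{AB}\gamma_{AB}^{(2N)}=c_N\,\gamma^{(N)AB}\gamma_{AB}^{(N)}$ with $c_N=\tfrac{3N-2}{4(2N-1)}$. Since $\partial_u s_{AB}=0$, differentiating gives $Q=2c_N P$, so $Q$ is traded for $P$.

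Substituting $Q=2c_N P$ into the order-$r^{2N-1}$ $ru$ and trace $AB$ equations, a short computation shows that the $P$-dependent terms cancel identically — this is the crucial cancellation, and it is exactly where the value of $c_N$ enters — leaving the two homogeneous relations
\begin{equation*}
\mathcal{R}^{(2N-1)}=-(2N-1)(2N)\,\mathscr{D}^A\beta_A^{(2N)},\qquad (2N+2)\,\mathcal{R}^{(2N-1)}+8N^2\,\mathscr{D}^A\beta_A^{(2N)}=0 .
\end{equation*}
Eliminating $\mathcal{R}^{(2N-1)}$ between them yields $4N(2N+1)(N-1)\,\mathscr{D}^A\beta_A^{(2N)}=0$, and since $d>4$ forces $N\geq 2$ the prefactor is nonzero, whence $\mathscr{D}^A\beta_A^{(2N)}=0$. (Equivalently, the $rA$ equation at order $r^{d-4}$, still within the range of Lemma~\ref{thm:metric-components-higher-order-halfway}, gives $\beta_A^{(2N)}=-\tfrac{d-3}{d-2}\mathscr{D}^B\gamma_{AB}^{(2N-1)}$, so the assertion is the same as $\mathscr{D}^A\mathscr{D}^B\gamma_{AB}^{(2N-1)}=0$.)

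The main obstacle I anticipate is the bookkeeping: correctly identifying which terms of the Lemma~\ref{thm:einstein-equations-high-order} equations contribute at orders $r^{2N-1}$ and $r^{2N-2}$, extracting both the leading and the subleading pieces of $\gamma^{AB}\partial_r\partial_u\gamma_{AB}$ and the leading piece of $\gamma^{AB}\gamma^{CD}(\partial_r\gamma_{AC})\partial_u\gamma_{BD}$ via the inverse-metric expansion, and then verifying that the coefficient of $P$ genuinely vanishes once $c_N$ is inserted. Conceptually the content is simply that the one honestly new unknown appearing at this order — the trace $s^{AB}\gamma_{AB}^{(2N)}$ — is pinned down by the $rr$ equation in precisely the way needed for $\mathscr{D}^A\beta_A^{(2N)}$ to decouple from the radiative data and vanish.
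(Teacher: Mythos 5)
Your proposal follows essentially the same route as the paper's own proof: the $rr$ equation at order $r^{d-4}$ pins down $s^{AB}\gamma_{AB}^{(2N)}$ in terms of $\gamma^{AB(N)}\gamma_{AB}^{(N)}$, its $u$-derivative cancels the radiative terms in the order-$r^{d-3}$ $ru$ and trace $AB$ equations, and eliminating $\mathcal{R}^{(d-3)}$ between the two resulting homogeneous relations gives $(d-1)(d-2)(d-4)\,\mathscr{D}^A\beta_A^{(2N)}=0$, whence the claim for $d>4$. Your two intermediate relations and the final coefficient $4N(2N+1)(N-1)=(d-1)(d-2)(d-4)$ agree exactly with the paper's (modulo the sign convention for raising indices on $\gamma_{AB}^{(N)}$), so the argument is correct.
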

\begin{proof}
At order $r^{d-4}$, the $rr$ equation becomes
\begin{equation*}
0 = (d-3)s^{AB}\gamma_{AB}^{(d-2)} + \frac{3d-10}{8}\gamma^{AB\left(\frac{d-2}{2}\right)}\gamma_{AB}^{\left(\frac{d-2}{2}\right)} \quad ,
\end{equation*}
and taking the $u$ derivative of both sides we find that
\begin{equation*}
0 = (d-3)s^{AB}\partial_u\gamma_{AB}^{(d-2)} + \frac{3d-10}{4}\gamma^{AB\left(\frac{d-2}{2}\right)}\partial_u\gamma_{AB}^{\left(\frac{d-2}{2}\right)} \quad .
\end{equation*}
At order $r^{d-3}$, the $ru$ and trace $AB$ components become
\begin{align*}
0 = &-(d-2)(d-3)\mathscr{D}^A\beta_A^{(d-2)} - \mathcal{R}^{(d-3)}\\
    &-(d-2)(d-3)s^{AB}\partial_u\gamma_{AB}^{(d-2)} - \frac{1}{4}(d-2)(3d-10)\gamma^{AB\left(\frac{d-2}{2}\right)}\partial_u\gamma_{AB}^{\left(\frac{d-2}{2}\right)}\\
  = &-(d-2)(d-3)\mathscr{D}^A\beta_A^{(d-2)} - \mathcal{R}^{(d-3)}\\
0 = &+2(d-2)^2\mathscr{D}^A\beta_A^{(d-2)} + d\mathcal{R}^{(d-3)}\\
    &+(d-2)(d-3)s^{AB}\partial_u\gamma_{AB}^{(d-20} + \frac{1}{4}(d-2)(3d-10)\gamma^{AB\left(\frac{d-2}{2}\right)}\partial_u\gamma_{AB}^{\left(\frac{d-2}{2}\right)}\\
  = &+2(d-2)^2\mathscr{D}^A\beta_A^{(d-2)} + d\mathcal{R}^{(d-3)}\\
\end{align*}
Combining these equations to eliminate $\mathcal{R}^{(d-3)}$, we find that
\begin{equation*}
0 = (d-1)(d-2)(d-4)\mathscr{D}^A\beta_A^{(d-2)} \quad ,
\end{equation*}
\end{proof}
\begin{remark}
As was seen in this proof, the highest order coefficients of $\alpha$ and $\gamma^{AB}\partial_u\gamma_{AB}$ cancelled out when combining the $ru$ and trace $AB$ components.
As a result, we do not learn anything about the expansion coefficients of these terms at this order ($\alpha^{(2N+1)}$, $s^{AB}\gamma_{AB}^{(2N)}$ and $\gamma^{AB(N)}\partial_u\gamma_{AB}^{(N)}$).
\end{remark}
}{}

\section{Asymptotic Symmetries}
\label{sec:asymptotic-symmetries}
A definition is given of asymptotic symmetries and an asymptotic expansion is made of the infinitesimal generator of such a symmetry.

\subsection{Asymptotic Killing Equation}
\label{sec:asymptotic-symmetries-killing-equation}

\begin{definition}
\label{def:asymptotic-symmetry}
An asymptotic symmetry $\phi$ is a diffeomorphism which preserves asymptotic Einstein flatness \cite{Hollands:2003ie}.
That is, whenever a metric $g_{ab}$ is asymptotically Einstein flat, $\phi^\ast g_{ab}$ must also be asymptotically Einstein flat.
\end{definition}

\medskip\noindent
The generator $\xi^a$ of an isometry group is a solution of Killing's equation
\begin{equation}
\Lie_\xi g_{ab} \equiv 2\nabla_{(a}\xi_{b)} = 0 \quad .
\end{equation}
To obtain the generator of asymptotic symmetries, we must first write down the equivalent of this equation: an ``asymptotic Killing equation''.

\begin{theorem}
The vector $\xi^a$ is an infinitesimal generator of an asymptotic symmetry of $g_{ab}$ if the tensor
\begin{equation}
\label{eq:unphysical-killing-tensor}
\tilde\chi_{ab} = r^2\Lie_\xi g_{ab}
 = 2\tilde\nabla_{(a}\xi_{b)} - 2r^{-1}\tilde{g}_{ab}(\nabla^c r)\tilde\xi_c
\end{equation}
satisfies the following asymptotic conditions,
\begin{subequations}
\label{eq:asymptotic-killing-equation-components}
\begin{align}
& \tilde\chi_{rr} = \tilde\chi_{ru} = \tilde\chi_{rA} = 0 \quad , \\
& \tilde\chi_{uu} = O(r^\frac{d+2}{2}) \quad , \\
& \tilde\chi_{uA} = O(r^\frac{d}{2}) \quad , \\
& \tilde\chi_{AB} = O(r^\frac{d-2}{2}) \quad , \\
& \gamma^{AB}\tilde\chi_{AB} = O(r^{d-2}) \quad .
\end{align}
\end{subequations}
\end{theorem}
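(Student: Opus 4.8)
The plan is to derive this asymptotic Killing equation by linearising the defining property of an asymptotic symmetry. If $\xi^a$ generates a one-parameter family $\phi_t$ of diffeomorphisms with $\phi_0=\mathrm{id}$, then for any asymptotically Einstein flat $g_{ab}$ one has $\phi_t^\ast g_{ab}=g_{ab}+t\,\Lie_\xi g_{ab}+O(t^2)$, so $\xi^a$ generates an asymptotic symmetry exactly when the first-order deformation $\Lie_\xi g_{ab}$ keeps the metric asymptotically Einstein flat, i.e.\ when $\Lie_\xi g_{ab}$ is tangent to the space of asymptotically Einstein flat metrics at $g_{ab}$. The task is thus to describe this tangent space and then to re-express the tangency condition in unphysical variables.

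First I would invoke the corollary that expresses the asymptotics through the deviation $\tilde X_{ab}=\tilde g_{ab}-\tilde{\bar g}_{ab}$, equations (\ref{eq:metric-deviation-asymptotics}): asymptotic Einstein flatness is equivalent to $\tilde X_{ab}$ satisfying these homogeneous fall-off and vanishing conditions with respect to the \emph{fixed} background $\tilde{\bar g}_{ab}$, together with the normalisations $\alpha^{(1)}=0$, $\alpha^{(2)}=\lambda/2$ and $\mathcal{R}_{AB}^{(0)}=\lambda(d-3)s_{AB}$ of Definition~\ref{def:strong-asymptotic-flatness}. A short check shows that for $d\geq 4$ these normalisations, and the early-time condition, are automatically preserved by any perturbation obeying (\ref{eq:metric-deviation-asymptotics}), since such a perturbation alters $\alpha$ no lower than order $r^3$ and $\gamma_{AB}$ no lower than order $r$, leaving $\alpha^{(0,1,2)}$ and $\gamma_{AB}^{(0)}=s_{AB}$ (hence $\mathcal{R}_{AB}^{(0)}$) untouched. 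Because $\tilde{\bar g}_{ab}$ is a fixed reference tensor and the conditions (\ref{eq:metric-deviation-asymptotics}) are linear and closed under addition, a first-order perturbation $\delta\tilde g_{ab}$ keeps $\tilde g_{ab}+\delta\tilde g_{ab}$ inside the class if and only if $\delta\tilde g_{ab}$ satisfies (\ref{eq:metric-deviation-asymptotics}) on its own.

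It then remains to identify the correct unphysical perturbation. Writing $\tilde g_{ab}=\Omega^2 g_{ab}$ with $\Omega=r$, the bare Lie derivative is $\Lie_\xi\tilde g_{ab}=2\tilde\nabla_{(a}\xi_{b)}$, but the flow also shifts the conformal factor by $\Lie_\xi r$; restoring the gauge $\Omega=r$ rescales by $(1-2r^{-1}(\Lie_\xi r)\,t+\dots)$, which to first order subtracts $2r^{-1}\tilde g_{ab}(\nabla^c r)\tilde\xi_c$ and reproduces precisely $\tilde\chi_{ab}=r^2\Lie_\xi g_{ab}$ of (\ref{eq:unphysical-killing-tensor}). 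Hence $\xi^a$ generates an asymptotic symmetry iff $\tilde\chi_{ab}$ obeys (\ref{eq:metric-deviation-asymptotics}); using $\tilde\chi_{rA}=0$, the fact that the angular block of $\tilde g^{ab}$ equals $\gamma^{AB}$, and that $\gamma^{AB}$ and $s^{AB}$ agree to the relevant order, one has $\tilde\chi^A{}_A=\gamma^{AB}\tilde\chi_{AB}$, so (\ref{eq:metric-deviation-asymptotics}) becomes exactly (\ref{eq:asymptotic-killing-equation-components}). The theorem states only the ``if'' half of this equivalence, which is the easy direction: if $\tilde\chi_{ab}$ meets the listed conditions, the perturbed metric remains asymptotically Einstein flat to first order.

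I expect the main obstacle to be the gauge bookkeeping. One must be careful that the gauge-invariant object is $r^2\Lie_\xi g_{ab}$ and not $\Lie_\xi\tilde g_{ab}$, the difference being the pure conformal term in (\ref{eq:unphysical-killing-tensor}); and since asymptotic Einstein flatness is phrased in Gaussian null coordinates, one should argue that the coordinate adjustment needed to restore that form after applying $\phi_t$ leaves the leading fall-off intact, using the stability of (\ref{eq:metric-deviation-asymptotics}) under the admissible asymptotic coordinate changes. Finally, it is worth emphasising that the trace condition $\gamma^{AB}\tilde\chi_{AB}=O(r^{d-2})$ is strictly stronger than $\tilde\chi_{AB}=O(r^{(d-2)/2})$: it is the linearisation of the special identity (\ref{eq:metric-components-low-order-trace}), $s^{AB}\gamma_{AB}=(d-2)+O(r^{2N})$, which the flow must also preserve.
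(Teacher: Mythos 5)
Your proposal is correct and follows essentially the same route as the paper: linearise the asymptotic Einstein flatness conditions of Corollary \ref{thm:background-metric} around $g_{ab}$, identify the first-order perturbation with $\Lie_\xi g_{ab}$, and convert to unphysical variables to arrive at (\ref{eq:unphysical-killing-tensor}) and the component conditions (\ref{eq:asymptotic-killing-equation-components}). The only difference is computational: you produce the conformal correction term from the Leibniz identity $\Lie_\xi\tilde g_{ab}=2r^{-1}(\Lie_\xi r)\tilde g_{ab}+r^2\Lie_\xi g_{ab}$, whereas the paper derives the same term via the conformal connection coefficients $\tilde C^c{}_{ab}$ and the rescaled covector $\tilde\xi_a=r^2\xi_a$; both computations give the same formula, so this is a stylistic rather than substantive difference.
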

\begin{proof}
Under an infinitesimal transformation $\phi$ with generator $\xi^a$, an asymptotically Einstein flat metric $g_{ab}$ transforms as
\begin{equation}
g_{ab} \rightarrow g_{ab} + \Lie_\xi g_{ab} \quad .
\end{equation}
For $\phi$ to be an asymptotic symmetry, the right hand side of this equation must also be asymptotically Einstein flat.
Thus, by Corollary \ref{thm:background-metric}, we can write
\begin{align}
g_{ab} &= \bar{g}_{ab} + O(r^\frac{d-2}{2}) du_a du_b + O(r^\frac{d-4}{2}) du_{(a} dx^A_{b)} + O(r^\frac{d-6}{2}) dx^A_{(a} dx^B_{b)} \quad , \nonumber \\
g_{ab} + \Lie_\xi g_{ab} &= \bar{g}_{ab} + O(r^\frac{d-2}{2}) du_a du_b + O(r^\frac{d-4}{2}) du_{(a} dx^A_{b)} + O(r^\frac{d-6}{2}) dx^A_{(a} dx^B_{b)} \quad .
\end{align}
Subtracting the first equation from the second yields a ``physical asymptotic Killing equation''
\begin{equation}
\label{eq:asymptotic-symmetry-combined-condition}
\Lie_\xi g_{ab} = O(r^\frac{d-2}{2}) du_a du_b + O(r^\frac{d-4}{2}) du_{(a} dx^A_{b)} + O(r^\frac{d-6}{2}) dx^A_{(a} dx^B_{b)} \quad .
\end{equation}
Defining the deviation tensor
\begin{equation}
\chi_{ab} = \Lie_\xi g_{ab} \quad ,
\end{equation}
and splitting equation (\ref{eq:asymptotic-symmetry-combined-condition}) into components, we get
\begin{subequations}
\label{eq:physical-asymptotic-symmetry-conditions}
\begin{align}
& \chi_{rr} = \chi_{ru} = \chi_{rA} = 0 \quad , \\
& \chi_{uu} = O(r^\frac{d-2}{2}) \quad , \\
& \chi_{uA} = O(r^\frac{d-4}{2}) \quad , \\
& \chi_{AB} = O(r^\frac{d-6}{2}) \quad , \\
& \chi^A{}_A = O(r^{d-4}) \quad .
\end{align}
\end{subequations}
Rewriting $\chi_{ab}$ in terms of the unphysical derivative, we find, 
\begin{align}
\chi_{ab}
& = \Lie_\xi g_{ab} \nonumber \\
& = 2\nabla_{(a}\xi_{b)} \nonumber \\
& = 2\tilde\nabla_{(a}\xi_{b)} - 2\tilde{C}^c{}_{ab}\xi_c \quad ,
\end{align}
where the ``conformal connection coefficients'' are
\begin{align}
\tilde{C}^c{}_{ab} 
\iftoggle{complete}{
& = \frac{1}{2} g^{cd} (\tilde\nabla_a g_{bd} + \tilde\nabla_b g_{ad} - \tilde\nabla_d g_{ab})\\
& = \frac{1}{2} r^2 \tilde{g}^{cd} (\tilde\nabla_a r^{-2} \tilde{g}_{bd} + \tilde\nabla_b r^{-2} \tilde{g}_{ad} - \tilde\nabla_d r^{-2} \tilde{g}_{ab})\\
& = \frac{1}{2} r^2 \tilde{g}^{cd} (-2r^{-3}) (\tilde\nabla_a r \tilde{g}_{bd} + \tilde\nabla_b r \tilde{g}_{ad} - \tilde\nabla_d r \tilde{g}_{ab})\\
& = r^{-1} (\tilde\nabla_a r \delta^c{}_b + \tilde\nabla_b r \delta^c{}_a - (\tilde\nabla^c r)\tilde{g}_{ab})\\
}{}
& = r^{-1} (2\tilde\nabla_{(a} r \delta^c{}_{b)} - (\tilde\nabla^c r)\tilde{g}_{ab}) \quad ,
\end{align}
so that
\begin{align}
\Lie_\xi g_{ab} 
\iftoggle{complete}{
& = 2\tilde\nabla_{(a}\xi_{b)} + 4r^{-1} (\tilde\nabla_{(a} r \delta^c{}_{b)}) \xi_c - 2r^{-1} \tilde{g}_{ab} (\tilde\nabla^c r) \xi_c \nonumber \\
}{}
& = 2\tilde\nabla_{(a}\xi_{b)} + 4 r^{-1} \xi_{(a} \tilde\nabla_{b)}r - 2r^{-1} \tilde{g}_{ab} (\tilde\nabla^c r) \xi_c \quad .
\end{align}
Now define an unphysical vector $\tilde\xi^a \defeq \xi^a$ so that
\begin{equation}
\tilde\xi_a = \tilde{g}_{ab}\tilde\xi^b = r^2 g_{ab}\xi^b = r^2\xi_a \quad ,
\end{equation}
and
\begin{equation}
\tilde\nabla_{(a}\xi_{b)} = \tilde\nabla_{(a}(r^{-2}\tilde\xi_{b)}) = r^{-2}[\tilde\nabla_{(a}\tilde\xi_{b)} - 2r^{-1}\tilde\xi_{(a}\tilde\nabla_{b)}r] \quad .
\end{equation}
We can therefore write the physical deviation as
\begin{align}
\iftoggle{complete}{
\Lie_\xi g_{ab} \equiv
}{}
\chi_{ab}
\iftoggle{complete}{
& = 2r^{-2} [\tilde\nabla_{(a}\tilde\xi_{b)} - 2r^{-1}\tilde\xi_{(a}\tilde\nabla_{b)}r + 2r^{-1}\xi_{(a}\tilde\nabla_{b)}r - r^{-1}\tilde{g}_{ab}(\tilde\nabla^c r)\tilde\xi_c] \nonumber \\
}{}
& = 2r^{-2} [\tilde\nabla_{(a}\tilde\xi_{b)} - r^{-1}\tilde{g}_{ab}(\tilde\nabla^c r)\tilde\xi_c] \quad .
\end{align}
If we define $\tilde\chi_{ab} \defeq r^2\chi_{ab} = r^2\Lie_\xi g_{ab}$, then we get
\begin{equation}
\tilde\chi_{ab} = 2\tilde\nabla_{(a}\tilde\xi_{b)} - 2r^{-1}\tilde{g}_{ab}(\tilde\nabla^c r)\tilde\xi_c \quad ,
\end{equation}
and writing the conditions (\ref{eq:physical-asymptotic-symmetry-conditions}) for the asymptotic symmetry in terms of $\tilde\chi_{ab}$ gives the desired conditions (\ref{eq:asymptotic-killing-equation-components}).
\end{proof}

\subsection{Solving the Asymptotic Killing equation}
\label{sec:asymptotic-symmetries-solution}
In this section, we obtain the expansion of a vector $\xi^a$ which generates the asymptotic symmetries, using conditions (\ref{eq:asymptotic-killing-equation-components}).
The tensor $\tilde\chi_{ab}$ is written in terms of the covector $\tilde\xi_a$, but we are seeking the expansion of the vector $\xi^a \equiv \tilde\xi^a$.
We could proceed either by solving for $\tilde\xi_a$ and raising the index afterwards with $\tilde{g}^{ab}$,
or we could write $\tilde\xi_a = \tilde{g}_{ab}\tilde\xi^b$ and solve directly for $\tilde\xi^a$.
To prove Theorem $\ref{thm:asymptotic-symmetries}$, we use the second method, since it turns out to be simpler.

\begin{theorem}
\label{thm:asymptotic-symmetries}
In dimension $d>4$, the vector $\xi^a$ given by
\begin{subequations}
\label{eq:asymptotic-symmetries-components}
\begin{align}
& \xi^r = r\partial_uf + r^2(g_0+\lambda f) - \sum_{k=N+2}^\infty \frac{1}{k-2} r^k \beta^{A(k-1)} \ \mathscr{D}_Af \quad , \\
& \xi^u = f \quad , \\
& \xi^A = h^A - r\mathscr{D}^Af - \sum_{k=N+1}^\infty \frac{1}{k} r^k \gamma^{AB(k-1)} \ \mathscr{D}_Bf
\end{align}
\end{subequations}
is a solution of the asymptotic Killing equation (\ref{eq:asymptotic-killing-equation-components}),
where $g_0$ is a constant, $h^A$ are functions of $x^A$ and $f=f_0+uf_1$ for functions $f_0$, $f_1$ of $x^A$, such that
\begin{subequations}
\label{eq:asymptotic-symmetries-conditions-low-order}
\begin{align}
\label{eq:asymptotic-symmetries-conditions-low-order-hA}
& \mathscr{D}_{(A}h_{B)} - f_1 s_{AB} = 0 \quad , \\
\label{eq:asymptotic-symmetries-conditions-low-order-f0}
& \mathscr{D}_{(A}\mathscr{D}_{B)}f_0 + (\lambda f_0 + g_0)s_{AB} = 0 \quad , \\
\label{eq:asymptotic-symmetries-conditions-low-order-f1}
& \mathscr{D}_{(A}\mathscr{D}_{B)}f_1 + \lambda f_1 s_{AB} = 0 \quad .
\end{align}
\end{subequations}
\end{theorem}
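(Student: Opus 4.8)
The plan is to verify (\ref{eq:asymptotic-symmetries-components}) directly, by writing out the components of $\tilde\chi_{ab}$ in the Gaussian null coordinates $(r,u,x^{A})$ and imposing the conditions (\ref{eq:asymptotic-killing-equation-components}) order by order in $r$. As the text before the theorem flags, it is cleanest to work with the \emph{vector} $\tilde\xi^{a}=\xi^{a}$, whose components are the three expressions in (\ref{eq:asymptotic-symmetries-components}), rather than with $\tilde\xi_{a}$. Since $\tilde g_{ab}=r^{-2}g_{ab}$ and $\tilde\xi^{c}\partial_{c}r=\xi^{r}$, the relation (\ref{eq:unphysical-killing-tensor}) can be rewritten as $\tilde\chi_{ab}=\Lie_{\tilde\xi}\tilde g_{ab}-2r^{-1}\xi^{r}\tilde g_{ab}$, with $\Lie_{\tilde\xi}\tilde g_{ab}=\tilde\xi^{c}\partial_{c}\tilde g_{ab}+\tilde g_{cb}\partial_{a}\tilde\xi^{c}+\tilde g_{ac}\partial_{b}\tilde\xi^{c}$, which sidesteps the unphysical Christoffel symbols. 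Throughout one substitutes the low-order expansions $\alpha=\tfrac{\lambda}{2}r^{2}+O(r^{N+2})$, $\beta_{A}=O(r^{N+1})$, $\gamma_{AB}=s_{AB}+O(r^{N})$ (hence also $\gamma^{AB}=s^{AB}+O(r^{N})$ and $\gamma_{AB}^{(k)}=\beta_{A}^{(k)}=0$ for $0<k<N$) supplied by Theorem \ref{thm:metric-expansion-low-order} and Lemma \ref{thm:gamma-asymptotics}.

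First I would impose $\tilde\chi_{rr}=\tilde\chi_{ru}=\tilde\chi_{rA}=0$, which must hold \emph{exactly}. A short computation gives $\tilde\chi_{rr}=2\partial_{r}\xi^{u}$, so $\xi^{u}=f(u,x^{A})$; then $\tilde\chi_{rA}=\gamma_{AB}\partial_{r}\xi^{B}+\partial_{A}f$, an ODE in $r$ whose term-by-term integration against the expansion of $\gamma^{AB}$ gives precisely $\xi^{A}=h^{A}-r\mathscr{D}^{A}f-\sum_{k\ge N+1}\tfrac{1}{k}r^{k}\gamma^{AB(k-1)}\mathscr{D}_{B}f$ for an integration function $h^{A}(u,x)$; and $\tilde\chi_{ru}=\partial_{r}\xi^{r}-2r^{-1}\xi^{r}+\partial_{u}f-\beta_{B}\partial_{r}\xi^{B}$, a linear ODE (integrating factor $r^{-2}$) whose solution, after inserting $\partial_{r}\xi^{B}$ from the previous equation and the expansion of $\beta_{A}$, is precisely $\xi^{r}=r\partial_{u}f+r^{2}\phi-\sum_{k\ge N+2}\tfrac{1}{k-2}r^{k}\beta^{A(k-1)}\mathscr{D}_{A}f$ for an integration function $\phi(u,x)$. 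Thus the three $r$ equations force the shape of (\ref{eq:asymptotic-symmetries-components}), with $f$, $h^{A}$, $\phi$ still free.

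Next I would extract the low-order constraints from the remaining components. The order-$r$ part of $\tilde\chi_{uu}$ is $2\partial_{u}^{2}f$, so $f=f_{0}(x)+u f_{1}(x)$; the order-$r^{0}$ part of $\tilde\chi_{uA}$ is $\partial_{u}h_{A}$, so $h^{A}=h^{A}(x)$; the order-$r^{2}$ parts of $\tilde\chi_{uu}$ and $\tilde\chi_{uA}$ are $2(\partial_{u}\phi-\lambda f_{1})$ and $\partial_{A}(\phi-\lambda f)$, which together force $\phi=g_{0}+\lambda f$ with $g_{0}$ a genuine constant. Finally the order-$r^{0}$ and order-$r^{1}$ parts of $\tilde\chi_{AB}$ are $2\mathscr{D}_{(A}h_{B)}-2f_{1}s_{AB}$ and $-2\mathscr{D}_{(A}\mathscr{D}_{B)}f-2(g_{0}+\lambda f)s_{AB}$; setting these to zero, and splitting the second into its $u$-independent and $u$-linear pieces, reproduces exactly (\ref{eq:asymptotic-symmetries-conditions-low-order-hA}), (\ref{eq:asymptotic-symmetries-conditions-low-order-f0}), (\ref{eq:asymptotic-symmetries-conditions-low-order-f1}).

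The last step, which I expect to be the main obstacle, is to confirm that once the low-order conditions hold, \emph{all} remaining conditions in (\ref{eq:asymptotic-killing-equation-components}) really do hold to the stated orders and not only at leading order. For $d>4$ the subleading orders up to $r^{N-1}$ of $\tilde\chi_{uu}$, $\tilde\chi_{uA}$ and $\tilde\chi_{AB}$ vanish almost for free, since $\gamma_{AB}^{(k)}=\beta_{A}^{(k)}=0$ for $0<k<N$ eliminates every would-be term, and the order-$r^{N}$ part of $\gamma^{AB}\tilde\chi_{AB}$ vanishes because $s^{AB}\gamma_{AB}^{(N)}=0$ and $h^{A}$ is a conformal Killing field of $s_{AB}$ by (\ref{eq:asymptotic-symmetries-conditions-low-order-hA}). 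The genuinely delicate case is the trace condition $\gamma^{AB}\tilde\chi_{AB}=O(r^{d-2})=O(r^{2N})$ at the intermediate orders $r^{N+1},\dots,r^{2N-1}$. Here one writes $\gamma^{AB}\tilde\chi_{AB}=2\mathscr{D}_{A}\xi^{A}-2\beta^{A}\mathscr{D}_{A}f-2(d-2)r^{-1}\xi^{r}+O(r^{2N})$, using $s^{AB}\gamma_{AB}=(d-2)+O(r^{2N})$ from Theorem \ref{thm:metric-expansion-low-order} to push the $\ln\det\gamma$ contributions and the discrepancy between $\mathscr{D}$ and the $\gamma$-connection into the $O(r^{2N})$ remainder. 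At order $r^{k}$ the surviving terms combine $\mathscr{D}_{A}(\gamma^{AB(k-1)}\mathscr{D}_{B}f)$ with $\beta^{A(k)}\mathscr{D}_{A}f$; substituting the recursion relation (\ref{eq:beta-higher-order-gamma}), which gives $\beta_{A}^{(k)}\propto\mathscr{D}^{B}\gamma_{AB}^{(k-1)}$, the divergence-type terms cancel and one is left with a multiple of $\gamma^{AB(k-1)}\mathscr{D}_{A}\mathscr{D}_{B}f$; this vanishes because $\gamma^{AB(k-1)}$ is traceless with respect to $s_{AB}$ (again by the trace identity, since $k-1<2N$) whereas $\mathscr{D}_{A}\mathscr{D}_{B}f=-(\lambda f+g_{0})s_{AB}$ is pure trace by (\ref{eq:asymptotic-symmetries-conditions-low-order-f0}) and (\ref{eq:asymptotic-symmetries-conditions-low-order-f1}). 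Matching the combinatorial coefficients, and tracking the range of $k$ on which (\ref{eq:beta-higher-order-gamma}) is valid, is the real content of the proof.
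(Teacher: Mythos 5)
Your proposal follows essentially the same route as the paper's proof: solving the exact $rr$, $rA$, $ru$ components as ODEs in $r$ to fix the form of $\xi^u$, $\xi^A$, $\xi^r$ up to integration functions, extracting the conditions on $f$, $h^A$, $g_0$ from the low orders of the $uu$, $uA$, $AB$ components, and verifying the trace $AB$ condition at the intermediate orders via the recursion $\beta_A^{(k)}\propto\mathscr{D}^B\gamma_{AB}^{(k-1)}$ together with the tracelessness of $\gamma_{AB}^{(k-1)}$ against the pure-trace $\mathscr{D}_A\mathscr{D}_Bf$. You have also correctly identified the trace condition as the delicate step, which the paper handles with exactly the cancellation you describe (plus its Lemma on $\tilde\Lambda^{A\,(k)}{}_{AB}$ vanishing, which your remark about absorbing the connection discrepancy into the $O(r^{2N})$ remainder plays the role of).
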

\begin{remark}
Since $h^A$ is a lowest order term, we raise and lower its index with the lowest order term of $\gamma_{AB}$.
That is, its index is raised and lowered with $s_{AB}$ rather than with $\gamma_{AB}$ itself.
Note that $\mathscr{D}_A$ is also a ``lowest order term'' and so we raise and lower its index with $s_{AB}$.
The equations (\ref{eq:asymptotic-symmetries-conditions-low-order}) are not analysed in detail in this paper.
It is expected that they have qualitatively different behaviour for $\lambda=0,+1,-1$.
\end{remark}

The following Lemma will be useful in proving this theorem.
\begin{lemma}
\label{thm:gamma-christoffel-higher-order}
For $0<k<2N$,
\begin{equation}
\tilde\Lambda^A{}_{AB}^{(k)} = 0 \quad .
\end{equation}
\end{lemma}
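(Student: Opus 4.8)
The plan is to reduce the contracted Christoffel symbol to an angular gradient of $\log\det\gamma_{AB}$ and then feed in the trace estimate $s^{AB}\gamma_{AB}=(d-2)+O(r^{2N})$ supplied by Theorem~\ref{thm:metric-expansion-low-order}. First I would contract the defining formula $\tilde\Lambda^C{}_{AB}=\tfrac12\gamma^{CD}(\partial_A\gamma_{BD}+\partial_B\gamma_{AD}-\partial_D\gamma_{AB})$ on $C=A$: relabelling the dummy indices in the first term and using the symmetry of $\gamma_{AB}$ shows it cancels the third, leaving $\tilde\Lambda^A{}_{AB}=\tfrac12\gamma^{AD}\partial_B\gamma_{AD}=\tfrac12\partial_B\log\det\gamma$ by Jacobi's formula. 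Carrying out the identical manipulation at $r=0$ gives $\tilde\Lambda^A{}_{AB}^{(0)}=\tfrac12\partial_B\log\det s$, which is manifestly independent of $r$; hence the lemma is equivalent to the assertion $\partial_B\log(\det\gamma/\det s)=O(r^{2N})$.

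To prove that, write $h_{AB}:=\gamma_{AB}-s_{AB}$, which is $O(r^N)$ by Theorem~\ref{thm:metric-expansion-low-order}, and set $M^A{}_B:=s^{AC}h_{CB}=O(r^N)$, so that $\det\gamma/\det s=\det(\delta^A{}_B+M^A{}_B)=\exp\operatorname{tr}\log(1+M)$. In the expansion $\operatorname{tr}\log(1+M)=\operatorname{tr}M-\tfrac12\operatorname{tr}M^2+\cdots$ every term from the quadratic one on is a product of at least two factors of order $r^N$, hence $O(r^{2N})$, while the linear term is $\operatorname{tr}M=s^{AB}h_{AB}=s^{AB}\gamma_{AB}-(d-2)=O(r^{2N})$ by equation~(\ref{eq:metric-components-low-order-trace}). (Only finitely many terms are relevant, since we need the expansion only to finite order, so convergence need not be discussed.) Therefore $\det\gamma/\det s=1+O(r^{2N})$, so $\log(\det\gamma/\det s)=O(r^{2N})$; since an angular derivative preserves the order in $r$ (if $r^{-2N}f$ is smooth then so is $r^{-2N}\partial_B f$), we conclude $\tilde\Lambda^A{}_{AB}-\tilde\Lambda^A{}_{AB}^{(0)}=O(r^{2N})$, i.e.\ $\tilde\Lambda^A{}_{AB}^{(k)}=0$ for $0<k<2N$.

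The one point that needs care — and the reason this cannot simply be read off from Lemma~\ref{thm:gamma-asymptotics}(\ref{eq:gamma-christoffel-asymptotics}), which only yields $O(r^N)$ — is that the gain from $O(r^N)$ to $O(r^{2N})$ rests entirely on the trace identity (\ref{eq:metric-components-low-order-trace}): one must identify the linear term of the $\log\det$ expansion as \emph{exactly} $s^{AB}\gamma_{AB}-(d-2)$ rather than a generic $O(r^N)$ quantity. An equivalent route that avoids determinants is to use the exact tensorial expression for the difference of the Levi-Civita connections of $\gamma_{AB}$ and $s_{AB}$, contract it, note the same index cancellation to obtain $\tilde\Lambda^A{}_{AB}-\tilde\Lambda^A{}_{AB}^{(0)}=\tfrac12\gamma^{AD}\mathscr{D}_B h_{DA}$, and then split $\gamma^{AD}=s^{AD}+O(r^N)$ (Lemma~\ref{thm:gamma-asymptotics}(\ref{eq:gamma-inverse-asymptotics})) together with $s^{AD}\mathscr{D}_B h_{DA}=\mathscr{D}_B\bigl(s^{AD}h_{DA}\bigr)=\mathscr{D}_B\,O(r^{2N})$; this reaches the same conclusion.
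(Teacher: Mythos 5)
Your proof is correct, and both your main route and the paper's rest on exactly the same key input — the trace identity $s^{AB}\gamma_{AB}=(d-2)+O(r^{2N})$ of equation (\ref{eq:metric-components-low-order-trace}) — but the packaging is genuinely different. The paper works directly at the level of the order-$k$ expansion coefficient: for $0<k<2N$ only the cross terms $\tfrac12\gamma^{AC(k)}\partial_B s_{AC}+\tfrac12 s^{AC}\partial_B\gamma^{(k)}_{AC}$ survive (a product of two nonzero-order factors would force $k\geq 2N$, since $\gamma_{AB}^{(j)}=0$ for $0<j<N$), and these recombine into $\tfrac12\mathscr{D}_B\bigl(s^{AC}\gamma_{AC}^{(k)}\bigr)$, which vanishes because $s^{AC}\gamma_{AC}^{(k)}=0$ order by order. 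You instead note that the contracted symbol is exactly $\tfrac12\partial_B\log\det\gamma$ and estimate $\det\gamma/\det s=1+O(r^{2N})$, with the trace identity controlling the linear term of $\mathrm{tr}\log(1+M)$ and the $O(r^N)$ falloff of $h_{AB}$ disposing of the quadratic and higher terms; this handles all orders $0<k<2N$ at once without having to track which products of expansion coefficients can contribute, at the cost of invoking Jacobi's formula, whereas the paper's computation is more elementary and matches the style of the surrounding recursion. Your closing ``alternative route'' ($\gamma^{AD}=s^{AD}+O(r^N)$ together with $s^{AD}\mathscr{D}_Bh_{DA}=\mathscr{D}_B(s^{AD}h_{DA})$) is essentially the paper's own proof. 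You are also right on the one point that needs care: Lemma \ref{thm:gamma-asymptotics}(\ref{eq:gamma-christoffel-asymptotics}) alone only gives $O(r^N)$, and the improvement to $O(r^{2N})$ genuinely requires identifying the linear term as the trace controlled by (\ref{eq:metric-components-low-order-trace}).
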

\begin{proof}
We calculate that, for $0<k<2N$,
\begin{align}
\tilde\Lambda^A{}_{AB}^{(k)}
& = \frac{1}{2}\gamma^{AC(k)}(\partial_A s_{BC} + \partial_B s_{AC} - \partial_C s_{AB}) + \frac{1}{2}s^{AC}(\partial_A^{\strut}\gamma_{BC}^{(k)} + \partial_B^{\strut}\gamma_{AC}^{(k)} - \partial_C^{\strut}\gamma_{AB}^{(k)}) \nonumber \\
& = \frac{1}{2}\gamma^{AC(k)}\partial_B s_{AC} + \frac{1}{2}s^{AC}\partial_B^{\strut}\gamma_{AC}^{(k)} \quad .
\end{align}
In addition, we find that
\begin{equation}
\frac{1}{2}\gamma^{AC(k)}\partial_B s_{AC} = -s^{AC}\tilde\Lambda^D{}_{AB}^{(0)}\gamma_{CD}^{(k)} \quad .
\end{equation}
Therefore we have that
\begin{equation}
\tilde\Lambda^A{}_{AB}^{(k)} = \frac{1}{2}s^{AC}\mathscr{D}^{\strut}_B\gamma_{AC}^{(k)} = 0 \quad ,
\end{equation}
since $s^{AC}$ can be taken inside the derivative.
\end{proof}

\begin{proof}[Proof of Theorem \ref{thm:asymptotic-symmetries}]
First note that the unphysical vector $\tilde\xi^a$ is by definition equal to the physical generator $\xi^a$.
Therefore we need only solve (\ref{eq:asymptotic-killing-equation-components}) for the unphysical vector $\tilde\xi^a \equiv \xi^a$.

We use the ``exact'' $rr$, $ru$ and $rA$ equations (i.e. the components of $\tilde\chi_{ab}$ which are exactly $0$) to obtain the basic asymptotic expansions of the vector $\tilde\xi^a$.
We then derive the conditions satisfied by the lowest order coefficients of the expansions using the ``inexact'' $uu$, $uA$ and $AB$ components.
This is enough to fully specify the vector $\tilde\xi^a$.
The trace $AB$ equation then serves as a consistency check, not only on our expressions for the asymptotic symmetries,
but also on our expression (\ref{eq:beta-higher-order-gamma}) for the higher order metric expansion coefficients $\beta_A^{(k)}$.

Rewriting equation (\ref{eq:unphysical-killing-tensor}) in terms of $\tilde\xi^a$ (rather than $\tilde\xi_a$ with its index down),
\begin{equation}
\tilde\chi_{ab} = 2\tilde{g}_{c(a}\tilde\nabla_{b)}\tilde\xi^c - 2r^{-1}\tilde{g}_{ab}\tilde\xi^r \quad .
\end{equation}
We can now solve for $\tilde\xi^a$ using conditions (\ref{eq:asymptotic-killing-equation-components}).
The $rr$ component gives
\begin{align}
0 = \tilde\chi_{rr} 
 = 2\tilde\partial_r\tilde\xi^u \quad ,
\end{align}
and integration with respect to $r$ yields
\begin{equation}
\tilde\xi^u = f \quad ,
\end{equation}
where $f$ is a scalar function of $u,x^A$.
Next, the $rA$ equation gives
\begin{equation}
0 = \tilde\chi_{rA} = \mathscr{D}_A f + \gamma_{AB}\partial_r\tilde\xi^B \quad ,
\end{equation}
and contracting into $\gamma^{AC}$ gives,
\begin{equation}
0 = \partial_r\tilde\xi^A + \gamma^{AB}\mathscr{D}_Bf \quad .
\end{equation}
Since $\tilde\xi^{A(0)}$ is not determined by this equation, we denote it by $h^A\defeq\tilde\xi^{A(0)}$.
At order $1$, we find that $\tilde\xi^{A(1)} = -\mathscr{D}^Af$ (where the index of $\mathscr{D}^A$ has been raised with $s^{AB}$ as previously noted).
At order $r^k$ for $0<k<N$, we find that $\tilde\xi^{A(k+1)}=0$.
At order $r^k$ for $k\geq N$, we find that
\begin{align}
(k+1)\tilde\xi^{A(k+1)}
& = -\gamma^{AB(k)}\mathscr{D}_Bf \quad . 
\end{align}
Therefore
\begin{equation}
\tilde\xi^A = h^A -r\mathscr{D}^Af - \sum_{k=N+1}^\infty \frac{1}{k} r^k \gamma^{AB(k-1)}\mathscr{D}_Bf \quad .
\end{equation}
Given this expression for $\tilde\xi^A$, we see that the $ru$ equation gives
\begin{align}
0 = \tilde\chi_{ru}
& = \partial_u f + \partial_r\tilde\xi^r - 2r^{-1}\tilde\xi^r + \beta^A \mathscr{D}_A f \quad .
\end{align}
Multiplying by $r$ and setting $r=0$ gives $\tilde\xi^{r(0)}=0$.
At order $1$, we find $\tilde\xi^{r(1)}=\partial_uf$. 
At order $r$ the equation is trivial; since $\tilde\xi^{r(2)}$ is undetermined by this equation, we denote it by $g\defeq\tilde\xi^{r(2)}$.
At order $r^k$ for $1<k<N+1$, we find that $\tilde\xi^{r(k+1)}=0$.
Finally, at order $r^k$ for $k\geq N+1$, we find that
\begin{equation}
\tilde\xi^{r(k+1)} = -\frac{1}{k-1}\beta^{A(k)}\mathscr{D}_Af \quad .
\end{equation}
Therefore,
\begin{equation}
\tilde\xi^r = r\partial_u f + r^2g - \sum_{k=N+2}^\infty \frac{1}{k-2} r^k \beta^{A(k-1)}\mathscr{D}_Af \quad .
\end{equation}

\noindent
Thus all that remains to obtain equations (\ref{eq:asymptotic-symmetries-components}) is to derive the conditions
on the low order expansion terms.

\medskip\noindent
The $uA$ component is
\begin{equation}
\tilde\chi_{uA} = \tilde\nabla_A\tilde\xi^r - 2\alpha\tilde\nabla_A\tilde\xi^u - \beta_B\tilde\nabla_A\tilde\xi^B + \gamma_{AB}\tilde\nabla_u\tilde\xi^B + 2r^{-1}\beta_A\tilde\xi^r \quad ,
\end{equation}
and so the $uA$ equation gives
\begin{align}
O(r^{N+1}) = \tilde\chi_{uA}
& = r^2\mathscr{D}_A(g-\lambda f) + \gamma_{AB}\partial_uh^B \quad .
\end{align}
At order $r^2$, since $d>4$, we find that $\partial_A(g-\lambda f)=0$. At order $1$, we find that $\partial_u h_A = \partial_u h^A = 0$.
The $uu$ component is
\begin{equation}
\frac{1}{2}\tilde\chi_{uu} = \tilde\nabla_u\tilde\xi^r - 2\alpha\tilde\nabla_u\tilde\xi^u - \beta_A\tilde\nabla_u\tilde\xi^A + 2r^{-1}\alpha\tilde\xi^r \quad ,
\end{equation}
so that the $uu$ equation gives
\begin{align}
O(r^{N+2}) = \tilde\chi_{uu}
& = r\partial_u^2f + r^2\partial_u(g-\lambda f) - r^{N+1}\beta_A^{(N+1)}\partial_uh^A \quad .
\end{align}
At order $r$, we find that $\partial_u^2f=0$, i.e. $f=f_0+f_1$ where $f_0$ and $f_1$ are functions of $x^A$ alone.
At order $r^2$ we find that $\partial_u(g-\lambda f)=0$.
Since we have already found that $\partial_A(g-\lambda f)=0$,  we see that $g-\lambda f=g_0$ where $g_0$ is a constant.
Therefore $g=\lambda f + g_0$.
At order $r^{N+1}$, the equation is automatically satisfied since $\partial_uh^A = 0$.

The $AB$ component is
\begin{equation}
\frac{1}{2}\tilde\chi_{AB} = -\beta_{(A}\tilde\nabla_{B)}\tilde\xi^u + \gamma_{C(A}\tilde\nabla_{B)}\tilde\xi^C - r^{-1}\gamma_{AB}\tilde\xi^r \quad ,
\end{equation}
so that the $AB$ equation gives
\begin{equation}
O(r^N) = s_{C(A}\mathscr{D}_{B)}h^C - rs_{C(A}\mathscr{D}_{B)}\mathscr{D}^Cf - [(\partial_uf) - rg]s_{AB} \quad . 
\end{equation}
Separating this equation into each order of $r$ and $u$ gives us precisely equations (\ref{eq:asymptotic-symmetries-conditions-low-order}).

Finally, we check that our solution is consistent with the trace $AB$ equation, $s^{AB}\tilde\chi_{AB}=O(r^{2N})$.
We find that
\begin{align}
\frac{1}{2}s^{AB}\tilde\chi_{AB}
\iftoggle{complete}{
& = s^{AB}(\tilde{g}_{c(A}\tilde\nabla_{B)}\tilde\xi^c - r^{-1}\gamma_{AB}\tilde\xi^r) \nonumber \\
}{}
& = s^{AB}(-\beta_A\tilde\nabla_B\tilde\xi^u + \gamma_{AC}\tilde\nabla_B\tilde\xi^C) - r^{-1}s^{AB}\gamma_{AB}\tilde\xi^r \nonumber \\
\iftoggle{complete}{
& = s^{AB}(-\beta_A\partial_B\tilde\xi^u - \beta_A\tilde\Gamma^u{}_{Bb}\tilde\xi^b + \gamma_{AC}\partial_B\tilde\xi^C + \gamma_{AC}\tilde\Gamma^C{}_{Bb}\tilde\xi^b) - (d-2)r^{-1}\tilde\xi^r +O(r^{2N}) \nonumber \\
& = -\beta^A\partial_A\tilde\xi^u + s^{AB}\gamma_{AC}\partial_B\tilde\xi^C - \beta^A(\tilde\Gamma^u{}_{rA}\tilde\xi^r + \tilde\Gamma^u{}_{uA}\tilde\xi^u + \tilde\Gamma^u{}_{AC}\tilde\xi^C) \nonumber \\
& + s^{AB}\gamma_{AC}(\tilde\Gamma^C{}_{rB}\tilde\xi^r + \tilde\Gamma^C{}_{uB}\tilde\xi^u + \tilde\Gamma^C{}_{BD}\tilde\xi^D) - (d-2)r^{-1}\tilde\xi^r + O(r^{2N}) \nonumber \\
& = -\beta^A\partial_A\tilde\xi^u + s^{AB}\gamma_{AC}\partial_B\tilde\xi^C - \beta^A\left(\frac{1}{2}\partial_r\beta_A\tilde\xi^u - \frac{1}{2}(\partial_r\gamma_{AB})\tilde\xi^B\right) \nonumber \\
& + s^{AB}\gamma_{AC}\frac{1}{2}\gamma^{CD}(\partial_r\gamma_{BD})\tilde\xi^r \nonumber \\
& + s^{AB}\gamma_{AC}\left(\frac{1}{2}\gamma^{CD}\partial_u\gamma_{BD} - \gamma^{CD}(D_{[B}\beta_{D]}) + \frac{1}{2}\beta^C\partial_r\beta_B\right)\tilde\xi^u \nonumber \\
& + s^{AB}\gamma_{AC}\left(\tilde\Lambda^C{}_{BD} - \frac{1}{2}\beta^C(\partial_r\gamma_{BD})\tilde\xi^D\right) - (d-2)r^{-1}\tilde\xi^r + O(r^{2N})\nonumber \\
& = -\beta^A\partial_A\tilde\xi^u + s^{AB}\gamma_{AC}\partial_B\tilde\xi^C + s^{AB}\gamma_{AC}\tilde\Lambda^C{}_{BD}\tilde\xi^D - (d-2)r^{-1}\tilde\xi^r \nonumber \\
& + \frac{1}{2}s^{AB}\delta^D{}_A(\partial_r\gamma_{BD})\tilde\xi^r + s^{AB}\delta^D{}_A\left(\frac{1}{2}\partial_u\gamma_{BD} - D_{[B}\beta_{D]}\right)\tilde\xi^u + O(r^{2N}) \nonumber \\
}{}
& = -\beta^A\partial_A\tilde\xi^u + s^{AB}\gamma_{AC}D_B\tilde\xi^C - r^{-1}s^{AB}\gamma_{AB}\tilde\xi^r \nonumber \\
& + \frac{1}{2}s^{AB}(\partial_r\gamma_{AB})\tilde\xi^r + \frac{1}{2}s^{AB}\partial_u\gamma_{AB}\tilde\xi^u + O(r^{2N}) \quad .
\end{align}
Now, since $\tilde\xi^r=O(r)$ and $s^{AB}\gamma_{AB}=(d-2)+O(r^{2N})$, both terms in the last line are of order $r^{2N}$.
We can thus write the trace $AB$ condition as
\begin{equation}
0 = -\beta^A\partial_A\tilde\xi^u + s^{AB}\gamma_{AC}D_B\tilde\xi^C - (d-2)r^{-1}\tilde\xi^r + O(r^{2N}) \quad .
\end{equation}
We now substitute the expansions for $\tilde\xi^r$, $\tilde\xi^u$ and $\tilde\xi^A$ as well as $\beta_A$ and $\gamma_{AB}$.
We find
\begin{align}
0
\iftoggle{complete}{
& = -\sum_{i=N+1}^\infty r^i \beta_A^{(i)}\mathscr{D}^Af + s^{AB}s_{AC}\mathscr{D}_B \left(h^C - r\mathscr{D}^Cf + \sum_{i=N+1}^\infty \frac{1}{i} r^i s^{CD}\gamma_{DE}^{(i-1)}\mathscr{D}^Ef\right) \\
& + s^{AB}\sum_{i=N}^\infty r^i\gamma_{AC}^{(i)}\mathscr{D}_B(h^C - r\mathscr{D}^Cf) + s^{AB}s_{AC}\sum_{i=N}^\infty r^i \tilde\Lambda^C{}_{BD}^{(i)} (h^D - r\mathscr{D}^Df) \\
& - (d-2)\left(\partial_uf + r(g_0+\lambda f) - \sum_{i=N+2}^\infty \frac{1}{i-2} r^{i-1} \beta_A^{(i-1)} \mathscr{D}^Af\right) + O(r^{2N}) \\
& = -\sum_{i=N+1}^\infty r^i \beta_A^{(i)}\mathscr{D}^Af + \delta^B{}_C\mathscr{D}_B \left(h^C - r\mathscr{D}^Cf + \sum_{i=N+1}^\infty \frac{1}{i} r^i s^{CD}\gamma_{DE}^{(i-1)}\mathscr{D}^Ef\right) \\
& + \sum_{i=N}^\infty r^i\gamma_{AC}^{(i)}(\mathscr{D}^A h^C - r\mathscr{D}^A\mathscr{D}^Cf) + \delta^B{}_C\sum_{i=N}^\infty r^i \tilde\Lambda^C{}_{BD}^{(i)} (h^D - r\mathscr{D}^Df) \\
& - (d-2)\left(\partial_uf + r(g_0+\lambda f) - \sum_{i=N+1}^\infty \frac{1}{i-1} r^i \beta_A^{(i)} \mathscr{D}^Af\right) + O(r^{2N}) \\
}{}
& = \mathscr{D}_Bh^B - r\mathscr{D}_B\mathscr{D}^Bf - (d-2)(\partial_uf + r(g_0+\lambda f)) \\
& + \sum_{k=N+1}^{2N-1} \frac{d-1-k}{k-1} r^k \beta^{A(k)}\mathscr{D}_Af - \sum_{k=N+1}^{2N-1} \frac{1}{k} r^k (\mathscr{D}_B\gamma^{AB(k-1)})\mathscr{D}_Af \\
& + \sum_{k=N}^{2N-1} r^k \tilde\Lambda^B{}_{BD}^{(k)} (h^D - r\mathscr{D}^Df) \nonumber \\
& + \sum_{k=N}^{2N-1} r^k\gamma_{AB}^{(k)}\mathscr{D}^A h^B - \sum_{k=N}^{2N-2} \frac{k+1}{k} r^{k+1} \gamma_{AB}^{(k)}\mathscr{D}^A\mathscr{D}^Bf + O(r^{2N}) \quad ,
\end{align}
where we note that as usual the indices of $h^A$ and $\mathscr{D}_A$ are raised and lowered with $s_{AB}$.
The terms in the first line cancel due to the conditions on the lowest order expansion coefficients (\ref{eq:asymptotic-symmetries-conditions-low-order}).
\iftoggle{complete}{
\begin{align*}
 \mathscr{D}_A h^A - (d-2)(\partial_uf)
&= s^{AB}(\mathscr{D}_A h_B - s_{AB}f_1) = 0 \\
 \mathscr{D}_A\mathscr{D}^Af - (d-2)(\lambda f + g_0) 
&= s^{AB}\mathscr{D}_A\mathscr{D}_B(f_0 + uf_1) \\
&= s^{AB}(\lambda f_0 + g_0 + u\lambda f_1)s_{AB} - (d-2)(\lambda f + g_0) = 0
\end{align*}
}{}
In addition, these conditions imply $\mathscr{D}^Ah^B$ and $\mathscr{D}^A\mathscr{D}^Bf$ are proportional to $s^{AB}$.
Therefore, since $s^{AB}\gamma_{AB}^{(k)}=0$ for $0<k<2N$, it follows that the last two terms are zero.
\iftoggle{complete}{
\begin{align*}
\mathscr{D}^Ah^B &= f_1s^{AB} \\
\mathscr{D}^A\mathscr{D}^B f
& = \mathscr{D}^A\mathscr{D}^B (f_0 + uf_1) \\
& = -(\lambda f_0 + g_0 + u\lambda f_1)s^{AB}
\end{align*}
}{}
By Lemma \ref{thm:gamma-christoffel-higher-order}, the second to last line is also zero.
Therefore the trace $AB$ condition reduces to
\begin{align}
0
& = \sum_{k=N+1}^{2N-1} \frac{d-1-k}{k-1} r^k \beta^{A(k)}\mathscr{D}_Af - \sum_{k=N+1}^{2N-1} \frac{1}{k} r^k (\mathscr{D}_B\gamma^{AB(k-1)})\mathscr{D}_Af + O(r^{2N}) \nonumber \\
& = \sum_{k=N+1}^{2N-1} r^k \left( \frac{d-1-k}{k-1}\beta_A^{(k)} + \frac{1}{k}\mathscr{D}^B\gamma_{AB}^{(k-1)} \right)\mathscr{D}^Bf + O(r^{2N}) \quad ,
\end{align}
where in the second line we used that $\gamma^{AB(k)} = -s^{AC}s^{BD}\gamma_{CD}^{(k)}$ for $0<k<2N$.
Applying equation (\ref{eq:beta-higher-order-gamma}) to express $\beta_A^{(k)}$ in terms of $\gamma_{AB}^{(k-1)}$, we see that this is indeed satisfied.
\iftoggle{complete}{
\begin{equation*}
\beta_A^{(i)} = -\frac{(i-1)}{i(d-1-i)}\mathscr{D}^B\gamma_{AB}^{(i-1)}
\end{equation*}
}{}
The fact that this condition the trace is satisfied is a consistency check on equation (\ref{eq:beta-higher-order-gamma}) and on our expansions for $\xi^a$.
\end{proof}

\iftoggle{thesis}{
}{}
\section{Bondi Mass}
\label{sec:bondi-mass}
\setcounter{section}{1}
\setcounter{theorem}{0}
A definition of the Bondi mass at ``time'' u is given, motivated by the Bondi mass formula given in \cite{Hollands:2003ie}.
An expression is obtained for the energy in terms of metric expansion coefficients.

All integrals over surfaces $\Sigma(r,u)$ of constant $r,u$ are calculated with respect to the volume element
\begin{equation}
\label{eq:cross-section-volume-element}
{}^{(d-2)}\tilde\epsilon_{a_1\ldots a_{d-2}} = \sqrt\gamma d^{d-2}x_{a_1\ldots a_{d-2}} \equiv \sqrt\gamma \ (dx^1 \wedge \cdots \wedge dx^{d-2})_{a_1\ldots a_{d-2}} \quad ,
\end{equation}
which is omitted from the integrals for simpler notation.

\begin{definition}
\label{def:bondi-news}
We define the ``News tensor'' $N_{AB}$ on $\scri^+$ to be
\begin{equation}
\label{eq:bondi-news-definition}
N_{AB} \defeq \left[ \frac{1}{r^{N-1}} \left(\tilde{S}_{ab} - \lambda\tilde{g}_{ab}\right) (\partial_A)^a (\partial_B)^b \right]_{r=0} \quad .
\end{equation}
\end{definition}
\begin{remark}
A definition of the News tensor is given by equation (61) of \cite{Hollands:2003ie}.
The definition given above is analogous to the one in that paper, but adapted to our gauge.
In particular, the definition in \cite{Hollands:2003ie} is given in a gauge where $\tilde{S}_{ab}$ is zero up to order $r^N$ and trace free up to order $r^{N+1}$, which is why the low order terms have been subtracted from $\tilde{S}_{ab}$ in the present definition.
Setting $r=0$ is equivalent to the pull back to null infinity in \cite{Hollands:2003ie}.
\end{remark}

\begin{definition}
\label{def:bondi-mass-density}
We define the Bondi mass density to be
\begin{equation}
\label{eq:bondi-mass-density-definition}
\mu
 = \frac{A}{8(d-3)\pi G} \frac{1}{r^{2N-2}} \left[ \frac{1}{2} \left(\tilde{S}^{ab} - \lambda\tilde{g}^{ab}\right) \tilde\nabla_a\tilde\nabla_b u
 - \frac{1}{r}\tilde{C}^{abcd}(\tilde\nabla_a u)(\tilde\nabla_b r)(\tilde\nabla_c u)\tilde\nabla_d r \right] \quad ,
\end{equation}
where $A>0$ is a normalisation constant and $G$ is the universal gravitational constant.
\end{definition}
\begin{definition}
\label{def:bondi-mass}
We define the Bondi mass on a cross section $\Sigma(0,u)$ of $\scri^+$ to be
\begin{equation}
\label{eq:bondi-mass-definition}
m(u) = \lim_{r\rightarrow0} \int_{\Sigma(r,u)} \mu \quad .
\end{equation}
\end{definition}
\begin{remark}
A formula for the Bondi mass in higher (even) dimensions is derived in \cite{Hollands:2003ie} and given by equation (99) of that paper.
We propose that equation (\ref{eq:bondi-mass-definition}) is the equivalent of this definition, adapted to our gauge.
To prove this, we would need to verify that the derivation of the formula in \cite{Hollands:2003ie} can be adapted to our gauge or explicitly obtain the transformation formula between the two gauges.
\end{remark}
\begin{definition}
\label{def:bondi-flux}
We define the Bondi flux on a cross section $\Sigma(0,u)$ of $\scri^+$ to be
\begin{equation}
\label{eq:bondi-flux-definition}
f(u) = \partial_u m(u) \quad .
\end{equation}
\end{definition}

\begin{theorem}
\label{thm:bondi-news}
The News tensor is given by
\begin{equation}
\label{eq:bondi-news-formula}
N_{AB} = -\partial_u\gamma_{AB}^{(N)} \quad .
\end{equation}
\end{theorem}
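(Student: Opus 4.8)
The plan is to read off $\tilde S_{AB}$ to order $r^{N-1}$ directly from the Einstein equations and the low‑order metric expansion of Theorem~\ref{thm:metric-expansion-low-order}. Since the vacuum equations are $\tilde E_{ab}=0$ with $\tilde E_{ab}=\tilde S_{ab}+2r^{-1}\tilde\nabla_a\tilde\nabla_b r-r^{-2}\tilde g_{ab}\tilde g^{cd}(\tilde\nabla_c r)\tilde\nabla_d r$, we have
\begin{equation}
\tilde S_{ab}=-2r^{-1}\tilde\nabla_a\tilde\nabla_b r+r^{-2}\tilde g_{ab}\,\tilde g^{cd}(\tilde\nabla_c r)\tilde\nabla_d r \quad .
\end{equation}
I would then project onto $(\partial_A)^a(\partial_B)^b$. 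In Gaussian null coordinates $r$ is a coordinate, so $\tilde\nabla_c r=\delta^r{}_c$ and $\tilde g^{cd}(\tilde\nabla_c r)\tilde\nabla_d r=\tilde g^{rr}$, while $\tilde\nabla_A\tilde\nabla_B r=-\tilde\Gamma^r{}_{AB}$. Thus everything reduces to computing $\tilde\Gamma^r{}_{AB}$ and $\tilde g^{rr}$.

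Next I would work out the inverse metric for \eqref{eq:metric-gaussian-null-coordinates-nogauge}, namely $\tilde g^{ru}=1$, $\tilde g^{rA}=\beta^A$, $\tilde g^{rr}=2\alpha+\beta_A\beta^A$, $\tilde g^{AB}=\gamma^{AB}$, $\tilde g^{uu}=\tilde g^{uA}=0$, and substitute into $\tilde\Gamma^r{}_{AB}=\tfrac12\tilde g^{rd}(\partial_A\tilde g_{Bd}+\partial_B\tilde g_{Ad}-\partial_d\tilde g_{AB})$. A short computation gives
\begin{equation}
\tilde\Gamma^r{}_{AB}=-\tfrac12\,\tilde g^{rr}\,\partial_r\gamma_{AB}-D_{(A}\beta_{B)}-\tfrac12\,\partial_u\gamma_{AB}\quad ,
\end{equation}
where $D_A$ is the derivative of Definition~\ref{def:gamma-derivative}. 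Now I invoke Theorem~\ref{thm:metric-expansion-low-order}: $\alpha=\tfrac{\lambda}{2}r^2+O(r^{N+2})$ and $\beta_A=O(r^{N+1})$ give $\tilde g^{rr}=\lambda r^2+O(r^{N+2})$; $\gamma_{AB}=s_{AB}+O(r^N)$ gives $\partial_r\gamma_{AB}=O(r^{N-1})$, hence the first term is $O(r^{N+1})$; the middle term is $O(r^{N+1})$; and since $\partial_u s_{AB}=0$, $\partial_u\gamma_{AB}=r^N\partial_u\gamma_{AB}^{(N)}+O(r^{N+1})$. Therefore $\tilde\Gamma^r{}_{AB}=-\tfrac12 r^N\partial_u\gamma_{AB}^{(N)}+O(r^{N+1})$ and $\tilde\nabla_A\tilde\nabla_B r=\tfrac12 r^N\partial_u\gamma_{AB}^{(N)}+O(r^{N+1})$.

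Assembling the pieces, $(\tilde S_{ab}-\lambda\tilde g_{ab})(\partial_A)^a(\partial_B)^b=\tilde S_{AB}-\lambda\gamma_{AB}=-2r^{-1}\tilde\nabla_A\tilde\nabla_B r+\gamma_{AB}(r^{-2}\tilde g^{rr}-\lambda)$; the $\lambda$ in the News definition is exactly what cancels the leading $\lambda\gamma_{AB}$ coming from $r^{-2}\gamma_{AB}\tilde g^{rr}$, leaving $\gamma_{AB}(r^{-2}\tilde g^{rr}-\lambda)=O(r^N)$. Hence $\tilde S_{AB}-\lambda\gamma_{AB}=-r^{N-1}\partial_u\gamma_{AB}^{(N)}+O(r^N)$, and dividing by $r^{N-1}$ and setting $r=0$ yields $N_{AB}=-\partial_u\gamma_{AB}^{(N)}$. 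I expect the only real work — the ``hard part'' — to be the Christoffel bookkeeping and checking that the apparently singular $r^{-1}$ in $2r^{-1}\tilde\nabla_A\tilde\nabla_B r$ is genuinely tamed by the $r^N$ vanishing of $\tilde\nabla_A\tilde\nabla_B r$ (the recurring theme flagged in the abstract); there is no analytic subtlety beyond that. As a shortcut one could instead quote the explicit form of $\tilde S_{ab}$ from Appendix~\ref{app:grtensors} and extract the $r^{N-1}$ coefficient of its $AB$ component the same way.
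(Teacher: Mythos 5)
Your proposal is correct and follows essentially the same route as the paper: the paper also solves $\tilde{E}_{AB}=0$ for $\tilde{S}_{AB}$ and applies Theorem \ref{thm:metric-expansion-low-order}, merely quoting the precomputed component (\ref{eq:grtensors-einstein-compact-AB}) from Appendix \ref{app:grtensors-einstein} instead of rederiving $\tilde\Gamma^r{}_{AB}$ and $\tilde{g}^{rr}$ by hand as you do (your expression for $\tilde\Gamma^r{}_{AB}$ agrees with (\ref{eq:christoffel-r-AB})). The cancellation of the $\lambda$ subtraction against $r^{-2}\gamma_{AB}\tilde{g}^{rr}$ and the order counting of the remaining terms are exactly as in the paper's proof.
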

\begin{proof}
Rewriting equation (\ref{eq:bondi-news-definition}) in coordinate form, we find
\begin{equation}
\label{eq:bondi-news-coordinate}
N_{AB} = \left[\frac{1}{r^{N-1}}(\tilde{S}_{AB}-\lambda\gamma_{AB})\right]_{r=0} \quad .
\end{equation}
In Section \ref{sec:einstein-equations-calculation}, we wrote down an expression (\ref{eq:einstein-tensor-short}) for the Einstein equations in terms of $\tilde{S}_{ab}$.
Rearranging this equation yields an expression for $\tilde{S}_{ab}$.
In particular, the AB component, equation (\ref{eq:grtensors-einstein-compact-AB}), gives
\begin{align}
\label{eq:schouten-AB-low-order}
\tilde{S}_{AB}
\iftoggle{complete}{
& = -r^{-1}\partial_u\gamma_{AB} + 2r^{-2}\alpha\gamma_{AB} + O(r^{N}) \nonumber \\
}{}
& = - r^{N-1}\partial_u\gamma_{AB}^{(N)} + \lambda s_{AB} + O(r^N) \quad ,
\end{align}
and by Theorem \ref{thm:metric-expansion-low-order}
\begin{equation}
\label{eq:gamma-low-order}
\gamma_{AB} = s_{AB} + O(r^N) \quad .
\end{equation}
Substituting back into equation (\ref{eq:bondi-news-coordinate}) gives the desired expression for the News tensor, we find
\begin{align}
N_{AB}
\iftoggle{complete}{
& = \left[\frac{1}{r^{N-1}}\left(-r^{N-1}\partial_u\gamma_{AB}^{(N)} + \lambda s_{AB} - \lambda s_{AB} + O(r^N)\right)\right]_{r=0} \nonumber\\
}{}
& = \left[-\partial_u\gamma_{AB}^{(N)} + O(r)\right]_{r=0}
  = -\partial_u\gamma_{AB}^{(N)} \quad .
\end{align}
\end{proof}

\begin{theorem}
\label{thm:bondi-mass}
The Bondi mass on a cross section $\Sigma(0,u)$ of $\scri^+$ is given by
\begin{equation}
\label{eq:bondi-mass-formula}
m(u) = \frac{A}{8(d-3)\pi G} \int_{\Sigma(0,u)} \left(\frac{d-2}{8}\gamma^{AB(N)}\partial_u\gamma_{AB}^{(N)} - (d-2)(d-3)\alpha^{(2N+1)}\right) \quad .
\end{equation}
\end{theorem}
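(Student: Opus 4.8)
The plan is to push everything down into the conformal Gaussian null gauge of Theorem~\ref{thm:gaussian-null-coordinates-gauge} and then substitute the asymptotic expansions. Since $u$ and $r$ are coordinate functions, $\tilde\nabla_a u=(du)_a$ and $\tilde\nabla_a r=(dr)_a$ are coordinate covectors, $\tilde\nabla_a\tilde\nabla_b u=-\tilde\Gamma^u{}_{ab}$, and the Weyl contraction $\tilde{C}^{abcd}(\tilde\nabla_a u)(\tilde\nabla_b r)(\tilde\nabla_c u)\tilde\nabla_d r$ is just the fully raised component $\tilde{C}^{urur}$. Using the vacuum equation in the form~(\ref{eq:einstein-tensor-short}) to trade $\tilde{S}_{ab}$ for $-2r^{-1}\tilde\nabla_a\tilde\nabla_b r+r^{-2}\tilde{g}_{ab}\tilde{g}^{cd}(\tilde\nabla_c r)\tilde\nabla_d r$ (so that $\tilde{S}^{ab}-\lambda\tilde{g}^{ab}$ inherits the extra $r$ from $\tilde{g}^{rr}=2\alpha+\beta^A\beta_A=\lambda r^2+O(r^{N+2})$), together with the explicit coordinate formulae for $\tilde{S}_{ab}$, $\tilde{C}_{abcd}$ and $\tilde\Gamma^c{}_{ab}$ collected in Appendix~\ref{app:grtensors}, turns the bracket in~(\ref{eq:bondi-mass-density-definition}) into a coordinate polynomial in $\alpha,\beta_A,\gamma_{AB}$ and their $\partial_r,\partial_u,\mathscr{D}_A$ derivatives, multiplied by the prefactor $r^{-(2N-2)}$ (and an extra $r^{-1}$ on the Weyl piece).

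Next I would substitute the expansions~(\ref{eq:metric-component-expansions}) and feed in Theorems~\ref{thm:metric-expansion-low-order} and~\ref{thm:metric-components-higher-order} with Lemma~\ref{thm:metric-components-higher-order-halfway}: all of $\alpha^{(n)},\beta_A^{(n)},\gamma_{AB}^{(n)}$ with $n<N$ vanish apart from $\alpha^{(2)}=\lambda/2$ and $\gamma_{AB}^{(0)}=s_{AB}$; the traces $s^{AB}\gamma_{AB}^{(n)}$ vanish for $0<n<2N$; and $\alpha^{(k+2)},\mathcal{R}^{(k)},\beta_A^{(k+1)}$ are determined by $\mathscr{D}\mathscr{D}\gamma^{(k)}$ for $0<k<d-3$. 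The decisive point is that the orders $r^0,\dots,r^{2N-3}$ of the bracket — which would make $\mu$ divergent as $r\to0$ — must cancel identically. This is exactly the kind of ``apparently singular but actually zero'' term the abstract warns about, and here the low-order vanishing of Theorem~\ref{thm:metric-expansion-low-order}, in particular the trace identity $s^{AB}\gamma_{AB}=(d-2)+O(r^{2N})$ (which also kills the $r^N$ correction to $\sqrt\gamma$ and to all $\gamma^{AB}$ traces), is what makes the cancellation work; in practice one combines the $rr$, $ru$, $rA$ and trace-$AB$ Einstein equations at several orders exactly as in Section~\ref{sec:einstein-equations-higher-order}.

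The finite contribution then sits at order $r^{2N-2}$ of the bracket, where only two structures can survive: a term quadratic in the radiative coefficient $\gamma_{AB}^{(N)}$, which after using $s^{AB}\gamma_{AB}^{(N)}=0$ packages into $\gamma^{AB(N)}\partial_u\gamma_{AB}^{(N)}$, and a term linear in the single undetermined high coefficient $\alpha^{(2N+1)}=\alpha^{(d-1)}$ (note $2N+1=d-1$, matching the $Mr^{d-1}du^2$ term in the Schwarzschild example, Theorem~\ref{thm:schwarzschild-metric}). Every other term at this order is proportional either to $\mathcal{R}^{(N)}=\mathscr{D}^A\mathscr{D}^B\gamma_{AB}^{(N)}$ or to $\mathscr{D}^A\beta_A^{(N+1)}$, i.e.\ a total divergence on the closed cross section $\Sigma(0,u)$, and integrates to zero against $\sqrt{s}\,d^{d-2}x$ after one integration by parts (the $r$-corrections to $\sqrt\gamma$ enter at order $r^{2N}$ and vanish in the limit). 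Collecting the numerical constants should give precisely $\tfrac{d-2}{8}\gamma^{AB(N)}\partial_u\gamma_{AB}^{(N)}-(d-2)(d-3)\alpha^{(2N+1)}$, whence the formula~(\ref{eq:bondi-mass-formula}) via~(\ref{eq:bondi-mass-definition}).

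The main obstacle is the second step: expanding $\tilde{C}^{urur}$ and $(\tilde{S}^{ab}-\lambda\tilde{g}^{ab})\tilde\Gamma^u{}_{ab}$ to the required order and checking that all the would-be singular terms cancel — these cancellations are not individually transparent and lean on the full higher-order Einstein analysis. Once they are disposed of, recognising the divergence terms and pinning down the constants is routine bookkeeping. The case $d=4$, $N=1$, where the prefactor $r^{-(2N-2)}=1$ is not singular and all the intermediate vanishing statements are trivial, is a useful consistency check and follows from the same computation.
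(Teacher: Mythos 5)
Your overall architecture matches the paper's: reduce the density to coordinates, split it into the Schouten piece and the Weyl piece, feed in the expansions of Theorem \ref{thm:metric-expansion-low-order} and Lemma \ref{thm:metric-components-higher-order-halfway}, identify the finite terms $\tfrac{d-2}{8}\gamma^{AB(N)}\partial_u\gamma_{AB}^{(N)}$ and $-(d-2)(d-3)\alpha^{(2N+1)}$, and discard total divergences. The paper's Lemma \ref{thm:bondi-mass-density} in fact shows the Schouten term is entirely $O(1)$ (it collapses onto $-\tfrac14 r^{-(2N-2)}(\tilde{S}_{ab}-\lambda\tilde g_{ab})\partial_r\tilde g^{ab}$ with $\tilde S_{rr}=0$, $\tilde S_{rA}=O(r^{N-1})$, $\tilde S_{AB}-\lambda\gamma_{AB}=O(r^{N-1})$), so every singular term lives in the Weyl piece and is of the clean form $-k(k+1)r^{k+1-2N}\alpha^{(k+2)}$ for $0\le k\le 2N-2$.

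The genuine gap is your claim that these singular orders ``must cancel identically.'' They do not. The low-order vanishing of Theorem \ref{thm:metric-expansion-low-order} only kills $\alpha^{(k+2)}$ for $k\le N-1$; for $N\le k\le 2N-2$ the coefficients $\alpha^{(k+2)}$ are generically nonzero (they are fixed by $\mathscr{D}^A\mathscr{D}^B\gamma_{AB}^{(k)}$ via equation (\ref{eq:alpha-higher-order-gamma})), so the density $\mu$ really is pointwise divergent as $r\to0$. If you carry out your computation expecting identical cancellation you will get stuck. The correct mechanism — and the point of the surrounding discussion of equation (\ref{eq:bondi-mass-formula-example}) — is that each surviving singular coefficient is, by equation (\ref{eq:alpha-higher-order-beta}), equal to $\tfrac{k^2-1}{2(d-3-k)}\mathscr{D}^A\beta_A^{(k+1)}$, a total divergence on the \emph{closed} cross section $\Sigma(r,u)$ at finite $r$; its integral therefore vanishes by Gauss's theorem \emph{before} the limit $r\to0$ is taken, which is why the order of operations $\lim_{r\to0}\int_{\Sigma(r,u)}$ in Definition \ref{def:bondi-mass} matters. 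You apply the total-divergence argument correctly to the finite $O(1)$ terms, but it is needed — and is the only thing that works — for the singular ones as well.
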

\noindent
To prove the Bondi mass formula, we proceed as follows.
First we use the definition, equation (\ref{eq:bondi-mass-definition}) to obtain an integral of the form
\begin{equation}
\label{eq:bondi-mass-formula-example}
\lim_{r\rightarrow0} \int_{\Sigma(r,u)} \left( \sum_k \text{terms} \ O(r^{-k}) + \text{terms} \ O(1) + \text{terms} \ O(r) \right) \quad .
\end{equation}
The terms $O(1)$ turn out to be precisely those appearing in the integral in equation (\ref{eq:bondi-mass-formula}).
The terms $O(r)$ do not affect the limit at $\scri^+$ (i.e. the limit as $r\rightarrow0$).
The terms in the sum superficially appear to be singular at $\scri^+$, so we must show that they do not contribute to the integral before we take the limit at $\scri^+$.
To do this, we show that they are equal to a total divergence $\mathscr{D}_A v^A$ on $\Sigma(r,u)$, for some vector $v^A$ on $\Sigma(r,u)$.
Then they integrate to zero by Gauss' theorem, since the boundary of $\Sigma(r,u)$ is the empty set.
To formalise this proof, we first calculate the Bondi mass density in the following lemma.

\begin{lemma}
\label{thm:bondi-mass-density}
The energy density $\mu$ can be written
\begin{align}
\label{eq:bondi-mass-density-formula}
\mu = \frac{A}{8(d-3)\pi G}\bigg[ &\frac{d-2}{8}\gamma^{AB(N)}\partial_u\gamma_{AB}^{(N)} - (d-2)(d-3)\alpha^{(d-1)} \nonumber \\
& -\sum_{k=0}^{2N-2} k(k+1) r^{k+1-2N}\alpha^{(k+2)} + O(r) \bigg] \quad .
\end{align}
\end{lemma}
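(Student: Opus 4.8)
The plan is to evaluate the right-hand side of Definition~\ref{def:bondi-mass-density} directly in the Gaussian null coordinates $(r,u,x^A)$. Since $u$ and $r$ are coordinates, $\tilde\nabla_a u$ and $\tilde\nabla_a r$ are just the coordinate covectors $du_a$ and $dr_a$, so $\tilde\nabla_a\tilde\nabla_b u = -\tilde\Gamma^u{}_{ab}$ and the Weyl term collapses to a single component,
\begin{equation*}
\tilde C^{abcd}(\tilde\nabla_a u)(\tilde\nabla_b r)(\tilde\nabla_c u)\tilde\nabla_d r = \tilde C^{urur} \quad .
\end{equation*}
Hence
\begin{equation*}
\mu = \frac{A}{8(d-3)\pi G}\, r^{2-2N}\left[ -\tfrac12\big(\tilde S^{ab}-\lambda\tilde g^{ab}\big)\tilde\Gamma^u{}_{ab} - r^{-1}\tilde C^{urur} \right] \quad ,
\end{equation*}
and the whole task reduces to expanding the bracket in powers of $r$ to sufficient order. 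For this I would take the coordinate expressions for $\tilde S_{ab}$ (equivalently rearrange the components of $\tilde E_{ab}=0$ from Appendix~\ref{app:grtensors}, exactly as was done for the $AB$ component in the proof of Theorem~\ref{thm:bondi-news}), raise indices with $\tilde g^{ab}$ — which itself carries an $r$-expansion — and substitute the metric-component expansions of Theorem~\ref{thm:metric-expansion-low-order}, the simplified higher-order Einstein equations of Lemma~\ref{thm:einstein-equations-high-order}, and the relations of Lemma~\ref{thm:metric-components-higher-order-halfway}, so that every coefficient is expressed through $\alpha^{(k)}$, $\beta_A^{(k)}$ and $\gamma_{AB}^{(k)}$.

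Next I would sort the resulting expansion of the bracket into three groups by power of $r$: (a) genuinely negative powers $r^{k+1-2N}$ with $0\le k\le 2N-2$, produced by the $r^{2-2N}$ and $r^{-1}$ prefactors acting on low-order pieces; (b) the $O(1)$ part; and (c) an $O(r)$ remainder, which does not survive the limit in Definition~\ref{def:bondi-mass}. In group (a) the $\tilde S^{ab}$, $\lambda\tilde g^{ab}$ and Weyl contributions each throw off would-be singular terms involving $\alpha^{(k+2)}$, $\mathscr{D}^A\beta_A^{(k+1)}$ and $\mathscr{D}^A\mathscr{D}^B\gamma_{AB}^{(k)}$; feeding in $\beta_A^{(k+2)}=-\tfrac{k+1}{(k+2)(d-3-k)}\mathscr{D}^B\gamma_{AB}^{(k+1)}$, $\alpha^{(k+2)}=-\tfrac{k-1}{2(k+1)(d-2-k)(d-3-k)}\mathscr{D}^A\mathscr{D}^B\gamma_{AB}^{(k)}$, $\mathcal{R}^{(k)}=\mathscr{D}^A\mathscr{D}^B\gamma_{AB}^{(k)}$, and the trace identities $s^{AB}\gamma_{AB}^{(k)}=0$ for $0<k<2N$, I expect the $\beta$- and $\gamma$-dependent pieces to cancel identically, leaving exactly $-\sum_{k=0}^{2N-2}k(k+1)r^{k+1-2N}\alpha^{(k+2)}$. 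In group (b) the quadratic term $\tfrac{d-2}{8}\gamma^{AB(N)}\partial_u\gamma_{AB}^{(N)}$ arises from two order-$r^N$ metric perturbations multiplied against the $r^{2-2N}$ and $r^{-1}$ prefactors (this is where the News tensor of Theorem~\ref{thm:bondi-news} enters), while $-(d-2)(d-3)\alpha^{(d-1)}$ comes from the $k=2N-1$ slice of the $\alpha$-expansion; note $\alpha^{(d-1)}=\alpha^{(2N+1)}$ since $d-1=2N+1$, matching the coefficient in Theorem~\ref{thm:bondi-mass}.

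The main obstacle is the bookkeeping for $\tilde C^{urur}$: it has to be expanded to order $r^{2N-2}$, and one must check that, after raising indices and applying the Einstein equations, the singular terms it contributes combine with those from $\tilde S^{ab}$ into precisely the stated $\alpha$-sum — with no residual $\beta$ or $\gamma$ terms and no unexpected extra negative powers — and that the numerical coefficients really collapse to $k(k+1)$. Getting this cancellation exact is the delicate point, and it is exactly this structure (the singular part being built entirely from $\alpha^{(k+2)}$) that later allows us to rewrite those terms as a total $\mathscr{D}_A$-divergence on $\Sigma(r,u)$ and conclude they do not contribute to the Bondi mass. The remaining steps are routine substitution and collection.
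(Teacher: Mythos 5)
Your overall strategy --- evaluate Definition \ref{def:bondi-mass-density} directly in coordinates, sort the result into singular, $O(1)$ and $O(r)$ pieces, and read off the stated formula --- is the paper's strategy, and your opening reductions $\tilde\nabla_a\tilde\nabla_b u=-\tilde\Gamma^u{}_{ab}$ and $\tilde C^{abcd}(\tilde\nabla_a u)(\tilde\nabla_b r)(\tilde\nabla_c u)\tilde\nabla_d r=\tilde C^{urur}$ are correct. The gap is that the one step the lemma actually consists of --- verifying that the bracket expands to precisely the claimed combination of $\alpha^{(k+2)}$, $\alpha^{(d-1)}$ and $\gamma^{AB(N)}\partial_u\gamma_{AB}^{(N)}$ --- is left as an expectation, and the mechanism you anticipate is not the one that operates. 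You predict that the Schouten, $\lambda\tilde g^{ab}$ and Weyl pieces each shed singular terms in $\alpha^{(k+2)}$, $\mathscr{D}^A\beta_A^{(k+1)}$ and $\mathscr{D}^A\mathscr{D}^B\gamma_{AB}^{(k)}$, which must then cancel via Lemma \ref{thm:metric-components-higher-order-halfway}. No such cancellation occurs or is needed.

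The actual structure is cleaner. Writing $-\tilde\Gamma^u{}_{ab}=\tfrac12\partial_r\tilde g_{ab}$, the Schouten contribution becomes $\mu_1=-\tfrac14 r^{2-2N}(\tilde S_{ab}-\lambda\tilde g_{ab})\partial_r\tilde g^{ab}$, and the contraction collapses to $(\partial_r\beta^A)\tilde S_{rA}+(\partial_r\gamma^{AB})(\tilde S_{AB}-\lambda\gamma_{AB})$ since $\tilde S_{rr}=0=\tilde g_{rr}$ and $\partial_r\tilde g^{ru}=0$. By Theorem \ref{thm:metric-expansion-low-order} and equation (\ref{eq:schouten-AB-low-order}) both surviving products are $O(r^{2N-2})$, so $\mu_1$ is automatically nonsingular and yields exactly the term $\tfrac{d-2}{8}\gamma^{AB(N)}\partial_u\gamma_{AB}^{(N)}+O(r)$; no singular terms arise from it at all. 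Every singular term and the $-(d-2)(d-3)\alpha^{(d-1)}$ term come from the Weyl piece alone, via $\tilde C^{urur}=\partial_r^2\alpha-2r^{-1}(\partial_r\alpha-r^{-1}\alpha)+O(r^{2N})$, whose expansion is term by term $k(k+1)r^k\alpha^{(k+2)}$ with no $\beta$ or $\gamma$ residue to cancel. The relations of Lemma \ref{thm:metric-components-higher-order-halfway} play no role here; they are invoked only afterwards, in the proof of Theorem \ref{thm:bondi-mass}, to rewrite the $\alpha^{(k+2)}$ sum as total divergences. As it stands, your proposal asserts the conclusion of the lemma at exactly the point where the work has to be done.
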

\begin{proof}
Let us first write the energy density $\mu$ as
\begin{equation}
\label{eq:bondi-mass-density-split}
\mu = \frac{A}{8(d-3)\pi G}(\mu_1 + \mu_2) \quad ,
\end{equation}
where we have denoted the two terms of the density by
\begin{align}
\label{eq:bondi-mass-term-1-definition}
& \mu_1 = \frac{1}{2}\frac{1}{r^{2N-2}}\left(\tilde{S}^{ab}-\lambda\tilde{g}^{ab}\right) \tilde\nabla_a\tilde\nabla_b u \quad , \\
\label{eq:bondi-mass-term-2-definition}
& \mu_2 = -\frac{1}{r^{2N-1}}\tilde{C}^{abcd} (\tilde\nabla_a u)(\tilde\nabla_b r)(\tilde\nabla_c u)\tilde\nabla_d r \quad .
\end{align}
We calculate that
\begin{align}
\tilde\nabla_a\tilde\nabla_b u
\iftoggle{complete}{
& = \tilde\nabla_a\partial_b u \nonumber \\
& = \partial_a\partial_b u - \tilde\Gamma^c_{ab}\partial_c u \nonumber \\
}{}
& = -\tilde\Gamma^c{}_{ab}\tilde\nabla_c u
\iftoggle{complete}{ \nonumber \\
& = -\frac{1}{2}\tilde{g}^{uc}(\partial_a\tilde{g}_{bc}+\partial_b\tilde{g}_{ac} - \partial_c\tilde{g}_{ab}) \nonumber \\
& = -\frac{1}{2}\tilde{g}^{ur}(\partial_a\tilde{g}_{br}+\partial_b\tilde{g}_{ar} - \partial_r\tilde{g}_{ab}) \nonumber  \\
& }{}
 = \frac{1}{2}\partial_r\tilde{g}_{ab} \quad ,
\end{align}
and since
\begin{equation}
\partial_r\tilde{g}^{ab} = -\tilde{g}^{ac}\tilde{g}^{bd}\partial_r\tilde{g}_{cd} \quad ,
\end{equation}
we can express the first term of the energy density as
\begin{align}
\label{eq:bondi-mass-term-1-formula-halfway}
\mu_1
\iftoggle{complete}{
& = \frac{1}{4}\frac{1}{r^{2N-2}}\left(\tilde{S}^{ab}-\lambda\tilde{g}^{ab}\right)\partial_r\tilde{g}_{ab} \nonumber \\
}{}
& = -\frac{1}{4}\frac{1}{r^{2N-2}}\left(\tilde{S}_{ab}-\lambda\tilde{g}_{ab}\right)\partial_r\tilde{g}^{ab} \quad .
\end{align}
From equations (\ref{eq:grtensors-einstein-compact-rr}) and (\ref{eq:grtensors-einstein-compact-rA}), we find that
\begin{align}
& \tilde{S}_{rr} = 0 \quad , \\
& \tilde{S}_{rA} = -r^{-1}\partial_r\beta_A + r^{-1}\beta^B\partial_r\gamma_{AB} = O(r^{N-1}) \quad ,
\end{align}
and from equations (\ref{eq:schouten-AB-low-order}) and (\ref{eq:gamma-low-order}) that
\begin{align}
\tilde{S}_{AB} - \lambda\gamma_{AB}
\iftoggle{complete}{
& = -r^{N-1}\partial_u\gamma_{AB}^{(N)} + \lambda s_{AB} - \lambda \gamma_{AB} + O(r^N) \nonumber \\
}{}
& = -r^{N-1}\partial_u\gamma_{AB}^{(N)} + O(r^N) \quad .
\end{align}
Summing over the contracted indices in equation (\ref{eq:bondi-mass-term-1-formula-halfway}) we find that
\begin{align}
\label{eq:bondi-mass-term-1-formula}
\mu_1
\iftoggle{complete}{
& = -\frac{1}{4}\frac{1}{r^{2N-2}}\bigg[(\partial_r\tilde{g}^{rr})(\tilde{S}_{rr} - \lambda\tilde{g}_{rr}) + (\partial_r\tilde{g}^{ru})(\tilde{S}_{ru} - \lambda\tilde{g}_{ru}) \nonumber \\
& + (\partial_r\tilde{g}^{rA}(\tilde{S}_{rA} - \lambda\tilde{g}_{rA})  + (\partial_r\tilde{g}^{AB})(\tilde{S}_{AB} - \lambda\tilde{g}_{AB})\bigg] \nonumber \\
}{}
& = -\frac{1}{4}\frac{1}{r^{2N-2}}\bigg[(\partial_r\beta^A)\tilde{S}_{rA} + (\partial_r\gamma^{AB})(\tilde{S}_{AB} - \lambda\tilde{g}_{AB})\bigg] \nonumber \\
\iftoggle{complete}{
& = -\frac{1}{4}\frac{1}{r^{2N-2}}\bigg[r^{N-1}N\gamma^{AB(N)}(-r^{N-1}\partial_u\gamma_{AB}^{(N)}) + O\left(r^{2N-1}\right)\bigg] \nonumber \\
& = \frac{1}{4}\frac{1}{r^{2N-2}}\bigg[r^{2N-2}N\gamma^{AB(N)}\partial_u\gamma_{AB}^{(N)} + O\left(r^{2N-1}\right)\bigg] \nonumber \\
}{}
& = \frac{d-2}{8}\gamma^{AB(N)}\partial_u\gamma_{AB}^{(N)} + O(r) \quad .
\end{align}
This is the first term of equation (\ref{eq:bondi-mass-density-formula}) as required.

\medskip\noindent
It remains to derive an expression for $\mu_2$.
In coordinates, we have
\begin{equation}
\mu_2 = - \frac{1}{r^{2N-1}}\tilde{C}^{abcd}(\tilde\nabla_a u)(\tilde\nabla_b r)(\tilde\nabla_c u)\tilde\nabla_d r = -\frac{1}{r^{2N-1}}\tilde{C}^{urur} \quad .
\end{equation}
Raising the indices of $\tilde{C}_{abcd}$ with $\tilde{g}^{ab}$, we find
\begin{align}
\tilde{C}^{urur}
\label{eq:weyl-upper-1}
& = \tilde{g}^{au}\tilde{g}^{br}\tilde{g}^{cu}\tilde{g}^{dr}\tilde{C}_{abcd} \nonumber \\
\iftoggle{complete}{
& = \tilde{g}^{br}\tilde{g}^{dr}\tilde{C}_{rbrd} \nonumber \\
& = (2\alpha+\beta_C\beta^C)^2\tilde{C}_{rrrr} + 2(2\alpha+\beta_C\beta^C)\tilde{C}_{rrru} + 2(2\alpha+\beta_C\beta^C)\beta^A\tilde{C}_{rrrA} \nonumber \\
& + \tilde{C}_{ruru} + 2\beta^A\tilde{C}_{rurA} + \beta^A\beta^B\tilde{C}_{rArB} \nonumber \\
}{}
& = \tilde{C}_{ruru} + 2\beta^A\tilde{C}_{rurA} + O(r^{2N+2})\quad ,
\end{align}
where we have used that $\tilde{C}_{abcd}=-\tilde{C}_{bacd}$.
The relevant components of the Weyl tensor are given in Appendix \ref{app:grtensors-weyl}.
Substituting these into equation (\ref{eq:weyl-upper-1}) we find that
\begin{equation}
\tilde{C}^{urur} = \partial_r^2\alpha - 2r^{-1}(\partial_r\alpha - r^{-1}\alpha) + O(r^{2N}) \quad .
\end{equation}
Expanding this in powers of $r$ yields
\begin{equation}
\tilde{C}^{urur} = \sum_{k=0}^{2N-1} k(k+1) r^k \alpha^{(k+2)} + O(r^{2N}) \quad ,
\end{equation}
and substituting this expansion back into equation (\ref{eq:bondi-mass-term-2-definition}) gives
\begin{equation}
\label{eq:bondi-mass-term-2-formula}
\mu_2 = -\sum_{k=0}^{2N-2} k(k+1) r^{k+1-2N} \alpha^{(k+2)} - (d-2)(d-3)\alpha^{(2N+1)} + O(r) \quad .
\end{equation}
Substituting equations (\ref{eq:bondi-mass-term-1-formula}) and (\ref{eq:bondi-mass-term-2-formula}) into (\ref{eq:bondi-mass-density-split}) gives the desired expression (\ref{eq:bondi-mass-density-formula}) for $\mu$.
\end{proof}

\begin{proof}[Proof of Theorem \ref{thm:bondi-mass}]
Substituting the formula (\ref{eq:bondi-mass-density-formula}) into the definition of the Bondi mass, equation (\ref{eq:bondi-mass-definition}), we obtain an integral of the form (\ref{eq:bondi-mass-formula-example}).
It remains to show that the singular terms in the sum are equal to total divergences.
These terms take the form
\begin{equation}
k(k+1)\alpha^{(k+2)} r^{k+1-2N}
\end{equation}
for $0\leq k<2N-1$ and therefore we can apply Theorem \ref{thm:metric-components-higher-order-halfway} to rewrite them as
\begin{equation}
k(k+1)\alpha^{(k+2)} = \frac{k^2-1}{2(d-3-k)}\mathscr{D}^A\beta_A^{(k+1)} \quad ,
\end{equation}
which is precisely the form we wanted.
These coefficients are total divergences\footnote{In fact, total divergences on surfaces $\Sigma(r,u)$ for $r\neq0$ should be with respect to the derivative operator $D_A$. However, the correction terms are of order $O(r^N)$, and so we can write $D^A\beta_A^{(k)}=\mathscr{D}^A\beta_A^{(k)}$ for $0\leq k<2N+1$ for $k<2N+1$, since $\beta_A=O(r^{N+1})$.}
on the surfaces $\Sigma(r,u)$.
The integral therefore becomes
\begin{align}
\int_{\Sigma(r,u)} \mu
\iftoggle{complete}{
& = \frac{1}{d-3} \int_{\Sigma(r,u)} \left(\frac{d-2}{8}\gamma^{AB(N)}\partial_u\gamma_{AB}^{(N)} + (d-2)(d-3)\alpha^{(2N+1)} + \frac{1}{r^{2N-1}}\mathscr{D}_A v^A + O(r)\right) \\
}{}
 = \frac{A}{8(d-3)\pi G} \int_{\Sigma(r,u)} \bigg( \frac{d-2}{8}&\gamma^{AB(N)}\partial_u\gamma_{AB}^{(N)} \nonumber \\
& + (d-2)(d-3)\alpha^{(2N+1)}\bigg) + O(r) \quad ,
\end{align}
where the total divergence terms have dropped out of the integral by Gauss' law.
The integrand is now $O(1)$ and we can take the limit at $\scri^+$ without issue.
Doing this, the terms $O(r)$ drop out and we obtain the Bondi mass formula (\ref{eq:bondi-mass-formula}).
\end{proof}

\begin{theorem}
\label{thm:bondi-flux}
The flux of energy at time $u$ (i.e. the energy radiated through the cross section $\Sigma(0,u)$ of $\scri^+$) is
\begin{equation}
\label{eq:bondi-flux-formula}
f(u) = -\frac{A}{32\pi G} \int_{\Sigma(0,u)} N^{AB} N_{AB} \quad ,
\end{equation}
where the indices of $N_{AB}$ are raised with $s^{AB}$.
\end{theorem}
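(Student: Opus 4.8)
The plan is to differentiate the Bondi mass formula of Theorem \ref{thm:bondi-mass} directly, using the definition $f(u)=\partial_u m(u)$. Since $\gamma_{AB}\big|_{r=0}=s_{AB}$ is independent of $u$, the volume element $\sqrt{\gamma}\,d^{d-2}x$ on $\Sigma(0,u)$ is $u$-independent and the domain is the fixed closed manifold of ``angular'' coordinates, so $\partial_u$ may be taken inside the integral. It therefore suffices to compute $\partial_u$ of the integrand $\frac{d-2}{8}\gamma^{AB(N)}\partial_u\gamma_{AB}^{(N)}-(d-2)(d-3)\alpha^{(2N+1)}$ and show that it integrates to $-\frac{d-3}{4}N^{AB}N_{AB}$ over $\Sigma(0,u)$.

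First I would rewrite the quadratic term. Write $h_{AB}:=\gamma_{AB}^{(N)}$, raising indices with $s^{AB}$; by Theorem \ref{thm:metric-expansion-low-order} we have $s^{AB}\gamma_{AB}^{(N)}=0$, so $\gamma^{AB(N)}=-h^{AB}$ with $h_{AB}$ being $s$-trace-free, while Theorem \ref{thm:bondi-news} gives $N_{AB}=-\partial_u h_{AB}$ (hence $N_{AB}$ is also $s$-trace-free). Then $\gamma^{AB(N)}\partial_u\gamma_{AB}^{(N)}=h^{AB}N_{AB}$ and $\partial_u\bigl(\gamma^{AB(N)}\partial_u\gamma_{AB}^{(N)}\bigr)=-N^{AB}N_{AB}+h^{AB}\partial_u N_{AB}$.

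Next I would derive an evolution equation for $\alpha^{(2N+1)}$ from the $uu$ component of the vacuum Einstein equation $\tilde{E}_{uu}=0$, expanded to order $r^{2N}=r^{d-2}$ --- the same order at which, in the spherically symmetric case of Theorem \ref{thm:generalised-schwarzschild-metric}, the $uu$ equation yielded $\partial_u\alpha^{(d-1)}=0$. Feeding in the low-order data of Theorem \ref{thm:metric-expansion-low-order} and the higher-order relations of Theorem \ref{thm:metric-components-higher-order} (and, for $\partial_u\beta_A^{(2N+1)}$, the $uA$ equation at the appropriate order), this coefficient should organise into the form $\partial_u\alpha^{(2N+1)}=\frac{d-4}{8(d-2)(d-3)}N^{AB}N_{AB}+\frac{1}{8(d-3)}h^{AB}\partial_u N_{AB}+\mathscr{D}_A w^A$ for some vector $w^A$ on $\Sigma(0,u)$ built from the expansion coefficients. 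Substituting this together with the previous step into $\partial_u m$, the $h^{AB}\partial_u N_{AB}$ contributions cancel, the divergence $\mathscr{D}_A w^A$ integrates to zero on the closed surface $\Sigma(0,u)$ by Gauss' theorem, and the bookkeeping of the $(d-2),(d-3)$ factors leaves exactly $f(u)=-\frac{A}{32\pi G}\int_{\Sigma(0,u)}N^{AB}N_{AB}$.

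The main obstacle is this middle step: extracting the $uu$ Einstein equation at order $r^{d-2}$ in a clean enough form and verifying that, after substituting all previously established asymptotics, everything beyond the stated multiple of $N^{AB}N_{AB}$ and the $h^{AB}\partial_u N_{AB}$ term that cancels against the quadratic term really does collect into an exact angular divergence. This is the higher-dimensional analogue of the classical Bondi mass-loss computation, and it is precisely here that the apparently $d$-dependent prefactors must conspire to produce the dimension-independent constant $\tfrac{1}{32\pi}$.
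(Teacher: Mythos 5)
Your plan is correct and is essentially the paper's own proof: differentiate under the integral, rewrite the quadratic term via $N_{AB}=-\partial_u\gamma_{AB}^{(N)}$, and extract an evolution equation for $\alpha^{(2N+1)}$ from the $uu$ Einstein equation at order $r^{2N}$ whose coefficients (including the $\tfrac{d-4}{8(d-2)(d-3)}$ and $\tfrac{1}{8(d-3)}$ you predict) exactly match what the paper obtains. The one ingredient you leave implicit in your ``main obstacle'' step is that the paper closes it by invoking the $rr$ equation at order $r^{2N-2}$, which gives $s^{AB}\gamma_{AB}^{(2N)}=-\tfrac{3d-10}{8(d-3)}\gamma^{AB(N)}\gamma_{AB}^{(N)}$ and, after two $u$ derivatives, converts the $s^{AB}\partial_u^2\gamma_{AB}^{(2N)}$ term appearing in the $uu$ equation into precisely the $h^{AB}\partial_u N_{AB}$ and $N^{AB}N_{AB}$ combinations you need.
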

\begin{remark}
Since $s_{AB}$ is a Riemannian metric, it follows from equation (\ref{eq:bondi-flux-formula}) that the flux is negative or zero.
This means that a positive amount of energy is radiated away at $\scri^+$, as we would expect.
\iftoggle{complete}{

Note: we know the flux is negative because the Riemannian metric $s_{AB}$ can be written as the identity matrix in suitable coordinates.
The quantity $N^{AB}N_{AB}$ is a scalar, and can therefore coordinate independent.
If it is positive in an orthonormal basis then it is positive in all coordinate systems.
}{}
\end{remark}
\begin{proof}
We apply the definition, equation (\ref{eq:bondi-flux-definition}).
We take a $u$ derivative of the Bondi mass (\ref{eq:bondi-mass-formula}) and then apply Einstein's equations to obtain equation (\ref{eq:bondi-flux-formula}).
The volume element ${}^{(d-2)}\tilde\epsilon_{a_1\ldots a_{d-2}}$ is independent of $u$ on $\scri^+$ because $s_{AB}$ is.
As a result, the $u$ derivative can be taken inside the integral.
\iftoggle{complete}{
The derivative of the volume element is zero, $\partial_u \sqrt{s}d^{d-2}x = 0$,
and the geometry and size of the cross sections does not change, so the ``derivative of the integral'' is also zero.
}{}

\medskip\noindent
Differentiating equation (\ref{eq:bondi-mass-formula}) with respect to $u$ thus gives
\begin{align}
\label{eq:bondi-flux-formula-halfway}
\partial_u m(u) = \frac{A}{8(d-3)\pi G} \int_{\Sigma(0,u)} \bigg( \frac{d-2}{8}(&\partial_u\gamma^{AB(N)})\partial_u\gamma_{AB}^{(N)} + \frac{d-2}{8}\gamma^{AB(N)}\partial_u^2\gamma_{AB}^{(N)} \nonumber \\
& - (d-2)(d-3)\partial_u\alpha^{(2N+1)} \bigg) \quad .
\end{align}
Then the $uu$ component of Einstein's equations at order $r^{2N}$ gives
\begin{align}
\label{eq:einstein-uu-2N}
\iftoggle{complete}{
0
& = -(d-1)s^{AB}\partial_u^2\gamma_{AB}^{(2N)} - (d-1)\gamma^{AB(N)}\partial_u^2\gamma_{AB}^{(N)} - \frac{d-1}{2}\left(\partial_u\gamma^{AB(N)}\right)\partial_u\gamma_{AB} \nonumber \\
& + 2(d-1)(d-2)\partial_u\alpha^{(d-1)} + 2(d-1)s^{AB}\mathscr{D}_A\mathscr{D}_B \alpha^{(d-2)} + \lambda\mathcal{R}^{(d-4)} \nonumber \\
0
& = -s^{AB}\partial_u^2\gamma_{AB}^{(2N)} - \gamma^{AB(N)}\partial_u^2\gamma_{AB}^{(N)} - \frac{1}{2}\left(\partial_u\gamma^{AB(N)}\right)\partial_u\gamma_{AB} \nonumber \\
& + 2(d-2)\partial_u\alpha^{(d-1)} + 2s^{AB}\mathscr{D}_A\mathscr{D}_B \alpha^{(d-2)} + \frac{\lambda}{d-1}\mathcal{R}^{(d-4)} \nonumber \\
}{}
0
& = -s^{AB}\partial_u^2\gamma_{AB}^{(2N)} - \gamma^{AB(N)}\partial_u^2\gamma_{AB}^{(N)} - \frac{1}{2}\left(\partial_u\gamma^{AB(N)}\right)\partial_u\gamma_{AB} \nonumber \\
& + 2(d-2)\partial_u\alpha^{(d-1)} + \mathscr{D}_A w^A \quad ,
\end{align}
where we have used Lemma \ref{thm:metric-components-higher-order-halfway} to express terms as total divergences.
We can simplify the first line using the $rr$ equation.
At order $r^{2N-2}$, it gives
\begin{equation}
s^{AB}\gamma_{AB}^{(2N)} = -\frac{3d-10}{8(d-3)}\gamma^{AB(N)}\gamma_{AB}^{(N)} \quad .
\end{equation}
Taking two $u$ derivatives, we find
\begin{equation}
s^{AB}\partial_u^2\gamma_{AB}^{(d-2)} = -\frac{3d-10}{4(d-3)}(\partial_u\gamma^{AB(N)})\partial_u\gamma_{AB}^{(N)} -\frac{3d-10}{4(d-3)}\gamma^{AB(N)}\partial_u^2\gamma_{AB}^{(N)} \quad ,
\end{equation}
and substituting this back into equation (\ref{eq:einstein-uu-2N}) gives
\begin{align}
0
\iftoggle{complete}{
& = \frac{3d-10}{4(d-3)}\partial_u\gamma^{AB(N)}\partial_u\gamma_{AB}^{(N)} +\frac{3d-10}{4(d-3)}\gamma^{AB(N)}\partial_u^2\gamma_{AB}^{(N)} \nonumber \\
& - \gamma^{AB(N)}\partial_u^2\gamma_{AB}^{(N)} - \frac{1}{2}\left(\partial_u\gamma^{AB(N)}\right)\partial_u\gamma_{AB}^{(N)} \nonumber \\
& + 2(d-2)\partial_u\alpha^{(d-1)} + 2s^{AB}\mathscr{D}_A\mathscr{D}_B \alpha^{(d-2)} + \frac{\lambda}{d-1}\mathcal{R}^{(d-4)} \nonumber \\
& = \frac{3d-10-4(d-3)}{4(d-3)}\gamma^{AB(N)}\partial_u^2\gamma_{AB}^{(N)} + \frac{3d-10-2(d-3)}{4(d-3)}\left(\partial_u\gamma^{AB(N)}\right)\partial_u\gamma_{AB}^{(N)} \nonumber \\
& + 2(d-2)\partial_u\alpha^{(d-1)} + 2s^{AB}\mathscr{D}_A\mathscr{D}_B \alpha^{(d-2)} + \frac{\lambda}{d-1}\mathcal{R}^{(d-4)} \nonumber \\
}{}
& = -\frac{d-2}{4(d-3)}\gamma^{AB(N)}\partial_u^2\gamma_{AB}^{(N)} + \frac{d-4}{4(d-3)}\left(\partial_u\gamma^{AB(N)}\right)\partial_u\gamma_{AB}^{(N)} \nonumber \\
& + 2(d-2)\partial_u\alpha^{(d-1)} + \mathscr{D}_A w^A \quad .
\end{align}
Multiplying throughout by $(d-3)/2$, this can be rewritten as
\begin{align}
\frac{d-2}{8}\gamma^{AB(N)}\partial_u^2\gamma_{AB}^{(N)} &+ \frac{d-2}{8}\left(\partial_u\gamma^{AB(N)}\right)\partial_u\gamma_{AB} - (d-2)(d-3)\partial_u\alpha^{(d-1)} \nonumber \\
& = \frac{d-3}{4}\left(\partial_u\gamma^{AB(N)}\right)\partial_u\gamma_{AB}^{(N)} + \mathscr{D}_A w^A \quad .
\end{align}
This allows us to rewrite the Bondi flux formula (\ref{eq:bondi-flux-formula-halfway}) as
\begin{align}
\label{eq:bondi-flux-formula-halfway-2}
\partial_u m(u) = \frac{A}{32\pi G} \int_{\Sigma(0,u)} \left(\left\{\partial_u\gamma^{AB(N)}\right\}\partial_u\gamma_{AB}^{(N)} + \mathscr{D}_A w^A\right) \quad .
\end{align}
The total divergence term integrates to zero by Gauss' law.
We know from Theorem \ref{thm:bondi-news} that $\partial_u\gamma_{AB}^{(N)}=-N_{AB}$ and so it follows that
\begin{align}
N^{AB}
\iftoggle{complete}{
& = s^{AC}s^{BD}N_{CD} \nonumber\\
}{}
& = - s^{AC}s^{BD} \partial_u\gamma_{CD}^{(N)} \nonumber\\
& = \partial_u\gamma^{AB(N)} \quad .
\end{align}
Thus, equation (\ref{eq:bondi-flux-formula-halfway-2}) becomes
\begin{equation}
\partial_u m(u) = -\frac{A}{32\pi G} \int_{\Sigma(0,u)} N^{AB}N_{AB} \quad ,
\end{equation}
which is the desired formula (\ref{eq:bondi-flux-formula}).
\end{proof}

\numberwithin{equation}{section}

\section{Spinors in Gaussian Null Coordinates}
\label{sec:spinors}
Basic spinor notation is introduced, which is used in Section \ref{sec:dirac-spinor} when solving the Dirac equation.
A basis is fixed, and thus the spinor connection coefficients can be calculated.
A representation is also selected for the flat space gamma matrices.

We assume the spacetime admits a spin structure (in a spacetime without a spin structure, a spinorial positivity proof can of course not be performed).
In even dimension $d$, spinors have $2^{d/2}$ components.
A spinor $\psi$ is viewed as a column vector and its Hermitian transpose (conjugate transpose) $\psi^\ast$ is a row vector, as is its Dirac conjugate $\bar\psi$ (see equation (\ref{eq:dirac-conjugate})).
Spinor indices are suppressed for simpler notation.

\subsection{Basis}
\label{sec:spinors-basis}

\begin{definition}
We define a ``Gaussian null basis'' to be a basis $e^{\mu a}$ of the physical spacetime $(M,g_{ab})$ such that
\begin{equation}
g_{ab} e^{\mu a} e^{\nu b} = \lambda^{\mu\nu}
\end{equation}
where
\begin{equation}
\lambda_{\mu\nu} = \left(
\begin{matrix}
0 & 1 & 0\\
1 & 0 & 0\\
0 & 0 & \delta_{\hat{A}\hat{B}}
\end{matrix}
\right) \quad ,
\end{equation}
and
\begin{equation}
\lambda^{\mu\nu} \equiv (\lambda^{-1})^{\mu\nu} = \lambda_{\mu\nu} \quad .
\end{equation}
Lower case Greek indices are labelling indices which run through $+,-,1,\ldots, d-~2$ (the choice of ``+'' and ``--'' as labelling indices is purely notational).
Upper case Roman letters with a ``hat'' are labelling indices running from $1$ to $d-2$.
In this basis, the metric is given by
\begin{equation}
\label{eq:metric-spinor-basis}
g_{ab} = \lambda_{\mu\nu} e^\mu_a e^\nu_b \quad .
\end{equation}
\iftoggle{complete}{
It can then be shown that $g_{\mu\nu} = \lambda_{\mu\nu}$.
\begin{align*}
\lambda_{\mu\nu} 
& = g_{ab} e^a_\mu e^b_\nu \\
& = g_{\rho\sigma} e^\rho_a e^\sigma_b e^a_\mu e^b_\nu \\
& = g_{ac} e^a_\mu e^{\rho c} g_{bd} e^b_\nu e^{\sigma d} g_{\rho\sigma} \\
& = \lambda^\rho{}_\mu \lambda^\sigma{}_\nu g_{\rho\sigma} \\
& = g_{\mu\nu} \quad .
\end{align*}
and therefore,
\begin{equation}
\label{eq:spinor-basis-metric-components}
g_{ab} = \lambda_{\mu\nu} e^\mu_a e^\nu_b \quad .
\end{equation}
}{}
\end{definition}
\noindent
A Gaussian null basis $\tilde{e}^{\mu a}$ can be constructed in a similar manner for the unphysical spacetime, so that
\begin{equation}
\label{eq:unphysical-metric-basis}
\tilde{g}_{ab} = \lambda_{\mu\nu}\tilde{e}^\mu_a\tilde{e}^\nu_b \quad .
\end{equation}
Since $g^{ab}=r^2\tilde{g}^{ab}$, it follows that the physical and unphysical basis vectors are related according to
\begin{equation}
\label{eq:spinor-basis-conformal-relation}
e^{\mu a} = r\tilde{e}^{\mu a} \quad .
\end{equation}
%
We choose the following (nonunique) basis vectors for the unphysical spacetime,
\begin{subequations}
\label{eq:spinor-basis}
\begin{align}
\label{eq:spinor-basis-+}
\tilde{e}^{+a} &= \partial_r^a \quad , \\
\label{eq:spinor-basis--}
\tilde{e}^{-a} &= \alpha\partial_r^a + \partial_u^a \quad , \\
\label{eq:spinor-basis-A}
\tilde{e}^{\hat{A}a} &= \beta_A\hat{e}^{\hat{A}A}\partial_r^a + \hat{\tilde{e}}^{\hat{A}A}\partial_A^a \quad ,
\end{align}
\end{subequations}
where the vectors $\hat{\tilde{e}}^{\hat{A}A}$ form an orthonormal basis on the cross sections $\Sigma(r,u)$.
That is,
\begin{equation}
\gamma_{AB} \hat{\tilde{e}}^{\hat{A}A}\hat{\tilde{e}}^{\hat{B}B} = \delta^{\hat{A}\hat{B}} \quad ,
\end{equation}
and in this basis the induced metric $\gamma_{AB}$ is given by
\begin{equation}
\gamma_{AB} = \delta_{\hat{A}\hat{B}} \hat{\tilde{e}}^{\hat{A}}_A \hat{\tilde{e}}^{\hat{B}}_B \quad .
\end{equation}
It can be checked that this basis does indeed satisfy equation (\ref{eq:unphysical-metric-basis}),
\begin{align}
\lambda_{\mu\nu} \tilde{e}^\mu_a \tilde{e}^\nu_b
\iftoggle{complete}{
& = 2 \tilde{e}^+_{(a} \tilde{e}^-_{b)} + \delta_{\hat{A}\hat{B}} \hat{\tilde{e}}^{\hat{A}}_a \hat{\tilde{e}}^{\hat{B}}_b \nonumber \\
}{}
& = 2(dr_{(a} - \alpha du_{(a} - \beta_A dx^A_{(a})du_{b)} + \gamma_{AB} dx^A_a dx^B_b = \tilde{g}_{ab} \quad .
\end{align}

\subsection{Gamma Matrices}
\label{sec:spinors-gamma-matrices}

\begin{definition}
\label{def:gamma-matrices-flat}
The flat space gamma matrices, $\Gamma_\mu$, are $2^{d/2}\times2^{d/2}$ matrices satisfying the defining relation
\begin{equation}
\label{eq:gamma-matrices-flat-definition}
\{\Gamma_\mu, \Gamma_\nu\} = 2\lambda_{\mu\nu} I \quad ,
\end{equation}
where $\{\cdot, \cdot\}$ denotes the anticommutator
\begin{equation}
\{A, B\} = AB + BA \quad ,
\end{equation}
and $I$ is the $2^{d/2}\times2^{d/2}$ identity matrix.
The matrices satisfying (\ref{eq:gamma-matrices-flat-definition}) form an equivalence class and an explicit choice of such matrices is called a representation.
We specify our representation later in this section.
\end{definition}
\noindent
Equation (\ref{eq:gamma-matrices-flat-definition}) can be split into components to give
\begin{subequations}
\begin{align}
& \Gamma_+^2 = \Gamma_-^2 = 0 \quad , \\
& \{\Gamma_+, \Gamma_-\} = 2 I \quad , \\
\label{eq:gamma-matrices-flat-definition-xA}
& \{\Gamma_+, \Gamma_{\hat{A}}\} = \{\Gamma_-, \Gamma_{\hat{A}}\} = 0 \quad ,\\
& \{\Gamma_{\hat{A}}, \Gamma_{\hat{B}}\} = 2\delta_{\hat{A}\hat{B}} I \quad .
\end{align}
\end{subequations}
It follows from (\ref{eq:gamma-matrices-flat-definition-xA}) that
\begin{subequations}
\begin{align}
& [\Gamma_+,\Gamma_{\hat{A}}] = 2\Gamma_+\Gamma_{\hat{A}} \quad , \\
& [\Gamma_-,\Gamma_{\hat{A}}] = 2\Gamma_-\Gamma_{\hat{A}} \quad .
\end{align}
\end{subequations}

\begin{definition}
The curved space gamma matrices, $\Gamma_a$, are given by
\begin{equation}
\Gamma_a = \Gamma_\mu e^\mu_a \quad .
\end{equation}
It follows from equations (\ref{eq:metric-spinor-basis}) and (\ref{eq:gamma-matrices-flat-definition}) that
\begin{align}
\label{eq:spinor-gamma-matrices-condition-curved}
\{\Gamma_a, \Gamma_b\}
& = \{\Gamma_\mu, \Gamma_\nu\} e^\mu_a e^\nu_b \nonumber \\
& = 2\lambda_{\mu\nu} I e^\mu_a e^\nu_b \nonumber \\
& = 2g_{ab} I \quad .
\end{align}
\end{definition}

\begin{definition}
The unphysical curved space gamma matrices $\tilde\Gamma_a$ are defined by
\begin{equation}
\tilde\Gamma_a = \Gamma_\mu \tilde{e}^\mu_a \quad .
\end{equation}
\noindent
It follows from equation (\ref{eq:spinor-basis-conformal-relation}) that the physical and unphysical curved space gamma matrices are related by
\begin{equation}
\Gamma^a = r\tilde\Gamma^a \quad .
\end{equation}
\end{definition}
\noindent
In the basis (\ref{eq:spinor-basis}), the unphysical gamma matrices take the form
\begin{equation}
\label{eq:curved-gamma-matrices}
\tilde\Gamma^a = (\Gamma_+ + \alpha\Gamma_- + \beta_A\tilde\Lambda^A)\partial_r^a + \Gamma_-\partial_u^a + \tilde\Lambda^A\partial_A^a \quad ,
\end{equation}
where $\tilde\Lambda^A = \hat{\tilde{e}}^{\hat{A}A}\Gamma_{\hat{A}}$ so that
\begin{align}
\{\tilde\Lambda_A, \tilde\Lambda_B\}
& = \{\Gamma_{\hat{A}},\Gamma_{\hat{B}}\} \hat{\tilde{e}}^{\hat{A}}_A \hat{\tilde{e}}^{\hat{B}}_B \nonumber \\
& = 2\delta^{\hat{A}\hat{B}} I \hat{\tilde{e}}^{\hat{A}}_A \hat{\tilde{e}}^{\hat{B}}_B \nonumber \\
& = 2\gamma_{AB} I \quad .
\end{align}
Note that setting $r=0$ above gives
\begin{equation}
\{\tilde\Lambda_A^{(0)},\tilde\Lambda_B^{(0)}\} = s_{AB} I \quad .
\end{equation}
\noindent
We now choose a representation; that is, we make a particular choice of gamma matrices from the equivalence class of matrices satisfying the defining relation (\ref{eq:gamma-matrices-flat-definition}).
Such a representation is given by
\begin{align}
\label{eq:gamma-matrices-representation}
& \Gamma_+ = \sqrt{2}\left(
\begin{matrix}
0 & I_{1/2} \\
0 & 0
\end{matrix}
\right) \quad ,
&
& \Gamma_- = \sqrt{2}\left(
\begin{matrix}
0 & 0 \\
I_{1/2} & 0
\end{matrix}
\right) \quad ,
&
& \Gamma_{\hat{A}} = \left(
\begin{matrix}
\sigma_{\hat{A}} & 0 \\
0 & -\sigma_{\hat{A}}
\end{matrix}
\right) \quad ,
\end{align}
where $I_{1/2}$ denotes the $2^{d/2-1}\times2^{d/2-1}$ identity matrix, and the matrices $\sigma_{\hat{A}}$ satisfy
\begin{equation}
\{\sigma_{\hat{A}},\sigma_{\hat{B}}\} = 2\delta_{\hat{A}\hat{B}} I_{1/2} \quad .
\end{equation}
We also require the matrices $\sigma_{\hat{A}}$ to be Hermitian, that is,
\begin{equation}
\sigma^{\strut}_{\hat{A}} = \sigma_{\hat{A}}^\ast \quad ,
\end{equation}
where $\ast$ denotes the Hermitian conjugate (conjugate transpose).
If we define $\tilde\sigma^{\strut}_A = \sigma_{\hat{A}}\hat{\tilde{e}}_A^{\hat{A}}$, then we have
\begin{equation}
\{\tilde\sigma_A,\tilde\sigma_B\} = 2\gamma_{AB}I_{1/2} \quad ,
\end{equation}
and moreover we can write
\begin{equation}
\label{eq:gamma-matrices-explicit-curved}
\tilde\Lambda^A = \left(
\begin{matrix}
\tilde\sigma^A & 0 \\
0 & -\tilde\sigma^A\
\end{matrix}
\right) \quad .
\end{equation}

\begin{definition}
We define the ``projectors''
\begin{align}
& P_+ = (1/2)\Gamma_+\Gamma_- \quad , \\
& P_- = (1/2)\Gamma_-\Gamma_+ \quad ,
\end{align}
which in our representation take the form
\begin{align}
&
&P_+ = \left(
\begin{matrix}
I_{1/2} & 0 \\
0 & 0
\end{matrix}
\right) \quad ,
&
& P_- = \left(
\begin{matrix}
0 & 0 \\
0 & I_{1/2}
\end{matrix}
\right) \quad .
\end{align}
Note that $P_\pm^2=P_\pm$.
These projectors can be applied to a spinor $\psi$,
\begin{equation}
\Psi_+ \equiv P_+\psi \ , \quad
\Psi_- \equiv P_-\psi \quad .
\end{equation}
Since $P_+ + P_- = I$, it follows that $\psi = \Psi_+ + \Psi_-$.
It is helpful to introduce $2^{d/2-1}$ component spinors $\psi_\pm$ such that, in our representation,
\begin{align}
& \Psi_+ = \left(
\begin{matrix}
\psi_+ \\ 0
\end{matrix}
\right) \quad , &
& \Psi_- = \left(
\begin{matrix}
0 \\ \psi_-
\end{matrix}
\right) \quad .
\end{align}
When solving the Dirac equation, we split it into ``plus'' and ``minus'' components and solve for $\psi_+$ and $\psi_-$ separately.
To see how this decomposition works, each equation should be written out explicitly in matrix form using the representation (\ref{eq:gamma-matrices-representation}).
Then the upper and lower components of the resulting matrix equation can be considered separately:  they are the plus and minus components to which we refer.
\end{definition}

\begin{definition}
\label{def:dirac-conjugate}
The Dirac conjugate $\bar\psi$ is given by
\begin{equation}
\label{eq:dirac-conjugate}
\bar\psi = -\frac{1}{\sqrt{2}}\psi^\ast(\Gamma_+ - \Gamma_-) \equiv (-\psi_-, \psi_+) \quad .
\end{equation}
\end{definition}
\begin{remark}
This can be used to define a vector $\bar\psi\Gamma^a\psi$ from a spinor.
It can easily be shown that this quantity is real.
If we introduce an ``unphysical spinor'', $\tilde\psi=r^{1/2}\psi$, it follows that $\bar{\tilde\psi}=r^{1/2}\bar\psi$, and therefore
\begin{equation}
\bar{\tilde\psi}\tilde\Gamma^a\tilde\psi = \bar\psi\Gamma^a\psi \quad ,
\end{equation}
since $\tilde\Gamma^a=r^{-1}\Gamma^a$.
\end{remark}

\subsection{Spinor Connection}
\label{sec:spinors-connection}
In order to calculate with covariant derivatives acting on spinors, we must calculate the connection coefficients, which are analogous to the Christoffel symbols for tensors. They govern the action of a covariant derivative on a spinor $\psi$ according to the equation
\begin{equation}
\label{eq:spinor-derivative}
\nabla_a\psi = \partial_a\psi + \omega_a\psi \quad .
\end{equation}
To generalise the natural derivative operator such that $\nabla_cg_{ab}=0$ to act on spinors, we must choose connection coefficients $\omega_a$ such that
\begin{equation}
\label{eq:spinor-connection-definition}
0 = \nabla_a\Gamma_b \equiv \partial_a\Gamma_b - \Gamma^c{}_{ab}\Gamma_c + [\omega_a, \Gamma_b] \quad .
\end{equation}
A suitable choice of $\omega_a = \omega_a^{\mu\nu}[\Gamma_\mu,\Gamma_\nu]$ is given by
\begin{align}
\label{eq:spinor-connection-coefficients}
\omega^{\mu\nu}_a
& = \frac{1}{8} e^{\mu b} \nabla_a e^\nu_b
\iftoggle{complete}{\nonumber \\
& = \frac{1}{8} e^{\mu b} \partial_a e^\nu_b - \frac{1}{8} e^{\mu b} e^\nu_c \Gamma^c{}_{ab}
}{} \quad .
\end{align}
\iftoggle{complete}{[[TODO: DERIVE THIS FORMULA FROM EQUATION (\ref{eq:spinor-connection-definition})]]}{}
It can be verified that the connection coefficients (\ref{eq:spinor-connection-coefficients}) satisfy equation (\ref{eq:spinor-connection-definition}).
The connection coefficients for the unphysical covariant derivative $\tilde\nabla_c\tilde{g}_{ab}=0$ are defined in a similar manner.
The full expressions for the unphysical connection coefficients in Gaussian null coordinates are given in Appendix \ref{app:grspinors-spinor-connection}.

The physical and unphysical connection coefficients are related by \cite{Hollands:2005wt}
\begin{equation}
\label{eq:conformal-spinor-connection}
\omega_a = \tilde\omega_a + \frac{1}{2}r^{-1}(\tilde\nabla_a r - \tilde\Gamma_a\tilde\Gamma_b\tilde\nabla^b r) \quad .
\end{equation}
Thus for a spinor $\psi$,
\begin{equation}
\nabla_a\psi = \tilde\nabla_a\psi + \frac{1}{2}r^{-1}(\tilde\nabla_a r - \tilde\Gamma_a\tilde\Gamma_b\tilde\nabla^b r)\psi \quad ,
\end{equation}
and, introducing the ``unphysical'' spinor
\begin{equation}
\label{eq:unphysical-spinor}
\tilde\psi = r^{1/2}\psi \quad ,
\end{equation}
this gives the final relationship between physical and unphysical derivatives,
\begin{equation}
\label{eq:conformal-spinor-derivative}
\nabla_a\psi = r^{-1/2}[\tilde\nabla_a\tilde\psi - \frac{1}{2}r^{-1}\tilde\Gamma_a\tilde\Gamma_b(\tilde\nabla^b r)\tilde\psi] \quad .
\end{equation}
The calculations leading to equations (\ref{eq:conformal-spinor-connection}) and (\ref{eq:conformal-spinor-derivative}) can be found in Note \ref{note:conformal-spinor-connection}.

Now we have defined the spinor derivative, we can calculate the covariant derivatives of not only spinors, but also the gamma matrices.
The derivatives of the gamma matrices $\Gamma_\pm$ and $\tilde\Lambda^A$ are given in Appendix \ref{app:grspinors-gamma-matrices-derivatives} and will be used when solving the Dirac equation.

\begin{definition}
The action of $D_A$ on the projected spinors is given by
\begin{equation}
D_A\tilde\psi_\pm = \partial_A\tilde\psi_\pm + \tilde\Omega_A\tilde\psi_\pm \quad ,
\end{equation}
where the spin connection coefficients $\tilde\Omega_A$ are given by
\begin{equation}
\tilde\Omega_A = \frac{1}{8}\hat{\tilde{e}}^{\hat{A}B}\tilde{D}_A\hat{\tilde{e}}^{\hat{B}}_B [\sigma_{\hat{A}}, \sigma_{\hat{B}}] \quad .
\end{equation}
Note that $D_A\tilde\psi_\pm = P_\pm D_A\tilde\psi$.
\end{definition}
\noindent
When solving the Dirac equation it is necessary to exchange $\tilde\nabla_r$ and $\tilde\nabla_A$ derivatives, using the general formula for commutation of spinor derivatives,
\begin{equation}
\label{eq:spinor-derivative-commutation}
\nabla_{[a}\nabla_{b]}\tilde\psi = \frac{1}{16}R_{abcd}[\Gamma^c,\Gamma^d]\tilde\psi \quad .
\end{equation}
Explicit expressions in coordinates are given in Appendix \ref{app:grspinors-derivative-commutation}.

\subsection{Expansions of Spinors}
We make asymptotic expansions of spinors analogous to those we made of tensors.
The main difference is that instead of requiring the expansion coefficients to be independent of $r$, we require that they are covariantly constant in the $r$ direction.
This turns out to make spinorial calculations more simple.
For simpler notation, we denote the covariant derivative in the $r$ direction by
\begin{equation}
\tilde\nabla_r \equiv l^a\tilde\nabla_a \quad .
\end{equation}
A spinor $\tilde\psi$ is expanded as
\begin{equation}
\tilde\psi = \sum_{k=0}^\infty r^k \tilde\psi^{(k)} \quad ,
\end{equation}
where the expansion coefficients are defined on $\scri^+$ by
\begin{equation}
\tilde\psi^{(k)} = \left[ \tilde\nabla_r^k \tilde\psi \right]_{r=0} \quad ,
\end{equation}
where $\tilde\nabla_r^k$ denotes taking $k$ derivatives.
These expansion coefficients are extended to $\tilde{M}$ by requiring
\begin{equation}
\label{eq:spinor-expansion-coefficients-constancy-condition}
l^a\tilde\nabla_a \tilde\psi^{(k)} \equiv \tilde\nabla_r \tilde\psi^{(k)} = 0 \quad .
\end{equation}
That is, they are parallel transported along the geodesics of $l^a$.
It is important to note that the expansion coefficients $\psi^{(k)}$ may in general depend on $r$.

The gamma matrix $\tilde\Lambda^A$ is expanded in a similar manner.
In particular the leading term $\tilde\Lambda^{A(0)}$ is defined on $\scri^+$ by
\begin{equation}
\tilde\Lambda^{A(0)} = \tilde\Lambda^A\big|_{r=0} \quad .
\end{equation}
\noindent
When expanding tensors, we used the notion of ``order'' to allow us to disregard terms containing very high powers of $r$.
Since the spinorial expansion coefficients defined above can in general depend on $r$, we cannot use the same definition of ``order'' in the spinor case.
The expansion coefficients are covariantly constant, so we make an alternative definition using covariant $r$ derivatives.
\begin{definition}
\label{def:order-spinor}
We say a smooth spinor $\psi$ is \textit{of order $r^k$} for some integer $k$ if
\begin{equation}
\left[ \nabla_r^j \psi \right]_{r=0} = 0 \ , \quad \text{for all} \ 0\leq j<k .
\end{equation}
\end{definition}
\begin{remark}
This definition has the same properties as mentioned above for the tensor definition and could also be used for tensors.
Conversely, using the tensor definition on our spinor expansions would lead to confusion since the spinor expansion coefficients have $r$ dependence.
Indeed, if we used the tensor definition, ``hidden'' powers of $r$ in spinor expansion coefficients could easily be forgotten.
However, the spinor expansion coefficients are covariantly constant in the $r$ direction by definition.
We can thus avoid this problem by using covariant derivatives instead (as in the definition of order given above for spinors).
\end{remark}

\section{Dirac Equation}
\label{sec:dirac-spinor}

An asymptotic expansion is made of a spinor satisfying the Dirac equation,
\begin{equation}
\label{eq:dirac-equation}
0 = \Gamma^a\nabla_a\psi \quad .
\end{equation}
\noindent
An expression is obtained for a covariantly constant spinor in Minkowski spacetime, which is used to motivate boundary conditions on the solution of the Dirac equation.
The Dirac equation on a spacelike hypersurface is decomposed into a ``timelike'' equation and the ``spatial'' Witten equation.
The desired asymptotic expansion can then be made in analogy to the metric expansion in Section \ref{sec:einstein-equations-solution}.

\subsection{Minkowski Spinor}
We want to find the asymptotic expansion of a covariantly constant spinor $\psi_0$,
\begin{equation}
\label{eq:constant-spinor-condition}
0 = \nabla_a\psi_0 \quad ,
\end{equation}
in Minkowski spacetime such that the vector $\bar\psi_0\Gamma^a\psi_0$ satisfies
\begin{equation}
\bar\psi_0\Gamma^a\psi_0 = (\partial_t)^a
\end{equation}
on $\scri^+$.
This condition means that the vector $\bar\psi_0\Gamma^a\psi_0$ is a time translation symmetry at $\scri^+$.
Since the spinor is covariantly constant,
\begin{equation}
\nabla_a(\bar\psi_0\Gamma^b\psi_0)=0 \quad ,
\end{equation}
the vector must equal $(\partial_t)^a$ throughout $\tilde{M}$.
By the Minkowski coordinate transformation (\ref{eq:minkowski-coordinate-transformation}),
\begin{equation}
(\partial_t)^a \equiv (\partial_u)^a \quad ,
\end{equation}
so that the condition becomes
\begin{equation}
\label{eq:minkowski-spinor-normalisation-overall}
\bar\psi_0\Gamma^a\psi_0 = (\partial_u)^a \quad .
\end{equation}

\begin{theorem}
\label{thm:minkowski-spinor}
A covariantly constant spinor $\psi_0$ in Minkowski space, satisfying condition (\ref{eq:minkowski-spinor-normalisation-overall}), takes the form
\begin{equation}
\label{eq:minkowski-spinor}
\tilde\psi_0 = \left(
\begin{matrix}
\tilde\psi_{0+}^{(0)} \\
r\tilde\psi_{0-}^{(1)}
\end{matrix}
\right) \quad ,
\end{equation}
where $\tilde\psi_{0+}^{(0)}$ and $\tilde\psi_{0-}^{(1)}$ are $2^{d/2-1}$ spinors independent of $r,u$ such that
\begin{subequations}
\label{eq:minkowski-spinor-relation}
\begin{align}
\label{eq:minkowski-spinor-relation-+}
& \mathscr{D}_A\tilde\psi_{0+}^{(0)} = \frac{1}{\sqrt{2}}\tilde\sigma_A^{(0)}\tilde\psi_{0-}^{(1)} \quad , \\
\label{eq:minkowski-spinor-relation--}
& \mathscr{D}_A\tilde\psi_{0-}^{(1)} = -\frac{\lambda}{2\sqrt{2}}\tilde\sigma_A^{(0)}\tilde\psi_{0+}^{(0)} \quad .
\end{align}
\end{subequations}
In addition, they satisfy the normalisation properties
\begin{subequations}
\label{eq:minkowski-spinor-normalisation}
\begin{align}
\label{eq:minkowski-spinor-normalisation-+}
& \tilde\psi_{0+}^{(0)\ast}\tilde\psi_{0+}^{(0)} = \frac{1}{\sqrt2} \quad , \\
\label{eq:minkowski-spinor-normalisation--}
& \tilde\psi_{0-}^{(1)\ast}\tilde\psi_{0-}^{(1)} = \frac{\lambda}{2\sqrt2} \quad , \\
\label{eq:minkowski-spinor-normalisation-A}
& \Re\left(\tilde\psi_{0+}^{(0)\ast}\tilde\sigma^{A(0)}\tilde\psi_{0-}^{(1)}\right) = 0 \quad ,
\end{align}
\end{subequations}
where $\Re$ denotes taking the real part.
Note that the physical Minkowski spinor is given by $\psi_0=r^{-1/2}\tilde\psi_0$.
\end{theorem}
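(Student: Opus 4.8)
The plan is to solve the covariant constancy equation $\nabla_a\psi_0=0$ directly in conformal Gaussian null coordinates by passing to the unphysical spinor $\tilde\psi_0=r^{1/2}\psi_0$ and using the conformal relation (\ref{eq:conformal-spinor-derivative}), which turns the flat-space equation into an equation for $\tilde\nabla_a\tilde\psi_0$ with an explicit inhomogeneous term proportional to $r^{-1}\tilde\Gamma_a\tilde\Gamma_b(\tilde\nabla^b r)\tilde\psi_0$. Since the background is Minkowski, Theorem \ref{thm:minkowski-metric} gives $\alpha=-\tfrac12\lambda r^2$, $\beta_A=0$, $\gamma_{AB}=s_{AB}$ with $\lambda=1$, so the unphysical connection coefficients and the gamma matrices $\tilde\Gamma_\pm$, $\tilde\Lambda^A=\tilde\sigma^A$ take a particularly simple form and have no $r$-dependence in $\tilde\Lambda^A$. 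First I would write out the three component equations ($r$-, $u$- and $A$-components) of $\nabla_a\psi_0=0$ in matrix form using the representation (\ref{eq:gamma-matrices-representation}), splitting each into its ``plus'' and ``minus'' pieces via the projectors $P_\pm$.

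The $r$-component should be the key structural input: I expect $\nabla_r\psi_0=0$ together with the conformal term to force $\tilde\nabla_r\tilde\psi_{0+}=0$ and $\tilde\nabla_r(r^{-1}\tilde\psi_{0-})=0$, so that in the chosen basis $\tilde\psi_{0+}=\tilde\psi_{0+}^{(0)}$ is $r$-independent and $\tilde\psi_{0-}=r\,\tilde\psi_{0-}^{(1)}$ with $\tilde\psi_{0-}^{(1)}$ $r$-independent — this is exactly the claimed form (\ref{eq:minkowski-spinor}). Next I would feed this ansatz into the $u$-component; because nothing in the Minkowski data depends on $u$ and the connection coefficients respecting $(\partial_u)^a$ vanish at the relevant order, this should give $\partial_u\tilde\psi_{0\pm}^{(\cdot)}=0$, i.e. $u$-independence. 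Finally the $A$-component, again split into plus and minus parts, should yield precisely the two coupled first-order relations (\ref{eq:minkowski-spinor-relation-+}) and (\ref{eq:minkowski-spinor-relation--}) once one uses $\lambda=1$ and the leading spin connection $\tilde\Omega_A$ defined by $\mathscr{D}_A$; the factors of $1/\sqrt2$ and $\lambda/(2\sqrt2)$ come from the $\sqrt2$'s in the representation of $\Gamma_\pm$ and from $\alpha^{(2)}=\lambda/2$.

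For the normalisation conditions (\ref{eq:minkowski-spinor-normalisation}), the idea is to impose $\bar\psi_0\Gamma^a\psi_0=(\partial_u)^a$ on $\scri^+$ and use $\bar{\tilde\psi}_0\tilde\Gamma^a\tilde\psi_0=\bar\psi_0\Gamma^a\psi_0$ from the remark following Definition \ref{def:dirac-conjugate}. Writing $\bar{\tilde\psi}_0=(-\tilde\psi_{0-},\tilde\psi_{0+})$ and contracting $\tilde\Gamma^a$ from (\ref{eq:curved-gamma-matrices}) against $dr_a$, $du_a$, $dx^A_a$ in turn converts the single vector equation into scalar conditions on $\tilde\psi_{0+}^{(0)\ast}\tilde\psi_{0+}^{(0)}$, $\tilde\psi_{0-}^{(1)\ast}\tilde\psi_{0-}^{(1)}$ and $\Re(\tilde\psi_{0+}^{(0)\ast}\tilde\sigma^{A(0)}\tilde\psi_{0-}^{(1)})$; matching the $r$-powers (the $\tilde\psi_{0-}$ block carries an extra factor $r$) against $(\partial_u)^a$ pins down the stated values. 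One should also check consistency: applying $\mathscr{D}_A$ to (\ref{eq:minkowski-spinor-relation-+}) and using (\ref{eq:minkowski-spinor-relation--}) together with $\{\tilde\sigma_A^{(0)},\tilde\sigma_B^{(0)}\}=2s_{AB}$ and $\mathcal{R}_{AB}^{(0)}=\lambda(d-3)s_{AB}$ should reproduce the Lichnerowicz-type integrability condition for a Killing spinor on the round sphere, so no over-determination occurs.

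I expect the main obstacle to be bookkeeping rather than conceptual: correctly tracking the $r^{-1}$ conformal term in (\ref{eq:conformal-spinor-derivative}) against the $r^{1/2}$ rescaling, and being careful that $\tilde\sigma_A=\tilde\sigma_A^{(0)}$ exactly here (because $\gamma_{AB}=s_{AB}$ with no corrections in Minkowski), so that the ``order'' of the minus-component is genuinely $r^1$ and not contaminated by hidden $r$-dependence in the frame. A secondary subtlety is verifying that the $u$-component really is trivially satisfied by the $r$-independent ansatz — this requires knowing that the relevant $\tilde\omega_u$ and the conformal correction along $(\partial_u)^a$ conspire to vanish, which follows from $\alpha=-\tfrac12 r^2$ and $\tilde\nabla^b r\,\partial_b u=0$, but should be checked explicitly against the connection coefficients in Appendix \ref{app:grspinors-spinor-connection}.
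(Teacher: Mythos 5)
Your proposal follows essentially the same route as the paper's own proof: rescale to $\tilde\psi_0=r^{1/2}\psi_0$, use the conformal derivative relation (\ref{eq:conformal-spinor-derivative}) with the Minkowski metric data, split $\nabla_a\psi_0=0$ into its $r$-, $u$- and $A$-components in the representation (\ref{eq:gamma-matrices-representation}) to get the form (\ref{eq:minkowski-spinor}), the $u$-independence and the relations (\ref{eq:minkowski-spinor-relation}), and then read off (\ref{eq:minkowski-spinor-normalisation}) from the components of $\bar{\tilde\psi}_0\tilde\Gamma^a\tilde\psi_0=(\partial_u)^a$. The only slip is the sign of $\alpha$ (the paper has $\alpha=\tfrac{\lambda}{2}r^2$, not $-\tfrac12\lambda r^2$), which does not affect the strategy.
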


\begin{proof}
First of all, we note that by Theorem \ref{thm:minkowski-metric}, the metric components in Minkowski space are given by
\begin{subequations}
\label{eq:dirac-metric-components}
\begin{align}
\alpha &= \frac{\lambda}{2}r^2  \quad ,\\
\beta_A &= 0 \quad , \\
\gamma_{AB} &= s_{AB} \quad .
\end{align}
\end{subequations}
Since $\gamma_{AB}$ is independent of $r,u$ it follows that $\tilde\Lambda^A\equiv\tilde\Lambda^{A(0)}$ is too, that is,
\begin{equation}
\partial_r\tilde\Lambda^A = \partial_u\tilde\Lambda^A = 0 \quad .
\end{equation}
\noindent
We rewrite condition (\ref{eq:constant-spinor-condition}) in terms of the ``unphysical'' spinor,
\begin{equation}
\tilde\psi_0 = r^{1/2}\psi_0 \quad ,
\end{equation}
and from equation (\ref{eq:conformal-spinor-derivative}) we get
\begin{equation}
0 = \nabla_a\psi_0 = r^{-1/2}\left[\tilde\nabla_a\tilde\psi_0 - \frac{1}{2}r^{-1}\tilde\Gamma_a\tilde\Gamma_b(\tilde\nabla^b r)\tilde\psi_0\right] \quad .
\end{equation}
Multiplying by $r^{1/2}$ and splitting this into components, we get
\begin{subequations}
\label{eq:constant-spinor-equations}
\begin{align}
\label{eq:constant-spinor-equation-r}
0
\iftoggle{complete}{
& = \tilde\nabla_r\tilde\psi_0 - \frac{1}{2}r^{-1}\tilde\Gamma_r\tilde\Gamma^r\tilde\psi_0 \nonumber \\
}{}
& = \tilde\nabla_r\tilde\psi_0 - r^{-1}P_-\tilde\psi_0 \quad , \\
\label{eq:constant-spinor-equation-u}
0
\iftoggle{complete}{
& = \tilde\nabla_u\tilde\psi_0 - \frac{1}{2}r^{-1}\tilde\Gamma_u\tilde\Gamma^r\tilde\psi_0 \nonumber \\
}{}
& = \tilde\nabla_u\tilde\psi_0 - \frac{\lambda}{2} r (P_+ - P_-) \tilde\psi_0 \quad , \\
\label{eq:constant-spinor-equation-A}
0
\iftoggle{complete}{
& = \tilde\nabla_A\tilde\psi_0 - \frac{1}{2}r^{-1}\tilde\Gamma_A\tilde\Gamma^r\tilde\psi_0 \nonumber \\
}{}
& = \tilde\nabla_A\tilde\psi_0 - \frac{1}{2}r^{-1}\tilde\Lambda_A\Gamma_+\tilde\psi_0 - \frac{\lambda}{4}r\tilde\Lambda_A\Gamma_-\tilde\psi_0 \quad .
\end{align}
\end{subequations}
\noindent
Substituting the expressions (\ref{eq:dirac-metric-components}) for the metric components into the spinor connection coefficients (\ref{eq:grspinors-spinor-connection-coefficients}), we find
\begin{subequations}
\begin{align}
\tilde\omega_r &= 0 \quad , \\
\tilde\omega_u &= \frac{\lambda}{2} r(P_+ - P_-) \quad , \\
\tilde\omega_A &= \tilde\Omega_A^{(0)} \quad .
\end{align}
\end{subequations}
Note that since $\tilde\omega_r=0$,
\begin{equation}
0 = \tilde\nabla_r\tilde\psi_0^{(k)} = \partial_r\tilde\psi_0^{(k)} \quad ,
\end{equation}
and so in the case of Minkowski space, the expansion coefficients $\tilde\psi_0^{(k)}$ are independent of $r$ as well as covariantly constant (which is not true for a general spinor expansion in curved space).
\medskip\noindent
Next we apply equations (\ref{eq:constant-spinor-equations}) to obtain the expansion of the spinor $\tilde\psi_0$.
Equation (\ref{eq:constant-spinor-equation-r}) gives
\begin{equation}
0 = \partial_r\tilde\psi_0 - r^{-1}P_-\tilde\psi_0 \quad .
\end{equation}
Writing this explicitly as a matrix equation in our representation, we have
\begin{equation}
0 = \left(
\begin{matrix}
\partial_r\tilde\psi_{0+} \\
\partial_r\tilde\psi_{0-} - r^{-1}\tilde\psi_{0-}
\end{matrix}
\right) \quad .
\end{equation}
The ``plus'' component gives $\tilde\psi_{0+} = \tilde\psi_{0+}^{(0)}$.
The ``minus'' component gives $\tilde\psi_{0-} = r\tilde\psi_{0-}^{(1)}$.
Therefore the spinor takes the desired form (\ref{eq:minkowski-spinor}).
\noindent
Next we apply equation (\ref{eq:constant-spinor-equation-u}), and find that
\begin{equation}
0 = \partial_u\tilde\psi_0 \quad ,
\end{equation}
so that $\tilde\psi_{0+}^{(0)}$ and $\tilde\psi_{0-}^{(1)}$ are independent of $u$.
Equation (\ref{eq:constant-spinor-equation-A}) gives
\begin{equation}
0 = \mathscr{D}_A\tilde\psi - \frac{1}{2}r^{-1}\tilde\Lambda_A^{(0)}\Gamma_+\tilde\psi - \frac{\lambda}{4}r\tilde\Lambda_A^{(0)}\Gamma_-\tilde\psi \quad ,
\end{equation}
which in matrix form is given by
\begin{equation}
0 = \left (
\begin{matrix}
\mathscr{D}_A\tilde\psi_{0+}^{(0)} - (1/\sqrt2)\tilde\sigma_A^{(0)}\tilde\psi_{-0}^{(1)} \\
r\mathscr{D}_A\tilde\psi_{0-}^{(1)} + r(\lambda/2\sqrt2)\tilde\sigma_A^{(0)}\tilde\psi_{0+}^{(0)}
\end{matrix}
\right) \quad .
\end{equation}
The ``plus'' and ``minus'' components give equations (\ref{eq:minkowski-spinor-relation}).
\noindent
Finally, we impose the normalisation condition (\ref{eq:minkowski-spinor-normalisation-overall}).
Splitting it into components,
\begin{subequations}
\begin{align}
& \bar{\tilde\psi}_0\tilde\Gamma^r\tilde\psi_0 = 0 \quad , \\
& \bar{\tilde\psi}_0\tilde\Gamma^u\tilde\psi_0 = 1 \quad , \\
& \bar{\tilde\psi}_0\tilde\Gamma^A\tilde\psi_0 = 0 \quad .
\end{align}
\end{subequations}
Note that these are scalar equations, not spinor equations.
To evaluate their left hand sides, they can be explicitly written out in matrix notation.
The $u$ component is
\begin{equation}
1 = \bar{\tilde\psi}_0\Gamma_-\tilde\psi_0 = \sqrt{2}\tilde\psi_{0+}^{(0)\ast}\tilde\psi_{0+}^{(0)} \quad ,
\end{equation}
which gives the first normalisation condition, equation (\ref{eq:minkowski-spinor-normalisation-+}).
Then from the $r$ component we find
\begin{equation}
0 = \bar{\tilde{\psi_0}}(\Gamma_+ + \frac{\lambda}{2}r^2\Gamma_-)\tilde\psi_0 \quad ,
\end{equation}
and thus,
\begin{equation}
0 = r^2\tilde\psi_{0-}^{(1)\ast}\tilde\psi_{0-}^{(1)} - \frac{\lambda}{2}r^2\tilde\psi_{0+}^{(0)\ast}\tilde\psi_{0+}^{(0)} \quad ,
\end{equation}
which gives the second normalisation condition (\ref{eq:minkowski-spinor-normalisation--}).
Finally, from the $A$ component, we find
\begin{align}
0 = \bar{\tilde\psi}_0\tilde\Lambda^A\tilde\psi_0
& = -r\tilde\psi_{0-}^{(1)\ast}\tilde\sigma^{A(0)}\tilde\psi_{0+}^{(0)} - r\tilde\psi_{0+}^{(0)\ast}\tilde\sigma^{A(0)}\tilde\psi_{0-}^{(1)} \nonumber \\
& = -2r\Re(\tilde\psi_{0+}^{(0)\ast}\tilde\sigma^A\tilde\psi_{0-}^{(1)}) \quad ,
\end{align}
which is the final normalisation condition (\ref{eq:minkowski-spinor-normalisation-A}).
\end{proof}

\subsection{Writing down the Dirac equation}
Next we write down the Dirac equation (\ref{eq:dirac-equation}) on a spinor $\psi$
in Gaussian null coordinates using the curved space gamma matrices introduced in Section \ref{sec:spinors}.

Suppose there is a spacelike, asymptotically null hypersurface $\mathscr{S}$, given asymptotically by $\{u=r/2\}$.
On $\mathscr{S}$, the induced Riemannian metric $h_{ab}$ is given by the standard formula for a $(d-1)+1$ decomposition,
\begin{equation}
h_{ab} = -cN_aN_b + g_{ab} \quad ,
\end{equation}
where $N^a$ is the normal to $\mathscr{S}$ and $c = (g_{ab}N^aN^b)^{-1} < 0$ is a normalisation constant.
Suppose also that the spinor $\psi$ satisfies \cite{Horowitz82}
\begin{equation}
\label{eq:dirac-equation-time}
N^a\nabla_a\psi = 0 \quad .
\end{equation}
On $\mathscr{S}$, the Dirac equation (\ref{eq:dirac-equation}) combined with equation (\ref{eq:dirac-equation-time}) is equivalent \cite{Horowitz82} to the the ``Witten equation'' (used in the positivity proof for the ADM mass in \cite{Witten81}), given in general by
\begin{align}
\label{eq:dirac-equation-space}
0
& = g^{ab}\Gamma_a\nabla_b\psi \nonumber \\
& = cN^aN^b\Gamma_a\nabla_b\psi + h^{ab}\Gamma_a\nabla_b\psi \nonumber \\
& = h^{ab}\Gamma_a\nabla_b\psi
\end{align}
on $\mathscr{S}$, where in the last line we used equation (\ref{eq:dirac-equation-time}).
Intuitively, this is a decomposition of the Dirac equation (\ref{eq:dirac-equation}) into a ``timelike'' equation (\ref{eq:dirac-equation-time}) and a ``spatial'' equation (\ref{eq:dirac-equation-space}).

In order to expand a spinor solution to this equation, however, we must first write it in coordinates.

\begin{theorem}
In Gaussian null coordinates and the spinor basis (\ref{eq:spinor-basis}), the Witten equation on $\mathscr{S}$ takes the form
\begin{align}
\label{eq:witten-equation}
0
 =\ &r\left\{\Gamma_+ + \left(2 - \alpha - \beta_A\beta^A\right)\Gamma_- + \beta_A\tilde\Lambda^A\right\}\tilde\nabla_r\tilde\psi + r\left(-\beta_A\Gamma_- + \tilde\Lambda^A\right)\tilde\nabla_A\tilde\psi \nonumber \\
& - \left\{\frac{d}{2}\Gamma_+ + \frac{d-2}{2}\alpha\Gamma_- + \beta_A\tilde\Lambda^A\right\}\tilde\psi \quad .
\end{align}
\end{theorem}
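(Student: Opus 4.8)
The plan is to reduce the Witten equation to the bulk Dirac operator on $\mathscr{S}$, rewrite everything in terms of the unphysical spinor $\tilde\psi = r^{1/2}\psi$, and finally use the timelike equation (\ref{eq:dirac-equation-time}) to trade the $\tilde\nabla_u$ derivative for $\tilde\nabla_r$ and $\tilde\nabla_A$ derivatives. First I would use the reduction already displayed in the text: on $\mathscr{S}$ condition (\ref{eq:dirac-equation-time}) kills the term $cN^aN^b\Gamma_a\nabla_b\psi$, so the Witten equation $0 = h^{ab}\Gamma_a\nabla_b\psi$ coincides with $0 = \Gamma^a\nabla_a\psi$. I would then pass to $\tilde\psi$ using $\Gamma^a = r\tilde\Gamma^a$ together with the conformal derivative identity (\ref{eq:conformal-spinor-derivative}). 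Since $\{\tilde\Gamma_a,\tilde\Gamma_b\} = 2\tilde{g}_{ab}I$ gives $\tilde\Gamma^a\tilde\Gamma_a = dI$, and since $\tilde\nabla_a r = \partial_a r = \delta_a^r$ so that $\tilde\Gamma_b\tilde\nabla^b r = \tilde\Gamma^r$, the equation collapses to the compact form $0 = r\tilde\Gamma^a\tilde\nabla_a\tilde\psi - \tfrac{d}{2}\tilde\Gamma^r\tilde\psi$, with no surviving conformal factors beyond that single $\tilde\psi$-term.

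Next I would substitute the explicit curved gamma matrices (\ref{eq:curved-gamma-matrices}), which give $\tilde\Gamma^a\tilde\nabla_a = (\Gamma_+ + \alpha\Gamma_- + \beta_A\tilde\Lambda^A)\tilde\nabla_r + \Gamma_-\tilde\nabla_u + \tilde\Lambda^A\tilde\nabla_A$ and $\tilde\Gamma^r = \Gamma_+ + \alpha\Gamma_- + \beta_A\tilde\Lambda^A$. The one remaining ingredient is a formula for $\tilde\nabla_u\tilde\psi$: the hypersurface $\mathscr{S} = \{u = r/2\}$ has conormal proportional to $du_a - \tfrac12 dr_a$, so raising with $\tilde{g}^{ab}$ (using the Gaussian null form of the inverse metric, in particular $\tilde{g}^{rr} = 2\alpha + \beta_A\beta^A$, $\tilde{g}^{ru} = 1$, $\tilde{g}^{rA} = \beta^A$) and feeding this into (\ref{eq:dirac-equation-time}), once rewritten through (\ref{eq:conformal-spinor-derivative}), yields $\tilde\nabla_u\tilde\psi = (2 - 2\alpha - \beta_A\beta^A)\tilde\nabla_r\tilde\psi - \beta^A\tilde\nabla_A\tilde\psi + r^{-1}(\cdots)\tilde\psi$, where the last bracket is a fixed combination of $\Gamma_\pm$ and $\tilde\Lambda^A$ coming from the conformal connection (\ref{eq:conformal-spinor-connection}).

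Substituting this into the $r\Gamma_-\tilde\nabla_u\tilde\psi$ piece, the explicit factor $r$ cancels the $r^{-1}$, and repeated use of $\Gamma_+^2 = \Gamma_-^2 = 0$, $\{\Gamma_+,\Gamma_-\} = 2I$ and $\{\Gamma_\pm,\Gamma_{\hat{A}}\} = 0$ (equivalently $\Gamma_-\tilde\Lambda^A = -\tilde\Lambda^A\Gamma_-$) collapses all the matrix products. Collecting the coefficients of $\tilde\nabla_r\tilde\psi$, of $\tilde\nabla_A\tilde\psi$, and of the undifferentiated $\tilde\psi$, and comparing with the $-\tfrac{d}{2}\tilde\Gamma^r\tilde\psi$ term already present, then yields (\ref{eq:witten-equation}).

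The main obstacle is this last step. Because $\tilde\psi = r^{1/2}\psi$ is \emph{not} the conformally weighted spinor of weight $-(d-1)/2$, both the conversion of the Dirac equation and the conversion of (\ref{eq:dirac-equation-time}) generate $r^{-1}$-singular $\tilde\psi$-terms through (\ref{eq:conformal-spinor-connection}); one must check that after multiplication by $r$ and after the $\Gamma_-$-algebra every singular contribution cancels, leaving the finite coefficient $-\bigl(\tfrac{d}{2}\Gamma_+ + \tfrac{d-2}{2}\alpha\Gamma_- + \beta_A\tilde\Lambda^A\bigr)$ of $\tilde\psi$. Keeping the index placement on $\beta_A$ versus $\beta^A$ and the scalar contractions $\beta_A\beta^A$ and $\beta_A\beta_B\tilde\Lambda^A\tilde\Lambda^B = \beta_A\beta^A I$ consistent throughout is the delicate bookkeeping; by contrast the algebra producing the $\tilde\nabla_r$- and $\tilde\nabla_A$-coefficients $\Gamma_+ + (2 - \alpha - \beta_A\beta^A)\Gamma_- + \beta_A\tilde\Lambda^A$ and $-\beta_A\Gamma_- + \tilde\Lambda^A$ is routine.
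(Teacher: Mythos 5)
Your proposal follows essentially the same route as the paper's proof: convert the Dirac equation to unphysical variables via $\Gamma^a = r\tilde\Gamma^a$ and equation (\ref{eq:conformal-spinor-derivative}) using $\tilde\Gamma_a\tilde\Gamma^a = d$, expand in the basis (\ref{eq:curved-gamma-matrices}), compute the normal to $\mathscr{S}=\{u=r/2\}$ with the Gaussian null inverse metric, and use the timelike equation (\ref{eq:dirac-equation-time}) together with $\Gamma_-^2=0$ to eliminate $\tilde\nabla_u\tilde\psi$ before multiplying through by $r$. You also correctly identify the one delicate point, namely that the $r^{-1}$ connection terms generated by (\ref{eq:conformal-spinor-connection}) must combine into the finite coefficient of $\tilde\psi$, which is exactly how the paper's calculation proceeds.
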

\begin{proof}
The first step is to rewrite the Dirac equation (\ref{eq:dirac-equation}) in terms of the unphysical Gamma matrices and derivative operator.
Since $\Gamma^a = r\tilde\Gamma^a$, it becomes, by equation (\ref{eq:conformal-spinor-derivative}),
\begin{equation}
0 = r^{1/2}\left[\tilde\Gamma^a\tilde\nabla_a\tilde\psi - \frac{d}{2}r^{-1}\tilde\Gamma^b(\tilde\nabla_br)\tilde\psi\right] \quad ,
\end{equation}
where we used that
\begin{align}
\tilde\Gamma_a\tilde\Gamma^a
\iftoggle{complete}{
& = \tilde{g}_{ab}\tilde\Gamma^a\tilde\Gamma^b \nonumber \\
& = \tilde{g}_{ab}\tilde{g}^{ab} \nonumber \\
}{}
& = d \quad .
\end{align}
Summing over contracted indices and substituting the components of $\tilde\Gamma^a$, equation (\ref{eq:curved-gamma-matrices}), we find
\begin{equation}
\label{eq:dirac-equation-long}
0 = (\Gamma_+ + \alpha\Gamma_- + \beta_A\tilde\Lambda^A)\tilde\nabla_r\tilde\psi + \Gamma_-\tilde\nabla_u\tilde\psi + \tilde\Lambda^A\tilde\nabla_A\tilde\psi - \frac{d}{2}r^{-1}(\Gamma_+ + \alpha\Gamma_- + \beta_A\tilde\Lambda^A)\tilde\psi \quad .
\end{equation}
Next we use equation (\ref{eq:dirac-equation-time}) to eliminate the $\tilde\nabla_u\tilde\psi$ term.
Since $\mathscr{S}$ is a surface of constant
\begin{equation}
f \equiv r/2-u
\end{equation}
near $\scri^+$, its normal vector $N^a$ is given asymptotically by
\begin{equation}
\label{eq:asymptotic-surface-normal-halfway}
N^a = g^{ab}\tilde\nabla_b f = r^2\tilde{g}^{ab}\left[\frac{1}{2}dr_b - du_b\right] \quad .
\end{equation}
Define the ``unphysical normal'' by
\begin{equation}
\tilde{N}^a = r^{-2}N^a
\end{equation}
so that, substituting in the metric components, it becomes
\begin{equation}
\label{eq:asymptotic-surface-normal}
\tilde{N}^a = \left(-1 + \alpha+\frac{1}{2}\beta_A\beta^A\right)(\partial_r)^a + \frac{1}{2}(\partial_u)^a + \frac{1}{2}\beta^A(\partial_A)^a \quad .
\end{equation}
Rewriting the ``timelike'' equation (\ref{eq:dirac-equation-time}) in terms of the unphysical spinor and derivative operator using equation (\ref{eq:conformal-spinor-derivative}), we have
\begin{align}
\label{eq:dirac-equation-time-unphysical}
0
& = r^{3/2}\left[\tilde{N}^a\tilde\nabla_a\tilde\psi - \frac{1}{2}r^{-1}\tilde{N}^a\tilde\Gamma_a\tilde\Gamma_b\left(\tilde\nabla^br\right)\tilde\psi\right] \quad . 
\end{align}
Substituting the components of $N^a$ and $\tilde\Gamma^a$, we find
\begin{align}
0
& = (-1+\alpha+\frac{1}{2}\beta_A\beta^A)\tilde\nabla_r\tilde\psi + \frac{1}{2}\tilde\nabla_u\tilde\psi + \frac{1}{2}\beta^A\tilde\nabla_A\tilde\psi \nonumber \\
& - \frac{1}{2}r^{-1}\left(\alpha+\frac{1}{2}\beta_A\beta^A\right)\tilde\psi + \frac{1}{2}r^{-1}(\Gamma_-\Gamma_+ + \beta_A\Gamma_-\tilde\Lambda^A)\tilde\psi \quad ,
\end{align}
and finally, left multiplying by $\Gamma_-$ and using that $\Gamma_-^{\ 2}=0$, we find
\begin{equation}
\label{eq:spinor-time-derivative}
\Gamma_-\tilde\nabla_u\tilde\psi = (2-2\alpha-\beta_A\beta^A)\tilde\nabla_r\tilde\psi - \beta^A\Gamma_-\tilde\nabla_A\tilde\psi + \left(\alpha+\frac{1}{2}\beta_A\beta^A\right)\Gamma_-\tilde\psi \quad .
\end{equation}
We can now substitute this back into the Dirac equation (\ref{eq:dirac-equation-long}) to eliminate the $u$ derivative.
The spatial equation then takes the form
\begin{align}
0
& = \left\{\Gamma_+ + \left(2 - \alpha - \beta_A\beta^A\right)\Gamma_- + \beta_A\tilde\Lambda^A\right\}\tilde\nabla_r\tilde\psi + \left(-\beta_A\Gamma_- + \tilde\Lambda^A\right)\tilde\nabla_A\tilde\psi \nonumber \\
& - r^{-1}\left[\frac{d}{2}\Gamma_+ + \frac{d-2}{2}\alpha\Gamma_- + \beta_A\tilde\Lambda^A\right]\tilde\psi \quad .
\end{align}
Multiplying by $r$ (to remove the negative powers of $r$ and make later calculations easier), we get the final expression (\ref{eq:witten-equation}) for the Witten equation on $\mathscr{S}$, as required.
\end{proof}

\subsection{Solving the Witten Equation in four dimensions}
\label{sec:dirac-spinor-solution}
In this section, we make an asymptotic expansion of a spinor $\psi$ satisfying equation (\ref{eq:witten-equation}) on $\mathscr{S}$, in 4 dimensions.
The main result is given in the following theorem.
We only obtain as much information from the Dirac equation as is needed to demonstrate the positivity of the Bondi mass in Section \ref{sec:bondi-positivity-4d}.

\begin{theorem}
\label{thm:witten-solution-4d}
Suppose there exists a spinor $\psi$ satisfying the Dirac equation (\ref{eq:dirac-equation}) such that the unphysical spinor $\tilde\psi=r^{1/2}\psi$ is smooth at $\scri^+$.
In addition, suppose that the spinor satisfies the following boundary conditions on $\Sigma(0,0)$ (the intersection of $\scri^+$ and $\mathscr{S}$),
\begin{align}
\label{eq:dirac-bcs}
& \tilde\psi - \tilde\psi_0 = 0 \quad , &
& \nabla_r(\tilde\psi - \tilde\psi_0) = 0 \quad ,
\end{align}
where $\tilde\psi_0$ is the unphysical Minkowski spinor given in Theorem \ref{thm:minkowski-spinor}.
Then the spinor $\psi$ takes the form
\begin{equation}
\psi = r^{-1/2}\tilde\psi = \left(
\begin{matrix}
\tilde\psi_+ \\ \tilde\psi_-
\end{matrix}
\right) \quad ,
\end{equation}
where the expansions of the ``plus'' and ``minus'' components are given by
\begin{subequations}
\label{eq:half-spinor-expansions}
\begin{align}
& \tilde\psi_+ = \tilde\psi_+^{(0)} + r^3\tilde\psi_+^{(3)} + O(r^4) \quad , \\
& \tilde\psi_- = r\tilde\psi_-^{(1)} + r^2\tilde\psi_-^{(2)} + O(r^3) \quad ,
\end{align}
\end{subequations}
where $\tilde\psi_+^{(0)}$ and $r\tilde\psi_-^{(1)}$ are the plus and minus components of the Minkowski spinor, and the spinors $\tilde\psi_+^{(3)}$ and $\tilde\psi_-^{(2)}$ satisfy
\begin{equation}
\label{eq:psi+-3-condition}
\Re\left[12\sqrt{2}\tilde\psi^{(0)}_+{}^\ast\tilde\psi^{(3)}_+ - 2\tilde\psi^{(0)}_+{}^\ast\tilde\sigma^{A(0)}\mathscr{D}_A\tilde\psi^{(2)}_- - 2\alpha^{(3)} + \frac{1}{2}\mathscr{D}^A\beta_A^{(2)} + \frac{1}{4}\gamma^{AB(1)}\partial_u\gamma_{AB}^{(1)}\right] = 0 \quad .
\end{equation}
\end{theorem}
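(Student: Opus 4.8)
The plan is to reproduce, in the spinor setting, the recursive scheme of Section~\ref{sec:einstein-equations-solution}. First I would specialise the Witten equation~(\ref{eq:witten-equation}) to $d=4$ (so that $N=1$) and write it out explicitly in the matrix representation~(\ref{eq:gamma-matrices-representation})--(\ref{eq:gamma-matrices-explicit-curved}), splitting it into its ``plus'' and ``minus'' components in the sense introduced after Definition~\ref{def:dirac-conjugate}; this gives a first-order system for the $2^{d/2-1}$-component spinors $\tilde\psi_+$ and $\tilde\psi_-$. Into this system I would substitute the low-order metric behaviour from Theorem~\ref{thm:metric-expansion-low-order} ($\alpha=\tfrac{\lambda}{2}r^2+\alpha^{(3)}r^3+O(r^4)$, $\beta_A=\beta_A^{(2)}r^2+O(r^3)$, $\gamma_{AB}=s_{AB}+\gamma_{AB}^{(1)}r+O(r^2)$) together with the expansions of $\tilde\Lambda^A$, $\tilde\omega_a$ and $\tilde\Omega_A$ recorded in the appendices. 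Crucially, all of this must be carried out after restricting to the slice $\mathscr{S}=\{u=r/2\}$, which is where the timelike equation~(\ref{eq:dirac-equation-time}), and hence~(\ref{eq:witten-equation}), holds: since $u=r/2$ on $\mathscr{S}$, the $r$-expansion of a metric coefficient such as $\gamma_{AB}^{(1)}(u,x)$ restricted to $\mathscr{S}$ picks up the Taylor term $\tfrac{r}{2}\,\partial_u\gamma_{AB}^{(1)}$, and this is how $\partial_u\gamma_{AB}^{(1)}$ will eventually enter~(\ref{eq:psi+-3-condition}).

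Next I would solve the system order by order in $r$. The ``plus'' component has the form $\sqrt{2}\,(r\tilde\nabla_r-\tfrac{d}{2})\tilde\psi_-=(\text{lower-order and }\tilde\psi_+\ \text{terms})$, so it determines each $\tilde\psi_-^{(k)}$ except at $k=\tfrac{d}{2}=2$, where the operator degenerates; this is the analogue of the metric ``breakdown'', leaving $\tilde\psi_-^{(2)}$ free and producing instead a constraint on the right-hand side at that order. The ``minus'' component has leading part $2\sqrt{2}\,r\tilde\nabla_r\tilde\psi_+$, so it determines each $\tilde\psi_+^{(k)}$ with $k\ge1$. Feeding in the boundary conditions~(\ref{eq:dirac-bcs}) on $\Sigma(0,0)$, which fix $\tilde\psi^{(0)}$ and $\tilde\psi^{(1)}$ there to equal the corresponding coefficients of the Minkowski spinor $\tilde\psi_0$ of Theorem~\ref{thm:minkowski-spinor}, and running the low-order equations (which at orders $r^0$ and $r^1$ reduce to covariant constancy and to the relations~(\ref{eq:minkowski-spinor-relation})) gives $\tilde\psi_-^{(0)}=0$, $\tilde\psi_+^{(0)}=\tilde\psi_{0+}^{(0)}$, $\tilde\psi_-^{(1)}=\tilde\psi_{0-}^{(1)}$ and $\tilde\psi_+^{(1)}=\tilde\psi_+^{(2)}=0$ --- that is, the expansions~(\ref{eq:half-spinor-expansions}).

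Finally, to obtain~(\ref{eq:psi+-3-condition}) I would push the recursion one further order, to $r^3$. There the ``minus'' component is where $\tilde\psi_+^{(3)}$ first appears, relating it to $\alpha^{(3)}$, $\mathscr{D}^A\beta_A^{(2)}$, $\tilde\sigma^{A(0)}\mathscr{D}_A\tilde\psi_-^{(2)}$ and $\partial_u\gamma_{AB}^{(1)}$ (the last two via the restriction to $\mathscr{S}$ above). Contracting with $\tilde\psi_+^{(0)\ast}$, using the Minkowski normalisations~(\ref{eq:minkowski-spinor-normalisation}) and the $rA$ Einstein equation of Lemma~\ref{thm:einstein-equations-high-order} (which in $d=4$ reads $\beta_A^{(2)}=-\tfrac12\mathscr{D}^B\gamma_{AB}^{(1)}$), and then taking the real part --- which kills the terms that would otherwise obstruct a scalar identity --- should collapse everything to precisely~(\ref{eq:psi+-3-condition}); the imaginary part and the components transverse to $\tilde\psi_+^{(0)}$ are expected either to reproduce the Minkowski relations or to be automatically satisfied.

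I expect the main obstacle to be bookkeeping rather than any conceptual subtlety: one must expand $\tilde\Lambda^A$, $\tilde\omega_a$ and $\tilde\Omega_A$ accurately to order $r^3$, keep track of the factors of $r$ that are ``hidden'' inside the spinor expansion coefficients (which, unlike the metric ones, need not be $r$-independent, cf.\ Definition~\ref{def:order-spinor}), and --- above all --- handle correctly the restriction to $\mathscr{S}=\{u=r/2\}$, since mismanaging the interplay of $r$- and $u$-derivatives there is exactly what would spoil the appearance of the news term $\partial_u\gamma_{AB}^{(1)}$ that makes~(\ref{eq:psi+-3-condition}) match the Bondi-mass integrand of Theorem~\ref{thm:bondi-mass}.
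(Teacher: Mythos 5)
Your overall skeleton matches the paper's: split the Witten equation into plus/minus components in the representation (\ref{eq:gamma-matrices-representation}), observe that the plus component acts on $\tilde\psi_-$ as $\sqrt{2}\,(r\partial_r-\tfrac{d}{2})$ to leading order and so degenerates at $k=d/2=2$ (leaving $\tilde\psi_-^{(2)}$ free), determine $\tilde\psi_+^{(k)}$ from the minus component using the boundary data (\ref{eq:dirac-bcs}), and extract the scalar constraint at third order by contracting with the Minkowski data and taking real parts. The paper implements the recursion by taking successive covariant $\tilde\nabla_r$-derivatives of (\ref{eq:witten-equation}) and setting $r=0$ rather than matching powers of $r$; since you flag the $r$-dependence of the spinor expansion coefficients at the end, that is only a difference of emphasis.

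There is, however, one concrete error in your route to (\ref{eq:psi+-3-condition}). You claim the news term $\tfrac14\gamma^{AB(1)}\partial_u\gamma_{AB}^{(1)}$ arises from restricting the metric coefficients to $\mathscr{S}=\{u=r/2\}$ and Taylor-expanding in $u$. It does not, and that substitution would change the meaning of the coefficients $\gamma_{AB}^{(k)}$, which are defined at fixed $u$. In the paper the $u$-derivative of the spinor is eliminated once and for all via (\ref{eq:spinor-time-derivative}) before any expansion is made, and every coefficient in the recursion is then evaluated at the single cross-section $\Sigma(0,0)$ where $r=u=0$; the slicing $u=r/2$ plays no further role in the expansion. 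The term $\partial_u\gamma_{AB}^{(1)}$ enters instead through spinor curvature: the differentiated Witten equation (\ref{eq:witten-equation-4d-3}) contains $\tilde\nabla_r^2\tilde\nabla_A\tilde\psi$, which can only be evaluated at $r=0$ after commuting the derivatives using Lemma \ref{thm:spinor-derivative-commutation}, and the commutator $\tilde{Q}_A=\tfrac18\tilde{R}_{rAcd}[\tilde\Gamma^c,\tilde\Gamma^d]$ of (\ref{eq:grspinors-derivative-commutator-formula}), together with the $\partial_u\gamma_{AB}$ part of the spin connection $\tilde\omega_A$ in (\ref{eq:grspinors-spinor-connection-coefficient-A}), supplies the $\partial_r\partial_u\gamma_{AB}$, hence $\partial_u\gamma_{AB}^{(1)}$, contributions. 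Your proposal never invokes this commutation, and without it the recursion cannot close, because $\tilde\nabla_r^k\tilde\nabla_A\tilde\psi$ is not expressible in terms of the data $\tilde\psi^{(j)}=[\tilde\nabla_r^j\tilde\psi]_{r=0}$ otherwise. A minor further point: the relation $\beta_A^{(2)}=-\tfrac12\mathscr{D}^B\gamma_{AB}^{(1)}$ you propose to invoke comes from Lemma \ref{thm:metric-components-higher-order-halfway}, whose stated range $0<k<d-3$ is empty when $d=4$; the paper keeps $\mathscr{D}^A\beta_A^{(2)}$ and $\gamma_{AB}^{(1)}$ as independent entries of (\ref{eq:psi+-3-condition}) and makes no such substitution.
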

\noindent
The following lemma gives a standard formula for commuting covariant $r$ and $A$ derivatives on spinors, which will be required when making an expansion of the Witten spinor.
\begin{lemma}
\label{thm:spinor-derivative-commutation}
For $n\geq1$,
\begin{equation}
\label{eq:spinor-rA-commutation}
\tilde\nabla_r^n\tilde\nabla_A\tilde\psi = \tilde\nabla_A\tilde\nabla_r^n\tilde\psi + \sum_{k=1}^n \frac{n!}{k!(n-k)!}(\tilde\nabla_r^{k-1}\tilde{Q}_A)\tilde\nabla_r^{n-k}\tilde\psi \quad ,
\end{equation}
where for brevity we have defined
\begin{equation}
\tilde{Q}_A = \frac{1}{8}\tilde{R}_{rAcd}[\tilde\Gamma^c,\tilde\Gamma^d] \quad .
\end{equation}
Expressions for $\tilde{Q}_A$ and its derivatives can be found in Appendix \ref{app:grspinors-derivative-commutation}.
Note that this Lemma is true in arbitrary dimension.
\end{lemma}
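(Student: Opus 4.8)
The proof will be a straightforward induction on $n$. For the base case $n=1$ I would take the spinor Ricci identity (\ref{eq:spinor-derivative-commutation}) — in its unphysical form, which holds verbatim for the spin connection of $\tilde{g}_{ab}$ — and contract it with the coordinate vectors $l^a=(\partial_r)^a$ and $(\partial_A)^a$. Since these are \emph{commuting} coordinate vector fields, the Lie-derivative term that would otherwise arise drops out, leaving
\begin{equation*}
\tilde\nabla_r\tilde\nabla_A\phi - \tilde\nabla_A\tilde\nabla_r\phi = \tfrac{1}{8}\tilde{R}_{rAcd}[\tilde\Gamma^c,\tilde\Gamma^d]\phi = \tilde{Q}_A\phi
\end{equation*}
for an arbitrary spinor $\phi$. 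Because $\tfrac{1!}{1!\,0!}=1$, $\tilde\nabla_r^0\tilde{Q}_A=\tilde{Q}_A$ and $\tilde\nabla_r^0\phi=\phi$, this is exactly (\ref{eq:spinor-rA-commutation}) at $n=1$; the crucial point is that it holds for \emph{every} $\phi$, which is what lets the induction close.

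For the inductive step I would assume (\ref{eq:spinor-rA-commutation}) at level $n$ and apply $\tilde\nabla_r$ to both sides. The term $\tilde\nabla_r(\tilde\nabla_A\tilde\nabla_r^n\tilde\psi)$ is reduced by the base case applied to $\phi=\tilde\nabla_r^n\tilde\psi$, giving $\tilde\nabla_A\tilde\nabla_r^{n+1}\tilde\psi + \tilde{Q}_A\tilde\nabla_r^n\tilde\psi$, while each summand $\binom{n}{k}(\tilde\nabla_r^{k-1}\tilde{Q}_A)\tilde\nabla_r^{n-k}\tilde\psi$ is differentiated by the Leibniz rule into $(\tilde\nabla_r^{k}\tilde{Q}_A)\tilde\nabla_r^{n-k}\tilde\psi+(\tilde\nabla_r^{k-1}\tilde{Q}_A)\tilde\nabla_r^{n+1-k}\tilde\psi$. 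Reindexing the pieces containing $\tilde\nabla_r^{k}\tilde{Q}_A$ by $k\mapsto k-1$, and absorbing the leftover $\tilde{Q}_A\tilde\nabla_r^n\tilde\psi$ as the $k=1$, $\binom{n}{0}$ member of that reindexed family, the coefficient multiplying $(\tilde\nabla_r^{k-1}\tilde{Q}_A)\tilde\nabla_r^{n+1-k}\tilde\psi$ becomes $\binom{n}{k-1}+\binom{n}{k}$ for every $1\leq k\leq n+1$ (with the convention $\binom{n}{n+1}=0$). By Pascal's rule this equals $\binom{n+1}{k}=\tfrac{(n+1)!}{k!\,(n+1-k)!}$, which is precisely (\ref{eq:spinor-rA-commutation}) at level $n+1$.

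The argument is entirely routine; the one place where care is needed is the combinatorial bookkeeping — making sure the stray $\tilde{Q}_A\tilde\nabla_r^n\tilde\psi$ term produced by the base-case commutation is absorbed correctly so that Pascal's identity applies uniformly in $k$. Everything else follows from the Leibniz rule together with the fact that $\tilde{Q}_A$, being assembled covariantly from $\tilde{R}_{abcd}$ and the gamma matrices, has well-defined iterated $r$-derivatives $\tilde\nabla_r^{k-1}\tilde{Q}_A$; the explicit coordinate expressions for these (quoted from Appendix \ref{app:grspinors-derivative-commutation}) can be substituted at the end if the fully expanded formula is wanted. The result holds in any dimension because (\ref{eq:spinor-derivative-commutation}) does.
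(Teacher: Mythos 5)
Your proposal is correct and follows essentially the same route as the paper: establish the $n=1$ case from the spinor curvature identity (\ref{eq:spinor-derivative-commutation}), then induct by applying $\tilde\nabla_r$, using the base case on $\tilde\nabla_r^n\tilde\psi$, and combining binomial coefficients via Pascal's rule after reindexing. The only cosmetic difference is that the paper also verifies $n=2$ explicitly before stating the induction, which your argument renders unnecessary.
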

\begin{proof}
By equation (\ref{eq:spinor-derivative-commutation}), we have
\begin{align}
\label{eq:spinor-rA-commutation-1}
\tilde\nabla_r\tilde\nabla_A\tilde\psi
& = \tilde\nabla_A\tilde\nabla_r\tilde\psi + \frac{1}{8}\tilde{R}_{rAcd}[\tilde\Gamma^c,\tilde\Gamma^d]\tilde\psi \nonumber \\
& = \tilde\nabla_A\tilde\nabla_r\tilde\psi + \tilde{Q}_A\tilde\psi \quad .
\end{align}
Taking a second derivative, we find
\begin{align}
\tilde\nabla_r^2\tilde\nabla_A\tilde\psi 
& = \tilde\nabla_r(\tilde\nabla_A\tilde\nabla_r\tilde\psi + \tilde{Q}_A\tilde\psi) \nonumber \\
& = \tilde\nabla_A\tilde\nabla_r^2\tilde\psi + 2\tilde{Q}_A\tilde\nabla_r\tilde\psi + (\tilde\nabla_r\tilde{Q}_A)\tilde\psi \quad .
\end{align}
We have thus verified equation (\ref{eq:spinor-rA-commutation}) for $n=1,2$.
It can be proved for all $n\geq1$ by induction.
Indeed, if we assume it is true for some $m\geq1$, we find that
\begin{align}
\tilde\nabla_r^{m+1}\tilde\nabla_A\tilde\psi
& = \tilde\nabla_r\left(\tilde\nabla_A\tilde\nabla_r^m\tilde\psi + \sum_{k=1}^m \frac{m!}{k!(m-k)!}(\tilde\nabla_r^{k-1}\tilde{Q}_A)\tilde\nabla_r^{m-k}\tilde\psi\right) \quad ,
\end{align}
where we have applied (\ref{eq:spinor-rA-commutation}) for $n=m$.
Now equation (\ref{eq:spinor-rA-commutation-1}) is also true if we replace $\tilde\psi$ with $\tilde\nabla_r^m\tilde\psi$.
Thus,
\begin{align}
\tilde\nabla_r^{m+1}\tilde\nabla_A\tilde\psi
& = \tilde\nabla_A\tilde\nabla_r^{n+1}\tilde\psi + \tilde{Q}_A\tilde\nabla_r^n\tilde\psi \nonumber \\
& + \sum_{k=1}^n\frac{n!}{k!(n-k)!}(\tilde\nabla_r^k\tilde{Q}_A)\tilde\nabla_r^{n-k}\tilde\psi + \sum_{k=1}^n\frac{n!}{k!(n-k)!}(\tilde\nabla_r^{k-1}\tilde{Q}_A)\tilde\nabla_r^{n+1-k}\tilde\psi \nonumber \\
\iftoggle{complete}{
& = \tilde\nabla_A\tilde\nabla_r^{n+1}\tilde\psi + \tilde{Q}_A\tilde\nabla_r^n\tilde\psi \nonumber \\
& + \sum_{k=2}^{n+1}\frac{n!}{(k-1)!(n+1-k)!}(\tilde\nabla_r^{k-1}\tilde{Q}_A)\tilde\nabla_r^{n+1-k}\tilde\psi \nonumber \\
& + \sum_{k=1}^n\frac{n!}{k!(n-k)!}(\tilde\nabla_r^{k-1}\tilde{Q}_A)\tilde\nabla_r^{n+1-k}\tilde\psi \nonumber \\
}{}
& = \tilde\nabla_A\tilde\nabla_r^{n+1}\tilde\psi + \sum_{k=1}^{n+1}\frac{(n+1)!}{k!(n+1-k)!}(\tilde\nabla_r^{k-1}\tilde{Q}_A)\tilde\nabla_r^{n+1-k}\tilde\psi \quad ,
\end{align}
which is precisely equation (\ref{eq:spinor-rA-commutation}) for $n=m+1$.
This completes the proof by induction.
\end{proof}

\begin{proof}[Proof of Theorem \ref{thm:witten-solution-4d}]
The boundary conditions (\ref{eq:dirac-bcs}) fix the two lowest order expansion coefficients $\tilde\psi^{(0)}$ and $\tilde\psi^{(1)}$ to be the same as those of the Minkowski spinor (\ref{eq:minkowski-spinor}) at $\scri^+$.
Since the expansion coefficients are not necessarily independent of $r$, we cannot directly consider the Witten equation (\ref{eq:witten-equation}) at each order $r^k$.
However, the expansion coefficients are covariantly constant, so instead we will take successive covariant derivatives of the Witten equation and set $r=0$ for each one.
This procedure is equivalent to the analysis at each order performed for the Einstein equations.
\medskip\noindent
The Witten equation and its first three covariant $r$ derivatives are given by,
\begin{subequations}
\label{eq:witten-equation-4d}
\begin{align}
\label{eq:witten-equation-4d-0}
& 0 = r\{\Gamma_+ + (2-\alpha)\Gamma_- + \beta_A\tilde\Lambda^A\}\tilde\nabla_r\tilde\psi + r(-\beta_A\Gamma_- + \tilde\Lambda^A)\tilde\nabla_A\tilde\psi \nonumber \\
& + (-2\Gamma_+ - \alpha\Gamma_- - 2\beta_A\tilde\Lambda^A)\tilde\psi + O(r^4) \quad ,
\end{align}
\begin{align}
\label{eq:witten-equation-4d-1}
& 0 = r\{\Gamma_+ + 2\Gamma_-\}\tilde\nabla_r^2\tilde\psi + r\{\tilde\nabla_r\Gamma_+ - (\partial_r\alpha)\Gamma_- + (\partial_r\beta_A)\tilde\Lambda^A\}\tilde\nabla_r\tilde\psi \nonumber \\
& + \{-\Gamma_+ + (2-2\alpha)\Gamma_- - \beta_A\tilde\Lambda^A)\tilde\nabla_r\tilde\psi \nonumber \\
& + (-2\tilde\nabla_r\Gamma_+ - (\partial_r\alpha)\Gamma_- - 2(\partial_r\beta_A)\tilde\Lambda^A - 2\beta_A\tilde\nabla_r\tilde\Lambda^A)\tilde\psi \nonumber \\
& + r\tilde\Lambda^A\tilde\nabla_r\tilde\nabla_A\tilde\psi + r(-\partial_r\beta^A \Gamma_- + \tilde\nabla_r\tilde\Lambda^A)\tilde\nabla_A\tilde\psi + (-\beta^A\Gamma_- + \tilde\Lambda^A)\tilde\nabla_A\tilde\psi + O(r^3) \quad ,
\end{align}
\begin{align}
\label{eq:witten-equation-4d-2}
& 0 = r(\Gamma_+ + 2\Gamma_-)\tilde\nabla_r^3\tilde\psi + r\{\tilde\nabla_r^2\Gamma_+ - (\partial_r^2\alpha)\Gamma_- + (\partial_r^2\beta_A)\tilde\Lambda^A\}\tilde\nabla_r\tilde\psi \nonumber \\
& + 4\Gamma_-\tilde\nabla_r^2\tilde\psi + \{-2\tilde\nabla_r\Gamma_+ - 4(\partial_r\alpha)\Gamma_- - 2(\partial_r\beta_A)\tilde\Lambda^A\}\tilde\nabla_r\tilde\psi \nonumber \\
& + \{-2\tilde\nabla_r^2\Gamma_+ - (\partial_r^2\alpha)\Gamma_- - 2(\partial_r^2\beta_A)\tilde\Lambda^A - 4(\partial_r\beta_A)\tilde\nabla_r\tilde\Lambda^A)\tilde\psi \nonumber \\
& + r\tilde\Lambda^A\tilde\nabla_r^2\tilde\nabla_A\tilde\psi + 2r(\tilde\nabla_r\tilde\Lambda^A)\tilde\nabla_r\tilde\nabla_A\tilde\psi + 2\tilde\Lambda^A\tilde\nabla_r\tilde\nabla_A\tilde\psi \nonumber \\
& + r(-\partial_r^2\beta_A \Gamma_- + \tilde\nabla_r^2\tilde\Lambda^A)\tilde\nabla_A\tilde\psi + 2(-\partial_r\beta^A \Gamma_- + \tilde\nabla_r\tilde\Lambda^A)\tilde\nabla_A\tilde\psi + O(r^2) \quad ,
\end{align}
\begin{align}
\label{eq:witten-equation-4d-3}
& 0 = (\Gamma_+ + 6\Gamma_-)\tilde\nabla_r^3\tilde\psi - 3\{\tilde\nabla_r^2\Gamma_+ + 2(\partial_2^\alpha)\Gamma_- + (\partial_r^2\beta_A)\tilde\Lambda^A\}\tilde\nabla_r\tilde\psi \nonumber \\
& + \{ -2\tilde\nabla_r^3\Gamma_+ - (\partial_r^3\alpha)\Gamma_- - 2(\partial_r^3\beta_A)\tilde\Lambda^A - 6(\partial_r^2\beta_A)\tilde\nabla_r\tilde\Lambda^A\}\tilde\psi \nonumber \\
& + 3\tilde\Lambda^A\tilde\nabla_r^2\tilde\nabla_A\tilde\psi + 6(\tilde\nabla_r\tilde\Lambda^A)\tilde\nabla_r\tilde\nabla_A\tilde\psi + 3(-\partial_r^2\beta^A \Gamma_- + \tilde\nabla_r^2\tilde\Lambda^A)\tilde\nabla_A\tilde\psi + O(r) \quad .
\end{align}
\end{subequations}
To solve these equations for the spinor expansion, we set $r=0$ in each one.
Using equation (\ref{eq:grspinors-spinor-connection-coefficient-A}), we find that
\begin{align}
\left[\tilde\nabla_A\tilde\psi\right]_{r=0}
& = \mathscr{D}_A\tilde\psi^{(0)} - \frac{1}{4}\gamma_{AB}^{(1)}\tilde\Lambda^{B(0)}\Gamma_+\tilde\psi^{(0)} \nonumber \\
& = \mathscr{D}_A\tilde\psi^{(0)} \quad ,
\end{align}
where in the second line we used that $\Gamma_+\tilde\psi^{(0)}=0$ (because $\tilde\psi_-^{(0)}=0$).
In addition, we make use of the expressions given in Appendix \ref{app:grspinors-gamma-matrices-derivatives} for the covariant $r$ derivatives of the gamma matrices.
In particular, we use that
\begin{subequations}
\begin{align}
\left[\tilde\nabla_r\tilde\Lambda^A\right]_{r=0} &= -\frac{1}{2}s^{AB}\gamma_{BC}^{(1)}\tilde\Lambda^{C(0)} \quad , \\
\left[\tilde\nabla_r^2\tilde\Lambda^A\right]_{r=0} &= \beta^{A(2)}\Gamma_- - \frac{3}{4}\gamma^{AB(1)}\gamma_{BC}^{(1)}\tilde\Lambda^{C(0)} - s^{AB}\gamma_{BC}^{(2)}\tilde\Lambda^{C(0)} \quad ,
\end{align}
\begin{align}
\left[\tilde\nabla_r\Gamma_+\right]_{r=0} &= 0 \quad , \\
\left[\tilde\nabla_r^2\Gamma_+\right]_{r=0} &= -\beta_A^{(2)}\tilde\Lambda^{A(0)} \quad , \\
\left[\tilde\nabla_r^3\Gamma_+\right]_{r=0} &= -3\beta_A^{(3)}\tilde\Lambda^{A(0)} + \beta^{A(2)}\gamma_{AB}^{(1)}\tilde\Lambda^{B(0)} \quad .
\end{align}
\end{subequations}
Equations (\ref{eq:witten-equation-4d}) also contain terms of the form $\tilde\nabla_r^k\tilde\nabla_A\tilde\psi$.
To evaluate these at $r=1$, we commute the derivatives using Lemma \ref{thm:spinor-derivative-commutation}.
In particular we find that
\begin{subequations}
\begin{align}
\left[\tilde\nabla_r\tilde\nabla_A\tilde\psi\right]_{r=0}
 = \mathscr{D}_A\tilde\psi^{(1)} &+ \frac{1}{4}\gamma_{AB}^{(1)}\tilde\Lambda^{B(0)}\Gamma_+\tilde\psi^{(1)} + \frac{1}{2}\beta_A^{(2)}\tilde\psi^{(0)} \nonumber \\
& - \frac{1}{4}\mathscr{D}_{[B}\gamma_{D]A}^{(1)}\tilde\Lambda^{B(0)}\tilde\Lambda^{D(0)}\tilde\psi^{(0)} \quad ,
\end{align}
\begin{equation}
\left[\tilde\Lambda^A\tilde\nabla_r\tilde\nabla_A\tilde\psi\right]_{r=0} = \tilde\Lambda^{A(0)}\mathscr{D}_A\tilde\psi^{(1)} + \beta_A^{(2)}\tilde\Lambda^{A(0)}\tilde\psi^{(0)} + \frac{1}{4}s^{AB}\gamma_{AB}^{(1)}\Gamma_+\tilde\psi^{(1)} \quad ,
\end{equation}
and in addition that
\begin{align}
& \left[\bar{\tilde\psi}\tilde\Lambda^A\tilde\nabla_r\tilde\nabla_A\tilde\psi\right]_{r=0} = \alpha^{(2)} = \frac{\lambda}{2} \quad , \\
& \left[\bar{\tilde\psi}\tilde\Lambda^A\tilde\nabla_r^2\tilde\nabla_A\tilde\psi\right]_{r=0} = -2\tilde\psi_+^{(0)}{}^\ast\tilde\sigma^{A(0)}\mathscr{D}_A\tilde\psi_-^{(2)} + \frac{1}{2}\mathscr{D}^A\beta_A^{(2)} + \frac{\lambda}{4}s^{AB}\gamma{AB}^{(1)} \quad .
\end{align}
\end{subequations}

\noindent
Now we have presented these basic results, we are ready to solve the Witten equation at each order.
Setting $r=0$, we see that equations (\ref{eq:witten-equation-4d-0}) and (\ref{eq:witten-equation-4d-1}) are trivial due to the boundary conditions and the properties of the Minkowski spinor.
\iftoggle{complete}{
\begin{align*}
& -2\Gamma_+\tilde\psi^{(0)} = 0 \ , \quad \tilde\psi^{(0)}_- = 0 \quad , \\
& (-\sqrt{2}\tilde\psi^{(1)}_- + \tilde\sigma^{A(0)}\mathscr{D}_A\tilde\psi^{(0)}_+ = 0 \quad , \\
& 2\sqrt{2}\tilde\psi^{(1)}_+ - \tilde\sigma^{A(0)}\mathscr{D}_A\tilde\psi^{(0)}_- = 0 \quad .
\end{align*}
}{}
Setting $r=0$ in equation (\ref{eq:witten-equation-4d-2}), we find
\begin{equation}
0 = 8\Gamma_-\tilde\psi^{(2)} - \lambda\Gamma_-\tilde\psi^{(0)} + 2\tilde\Lambda^{A(0)}\mathscr{D}_A\tilde\psi^{(1)} + \frac{1}{2}s^{AB}\gamma_{AB}^{(1)}\Gamma_+\tilde\psi^{(1)} - s^{AB}\gamma_{BC}^{(0)}\tilde\Lambda^{C(0)}\mathscr{D}_A\tilde\psi^{(0)} \quad .
\end{equation}
The plus component turns out to be trivial.
Indeed it gives,
\begin{align}
0
& = \iftoggle{complete}{2\tilde\sigma^{A(0)}\mathscr{D}_A\tilde\psi_+^{(1)} +}{} \frac{1}{\sqrt2}s^{AB}\gamma_{AB}^{(1)}\tilde\psi^{(1)}_- - s^{AB}\gamma_{BC}^{(1)}\sigma^{C(0)}\mathscr{D}_A\tilde\psi^{(0)}_+ \nonumber \\
& = \frac{1}{\sqrt2}s^{AB}\gamma_{AB}^{(1)}\tilde\psi^{(1)}_- - \frac{1}{\sqrt2}s^{AB}\gamma_{BC}^{(1)}\tilde\sigma^{C(0)}\tilde\sigma_A^{(0)}\tilde\psi^{(1)}_- \nonumber \\
\iftoggle{complete}{
& = \frac{1}{\sqrt2}s^{AB}\gamma_{AB}^{(1)}\tilde\psi^{(1)}_- - \frac{1}{\sqrt2}\gamma_{BC}^{(1)}\tilde\sigma^{C(0)}\tilde\sigma^{B(0)}\tilde\psi^{(1)}_- \nonumber \\
& = \frac{1}{\sqrt2}s^{AB}\gamma_{AB}^{(1)}\tilde\psi^{(1)}_- - \frac{1}{\sqrt2}\gamma_{BC}^{(1)}s^{BC}\tilde\psi^{(1)}_- \nonumber \\
}{}
& = 0 \quad ,
\end{align}
where we have rewritten $\mathscr{D}_A\tilde\psi_+^{(0)}$ in terms of $\tilde\psi_-^{(1)}$ using the conditions (\ref{eq:minkowski-spinor-relation}) on the Minkowski spinor.
Applying these conditions to the minus component, we get
\begin{align}
0
\iftoggle{complete}{
& = 8\sqrt{2}\tilde\psi^{(2)}_+ - \sqrt{2}\lambda\tilde\psi^{(0)}_+ - 2\tilde\sigma^{A(0)}\mathscr{D}_A\tilde\psi^{(1)}_- + s^{AB}\gamma_{BC}^{(1)}\tilde\sigma^{C(0)}\mathscr{D}_A\tilde\psi^{(0)}_- \nonumber \\
}{}
& = 8\sqrt{2}\tilde\psi^{(2)}_+ - \sqrt{2}\lambda\tilde\psi^{(0)}_+ + \frac{\lambda}{\sqrt2}\tilde\sigma^{A(0)}\tilde\sigma_A^{(0)}\tilde\psi^{(0)}_+ \nonumber \\
\iftoggle{complete}{
& = 8\sqrt{2}\tilde\psi^{(2)}_+ - \sqrt{2}\lambda\tilde\psi^{(0)}_+ + \sqrt{2}\lambda\tilde\psi^{(0)}_+ \nonumber \\
}{}
& = 8\sqrt{2}\tilde\psi^{(2)}_+ \quad ,
\end{align}
where we used that $\tilde\sigma^{A(0)}\tilde\sigma_A^{(0)}=s^{AB}s_{AB}=(d-2)$.
Thus $\tilde\psi^{(2)}_+=0$.
Note that $\tilde\psi_-^{(2)}$ is not determined by this equation due to a similar breakdown in the leading coefficient as was seen when solving the Einstein equations in Section \ref{sec:einstein-equations-solution}.
Finally, setting $r=0$ in equation (\ref{eq:witten-equation-4d-3}), left multiplying by $\tilde\psi^{(0)}\Gamma_-$, and taking the real part, we get precisely equation (\ref{eq:psi+-3-condition}).

We have therefore shown that the spinor takes the desired form,
\begin{subequations}
\label{eq:spinor-expansion}
\begin{align}
& \tilde\psi_+ = \tilde\psi_+^{(0)} + r^3\tilde\psi_+^{(3)} + O(r^4) \quad ,\\
& \tilde\psi_- = r\tilde\psi_-^{(1)} + r^2\tilde\psi_-^{(2)} + O(r^3) \quad .
\end{align}
\end{subequations}
\end{proof}

\section{Bondi Mass Positivity in 4 Dimensions}
\label{sec:bondi-positivity-4d}
\setcounter{section}{1}
\setcounter{theorem}{0}
A spinorial proof is given, for $d=4$, that the Bondi mass is positive.
This demonstrates how our formalism can be used to prove the positivity result.
Up until this point, all calculations have been performed in arbitrary dimension $d\geq4$ (apart from the expansion of the Witten spinor in Section \ref{sec:dirac-spinor-solution}, which is readily generalised).
However the positivity result is more difficult to obtain in the higher dimensional case.
When expanding the Witten spinor, the expansion coefficient $\tilde\psi_-^{(d/2)}$ is undetermined.
This is due to a breakdown similar to that which occurred in the Einstein equations when expanding the metric.
This expansion coefficient causes problems when proving the positivity result since it cannot be evaluated.
In the 4 dimensional case, it cancels from the calculations, meaning the proof can proceed.
In the higher dimensional case, however, it is not clear that these problem terms cancel out or can even be shown to be zero.
For this reason, the proof in the higher dimensional case has not been completed at the time of writing, and so only the case $d=4$ is considered below.

Bondi positivity proofs generally proceed in two main steps.
First a $2$-form is introduced, whose exterior derivative is shown by a standard argument to be positive.
Then the integral of this $2$-form on a cross section of $\scri^+$, called the ``spinor mass'' \cite{Hollands:2005wt}, is shown to equal the Bondi mass.
Finally, it follows from Stokes' theorem that the spinor mass, and therefore the Bondi mass, is positive.

\begin{theorem}
\label{thm:bondi-positivity-4d}
Suppose there exists a spinor $\psi$ satisfying the conditions of Theorem \ref{thm:witten-solution-4d}.
Suppose also that the stress energy tensor $T_{ab}$ satisfies the dominant energy condition, namely that $T^a{}_b t^b$ is future directed for all future directed $t^b$.
Then for a 4 dimensional asymptotically flat spacetime, the Bondi mass as defined by equation (\ref{eq:bondi-mass-definition}) is positive on every cross section $\Sigma(0,u)$ of $\scri^+$.
\end{theorem}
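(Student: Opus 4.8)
The plan is to follow the classical Witten--Nester strategy, adapted to null infinity. First I would introduce the Nester--Witten two-form $\mathcal{E}_{ab}$ built bilinearly from the Witten spinor $\psi$ of Theorem~\ref{thm:witten-solution-4d} and its derivative (schematically $\mathcal{E}_{ab}\sim\bar\psi\,\Gamma_{[a}\nabla_{b]}\psi$, suitably symmetrised so as to be real), and define the \emph{spinor mass} on the cross section $\Sigma(0,u)$ to be the normalised limit $m_{\mathrm{spin}}(u)=c\,\lim_{r\to0}\oint_{\Sigma(r,u)}\mathcal{E}$, with the constant $c$ (and its sign) fixed so as to reproduce the normalisation $A/(8(d-3)\pi G)$ of Definition~\ref{def:bondi-mass}. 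The Minkowski normalisation conditions (\ref{eq:minkowski-spinor-normalisation}) on $\tilde\psi_0$ are precisely what pin down this constant. The argument then splits into a positivity half and an identification half, which are essentially independent.

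For the positivity half I would integrate $d\mathcal{E}$ over the spacelike, asymptotically null slice $\mathscr{S}=\{u=r/2\}$, whose only boundary --- under the standing assumption that a global Witten spinor exists on $\mathscr{S}$ --- is the cross section $\Sigma(0,u)\subset\scri^+$. The Sen--Witten identity writes $\int_{\mathscr{S}}d\mathcal{E}$, up to positive constants, as $\int_{\mathscr{S}}\bigl(|\tilde{\mathcal{D}}\psi|^2+\tfrac12 T_{ab}N^a\,\bar\psi\Gamma^b\psi\bigr)$, where $\tilde{\mathcal{D}}$ is the Sen (hypersurface Dirac) connection on $\mathscr{S}$ and $N^a$ is the future unit normal; the first term is manifestly non-negative, and the second is non-negative by the dominant energy condition, since $\bar\psi\Gamma^b\psi$ is future causal. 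The cross term that would otherwise obstruct this is killed exactly by the spatial Witten equation (\ref{eq:dirac-equation-space}), i.e.\ the restriction of the Dirac equation to $\mathscr{S}$. Hence $m_{\mathrm{spin}}(u)\ge0$; this half is standard, and I would present it compactly, following \cite{Witten81,Horowitz82,Ludvigsen81,Ludvigsen82}.

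For the identification half I would compute $m_{\mathrm{spin}}(u)$ directly in Gaussian null coordinates. Substituting the spinor expansion (\ref{eq:half-spinor-expansions}) of Theorem~\ref{thm:witten-solution-4d} and the metric expansions of Theorem~\ref{thm:metric-expansion-low-order} into $\mathcal{E}_{ab}$, and extracting the $O(1)$ part of the integrand --- the apparently singular lower-order contributions being total $\mathscr{D}$-divergences on the closed surface $\Sigma(r,u)$ and so dropping out by Gauss' theorem, exactly as in the proof of Theorem~\ref{thm:bondi-mass} --- one finds that the integrand is a multiple of $\Re\bigl[12\sqrt2\,\tilde\psi^{(0)\ast}_+\tilde\psi^{(3)}_+-2\,\tilde\psi^{(0)\ast}_+\tilde\sigma^{A(0)}\mathscr{D}_A\tilde\psi^{(2)}_-\bigr]$. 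The algebraic relation (\ref{eq:psi+-3-condition}), which came from the order-$r^3$ Witten equation, identifies this with $2\alpha^{(3)}-\tfrac14\gamma^{AB(1)}\partial_u\gamma_{AB}^{(1)}$ up to the further divergence $\tfrac12\mathscr{D}^A\beta_A^{(2)}$, which again integrates to zero over $\Sigma(0,u)$. Comparing with the $d=4$ (so $N=1$) specialisation of the Bondi mass formula (\ref{eq:bondi-mass-formula}) then yields $m_{\mathrm{spin}}(u)=m(u)$, and positivity of the Bondi mass follows from the positivity half.

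The main obstacle is the bookkeeping in the identification half. One must check that the undetermined coefficient $\tilde\psi^{(2)}_-$ --- the spinorial analogue of the ``breakdown'' coefficient $\gamma_{AB}^{(N)}$ --- enters $m_{\mathrm{spin}}(u)$ only through the combination appearing in (\ref{eq:psi+-3-condition}), so that it drops out on integration; this is exactly the step that the paper flags as failing, or at least as not yet established, in dimension $d>4$. One must also verify that every genuinely singular $O(r^{-k})$ term reorganises into a $\mathscr{D}$-divergence, and that the normalisation constants on the two sides match via (\ref{eq:minkowski-spinor-normalisation}). A secondary, lesser difficulty is making the Sen--Witten identity rigorous at the asymptotically null end of $\mathscr{S}$, i.e.\ checking that the boundary contribution is precisely $\oint_{\Sigma(0,u)}\mathcal{E}$ with nothing extra surviving the limit $r\to0$.
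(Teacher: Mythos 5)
Your proposal is correct and follows essentially the same route as the paper: a two-form built from the Witten spinor whose exterior derivative is non-negative on the slice $\mathscr{S}$ by the Sen--Witten identity together with the dominant energy condition, plus a coordinate identification of the boundary integral with the Bondi mass via the spinor expansion, the relation (\ref{eq:psi+-3-condition}), and the discarding of total $\mathscr{D}$-divergences. The only cosmetic difference is that the paper takes $B=\nabla^{[d}\xi^{c]}\epsilon_{cda_1a_2}$ with $\xi^a=\bar\psi\Gamma^a\psi$ rather than the Nester--Witten form $\bar\psi\,\Gamma_{[a}\nabla_{b]}\psi$ directly, and you correctly identify the same obstruction (the undetermined coefficient $\tilde\psi^{(2)}_-$) that blocks the argument for $d>4$.
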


To prove this theorem, we first obtain the expansion of the vector $\xi^a=\bar\psi\Gamma^a\psi$ in Lemma \ref{thm:spinor-xi-expansion}.
Since $\psi$ is a Witten spinor, $\xi^a$ is divergence free.
Indeed,
\begin{align}
\nabla_a\xi^a
& = \nabla_a(\bar\psi\Gamma^a\psi) \nonumber \\
& = 2\Re(\bar\psi\Gamma^a\nabla_a\psi) \nonumber \\
& = 0 \quad .
\end{align}
It then follows that the spinor mass
\begin{equation}
\label{eq:spinor-mass}
\int_{\Sigma(0,u)} B
\end{equation}
is equal to the Bondi mass, where we have defined the $2$-form,
\begin{equation}
\label{eq:spinor-2-form}
B_{a_1a_2} = \nabla^{[d}\xi^{c]} \epsilon_{cda_1a_2} \quad .
\end{equation}
This is verified in Lemma \ref{thm:bondi-spinor-mass}.
We denote the volume element of the surface $\mathscr{S}$ by ${}^{(d-1)}\epsilon_{ba_1a_2}$.
It can then be calculated that
\begin{align}
dB_{b a_1 a_2}
& = [g^{ab}(\nabla_a\psi)^\ast(\nabla_b\psi) + (R_{ab} - \frac{1}{2}Rg_{ab}) N^a \xi^b ]\ {}^{(d-1)} \epsilon_{ba_1a_2} \nonumber \\
& = [h^{ab}(\nabla_a\psi)^\ast(\nabla_b\psi) + 8\pi T_{ab} N^a \xi^b ]\ {}^{(d-1)}\epsilon_{ba_1a_2} \quad ,
\end{align}
where in the second line we used the $3+1$ decomposition of $g_{ab}$, and the fact that $N^a\nabla_a\psi=0$ on $\mathscr{S}$.
We also used Einstein's equation to replace the Einstein tensor $G_{ab}\equiv R_{ab} - (1/2)Rg_{ab}$ with the mass energy tensor $T_{ab}$.

\noindent
Since $h_{ab}$ is a Riemannian metric, it follows that the first term is positive.
The second is positive by the dominant energy condition, provided $\xi^a$ is future directed.
Therefore we have
\begin{equation}
dB_{ba_1\ldots a_{d-2}} \geq 0
\end{equation}
on $\mathscr{S}$.
Thus to complete the positivity proof, we need only prove Lemmas \ref{thm:spinor-xi-expansion} and \ref{thm:bondi-spinor-mass}, and then apply Stokes' theorem to show that the Bondi mass equals an integral of $dB$.

\begin{lemma}
\label{thm:spinor-xi-expansion}
The components of the vector $\xi^a = \bar\psi\Gamma^a\psi$ have the following expansions.
\begin{subequations}
\label{eq:spinor-xi-expansion}
\begin{align}
\label{eq:spinor-xi-expansion-r}
\xi^r & = r^3[\alpha^{(3)} - 2\sqrt{2}\tilde\psi_-^{(1)}{}^\ast\tilde\psi_-^{(2)}] + O(r^4) \quad , \\
\label{eq:spinor-xi-expansion-u}
\xi^u & = 1 + 2\sqrt{2}r^3\tilde\psi_+^{(0)}{}^\ast\tilde\psi_+^{(3)} + O(r^4) \quad , \\
\label{eq:spinor-xi-expansion-A}
\xi^A &= r^2\left[\frac{1}{2}\beta^{A(2)} - 2\Re(\tilde\psi_+^{(0)}{}^\ast\tilde\sigma^{A(0)}\tilde\psi_-^{(2)})\right] + O(r^3) \quad .
\end{align}
\end{subequations}
\end{lemma}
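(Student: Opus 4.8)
The plan is to evaluate $\xi^a=\bar\psi\Gamma^a\psi$ directly in the representation (\ref{eq:gamma-matrices-representation}) and expand it in powers of $r$, feeding in the asymptotics of the Witten spinor from Theorem \ref{thm:witten-solution-4d} and of the metric from Theorem \ref{thm:metric-expansion-low-order}. First I would pass to unphysical quantities: since $\bar{\tilde\psi}\tilde\Gamma^a\tilde\psi=\bar\psi\Gamma^a\psi$ we may work throughout with $\tilde\psi=r^{1/2}\psi$ and the unphysical curved gamma matrices (\ref{eq:curved-gamma-matrices}). Reading off the coordinate components from (\ref{eq:curved-gamma-matrices}) gives $\xi^r=\bar{\tilde\psi}(\Gamma_++\alpha\Gamma_-+\beta_A\tilde\Lambda^A)\tilde\psi=\bar{\tilde\psi}\Gamma_+\tilde\psi+\alpha\,\xi^u+\beta_A\,\xi^A$, $\xi^u=\bar{\tilde\psi}\Gamma_-\tilde\psi$ and $\xi^A=\bar{\tilde\psi}\tilde\Lambda^A\tilde\psi$, so the whole computation reduces to the three bilinears $\bar{\tilde\psi}\Gamma_\pm\tilde\psi$ and $\bar{\tilde\psi}\tilde\Lambda^A\tilde\psi$. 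Using (\ref{eq:gamma-matrices-representation}), the block form (\ref{eq:gamma-matrices-explicit-curved}) of $\tilde\Lambda^A$ and the Dirac conjugate (\ref{eq:dirac-conjugate}), these collapse in terms of the two-component spinors to $\bar{\tilde\psi}\Gamma_-\tilde\psi=\sqrt2\,\tilde\psi_+^{\ast}\tilde\psi_+$, $\bar{\tilde\psi}\Gamma_+\tilde\psi=-\sqrt2\,\tilde\psi_-^{\ast}\tilde\psi_-$ and $\bar{\tilde\psi}\tilde\Lambda^A\tilde\psi=-2\,\Re(\tilde\psi_+^{\ast}\tilde\sigma^A\tilde\psi_-)$; each is manifestly real, which is why the real parts are left implicit in the statement.

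Next I would substitute $\tilde\psi_+=\tilde\psi_+^{(0)}+r^3\tilde\psi_+^{(3)}+O(r^4)$ and $\tilde\psi_-=r\tilde\psi_-^{(1)}+r^2\tilde\psi_-^{(2)}+O(r^3)$ from (\ref{eq:half-spinor-expansions}), together with $\alpha=\tfrac\lambda2 r^2+\alpha^{(3)}r^3+O(r^4)$, $\beta_A=\beta_A^{(2)}r^2+O(r^3)$ and $\gamma_{AB}=s_{AB}+O(r)$ from Theorem \ref{thm:metric-expansion-low-order}, and collect orders, using the Minkowski normalisations (\ref{eq:minkowski-spinor-normalisation}) to kill the low-order terms. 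For $\xi^u$ the constant term is $\sqrt2\,\tilde\psi_+^{(0)\ast}\tilde\psi_+^{(0)}=1$ by (\ref{eq:minkowski-spinor-normalisation-+}), the $O(r)$ and $O(r^2)$ terms vanish since $\tilde\psi_+^{(1)}=\tilde\psi_+^{(2)}=0$, and the $O(r^3)$ term is $2\sqrt2\,\Re(\tilde\psi_+^{(0)\ast}\tilde\psi_+^{(3)})$. For $\xi^r$ the $\beta_A\,\xi^A$ piece is already $O(r^4)$, the $O(r^2)$ part is $-\sqrt2\,\tilde\psi_-^{(1)\ast}\tilde\psi_-^{(1)}+\tfrac\lambda2\,\xi^u|_{r=0}$, which cancels by (\ref{eq:minkowski-spinor-normalisation--}) together with $\alpha^{(2)}=\lambda/2$, and what remains at $O(r^3)$ is $\alpha^{(3)}-2\sqrt2\,\Re(\tilde\psi_-^{(1)\ast}\tilde\psi_-^{(2)})$. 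For $\xi^A$ the $O(r)$ term $-2\,\Re(\tilde\psi_+^{(0)\ast}\tilde\sigma^{A(0)}\tilde\psi_-^{(1)})$ vanishes by (\ref{eq:minkowski-spinor-normalisation-A}), leaving an $O(r^2)$ term that has to be identified with $\tfrac12\beta^{A(2)}-2\,\Re(\tilde\psi_+^{(0)\ast}\tilde\sigma^{A(0)}\tilde\psi_-^{(2)})$.

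The hard part will be pinning down this last $O(r^2)$ coefficient, and in particular producing the $\tfrac12\beta^{A(2)}$. Because the spinor expansion coefficients are only required to be covariantly constant along $l^a$ (see (\ref{eq:spinor-expansion-coefficients-constancy-condition})) rather than genuinely $r$-independent, the naive product rule is not adequate; instead I would extract the coefficient as $\tfrac{1}{2}[\tilde\nabla_r^2\xi^A]_{r=0}$, using $\tilde\nabla_r\tilde\psi=\sum_{k\geq1}k\,r^{k-1}\tilde\psi^{(k)}$, $\tilde\nabla_r\tilde\Gamma^a=0$, and the Appendix expressions for $[\tilde\nabla_r^j\tilde\Lambda^A]_{r=0}$ that already occur in the proof of Theorem \ref{thm:witten-solution-4d}. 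The $\beta^{A(2)}\Gamma_-$ term in $[\tilde\nabla_r^2\tilde\Lambda^A]_{r=0}$ then contributes $\tfrac12\beta^{A(2)}\,\bar{\tilde\psi}^{(0)}\Gamma_-\tilde\psi^{(0)}=\tfrac12\beta^{A(2)}$ by the normalisation $\bar{\tilde\psi}^{(0)}\Gamma_-\tilde\psi^{(0)}=\sqrt2\,\tilde\psi_+^{(0)\ast}\tilde\psi_+^{(0)}=1$, while the remaining frame-dependent terms (those in $\gamma_{AB}^{(1)},\gamma_{AB}^{(2)}$) are reduced using the Minkowski relations (\ref{eq:minkowski-spinor-relation}) to combinations of $\mathscr{D}^A\tilde\psi_+^{(0)}$ whose real part against $\tilde\psi_+^{(0)\ast}$ is $\tfrac12\mathscr{D}^A(\tilde\psi_+^{(0)\ast}\tilde\psi_+^{(0)})=0$, so that only the $\tilde\psi_-^{(2)}$ contribution and the $\tfrac12\beta^{A(2)}$ survive. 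Once this term is in hand the analogous (and more straightforward) bookkeeping for $\xi^r$ and $\xi^u$ finishes the proof.
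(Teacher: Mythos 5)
Your proposal follows essentially the same route as the paper: pass to the unphysical bilinears $\bar{\tilde\psi}\Gamma_\pm\tilde\psi$ and $\bar{\tilde\psi}\tilde\Lambda^A\tilde\psi$, substitute the expansions of Theorem \ref{thm:witten-solution-4d} together with the Minkowski normalisations (\ref{eq:minkowski-spinor-normalisation}), and extract the $\xi^A$ coefficient via covariant $\tilde\nabla_r$ derivatives using $[\tilde\nabla_r^2\tilde\Lambda^A]_{r=0}$, which is exactly how the paper produces the $\tfrac12\beta^{A(2)}$ term. Your sign $\bar{\tilde\psi}\Gamma_+\tilde\psi=-\sqrt2\,\tilde\psi_-^\ast\tilde\psi_-$ is in fact the one consistent with the cancellation of the $O(r^2)$ terms in $\xi^r$, so the argument is correct as proposed.
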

\begin{proof}
Note that we can equivalently write the vector $\xi^a$ in terms of either physical or unphysical spinor quantities,
\begin{equation}
\xi^a = \bar\psi\Gamma^a\psi \equiv \bar{\tilde\psi}\tilde\Gamma^a\tilde\psi \quad .
\end{equation}
Using the spinor expansions (\ref{eq:half-spinor-expansions}), the normalisation conditions (\ref{eq:minkowski-spinor-normalisation}), the explicit forms (\ref{eq:gamma-matrices-representation}) and (\ref{eq:gamma-matrices-explicit-curved}) of the gamma matrices in our representation, and the decomposition of $\tilde\psi$ into $\tilde\psi_\pm$, we see that
\begin{subequations}
\begin{align}
\bar{\tilde\psi}\Gamma_+\tilde\psi
& = \sqrt{2}\tilde\psi_-^\ast\tilde\psi_- \nonumber \\
\iftoggle{complete}{
& = \sqrt{2}r^2\tilde\psi_-^{(1)}{}^\ast\tilde\psi_-^{(1)} + 2\sqrt{2}r^3\Re(\tilde\psi_-^{(1)}{}^\ast\tilde\psi_-^{(2)}) + O(r^4) \nonumber \\
}{}
& = \frac{\lambda}{2} r^2 + 2\sqrt{2}r^3\Re(\tilde\psi_-^{(1)}{}^\ast\tilde\psi_-^{(2)}) + O(r^4) \quad , \\
\bar{\tilde\psi}\Gamma_-\tilde\psi
& = \sqrt{2}\tilde\psi_+^\ast\tilde\psi_+ \nonumber \\
\iftoggle{complete}{
& = \sqrt{2}\tilde\psi_+^{(0)}{}^\ast\tilde\psi_+^{(0)} + 2\sqrt{2}r^3\Re(\tilde\psi_+^{(0)}\tilde\psi_+^{(3)}) + O(r^4) \nonumber \\
}{}
& = 1 + 2\sqrt{2}r^3\Re(\tilde\psi_+^{(0)}\tilde\psi_+^{(3)}) + O(r^4) \quad , \\
\bar{\tilde\psi}\tilde\Lambda^A\tilde\psi
& = -2r\Re(\tilde\psi_+^{(0)}{}^\ast\tilde\sigma^{A(0)}\tilde\psi_-^{(1)}) - 2r^2\Re(\tilde\psi_+^{(0)}{}^\ast\tilde\sigma^{A(0)}\tilde\psi_-^{(2)} + O(r^3) \nonumber \\
& = - 2r^2\Re(\tilde\psi_+^{(0)}{}^\ast\tilde\sigma^{A(0)}\tilde\psi_-^{(2)} + O(r^3) \quad .
\end{align}
\end{subequations}
We can then obtain the components of the vector $\xi^a$.
\begin{align}
\label{eq:spinor-symmetry-components-r}
\xi^r
& = \bar{\tilde\psi}\tilde\Gamma^r\tilde\psi \nonumber \\
& = \bar{\tilde\psi}\tilde(\Gamma_+ + \alpha\Gamma_- + \beta_A\tilde\Lambda^A)\tilde\psi \nonumber \\
& = r^3[\alpha^{(3)} - 2\sqrt{2}\tilde\psi_-^{(1)}{}^\ast\tilde\psi_-^{(2)}] + O(r^4) \quad , \\
\label{eq:spinor-symmetry-components-u}
\xi^u
& = \bar{\tilde\psi}\tilde\Gamma^u\tilde\psi \nonumber \\
& = \bar{\tilde\psi}\Gamma_-\tilde\psi \nonumber \\
& = 1 + 2\sqrt{2}r^3\tilde\psi_+^{(0)}{}^\ast\tilde\psi_+^{(3)} + O(r^4) \quad .
\end{align}
Since the $A$ component involves the curved space gamma matrix $\tilde\Lambda^A$, a little more care is necessary.
Instead of directly substituting the spinor expansions, we take $r$ derivatives of $\xi^A$ and set $r=0$ to determine the expansion coefficient at each order.
Since $\xi^A$ is a scalar, it does not matter which derivative operator we use; we choose the covariant derivative $\tilde\nabla_r$.
Firstly, setting $r=0$ in $\xi^A$ itself,
\begin{equation}
\xi^{A(0)} = \bar{\tilde\psi}^{(0)}\tilde\Lambda^{A(0)}\tilde\psi^{(0)} = -2\Re(\tilde\psi_+^{(0)}{}^\ast\tilde\sigma^{A(0)}\tilde\psi_-^{(0)}) = 0 \quad .
\end{equation}
Then taking the first derivative, and setting $r=0$, we find
\begin{align}
\xi^{A(1)}
& = 2\Re\bar{\tilde\psi}^{(0)}\tilde\Lambda^{A(0)}\tilde\psi^{(1)} + \tilde{\bar\psi}^{(0)}\left[\tilde\nabla_r\tilde\Lambda^A\right]_{r=0}\tilde\psi^{(0)} \nonumber \\
& = -2\Re(\tilde\psi_+^{(0)}{}^\ast\tilde\sigma^{A(0)}\tilde\psi_-^{(1)}) + \tilde{\bar\psi}^{(0)}\left(-\frac{1}{2}s^{AB}\gamma_{BC}^{(1)}\tilde\Lambda^{C(0)}\right)\tilde\psi^{(0)} \nonumber \\
& = 0 \quad .
\end{align}
Finally, taking a second derivative and setting $r=0$, we get
\begin{align}
2\ \xi^{A(2)}
& = \bar{\tilde\psi}^{(0)}\left[\tilde\nabla_r^2\tilde\Lambda^A\right]_{r=0}\tilde\psi^{(0)} + 3\Re\left(\bar{\tilde\psi}^{(0)}\left[\tilde\nabla_r\tilde\Lambda^A\right]_{r=0}\tilde\psi^{(1)}\right) \nonumber \\
& + 4\Re(\tilde\psi^{(0)}\tilde\Lambda^{A(0)}\tilde\psi^{(2)}) - 4\Re(\tilde\psi_+^{(1)}\tilde\sigma^{A(0)}\tilde\psi_-^{(1)}) \nonumber \\
\iftoggle{complete}{
& = \bar{\tilde\psi}^{(0)}\left(\beta^{A(2)}\Gamma_- - \frac{3}{4}\gamma^{AB(1)}\gamma_{BC}^{(1)}\tilde\Lambda^{C(0)} - s^{AB}\gamma_{BC}^{(2)}\tilde\Lambda^{C(0)}\right)\tilde\psi^{(0)} \nonumber \\
& + \frac{3}{2}s^{AB}\gamma_{BC}^{(1)}\Re(\tilde\psi_+^{(0)}\tilde\sigma^{C(0)}\tilde\psi_-^{(1)}) - 4\Re(\tilde\psi_+^{(0)}\tilde\sigma^{A(0)}\tilde\psi_-^{(2)}) \nonumber \\
& = \beta^{A(2)}\sqrt{2}\tilde\psi_+^{(0)}{}^\ast\tilde\psi_+^{(0)} - 4\Re(\tilde\psi_+^{(0)}\tilde\sigma^{A(0)}\tilde\psi_-^{(2)}) \nonumber \\
}{}
& = \beta^{A(2)} - 4\Re(\tilde\psi_+^{(0)}\tilde\sigma^{A(0)}\tilde\psi_-^{(2)}) \quad .
\end{align}
Summarising these results, we obtain equation (\ref{eq:spinor-xi-expansion-A}), as required.
\end{proof}
\begin{remark}
The expansion we have obtained for $\xi^a$ is indeed a special case of the expansions for asymptotic symmetries we wrote down in Theorem \ref{thm:asymptotic-symmetries}.
\end{remark}

\begin{lemma}
\label{thm:bondi-spinor-mass}
The Bondi mass on the cross section $\Sigma(0,0)$ is equal to the spinor mass,
\begin{equation}
m(0) = \int_{\Sigma(0,0)} B \quad .
\end{equation}
\end{lemma}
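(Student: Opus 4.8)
The plan is to evaluate $\int_{\Sigma(0,0)}B$ directly in conformal Gaussian null coordinates and to match it term by term with the Bondi mass of Theorem~\ref{thm:bondi-mass}, specialised to $d=4$ (so $N=1$, $2N+1=3$ and $\alpha^{(d-1)}=\alpha^{(3)}$), which reads
\begin{equation*}
m(0) = \frac{A}{8\pi G}\int_{\Sigma(0,0)}\left(\frac{1}{4}\gamma^{AB(1)}\partial_u\gamma_{AB}^{(1)} - 2\alpha^{(3)}\right)\quad .
\end{equation*}
First I would pull the $2$-form $B_{a_1a_2}=\nabla^{[d}\xi^{c]}\epsilon_{cda_1a_2}$ back to the cross section $\Sigma(0,0)=\scri^+\cap\mathscr{S}$. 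Since $\Sigma(0,0)$ carries the coordinates $x^A$ and the physical volume form $\epsilon_{cda_1a_2}$ is non-zero only when the index pair $\{c,d\}$ runs over the two directions $r,u$ transverse to $\Sigma(0,0)$, the pullback collapses to a Komar-type surface integral for the asymptotic symmetry $\xi^a=\bar\psi\Gamma^a\psi$, essentially $\int_{\Sigma(0,0)}\big(\nabla^{r}\xi^{u}-\nabla^{u}\xi^{r}\big)\,{}^{(2)}\tilde\epsilon$ once the conformal weights $\sqrt{-g}=r^{-4}\sqrt{-\tilde g}$ and $g^{ab}=r^{2}\tilde g^{ab}$ are accounted for.

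Next I would insert the metric expansion of Theorem~\ref{thm:metric-expansion-low-order} and the expansion of $\xi^a$ from Lemma~\ref{thm:spinor-xi-expansion}, and expand the integrand in powers of $r$ before taking $r\to0$. Exactly as in the proof of Theorem~\ref{thm:bondi-mass}, the integrand will contain terms that are superficially singular at $\scri^+$; these should turn out to be total $\mathscr{D}_A$-divergences on the surfaces $\Sigma(r,u)$, hence drop out of the integral over the closed cross section by Gauss' theorem, so that the $r\to0$ limit of the $O(1)$ part is finite and well defined. What survives is a combination of spinor bilinears of the Witten spinor — schematically $\Re[\tilde\psi_+^{(0)\ast}\tilde\psi_+^{(3)}]$, $\Re[\tilde\psi_+^{(0)\ast}\tilde\sigma^{A(0)}\mathscr{D}_A\tilde\psi_-^{(2)}]$ and $\tilde\psi_-^{(1)\ast}\tilde\psi_-^{(2)}$ — together with $\alpha^{(3)}$ and $\mathscr{D}^A\beta_A^{(2)}$.

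The decisive step is to invoke the constraint (\ref{eq:psi+-3-condition}) of Theorem~\ref{thm:witten-solution-4d}, which expresses precisely this combination of spinor bilinears in terms of $\alpha^{(3)}$, $\mathscr{D}^A\beta_A^{(2)}$ and $\frac{1}{4}\gamma^{AB(1)}\partial_u\gamma_{AB}^{(1)}$. Substituting (\ref{eq:psi+-3-condition}) into the expression for $\int_{\Sigma(0,0)}B$ eliminates every spinorial quantity; the angular-derivative term $\mathscr{D}^A\beta_A^{(2)}$ integrates to zero over the closed surface, and what remains is exactly $\frac{A}{8\pi G}\int_{\Sigma(0,0)}\big(\frac{1}{4}\gamma^{AB(1)}\partial_u\gamma_{AB}^{(1)}-2\alpha^{(3)}\big)=m(0)$ — the normalisation constant $A$ being pinned down by this matching.

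The main obstacle I expect is the bookkeeping of conformal weights and numerical coefficients: one must be careful about how $\epsilon_{cda_1a_2}$, the raising of indices with $g^{ab}$, and the unphysical area element ${}^{(2)}\tilde\epsilon$ combine, and then check that the coefficients produced by (\ref{eq:psi+-3-condition}) line up \emph{exactly} with $\frac{1}{4}\gamma^{AB(1)}\partial_u\gamma_{AB}^{(1)}-2\alpha^{(3)}$ and with the $\frac{A}{8\pi G}$ prefactor of the Bondi mass density, and not merely up to an overall constant. Recognising the total-divergence structure of the would-be singular contributions, so that they may legitimately be discarded before the limit $r\to0$ is taken, is the other delicate point, and it is entirely parallel to the argument already used in the proof of Theorem~\ref{thm:bondi-mass}.
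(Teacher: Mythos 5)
Your proposal follows essentially the same route as the paper: evaluate $\int_{\Sigma(r,0)}B$ in coordinates as a Komar-type surface integral $2\int(\partial_{[a}r^{-2}\tilde{g}_{b]c}\xi^c)\tilde{g}^{au}\tilde{g}^{br}\,{}^{(2)}\tilde\epsilon$, insert the expansion of $\xi^a=\bar\psi\Gamma^a\psi$ from Lemma \ref{thm:spinor-xi-expansion}, use the third-order Witten constraint (\ref{eq:psi+-3-condition}) to trade the spinor bilinears for $\frac{1}{8}\gamma^{AB(1)}\partial_u\gamma_{AB}^{(1)}-\alpha^{(3)}$ plus total divergences, discard the divergences by Gauss' theorem, and compare with the $d=4$ Bondi mass formula. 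This is exactly the paper's argument (the paper, like you, only matches the result up to the overall normalisation constant $A$), so the proposal is correct.
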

\begin{proof}
To prove that the Bondi mass formula (\ref{eq:bondi-mass-formula}) is equivalent to the spinor mass (\ref{eq:spinor-mass}), we simply evaluate the spinor mass in coordinates and compare the result to the Bondi mass formula.
\medskip\noindent
Now,
\begin{align}
\int_{\Sigma(r,0)} B
& = \int_{\Sigma(r,0)} \nabla^{[d}\xi^{c]}\epsilon_{cda_1a_2} \nonumber \\
& = \int_{\Sigma(r,0)} \nabla^{[d}\xi^{c]} \sqrt{g} (dr\wedge du\wedge dx^1\wedge dx^2)_{cda_1a_2} \nonumber \\
& = 2 \int_{\Sigma(r,0)} (\partial_{[a} g_{b]e}\xi^e) g^{ad}g^{cb} dr_c du_d \sqrt{g} (dx^1\wedge dx^2)_{a_1a_2} \quad ,
\end{align}
where in the last line we used that $\nabla_{[a}t_{b]}=\partial_{[a}t_{b]}$.
Rewriting in terms of unphysical quantities, we find
\begin{align}
\int_{\Sigma(r,0)} B
& = 2 \int_{\Sigma(r,0)} (\partial_{[a} r^{-2} \tilde{g}_{b]e}\xi^e) \ r^4 \ \tilde{g}^{ad}\tilde{g}^{cb} dr_c du_d \frac{1}{r^4} \sqrt{\tilde{g}} (dx^1\wedge dx^2)_{a_1a_2} \nonumber \\
& = 2 \int_{\Sigma(r,0)} (\partial_{[a} r^{-2} \tilde{g}_{b]c}\xi^c) \tilde{g}^{au}\tilde{g}^{br} \sqrt{\gamma} (dx^1\wedge dx^2)_{a_1a_2} \nonumber \\
& = 2 \int_{\Sigma(r,0)} (\partial_{[a} r^{-2} \tilde{g}_{b]c}\xi^c) \tilde{g}^{au}\tilde{g}^{br} \ {}^{(2)}\tilde\epsilon_{a_1a_2} \quad .
\end{align}
Finally, we calculate the integrand and find that
\begin{align}
\partial_{[a}(r^{-2}\tilde{g}_{b]c}\xi^c) \tilde{g}^{au} \tilde{g}^{br}
\iftoggle{complete}{
& = \partial_{[r}(r^{-2}\tilde{g}_{u]c}\xi^c) + \beta^A\partial_r{[r}(r^{-2}\tilde{g}_{A]}\xi^c) \nonumber \\
& = \frac{1}{2}\partial_r(r^{-2}\tilde{g}_{uc}\xi^c) - \frac{1}{2}r^{-2}\partial_u(\tilde{g}_{rc}\xi^c) + \frac{1}{2}\beta^A\partial_r(r^{-2}\tilde{g}_{Ac}\xi^c) - \frac{1}{2}r^{-2}\beta^A\partial_A(\tilde{g}_{rc}\xi^c) \nonumber \\
& = -r^{-3}(\xi^r  - 2\alpha\xi^0) + \frac{1}{2}r{-2}\partial_r(\xi^r - 2\alpha\xi^0) - \frac{1}{2}r^{-2}\tilde\nabla_u\xi^u + \frac{1}{2}r^{-2}(\partial_r\alpha\xi^0) \nonumber \\
& = - 6\sqrt{2}\tilde\psi_+^{(0)}{}^\ast\tilde\psi_+^{(3)} - \sqrt{2}\tilde\psi_-^{(1)}{}^\ast\tilde\psi_-^{(2)} + \mathscr{D}_A(\tilde\psi_+^{(0)}\tilde\sigma^{A(0)}\tilde\psi_-^{(2)}) + \mathscr{D}_A v^A + O(r) \nonumber \\
}{}
& = -6\sqrt{2}\tilde\psi_+^{(0)}{}^\ast\tilde\psi_+^{(3)} + \tilde\psi_+^{(0)}\tilde\sigma^{A(0)}\mathscr{D}_A\tilde\psi_-^{(2)} + \mathscr{D}_A v^A + O(r) \quad ,
\end{align}
for some vector $v^A$ on the surfaces $\Sigma(r,0)$.
If we apply relation (\ref{eq:psi+-3-condition}), we find that
\begin{align}
\partial_{[a}(r^{-2}\tilde{g}_{b]c}\xi^c) \tilde{g}^{au} \tilde{g}^{br}
& = \frac{1}{8}\gamma^{AB(1)}\partial_u\gamma_{AB}^{(1)} - \alpha^{(3)} + \mathscr{D}_A w^A + O(r) \quad ,
\end{align}
for another vector $w^A$ on the surfaces $\Sigma(r,0)$.
Integrating over $\Sigma(r,0)$ and taking the limit at $\scri^+$, we find that the spinor mass is
\begin{equation}
\lim_{r\rightarrow0} \int_{\Sigma(r,0)} B = \int_{\Sigma(0,0)} \left( \frac{1}{8}\gamma^{AB(1)}\partial_u\gamma_{AB}^{(1)} - \alpha^{(3)} \right) \quad ,
\end{equation}
since the total divergence and terms $O(r)$ have dropped out.
Up to a constant factor, this is precisely the Bondi mass formula (\ref{eq:bondi-mass-formula}) for $d=4$.
\end{proof}
\noindent
We are now able to complete the proof of the positivity result.
\begin{proof}[Proof of Theorem \ref{thm:bondi-positivity-4d}]
Define $\mathscr{S}(r)$ to be the section of $\mathscr{S}$ bounded by $\Sigma(r,r/2)$.
Therefore, applying Stokes' theorem, we have
\begin{align}
m(0)
& = \lim_{r\rightarrow0}\int_{\Sigma(r,r/2)}B \nonumber \\
& = \lim_{r\rightarrow0}\int_{\mathscr{S}(r)} dB \nonumber \\
& \geq 0 \quad .
\end{align}
Inner boundary terms might also arise from applying Stokes' theorem, but they can be eliminated by imposing suitable boundary conditions on the spinor $\psi$ \cite{Horowitz}.
We have thus proven the positivity of the Bondi mass $m(0)$ on the cross section $\Sigma(0,0)$ of $\scri^+$.
Since our initial choice of the cross section of $\scri^+$ on which $u=0$ was arbitrary, it follows that the Bondi mass is positive on all cross sections,
\begin{equation}
m(u) \geq 0 \quad ,
\end{equation}
for all $u$.
\end{proof}

\section{Conclusion}
\label{sec:conclusion}

We first gave a definition of asymptotic flatness within the framework of conformal infinity.
We specified the minimal conditions necessary to ensure that the metric expansion coefficients could be determined recursively by the vacuum Einstein equations.
Starting from formulae given in \cite{Hollands:2003ie}, we then obtained simple coordinate formulae for the Bondi mass, news and flux in terms of these expansion coefficients, in even dimension $d\geq4$, .
We cannot apply the same procedure in odd dimensions, since the framework of conformal null infinity does not apply in that case.
Even if we did try to apply our formalism in odd dimensions we would find that the News tensor $N_{AB}=0$ and thus radiating spacetimes would be excluded from consideration.

Having obtained the asymptotic expansion of a Witten spinor in four dimensions, we defined the spinor mass and showed that it was positive by a standard argument.
The Bondi mass was shown to equal the spinor mass by comparing the coordinate expressions for both.
This comprised a spinorial proof that the Bondi mass is positive in the four dimensional case.

Since the Bondi mass formula is valid in arbitrary even dimensions, the positivity argument can, in principle, be generalised to higher dimensions.
The spinor formalism readily generalises to the higher dimensional case: we can obtain a spatial Witten equation from the Dirac equation on a spacelike, asymptotically null hypersurface; and we can solve for the spinor expansion at each order of $r$.
A complication arises, however, when calculating the spinor mass.
Indeed, apparently singular terms cause more of a problem in the higher dimensional case than in four dimensions.
Most such terms can be shown to be total divergences and so they do not affect the integral.
However, it is not clear that terms of the form $\tilde\psi_-^{(1)}{}^\ast\tilde\psi_-^{(d/2)}$ can be dealt with in this way.
In four dimensions, these terms cancelled, but this does not appear to happen in the higher dimensional case.
The result is that in higher dimensions the spinor mass may not even be convergent, let alone equal to the Bondi mass.
If this is the case, and not the result of a calculation error, it would mean that a spinorial proof of positivity is simply not possible in higher dimensions.
This could be a problem with the spinor method, or perhaps is a sign that the positivity result simply does not hold for the Bondi mass in higher dimensions.
It may turn out, however, that these terms do indeed cancel after corrections to the calculations have been made.
Should this be the case then the proof could proceed as in four dimensions, which would demonstrate the positivity of the Bondi mass in even dimension $d\geq4$.

\appendix
\section{Supplementary Notes}
\label{app:notes}
\setcounter{section}{1}

\begin{note}
\label{note:asymptotic-flatness}
Standard definitions of asymptotic flatness \cite{Hollands:2003ie,Wald} invoke the notion of a background spacetime.
As shown in Corollary \ref{thm:background-metric}, we can rewrite the asymptotic expansions of Theorem \ref{thm:metric-expansion-low-order} in the notation of these definitions of asymptotic flatness.

\noindent
If we define an unphysical ``background metric'',
\begin{equation}
\label{eq:background-metric}
\tilde{\bar{g}}_{ab} = 2dr_{(a}du_{b)} - r^2du_a du_b + s_{AB}dx^A_a dx^B_b \quad ,
\end{equation}
then an asymptotically flat metric $g_{ab}$ satisfies the following conditions,
\begin{subequations}
\label{eq:asymptotic-falloff}
\begin{align}
& \label{eq:asymptotic-falloff-gab}
  \tilde{g}_{ab} - \tilde{\bar{g}}_{ab} = O(r^{(d-2)/2}) \quad , \\
& \label{eq:asymptotic-falloff-gau}
  (\tilde{g}_{ab} - \tilde{\bar{g}}_{ab})(\partial_u)^a = O(r^{d/2}) \quad , \\
& \label{eq:asymptotic-falloff-guu}
  (\tilde{g}_{ab} - \tilde{\bar{g}}_{ab})(\partial_u)^a(\partial_u)^b = O(r^{(d+2)/2}) \quad ,
\end{align}
and when $d>4$ we have in addition that,
\begin{equation}
\label{eq:asymptotic-falloff-trace}
\tilde{g}^{ab}\tilde{g}_{ab} - \tilde{\bar{g}}^{ab}\tilde{\bar{g}}_{ab} = O(r^{2d-2}) \quad ,
\end{equation}
or equivalently,
\begin{align}
& \label{eq:asymptotic-falloff-hollands-trace}
  \tilde\epsilon_{ab\ldots} - \tilde{\bar\epsilon}_{ab\ldots} = O(r^{2d-2}) \quad ,
\end{align}
\end{subequations}
where $\tilde\epsilon_{ab\ldots}$ and $\tilde{\bar\epsilon}_{ab\ldots}$ are the volume elements of the unphysical metric $\tilde{g}_{ab}$ and the unphysical background $\tilde{\bar{g}}_{ab}$, respectively.
These conditions match conditions (3) and (4) in the definition of asymptotic flatness of \cite{Hollands:2003ie}, except that the present condition on the volume element is stronger.
\end{note}

\begin{note}
\label{note:lemma-proof}
\begin{proof}[Proof of Lemma \ref{thm:gamma-asymptotics}]
\begin{enumerate}[i)]
\item
By definition,
\begin{equation}
\label{eq:gamma-properties-lemma-gamma-inverse-definition}
0 = \delta^A{}_D = \gamma^{AC}\gamma_{CD} \quad .
\end{equation}
Differentiation with respect to $r$ gives
\begin{equation}
0 = \gamma^{AC}\partial_r\gamma_{CD} + (\partial_r\gamma^{AC})\gamma_{CD} \quad ,
\end{equation}
and following rearrangement and contraction of both sides with $\gamma^{BD}$, we find that,
\begin{equation}
\partial_r\gamma^{AB}
 = \partial_r\gamma^{AC}\delta_C{}^B
 = -\gamma^{AC}\gamma^{BD}\partial_r\gamma_{CD} = O(r^{k-1}) \quad .
\end{equation}
Integrating with respect to $r$ gives
\begin{equation}
\gamma^{AB} = \gamma^{AB(0)} + O(r^k) \quad .
\end{equation}
Now, equation (\ref{eq:gamma-properties-lemma-gamma-inverse-definition}) at order $1$ gives
\begin{equation}
0 = \delta^A{}_D = \gamma^{AC(0)}s_{CD} \quad .
\end{equation}
Since the inverse of $s_{CD}$ is unique, $\gamma^{AC(0)}=s^{AC}$. This completes the proof of (\ref{eq:gamma-inverse-asymptotics}).
\item
Since $\partial_r\gamma_{AB}=O(r^{k-1})$, it follows from the $rr$ component of the Einstein equations (\ref{eq:grtensors-einstein-rr}) that
\begin{equation}
\gamma^{AB}\partial_r^2\gamma_{AB}=O(r^{2k-2}) \quad .
\end{equation}
Therefore,
\begin{equation}
s^{AB}\partial_r^2\gamma_{AB} = \gamma^{AB}\partial_r^2\gamma_{AB} - (\gamma^{AB} - s^{AB})\partial_r^2\gamma_{AB} = O(r^{2k-2}) \quad ,
\end{equation}
where we used that $\partial_r^2\gamma_{AB}=O(r^{k-2})$ and $\gamma^{AB}-s^{AB}=O(r^k)$ by part (\ref{eq:gamma-inverse-asymptotics}).
Integrating twice with respect to $r$ gives
\begin{equation}
s^{AB}\gamma_{AB} = c_1 + c_2r + O(r^{2k}) \quad .
\end{equation}
But we also know that
\begin{equation}
s^{AB}\gamma_{AB} = s^{AB}s_{AB} + O(r^k) = (d-2) + O(r^k) \quad .
\end{equation}
Since $k>1$, comparing the two expressions yields $c_1=d-2$ and $c_2=0$.
This completes the proof of (\ref{eq:gamma-trace-asymptotics}).
\item
The Christoffel symbols of $\gamma_{AB}$ are given by
\begin{equation}
\tilde\Lambda^C{}_{AB} = \frac{1}{2}\gamma^{CD}(\partial_A\gamma_{BD} + \partial_B\gamma_{AD} - \partial_D\gamma_{AB}) \quad .
\end{equation}
Since this expression contains only $\gamma_{AB}$ with no $r$ derivatives or explicit powers of $r$, the result immediately follows.
Note that
\begin{equation}
\label{eq:gamma-christoffel-symbols-order-1}
\tilde\Lambda^C{}_{AB}^{(0)} = \frac{1}{2}s^{CD}(\partial_A s_{BD} + \partial_B s_{AD} - \partial_D s_{AB}) \quad .
\end{equation}
\item
The Ricci tensor of $\gamma_{AB}$ is defined by
\begin{equation}
\mathcal{R}_{AB} = \partial_C\Lambda^C{}_{AB} - \partial_A\Lambda^C{}_{CB} + \Lambda^D{}_{AB}\Lambda^C{}_{CD} - \Lambda^D{}_{CB}\Lambda^C{}_{DA} \quad .
\end{equation}
By the same argument as in the proof of part (iii), $\mathcal{R}_{AB}$ has the same asymptotic behaviour as $\tilde\Lambda^C{}_{AB}$ and hence $\gamma_{AB}$.
\end{enumerate}
\end{proof}
\end{note}

\begin{note}
\begin{proof}[Relations between physical and unphysical spinor derivatives\quad]
\label{note:conformal-spinor-connection}
Here we provide the calculations leading to the relations (\ref{eq:conformal-spinor-connection}) and (\ref{eq:conformal-spinor-derivative}).
The physical and unphysical derivative operators act on a spinor $\psi$ according to
\begin{align}
& \nabla_a\psi = \partial_a\psi + \omega_a\psi \quad , \\
& \tilde\nabla_a\psi = \partial_a\psi + \tilde\omega_a\psi \quad .
\end{align}
Subtracting one equation from the other yields a the relationship between the derivative operators,
\begin{equation}
\label{eq:conformal-spinor-derivative-halfway}
(\nabla_a - \tilde\nabla_a)\psi = (\omega_a-\tilde\omega_a)\psi \quad .
\end{equation}
Applying the definitions of the physical and unphysical connection coefficients, we have
\begin{equation}
\omega_a - \tilde\omega_a
 = \frac{1}{8}(e^{\mu b}\nabla_a e^\nu_b - \tilde{e}^{\mu b}\tilde\nabla_a\tilde{e}_b^\nu)[\Gamma_\mu,\Gamma_\nu] \quad .
\end{equation}
From equation (\ref{eq:spinor-basis-conformal-relation}), we can express the physical basis $e^\mu_a$ in terms of the unphysical basis $\tilde{e}^\mu_a$.
We find
\begin{align}
\label{eq:conformal-spinor-connection-halfway}
\omega_a - \tilde\omega_a
& = \frac{1}{8}(r\tilde{e}^{\mu b}\nabla_a(r^{-1}\tilde{e}^\nu_b) - \tilde{e}^{\mu b}\tilde\nabla_a\tilde{e}^\nu_b)[\Gamma_\mu,\Gamma_\nu] \nonumber\\
& = \frac{1}{8}(-r^{-1}\tilde{e}^{\mu b}\tilde{e}^\nu_b\nabla_a r + \tilde{e}^{\mu b}\nabla_a\tilde{e}^\nu_b - \tilde{e}^{\mu b}\tilde\nabla_a\tilde{e}^\nu_b)[\Gamma_\mu,\Gamma_\nu] \nonumber \\
& = \frac{1}{8}(r^{-1}\lambda^{\mu\nu}\tilde\nabla_a r + \tilde{e}^{\mu b}(\nabla_a - \tilde\nabla_a)\tilde{e}^\nu_b)[\Gamma_\mu,\Gamma_\nu] \quad ,
\end{align}
where in the first term of the last line, we used the equivalence of derivative operators acting on a scalar,
and the fact that the vectors $\tilde{e}^{\mu a}$ are orthonormal.
We have thus reduced the difference between spinor connection coefficients to a difference between derivative operators acting on an ordinary covector.
We can therefore apply the standard relation for derivatives under conformal transformations, for a covector $t_a$,
\begin{equation}
\nabla_a t_b = \tilde\nabla_a t_b - \tilde{C}^c{}_{ab}t_c \quad ,
\end{equation}
where the ``conformal connection coefficients'' are given by
\begin{align}
\tilde{C}^c{}_{ab}
& = \frac{1}{2}g^{cd}(\tilde\nabla_a g_{bd} + \tilde\nabla_b g_{ad} - \tilde\nabla_d g_{ab}) \nonumber \\
& = \frac{1}{2}r^2\tilde{g}^{cd}(\tilde\nabla_a r^{-2} \tilde{g}_{bd} + \tilde\nabla_b r^{-2} \tilde{g}_{ad} - \tilde\nabla_d r^{-2} \tilde{g}_{ab}) \nonumber \\
& = -r^{-1}(\delta^c{}_a \tilde\nabla_a r + \delta^c{}_b \tilde\nabla_b r - \tilde{g}_{ab} \tilde\nabla^c r) \quad .
\end{align}
We find that
\begin{align}
\nabla_a\tilde{e}^\nu_b
& = \tilde\nabla_a\tilde{e}^\nu_b - \tilde{C}^c{}_{ab}\tilde{e}^\nu_c \nonumber \\
& = \tilde\nabla_a\tilde{e}^\nu_b + r^{-1}(\delta^c{}_a\tilde\nabla_a r + \delta^c{}_b\tilde\nabla_b r - \tilde{g}_{ab}\tilde\nabla^c r)\tilde{e}^\nu_c \nonumber \\
& = r^{-1}(\tilde{e}^\nu_b\tilde\nabla_a r + \tilde{e}^\nu_a\tilde\nabla_b r - \tilde{g}_{ab}\tilde{e}^{\nu c}\tilde\nabla_c r) \quad .
\end{align}
Substituting this back into equation (\ref{eq:conformal-spinor-connection-halfway}), we get
\begin{align}
\omega_a - \tilde\omega_a
& = \frac{1}{8}[-r^{-1}\lambda^{\mu\nu}\tilde\nabla_a r + r^{-1}\tilde{e}^{\mu b}(\tilde{e}^\nu_b\tilde\nabla_a r + \tilde{e}^\nu_a \tilde\nabla_b r - \tilde{g}_{ab}\tilde{e}^{\nu c}\tilde\nabla_c r)][\Gamma_\mu,\Gamma_\nu] \nonumber \\
& = \frac{1}{8}[-r^{-1}\lambda^{\mu\nu}\tilde\nabla_a r + r^{-1}\lambda^{\mu\nu}\tilde\nabla_a r + r^{-1}\tilde{e}^\nu_a\tilde{e}^{\mu b}\tilde\nabla_b r - r^{-1}\tilde{e}^\mu_a\tilde{e}^\nu_b\tilde\nabla^b r][\Gamma_\mu,\Gamma_\nu] \nonumber \\
& = -\frac{1}{4}r^{-1}\tilde{e}^{[\mu}_a\tilde{e}^{\nu]}_b(\tilde\nabla^b r)[\Gamma_\mu,\Gamma_\nu] \nonumber \\
& = \frac{1}{4}r^{-1}\tilde{e}^{[\mu}_a\tilde{e}^{\nu]}_b(\tilde\nabla^b r)[\Gamma_\nu,\Gamma_\mu] \nonumber \\
& = \frac{1}{4}r^{-1}\tilde{e}^{[\mu}_a\tilde{e}^{\nu]}_b(\tilde\nabla^b r)(\{\Gamma_\nu,\Gamma_\mu\} - 2\Gamma_\mu\Gamma_\nu) \nonumber \\
& = \frac{1}{2}r^{-1}\tilde{e}^\mu_a\tilde{e}^\nu_b(\tilde\nabla^b r)\lambda_{\nu\mu} - \frac{1}{2}r^{-1}\tilde{e}^\mu_a\tilde{e}^\nu_b(\tilde\nabla^b r)\Gamma_\mu\Gamma_\nu \nonumber \\
& = \frac{1}{2}r^{-1}\tilde{g}_{ab}\tilde\nabla^b r - \frac{1}{2}r^{-1}\tilde\Gamma_a\tilde\Gamma_b\tilde\nabla^r b \nonumber \\
& = \frac{1}{2}r^{-1}(\tilde\nabla_a r - \tilde\Gamma_a\tilde\Gamma_b\tilde\nabla^b r) \quad .
\end{align}
Substituting this back into equation (\ref{eq:conformal-spinor-derivative-halfway}) gives
\begin{equation}
\label{eq:conformal-spinor-derivative-halfway-2}
\nabla_a\psi = \tilde\nabla_a\psi + \frac{1}{2}r^{-1}(\tilde\nabla_a r - \tilde\Gamma_a\tilde\Gamma_b\tilde\nabla^b r)\psi \quad ,
\end{equation}
which is the required relation between physical and unphysical derivatives.
All that remains is to introduce the unphysical spinor $\tilde\psi = r^{1/2}\psi$.
It follows that
\begin{align}
\tilde\nabla_a\psi
& = \tilde\nabla_a(r^{-1/2}\tilde\psi) \nonumber \\
& = r^{-1/2}\tilde\nabla_a\tilde\psi - \frac{1}{2}r^{-3/2}(\tilde\nabla_a r)\tilde\psi \quad .
\end{align}
As a result, equation (\ref{eq:conformal-spinor-derivative-halfway-2}) becomes
\begin{align}
\nabla_a\psi
& = r^{-1/2}\tilde\nabla_a\tilde\psi - \frac{1}{2}r^{-3/2}(\tilde\nabla_a r)\tilde\psi + \frac{1}{2}r^{-1}(\tilde\nabla_a r - \tilde\Gamma_a\tilde\Gamma_b\tilde\nabla^b r)(r^{-1/2}\tilde\psi) \nonumber \\
& = r^{-1/2}\left[\tilde\nabla_a\tilde\psi - \frac{1}{2}r^{-1}(\tilde\nabla_a r)\tilde\psi + \frac{1}{2}r^{-1}(\tilde\nabla_a r)\tilde\psi - \frac{1}{2}r^{-1}\tilde\Gamma_a\tilde\Gamma_b(\tilde\nabla^b r)\tilde\psi\right] \nonumber \\
& = r^{-1/2}\left[\tilde\nabla_a\tilde\psi - \frac{1}{2}r^{-1}\tilde\Gamma_a\tilde\Gamma_b(\tilde\nabla^b r)\tilde\psi\right] \quad , 
\end{align}
which is the desired result.
\end{proof}
\end{note}

\begin{note}
\label{note:gamma-matrices-derivatives}
\begin{proof}[Derivatives of Gamma Matrices]
$\\$Here, we derive the expressions (\ref{eq:grspinors-gamma-matrices-derivatives}).
First we must calculate the $r$ derivatives of the components of the curved space Gamma matrices, $\tilde\Gamma^a$.
By definition, $\tilde\nabla_a\tilde\Gamma_b=0$.
Therefore,
\begin{align}
\label{eq:gamma-matrix-r-derivative-r}
\tilde\nabla_r\tilde\Gamma^r
& = \tilde\nabla_r(\tilde\Gamma^bdr_b) \nonumber \\
& = \tilde\Gamma^b\tilde\nabla_r dr_b \nonumber \\
& = \tilde\Gamma^b(\partial_r dr_b - \tilde\Gamma^c{}_{rb}dr_c) \nonumber \\
& = -\tilde\Gamma^b\tilde\Gamma^r{}_{rb} \nonumber \\
& = -\tilde\Gamma^r\tilde\Gamma^r{}_{rr} - \tilde\Gamma^u\tilde\Gamma^r{}_{ru} - \tilde\Gamma^A\tilde\Gamma^r{}_{rA} \nonumber \\
& = \left(\partial_r\alpha + \frac{1}{2}\beta^A\partial_r\beta_A\right)\Gamma_- + \left(\frac{1}{2}\partial_r\beta_A - \frac{1}{2}\beta^B\partial_r\gamma_{AB}\right)\tilde\Lambda^A \quad .
\end{align}
Proceeding in the same way for the $u$ component, we find,
\begin{align}
\label{eq:gamma-matrix-r-derivative-u}
\tilde\nabla_r\tilde\Gamma^u
& = \tilde\nabla_r(\tilde\Gamma^bdu_b) \nonumber \\
& = \tilde\Gamma^b\tilde\nabla_r du_b \nonumber \\
& = \tilde\Gamma^b(\partial_r dr_b - \tilde\Gamma^c{}_{rb}du_c) \nonumber \\
& = -\tilde\Gamma^b\tilde\Gamma^u{}_{rb} \nonumber \\
& = 0 \quad ,
\end{align}
since $\tilde\Gamma^u{}_{r\mu}=0$ for all $\mu$.
Finally, for the $A$ component,
\begin{align}
\label{eq:gamma-matrix-r-derivative-A}
\tilde\nabla_r\tilde\Gamma^A
& = \tilde\nabla_r(\tilde\Gamma^bdx^A_b) \nonumber \\
& = \tilde\Gamma^b\tilde\nabla_r dx^A_b \nonumber \\
& = \tilde\Gamma^b(\partial_r dr_b - \tilde\Gamma^c{}_{rb}dx^A_c) \nonumber \\
& = -\tilde\Gamma^b\tilde\Gamma^A{}_{rb} \nonumber \\
& = -\tilde\Gamma^r\tilde\Gamma^A{}_{rr} - \tilde\Gamma^u\tilde\Gamma^A{}_{ru} - \tilde\Gamma^B\tilde\Gamma^A{}_{rB} \nonumber \\
& = \frac{1}{2}\gamma^{AB}(\partial_r\beta_B)\Gamma_- - \frac{1}{2}\gamma^{AB}(\partial_r\gamma_{BC})\tilde\Lambda^C \quad .
\end{align}
\end{proof}
\noindent
Now we can calculate the derivatives of $\Gamma_+,\Gamma_-$ and $\tilde\Lambda^A$.
Indeed, since $\Gamma_- = \tilde\Gamma^u$ and $\tilde\Lambda^A = \tilde\Gamma^A$, the first $r$ derivatives of $\tilde\Gamma^u$ and $\tilde\Lambda^A$ are given by equations (\ref{eq:gamma-matrix-r-derivative-u}) and (\ref{eq:gamma-matrix-r-derivative-A}) respectively.
It immediately follows that $\tilde\nabla_r^n\Gamma_-=0$ for integers $n\geq1$, which is precisely equation (\ref{eq:grspinors-gamma-matrices-derivatives--}).

\noindent
Expressing $\Gamma_+$ in terms of curved Gamma matrices yields
\begin{equation}
\Gamma_+ = \tilde\Gamma^r - \alpha\tilde\Gamma^u - \beta_A\tilde\Gamma^A \quad .
\end{equation}
Combining this with the expressions above for the derivatives of the curved space Gamma matrices, we find
\begin{align}
\tilde\nabla_r\Gamma_+
& = (\tilde\nabla_r\tilde\Gamma^r) - (\partial_r\alpha)\tilde\Gamma^u - \alpha\tilde\nabla_r\tilde\Gamma^u - (\partial_r\beta_A)\tilde\Lambda^A - \beta_A\tilde\nabla_r\tilde\Lambda^A \nonumber \\
& = \left(\partial_r\alpha + \frac{1}{2}\beta^A\partial_r\beta_A\right)\Gamma_- + \left(\frac{1}{2}\partial_r\beta_A - \frac{1}{2}\beta^B\partial_r\gamma_{AB}\right)\tilde\Lambda^A \nonumber \\
& - (\partial_r\alpha)\Gamma_- - (\partial_r\beta_A)\tilde\Lambda^A - \frac{1}{2}\beta_A\gamma^{AB}(\partial_r\beta_B)\Gamma_- + \frac{1}{2}\beta_A\gamma^{AB}(\partial_r\gamma_{BC})\tilde\Lambda^C \nonumber \\
& = -\frac{1}{2}(\partial_r\beta_A)\tilde\Lambda^A \quad .
\end{align}
This is equation (\ref{eq:grspinors-gamma-matrices-derivatives-+}) for $n=1$.
We now proceed by induction.
%
Now, suppose it is true for some $n\geq1$, that is,
\begin{align}
\tilde\nabla_r^n\Gamma_+ = -\frac{1}{2}(\partial_r^n\beta_A)\tilde\Lambda^A + \frac{1}{4}\sum_{k=1}^{n-1}\frac{(n-1)!}{j!(n-1-j)!}\gamma^{AB}(\partial_r^j\gamma_{AC})(\partial_r^{n-j}\beta_B)\tilde\Lambda^C + O(r^{d-n}) \quad .
\end{align}
Then application of $\tilde\nabla_r$ yields,
\begin{align}
\tilde\nabla_r^{n+1}\Gamma_+ 
& = -\frac{1}{2}(\partial_r^{n+1}\beta_A)\tilde\Lambda^A + \frac{n}{4}\gamma^{AB}(\partial_{AC})(\partial_r^n\beta_B)\tilde\Lambda^C + \frac{1}{4}\gamma^{AB}(\partial_r^n\gamma_{AC})(\partial_r\beta_B)\tilde\Lambda^B \nonumber \\ 
& + \frac{1}{4}\sum_{k=1}^{n-1}\frac{(n-1)!}{(k+1)!(n-1-k)!}\gamma^{AB}(\partial_r^{k+1}\gamma_{AC})(\partial_r^{n-k}\beta_B)\tilde\Lambda^C \nonumber \\
& = -\frac{1}{2}(\partial_r^{n+1}\beta_A)\tilde\Lambda^A + \frac{1}{4}\sum_{k=1}^n \frac{n!}{j!(n-k)!}\gamma^{AB}(\partial_r^k\gamma_{AC})(\partial_r^{n-k+1}\beta_B)\tilde\Lambda^C + O(r^{d-n-1}) \quad ,
\end{align}
where additional terms arising from differentiating have been omitted since they are $O(r^{d-n-1})$.
This is the desired formula, completing the proof by induction.

\noindent
Equation (\ref{eq:grspinors-gamma-matrices-derivatives-A}) has already been shown for $n=1$.
Now suppose it is true for some $n\geq1$.
That is,
\begin{align}
\tilde\nabla_r^n\tilde\Lambda^A
& = \frac{1}{2}\gamma^{AB}(\partial_r^n\beta_B)\Gamma_- - \frac{1}{2}\gamma^{AB}(\partial_r^n\gamma_{BC})\tilde\Lambda^{C} \nonumber \\
& + \frac{1}{4}\sum_{k=1}^{n-1}a_{n,k}\gamma^{AB}\gamma^{CD}(\partial_r^k\gamma_{BC})(\partial_r^{n-k}\gamma_{DE})\tilde\Lambda^E + O(r^{d-n-1}) \quad .
\end{align}
Applying $\tilde\nabla_r$ gives,
\begin{align}
\tilde\nabla_r^{n+1}\tilde\Lambda^A
& = \frac{1}{2}\gamma^{AB}(\partial_r^{n+1}\beta_B)\Gamma_- - \frac{1}{2}\gamma^{AB}(\partial_r^{n+1}\gamma_{BC})\tilde\Lambda^C + \frac{1}{2}\gamma^{AD}\gamma^{BE}(\partial_r^n\gamma_{BC})(\partial_r\gamma_{DE})\tilde\Lambda^C \nonumber \\
& + \frac{1}{4}\sum_{k=1}^{n-1}a_{n,k}\gamma^{AB}\gamma^{CD}(\partial_r^{k+1}\gamma_{BC})(\partial_r^{n-k}\gamma_{DE})\tilde\Lambda^E \nonumber \\
& + \frac{1}{4}\sum_{k=1}^{n-1}a_{n,k}\gamma^{AB}\gamma^{CD}(\partial_r^k\gamma_{BC})(\partial_r^{n+1-k}\gamma_{DE})\tilde\Lambda^E + O(r^{d-n-2}) \nonumber \\
& = \frac{1}{2}\gamma^{AB}(\partial_r^{n+1}\beta_B)\Gamma_- - \frac{1}{2}\gamma^{AB}(\partial_r^{n+1}\gamma_{BC})\tilde\Lambda^C + \frac{1}{2}\gamma^{AB}\gamma^{CD}(\partial_r\gamma_{BC})(\partial_r^n\gamma_{DE})\tilde\Lambda^E \nonumber\\
& + \frac{1}{4}\sum_{k=2}^{n}a_{n,k-1}\gamma^{AB}\gamma^{CD}(\partial_r^{k}\gamma_{BC})(\partial_r^{n+1-k}\gamma_{DE})\tilde\Lambda^E \nonumber \\
& + \frac{1}{4}\sum_{k=1}^{n-1}a_{n,k}\gamma^{AB}\gamma^{CD}(\partial_r^k\gamma_{BC})(\partial_r^{n+1-k}\gamma_{DE})\tilde\Lambda^E + O(r^{d-n-2}) \nonumber \\
& = \frac{1}{2}\gamma^{AB}(\partial_r^{n+1}\beta_B)\Gamma_- - \frac{1}{2}\gamma^{AB}(\partial_r^{n+1}\gamma_{BC})\tilde\Lambda^C \nonumber \\
& + \frac{1}{4}\sum_{k=1}^{n}a_{n+1,k}\gamma^{AB}\gamma^{CD}(\partial_r^{k}\gamma_{BC})(\partial_r^{n+1-k}\gamma_{DE})\tilde\Lambda^E
\end{align}
This is the desired result and so the proof by induction is complete.
\end{note}

\begin{note}
\label{note:grtensors-lemmas}
The following results are useful in the calculation of the Ricci tensor.
\begin{enumerate}[1)]
\item
\begin{equation}
\partial_a\gamma^{AB} = -\gamma^{AC}\gamma^{BD}\partial_a\gamma_{CD}
\end{equation}
\item
\begin{equation}
\partial_a\beta^A = \gamma^{AB}\partial_a\beta_B - \gamma^{AB}\beta^C\partial_a\gamma_{BC}
\end{equation}
\item
\begin{equation}
\partial_a(\beta_A\beta^A) = 2\beta^A\partial_a\beta_A - \beta^A\beta^B\partial_a\gamma_{AB}
\end{equation}
\item
\begin{equation}
\partial_a\tilde\Lambda^C{}_{CA} = \frac{1}{2}\gamma^{BC} D_A \partial_a \gamma_{BC}
\end{equation}
\end{enumerate}
\proof{These results can be verified by direct calculation.}
\end{note}

\footnotesize

\section{Tensors of General Relativity in GNCs}
\label{app:grtensors}
Standard tensors of general relativity are given in Gaussian null coordinates (GNCs).

\subsection{Metric Tensor}
\label{app:grtensors-metric}

\begin{align}
\label{eq:metric}
g_{ab} = r^{-2}\tilde{g}_{ab} &=
r^{-2}[2(dr_{(a} - \alpha du_{(a} - \beta_A dx^A_{(a})du_{b)} + \gamma_{AB} dx^A_a dx^B_b]
\\
\label{eq:metric-inverse}
g^{ab} = r^2\tilde{g}^{ab} &= r^2\left[2\left(\left\{\alpha + \frac{1}{2}\beta_C\beta^C\right\}\partial_r^{(a} + \partial_u^{(a} + \beta^A\partial_A^{(a}\right)\partial_r^{b)} + \gamma^{AB}\partial_A^a\partial_B^b\right]
\end{align}
%

\subsection{Christoffel Symbols}
\label{app:grtensors-christoffel}

\begin{equation}
\label{eq:christoffel}
\tilde\Gamma^c{}_{ab} = \frac{1}{2}\tilde{g}^{cd}(\partial_a\tilde{g}_{bd} + \partial_b\tilde{g}_{ad} - \partial_d\tilde{g}_{ab})
\end{equation}

\begin{equation}
\label{eq:christoffel-gamma}
\tilde\Lambda^C{}_{AB} = \frac{1}{2}\gamma^{CD}(\partial_A\gamma_{BD} + \partial_B\gamma_{AD} - \partial_D\gamma_{AB})
\end{equation}


\begin{equation}
\label{eq:christoffel-r-rr}
\tilde\Gamma^r{}_{rr} = 0\\
\end{equation}

\begin{equation}
\label{eq:christoffel-r-ru}
\tilde\Gamma^r{}_{ru} = -\partial_r\alpha - \frac{1}{2}\beta^A\partial_r\beta_A
\end{equation}

\begin{equation}
\label{eq:christoffel-r-rA}
\tilde\Gamma^r{}_{rA} = -\frac{1}{2}\partial_r\beta_A + \frac{1}{2}\beta^B\partial_r\gamma_{AB}
\end{equation}

\begin{equation}
\label{eq:christoffel-r-uu}
\tilde\Gamma^r{}_{uu} = -\partial_u\alpha + (2\alpha + \beta_C\beta^C) \partial_r\alpha + \beta^C(D_C\alpha - \partial_u\beta_C)
\end{equation}

\begin{equation}
\label{eq:christoffel-r-uA}
\tilde\Gamma^r{}_{uA} = -D_A\alpha + \alpha\partial_r\beta_A + \frac{1}{2}\beta^B\partial_u\gamma_{AB} - \beta^B D_{[A}\beta_{B]} + \frac{1}{2}\beta_C\beta^C\partial_r\beta_A
\end{equation}

\begin{equation}
\label{eq:christoffel-r-AB}
\tilde\Gamma^r{}_{AB} = -\frac{1}{2}\partial_u\gamma_{AB} - D_{(A}\beta_{B)} - \alpha\partial_r\gamma_{AB} - \frac{1}{2}\beta_C\beta^C\partial_r\gamma_{AB}
\end{equation}

\begin{equation}
\label{eq:christoffel-u-rr}
\tilde\Gamma^u{}_{rr} = 0
\end{equation}

\begin{equation}
\label{eq:christoffel-u-ru}
\tilde\Gamma^u{}_{ru} = 0
\end{equation}

\begin{equation}
\label{eq:christoffel-u-rA}
\tilde\Gamma^u{}_{rA} = 0
\end{equation}

\begin{equation}
\label{eq:christoffel-u-uu}
\tilde\Gamma^u{}_{uu} = \partial_r\alpha
\end{equation}

\begin{equation}
\label{eq:christoffel-u-uA}
\tilde\Gamma^u{}_{uA} = \frac{1}{2}\partial_r\beta_A
\end{equation}

\begin{equation}
\label{eq:christoffel-u-AB}
\tilde\Gamma^u{}_{AB} = -\frac{1}{2}\partial_r\gamma_{AB}
\end{equation}

\begin{equation}
\label{eq:christoffel-C-rr}
\tilde\Gamma^C{}_{rr} = 0
\end{equation}

\begin{equation}
\label{eq:christoffel-C-ru}
\tilde\Gamma^C{}_{ru} = -\frac{1}{2}\gamma^{CD}\partial_r\beta_D
\end{equation}

\begin{equation}
\label{eq:christoffel-C-rA}
\tilde\Gamma^C{}_{rA} = \frac{1}{2}\gamma^{CD}\partial_r\gamma_{AD}
\end{equation}

\begin{equation}
\label{eq:christoffel-C-uu}
\tilde\Gamma^C{}_{uu} = D^C\alpha - \gamma^{CD}\partial_u\beta_D + \beta^C\partial_r\alpha
\end{equation}

\begin{equation}
\label{eq:christoffel-C-uA}
\tilde\Gamma^C{}_{uA} = \frac{1}{2}\gamma^{CD}\partial_u\gamma_{AD} - \gamma^{CD}D_{[A}\beta_{D]} + \frac{1}{2}\beta^C\partial_r\beta_A
\end{equation}

\begin{equation}
\label{eq:christoffel-C-AB}
\tilde\Gamma^C{}_{AB} = \tilde\Lambda^C{}_{AB} - \frac{1}{2}\beta^C\partial_r\gamma_{AB}
\end{equation}

\subsection{Ricci Tensor}
\label{app:grtensors-ricci}
The results in Note \ref{note:grtensors-lemmas} are used when calculating the Ricci tensor.

\begin{equation}
\label{eq:grtensors-ricci}
\tilde{R}_{ab} = \partial_c\tilde\Gamma^c{}_{ab} - \partial_a\tilde\Gamma^c{}_{cb} + \tilde\Gamma^d{}_{ab}\tilde\Gamma^c{}_{cd} - \tilde\Gamma^d{}_{cb}\tilde\Gamma^c{}_{da}
\end{equation}

\begin{equation}
\label{eq:grtensors-ricci-gamma}
\mathcal{R}_{AB} = \partial_C\Lambda^C{}_{AB} - \partial_A\Lambda^C{}_{CB} + \Lambda^D{}_{AB}\Lambda^C{}_{CD} - \Lambda^D{}_{CB}\Lambda^C{}_{DA}
\end{equation}


\begin{equation}
\label{eq:grtensors-ricci-rr}
\tilde{R}_{rr} = -\frac{1}{2}\gamma^{AB}\partial_r^2\gamma_{AB} + \frac{1}{4}\gamma^{AB}\gamma^{CD}(\partial_r\gamma_{AC})\partial_r\gamma_{BD}
\end{equation}

\begin{align}
\label{eq:grtensors-ricci-ru}
\tilde{R}_{ru} &= -\partial_r^2\alpha - \frac{1}{2}\gamma^{AB}\partial_r\partial_u\gamma_{AB} + \frac{1}{4}\gamma^{AB}\gamma^{CD}(\partial_r\gamma_{AC})\partial_u\gamma_{BD} - \frac{1}{2}D^A\partial_r\beta_A\nonumber\\
& - \frac{1}{2}\gamma^{AB}(\partial_r\beta_A)\partial_r\beta_B - \frac{1}{2}\beta^A\partial_r^2\beta_A - \frac{1}{2}(\partial_r\alpha)\gamma^{AB}\partial_r\gamma_{AB}\nonumber\\
& - \frac{1}{4}\gamma^{AB}\beta^C(\partial_r\gamma_{AB})\partial_r\beta_C + \frac{1}{2}\gamma^{AB}\beta^C(\partial_r\gamma_{AC})\partial_r\beta_B
\end{align}

\begin{align}
\label{eq:grtensors-ricci-rA}
\tilde{R}_{rA} &= -\frac{1}{2}\partial_r^2\beta_A - \gamma^{BC}D_{[A}\partial_r\gamma_{B]C} + \frac{1}{2}\beta^B\partial_r^2\gamma_{AB}\nonumber\\
& + \frac{1}{2}\gamma^{BC}(\partial_r\gamma_{AB})\partial_r\beta_C - \frac{1}{4}(\partial_r\beta_A)\gamma^{BC}\partial_r\gamma_{BC}\nonumber\\
& - \frac{1}{2}\gamma^{BC}\beta^D(\partial_r\gamma_{AB})\partial_r\gamma_{CD} + \frac{1}{4}\gamma^{BC}\beta^D(\partial_r\gamma_{AD})\partial_r\gamma_{BC}
\end{align}

\begin{align}
\label{eq:grtensors-ricci-uu}
\tilde{R}_{uu}
& = D^2\alpha - D^A\partial_u\beta_A + 2\beta^A D_A\partial_r\alpha + (D^A\beta_A)\partial_r\alpha + 2\alpha\partial_r^2\alpha\nonumber\\
& - \frac{1}{2}\gamma^{AB}\partial_u^2\gamma_{AB} + \frac{1}{4}\gamma^{AB}\gamma^{CD}(\partial_u\gamma_{AC})\partial_u\gamma_{BD}\nonumber\\
&  - \frac{1}{2}(\partial_u\alpha)\gamma^{AB}\partial_r\gamma_{AB} + \frac{1}{2}(\partial_r\alpha)\gamma^{AB}\partial_u\gamma_{AB} + \alpha(\partial_r\alpha)\gamma^{AB}\partial_r\gamma_{AB}\nonumber\\
& - \beta^A\partial_r\partial_u\beta_A -(D^A\alpha)\partial_r\beta_A + \alpha\gamma^{AB}(\partial_r\beta_A)\partial_r\beta_B + \gamma^{AB}\gamma^{CD}(D_{[A}\beta_{C]})D_{[B}\beta_{D]}\nonumber\\
& + \beta_A\beta^A\partial_r^2\alpha - (D^A\alpha)\beta^B\partial_r\gamma_{AB} + \frac{1}{2}(\partial_r\alpha)\beta_C\beta^C\gamma^{AB}\partial_r\gamma_{AB} - (\partial_r\alpha)\beta^A\beta^B\partial_r\gamma_{AB}\nonumber\\
& + \frac{1}{2}\beta^C(D_C\alpha)\gamma^{AB}\partial_r\gamma_{AB} - \frac{1}{2}\beta^C(\partial_u\beta_C)\gamma^{AB}\partial_r\gamma_{AB} + \gamma^{AB}\beta^C(\partial_r\gamma_{AC})\partial_u\beta_B\nonumber\\
& - 2\gamma^{AB}\beta^C(D_{[A}\beta_{C]})\partial_r\beta_B - \frac{1}{2}(\beta^A\partial_r\beta_A)^2 + \frac{1}{2}\beta_C\beta^C\gamma^{AB}(\partial_r\beta_A)\partial_r\beta_B
\end{align}

\begin{align}
\label{eq:grtensors-ricci-uA}
\tilde{R}_{uA}
& = -D_A\partial_r\alpha + \alpha\partial_r^2\beta_A - D^B D_{[A}\beta_{B]} + \frac{1}{2}\partial_r\partial_u\beta_A - \gamma^{BC} D_{[A}\partial_u\gamma_{B]C}\nonumber\\
& + \frac{1}{2}\beta^B D_B \partial_r\beta_A + \frac{1}{2} (D^B \beta_B) \partial_r\beta_A - \frac{1}{2}(D_A\alpha)\gamma^{BC}\partial_r\gamma_{BC} + (D^B\alpha)\partial_r\gamma_{AB}\nonumber\\
& + (\partial_r\alpha)\beta^B\partial_r\gamma_{AB} + \frac{1}{2}\alpha\gamma^{BC}(\partial_r\gamma_{BC})\partial_r\beta_A - \alpha\gamma^{BC}(\partial_r\gamma_{AB})\partial_r\beta_C\nonumber\\
& + \frac{1}{4}\gamma^{BC}(\partial_u\gamma_{BC})\partial_r\beta_A - \frac{1}{2}\gamma^{BC}(\partial_r\gamma_{AB})\partial_u\beta_C + \frac{1}{2}\beta^B\partial_r\partial_u\gamma_{AB}\nonumber\\
& - \frac{1}{2}\gamma^{BC}(D_A\beta_B)\partial_r\beta_C - \beta^B\partial_r D_{[A}\beta_{B]} - \frac{1}{2}\gamma^{BC}\beta^D(\partial_u\gamma_{AB})\partial_r\gamma_{CD} \nonumber\\
& + \frac{1}{2}\beta_B\beta^B\partial_r^2\beta_A + \frac{1}{2}\beta^B(\partial_r\beta_B)\partial_r\beta_A + \frac{1}{4}\gamma^{BC}\beta^D(\partial_u\gamma_{AD})\partial_r\gamma_{BC}\nonumber\\
& + \gamma^{BC}\beta^D(D_{[A}\beta_{B]})\partial_r\gamma_{CD} + \gamma^{BC}\beta^D(\partial_r\gamma_{AB})D_{[C}\beta_{D]} - \frac{1}{2}\gamma^{BC}\beta^D(D_{[A}\beta_{D]})\partial_r\gamma_{BC}\nonumber\\
& + \beta^B\beta^C(\partial_r\gamma_{C[A})\partial_r\beta_{B]} + \frac{1}{4}\beta_D\beta^D\gamma^{BC}(\partial_r\gamma_{BC})\partial_r\beta_A - \frac{1}{2}\beta_D\beta^D\gamma^{BC}(\partial_r\gamma_{AB})\partial_r\beta_C\nonumber\\
\end{align}

\begin{align}
\label{eq:grtensors-ricci-AB}
\tilde{R}_{AB} &= -\partial_r\partial_u\gamma_{AB} - \partial_r D_{(A}\beta_{B)} + \mathcal{R}_{AB} - (\partial_r\alpha)\partial_r\gamma_{AB} - \alpha\partial_r^2\gamma_{AB}\nonumber\\
& - \frac{1}{2}(D^C\beta_C)\partial_r\gamma_{AB} - \frac{1}{2}\beta^C D_C\partial_r\gamma_{AB} - \frac{1}{2}(\partial_r\beta_A)\partial_r\beta_B\nonumber\\
& - \frac{1}{4}\gamma^{CD}\left((\partial_u\gamma_{AB})\partial_r\gamma_{CD} + (\partial_r\gamma_{AB})\partial_u\gamma_{CD}\right)\nonumber\\
& + \frac{1}{2}\gamma^{CD}\left((\partial_r\gamma_{AC})\partial_u\gamma_{BD} + (\partial_u\gamma_{AC})\partial_r\gamma_{BD}\right)\nonumber\\
& - \frac{1}{2}\alpha\gamma^{CD}(\partial_r\gamma_{AB})\partial_r\gamma_{CD} + \alpha\gamma^{CD}(\partial_r\gamma_{AC})\partial_r\gamma_{BD}\nonumber\\
& + \beta^C(\partial_r\beta_{(A})\partial_r\gamma_{B)C} - \beta^C(\partial_r\beta_C)\partial_r\gamma_{AB}\nonumber\\
& - \frac{1}{2}\gamma^{CD}(D_{(A}\beta_{B)})\partial_r\gamma_{CD} + (D^C\beta_{(A})\partial_r\gamma_{B)C}\nonumber\\
& - \frac{1}{2}\beta_C\beta^C\partial_r^2\gamma_{AB} + \beta^C\beta^D(\partial_r\gamma_{A[B})\partial_r\gamma_{C]D}\nonumber\\
& - \frac{1}{4}\beta_E\beta^E\gamma^{CD}(\partial_r\gamma_{AB})\partial_r\gamma_{CD} + \frac{1}{2}\beta_E\beta^E\gamma^{CD}(\partial_r\gamma_{AC})\partial_r\gamma_{BD}
\end{align}

\subsection{Ricci Scalar}
\label{app:grtensors-ricci-scalar}

\begin{align}
\label{eq:grtensors-ricci-scalar}
\tilde{R}
& = \mathcal{R} - 2\partial_r^2\alpha - D^A\partial_r\beta_A - \gamma^{AB}\partial_r D_A\beta_B - \frac{1}{2}D_C(\beta^C\gamma^{AB}\partial_r\gamma_{AB})\nonumber\\
& - 2\gamma^{AB}\partial_r\partial_u\gamma_{AB} - 2\alpha\gamma^{AB}\partial_r^2\gamma_{AB} - 2(\partial_r\alpha)\gamma^{AB}\partial_r\gamma_{AB} \nonumber\\
& + \frac{3}{2}\gamma^{AB}\gamma^{CD}(\partial_r\gamma_{AC})\partial_u\gamma_{BD} - \frac{1}{2}\gamma^{AB}\gamma^{CD}(\partial_r\gamma_{AB})\partial_u\gamma_{CD}\nonumber\\
& + \frac{3}{2}\alpha\gamma^{AB}\gamma^{CD}(\partial_r\gamma_{AC})\partial_r\gamma_{BD} - \frac{1}{2}\alpha\gamma^{AB}\gamma^{CD}(\partial_r\gamma_{AB})\partial_r\gamma_{CD}\nonumber\\
& + \gamma^{AB}\gamma^{CD}(\partial_r\gamma_{AC})D_{(B}\beta_{D)} - \frac{1}{2}(D^C\beta_C)\gamma^{AB}\partial_r\gamma_{AB}\nonumber\\
& - \frac{3}{2}\gamma^{AB}(\partial_r\beta_A)\partial_r\beta_B - 2\beta^A\partial_r^2\beta_A - 2\gamma^{AB}\beta^C D_{[C}\partial_r\gamma_{A]B}\nonumber\\
& + 3\gamma^{AB}\beta^C(\partial_r\gamma_{AC})\partial_r\beta_B - 2\gamma^{AB}\beta^C(\partial_r\gamma_{AB})\partial_r\beta_C\nonumber\\
& + \beta^A\beta^B\partial_r^2\gamma_{AB} - \beta_C\beta^C\gamma^{AB}\partial_r^2\gamma_{AB}\nonumber\\
& - \frac{3}{2}\gamma^{AB}\beta^C\beta^D(\partial_r\gamma_{AC})\partial_r\gamma_{BD} + \gamma^{AB}\beta^C\beta^D(\partial_r\gamma_{AB})\partial_r\gamma_{CD}\nonumber\\
& + \frac{3}{4}\beta_E\beta^E\gamma^{AB}\gamma^{CD}(\partial_r\gamma_{AC})\partial_r\gamma_{BD} - \frac{1}{4}\beta_E\beta^E\gamma^{AB}\gamma^{CD}(\partial_r\gamma_{AB})\partial_r\gamma_{CD}
\end{align}

\subsection{Schouten Tensor}
\label{app:grtensors-schouten}

\begin{equation}
\label{eq:grtensors-schouten}
\tilde{S}_{ab} \equiv \frac{2}{d-2}\tilde{R}_{ab} - \frac{1}{(d-1)(d-2)}\tilde{R}\tilde{g}_{ab}
\end{equation}

\begin{equation}
\label{eq:grtensors-schouten-compact-rr}
\tilde{S}_{rr} = \frac{2}{d-2}\tilde{R}_{rr}
\end{equation}

\begin{equation}
\label{eq:grtensors-schouten-compact-ru}
\tilde{S}_{ru} = \frac{1}{(d-1)(d-2)}[2(d-1)\tilde{R}_{ru} - \tilde{R}]
\end{equation}

\begin{equation}
\label{eq:grtensors-schouten-compact-rA}
\tilde{S}_{rA} = \frac{2}{d-2}\tilde{R}_{rA}
\end{equation}

\begin{equation}
\label{eq:grtensors-schouten-compact-uu}
\tilde{S}_{uu} = \frac{1}{(d-1)(d-2)}[2(d-1)\tilde{R}_{uu} + 2\alpha\tilde{R}]
\end{equation}

\begin{equation}
\label{eq:grtensors-schouten-compact-uA}
\tilde{S}_{uA} = \frac{1}{(d-1)(d-2)}[2(d-1)\tilde{R}_{uA} + \beta_A\tilde{R}]
\end{equation}

\begin{equation}
\label{eq:grtensors-schouten-compact-AB}
\tilde{S}_{AB} = \frac{1}{(d-1)(d-2)}[2(d-1)\tilde{R}_{AB} - \gamma_{AB}\tilde{R}]
\end{equation}

\noindent
For brevity we define
\begin{equation*}
D \equiv (d-1)(d-2)
\end{equation*}


\begin{equation}
\label{eq:grtensors-schouten-rr}
\tilde{S}_{rr} = \frac{1}{d-2}\bigg[
 - \gamma^{AB}\partial_r^2\gamma_{AB} + \frac{1}{2}\gamma^{AB}\gamma^{CD}(\partial_r\gamma_{AC})\partial_r\gamma_{BD}
\bigg]
\end{equation}

\begin{align}
\label{eq:grtensors-schouten-ru}
\tilde{S}_{ru} = \frac{1}{D} \bigg[
& - \mathcal{R} - 2(d-2)\partial_r^2\alpha - (d-2)D^A\partial_r\beta_A + \gamma^{AB}\partial_r D_A\beta_B + \frac{1}{2}D_C(\beta^C\gamma^{AB}\partial_r\gamma_{AB})\nonumber\\
& - (d-3)\gamma^{AB}\partial_r\partial_u\gamma_{AB} + 2\alpha\gamma^{AB}\partial_r^2\gamma_{AB} - (d-3)(\partial_r\alpha)\gamma^{AB}\partial_r\gamma_{AB} \nonumber\\
& + \left(\frac{d}{2} - 2\right)\gamma^{AB}\gamma^{CD}(\partial_r\gamma_{AC})\partial_u\gamma_{BD} + \frac{1}{2}\gamma^{AB}\gamma^{CD}(\partial_r\gamma_{AB})\partial_u\gamma_{CD}\nonumber\\
& - \frac{3}{2}\alpha\gamma^{AB}\gamma^{CD}(\partial_r\gamma_{AC})\partial_r\gamma_{BD} + \frac{1}{2}\alpha\gamma^{AB}\gamma^{CD}(\partial_r\gamma_{AB})\partial_r\gamma_{CD}\nonumber\\
& - \gamma^{AB}\gamma^{CD}(\partial_r\gamma_{AC})D_{(B}\beta_{D)} + \frac{1}{2}(D^C\beta_C)\gamma^{AB}\partial_r\gamma_{AB}\nonumber\\
& - \left(d - \frac{5}{2}\right)\gamma^{AB}(\partial_r\beta_A)\partial_r\beta_B - (d-3)\beta^A\partial_r^2\beta_A + 2\gamma^{AB}\beta^C D_{[C}\partial_r\gamma_{A]B}\nonumber\\
& + (d-4)\gamma^{AB}\beta^C(\partial_r\gamma_{AC})\partial_r\beta_B - \left(\frac{d}{2} - \frac{5}{2}\right)\gamma^{AB}\beta^C(\partial_r\gamma_{AB})\partial_r\beta_C\nonumber\\
& - \beta^A\beta^B\partial_r^2\gamma_{AB} + \beta_C\beta^C\gamma^{AB}\partial_r^2\gamma_{AB}\nonumber\\
& + \frac{3}{2}\gamma^{AB}\beta^C\beta^D(\partial_r\gamma_{AC})\partial_r\gamma_{BD} - \gamma^{AB}\beta^C\beta^D(\partial_r\gamma_{AB})\partial_r\gamma_{CD}\nonumber\\
& - \frac{3}{4}\beta_E\beta^E\gamma^{AB}\gamma^{CD}(\partial_r\gamma_{AC})\partial_r\gamma_{BD} + \frac{1}{4}\beta_E\beta^E\gamma^{AB}\gamma^{CD}(\partial_r\gamma_{AB})\partial_r\gamma_{CD}
\bigg]
\end{align}

\begin{align}
\label{eq:grtensors-schouten-rA}
\tilde{S}_{rA} = \frac{1}{d-2}\bigg[
& - \partial_r^2\beta_A - 2\gamma^{BC}D_{[A}\partial_r\gamma_{B]C} + \beta^B\partial_r^2\gamma_{AB}\nonumber\\
& + \gamma^{BC}(\partial_r\gamma_{AB})\partial_r\beta_C - \frac{1}{2}(\partial_r\beta_A)\gamma^{BC}\partial_r\gamma_{BC}\nonumber\\
& - \gamma^{BC}\beta^D(\partial_r\gamma_{AB})\partial_r\gamma_{CD} + \frac{1}{2}\gamma^{BC}\beta^D(\partial_r\gamma_{AD})\partial_r\gamma_{BC}
\bigg]
\end{align}

\begin{align}
\label{eq:grtensors-schouten-uu}
\tilde{S}_{uu} = \frac{1}{D} \bigg[
& \ 2(d-1)D^2\alpha - 2(d-1)D^A\partial_u\beta_A + 4(d-1)\beta^A D_A\partial_r\alpha\nonumber\\
& + 2\alpha\mathcal{R} - 2\alpha D^A\partial_r\beta_A - 2\alpha\gamma^{AB}\partial_r D_A\beta_B - \alpha D_C(\beta^C\gamma^{AB}\partial_r\gamma_{AB})\nonumber\\
& + 2(d-1)(D^A\beta_A)\partial_r\alpha + 4(d-2)\alpha\partial_r^2\alpha - (d-1)\gamma^{AB}\partial_u^2\gamma_{AB}\nonumber\\
& + \frac{d-1}{2}\gamma^{AB}\gamma^{CD}(\partial_u\gamma_{AC})\partial_u\gamma_{BD} - (d-1)(\partial_u\alpha)\gamma^{AB}\partial_r\gamma_{AB}\nonumber\\
& - 4\alpha\gamma^{AB}\partial_r\partial_u\gamma_{AB} - 4\alpha^2\gamma^{AB}\partial_r^2\gamma_{AB} \nonumber\\
& + (d-1)(\partial_r\alpha)\gamma^{AB}\partial_u\gamma_{AB} + 2(d-3)\alpha(\partial_r\alpha)\gamma^{AB}\partial_r\gamma_{AB}\nonumber\\
& + 3\alpha\gamma^{AB}\gamma^{CD}(\partial_r\gamma_{AC})\partial_u\gamma_{BD} - \alpha\gamma^{AB}\gamma^{CD}(\partial_r\gamma_{AB})\partial_u\gamma_{CD}\nonumber\\
& + 3\alpha^2\gamma^{AB}\gamma^{CD}(\partial_r\gamma_{AC})\partial_r\gamma_{BD} - \alpha^2\gamma^{AB}\gamma^{CD}(\partial_r\gamma_{AB})\partial_r\gamma_{CD}\nonumber\\
& + 2\alpha\gamma^{AB}\gamma^{CD}(\partial_r\gamma_{AC})D_{(B}\beta_{D)} - \alpha(D^C\beta_C)\gamma^{AB}\partial_r\gamma_{AB}\nonumber\\
& - 2(d-1)\beta^A\partial_r\partial_u\beta_A - 2(d-1)(D^A\alpha)\partial_r\beta_A\nonumber\\
& + (2d-5)\alpha\gamma^{AB}(\partial_r\beta_A)\partial_r\beta_B + 2(d-1)\gamma^{AB}\gamma^{CD}(D_{[A}\beta_{C]})D_{[B}\beta_{D]}\nonumber\\
& - 4\alpha\beta^A\partial_r^2\beta_A - 4\alpha\gamma^{AB}\beta^C D_{[C}\partial_r\gamma_{A]B}\nonumber\\
& + 2(d-1)\beta_A\beta^A\partial_r^2\alpha - 2(d-1)(D^A\alpha)\beta^B\partial_r\gamma_{AB}\nonumber\\
& + 6\alpha\gamma^{AB}\beta^C(\partial_r\gamma_{AC})\partial_r\beta_B - 4\alpha\gamma^{AB}\beta^C(\partial_r\gamma_{AB})\partial_r\beta_C\nonumber\\
& + 2\alpha\beta^A\beta^B\partial_r^2\gamma_{AB} - 2\alpha\beta_C\beta^C\gamma^{AB}\partial_r^2\gamma_{AB}\nonumber\\
& + (d-1)(\partial_r\alpha)\beta_C\beta^C\gamma^{AB}\partial_r\gamma_{AB} - 2(d-1)(\partial_r\alpha)\beta^A\beta^B\partial_r\gamma_{AB}\nonumber\\
& + (d-1)\beta^C(D_C\alpha)\gamma^{AB}\partial_r\gamma_{AB} - (d-1)\beta^C(\partial_u\beta_C)\gamma^{AB}\partial_r\gamma_{AB}\nonumber\\
& + 2(d-1)\gamma^{AB}\beta^C(\partial_r\gamma_{AC})\partial_u\beta_B - 4(d-1)\gamma^{AB}\beta^C(D_{[A}\beta_{C]})\partial_r\beta_B\nonumber\\
& - 3\alpha\gamma^{AB}\beta^C\beta^D(\partial_r\gamma_{AC})\partial_r\gamma_{BD} + 2\alpha\gamma^{AB}\beta^C\beta^D(\partial_r\gamma_{AB})\partial_r\gamma_{CD}\nonumber\\
& + \frac{3}{2}\alpha\beta_E\beta^E\gamma^{AB}\gamma^{CD}(\partial_r\gamma_{AC})\partial_r\gamma_{BD} - \frac{1}{2}\alpha\beta_E\beta^E\gamma^{AB}\gamma^{CD}(\partial_r\gamma_{AB})\partial_r\gamma_{CD}\nonumber\\
& - (d-1)(\beta^A\partial_r\beta_A)^2 + (d-1)\beta_C\beta^C\gamma^{AB}(\partial_r\beta_A)\partial_r\beta_B
\bigg]
\end{align}

\begin{align}
\label{eq:grtensors-schouten-uA}
\tilde{S}_{uA} = \frac{1}{D} \bigg[
& - 2(d-1)D_A\partial_r\alpha + 2(d-1)\alpha\partial_r^2\beta_A - 2(d-1)D^B D_{[A}\beta_{B]} + (d-1)\partial_r\partial_u\beta_A\nonumber\\
& + \beta_A\mathcal{R} - 2(d-1)\gamma^{BC} D_{[A}\partial_u\gamma_{B]C} + (d-1)\beta^B D_B \partial_r\beta_A + (d-1)(D^B \beta_B) \partial_r\beta_A\nonumber\\
& - 2\beta_A\partial_r^2\alpha - \beta_AD^B\partial_r\beta_B - \beta_A\gamma^{BC}\partial_r D_B\beta_C - \frac{1}{2}\beta_A D_D(\beta^D\gamma^{BC}\partial_r\gamma_{BC})\nonumber\\
& - (d-1)(D_A\alpha)\gamma^{BC}\partial_r\gamma_{BC} + 2(d-1)(D^B\alpha)\partial_r\gamma_{AB} + 2(d-1)(\partial_r\alpha)\beta^B\partial_r\gamma_{AB}\nonumber\\
& + (d-1)\alpha\gamma^{BC}(\partial_r\gamma_{BC})\partial_r\beta_A - 2(d-1)\alpha\gamma^{BC}(\partial_r\gamma_{AB})\partial_r\beta_C\nonumber\\
& + \frac{d-1}{2}\gamma^{BC}(\partial_u\gamma_{BC})\partial_r\beta_A - (d-1)\gamma^{BC}(\partial_r\gamma_{AB})\partial_u\beta_C + (d-1)\beta^B\partial_r\partial_u\gamma_{AB}\nonumber\\
& - 2\beta_A\gamma^{BC}\partial_r\partial_u\gamma_{BC} - 2\alpha\beta_A\gamma^{BC}\partial_r^2\gamma_{BC} - 2(\partial_r\alpha)\beta_A\gamma^{BC}\partial_r\gamma_{BC} \nonumber\\
& + \frac{3}{2}\beta_A\gamma^{BC}\gamma^{DE}(\partial_r\gamma_{BD})\partial_u\gamma_{CE} - \frac{1}{2}\beta_A\gamma^{BC}\gamma^{DE}(\partial_r\gamma_{BC})\partial_u\gamma_{DE}\nonumber\\
& + \frac{3}{2}\alpha\beta_A\gamma^{BC}\gamma^{DE}(\partial_r\gamma_{BD})\partial_r\gamma_{CE} - \frac{1}{2}\alpha\beta_A\gamma^{BC}\gamma^{DE}(\partial_r\gamma_{BC})\partial_r\gamma_{DE}\nonumber\\
& - (d-1)\gamma^{BC}(D_A\beta_B)\partial_r\beta_C - 2(d-1)\beta^B\partial_r D_{[A}\beta_{B]} - (d-1)\gamma^{BC}\beta^D(\partial_u\gamma_{AB})\partial_r\gamma_{CD} \nonumber\\
& + \beta_A\gamma^{BC}\gamma^{DE}(\partial_r\gamma_{BD})D_{(C}\beta_{E)} - \frac{1}{2}\beta_A(D^D\beta_D)\gamma^{BC}\partial_r\gamma_{BC}\nonumber\\
& + (d-1)\beta_B\beta^B\partial_r^2\beta_A + (d-1)\beta^B(\partial_r\beta_B)\partial_r\beta_A + \frac{d-1}{2}\gamma^{BC}\beta^D(\partial_u\gamma_{AD})\partial_r\gamma_{BC}\nonumber\\
& - \frac{3}{2}\beta_A\gamma^{BC}(\partial_r\beta_B)\partial_r\beta_C - 2\beta_A\beta^B\partial_r^2\beta_B - 2\beta_A\gamma^{BC}\beta^D D_{[D}\partial_r\gamma_{B]C}\nonumber\\
& + 2(d-1)\gamma^{BC}\beta^D(D_{[A}\beta_{B]})\partial_r\gamma_{CD} + 2(d-1)\gamma^{BC}\beta^D(\partial_r\gamma_{AB})D_{[C}\beta_{D]}\nonumber\\
& - (d-1)\gamma^{BC}\beta^D(D_{[A}\beta_{D]})\partial_r\gamma_{BC}\nonumber\\
& + 3\beta_A\gamma^{BC}\beta^D(\partial_r\gamma_{BD})\partial_r\beta_C - 2\beta_A\gamma^{BC}\beta^D(\partial_r\gamma_{BC})\partial_r\beta_D\nonumber\\
& + \beta_A\beta^B\beta^C\partial_r^2\gamma_{BC} - \beta_A\beta_D\beta^D\gamma^{BC}\partial_r^2\gamma_{BC} + 2(d-1)\beta^B\beta^C(\partial_r\gamma_{C[A})\partial_r\beta_{B]}\nonumber\\
& + \frac{d-1}{2}\beta_D\beta^D\gamma^{BC}(\partial_r\gamma_{BC})\partial_r\beta_A - (d-1)\beta_D\beta^D\gamma^{BC}(\partial_r\gamma_{AB})\partial_r\beta_C\nonumber\\
& - \frac{3}{2}\beta_A\gamma^{BC}\beta^D\beta^E(\partial_r\gamma_{BD})\partial_r\gamma_{CE} + \beta_A\gamma^{BC}\beta^D\beta^E(\partial_r\gamma_{BC})\partial_r\gamma_{DE}\nonumber\\
& + \frac{3}{4}\beta_A\beta_F\beta^F\gamma^{BC}\gamma^{DE}(\partial_r\gamma_{BD})\partial_r\gamma_{CE} - \frac{1}{4}\beta_A\beta_F\beta^F\gamma^{BC}\gamma^{DE}(\partial_r\gamma_{BC})\partial_r\gamma_{DE}\bigg]
\end{align}

\begin{align}
\label{eq:grtensors-schouten-AB}
\tilde{S}_{AB} = \frac{1}{D} \bigg[
& - 2(d-1)\partial_r\partial_u\gamma_{AB} - 2(d-1)\partial_r D_{(A}\beta_{B)} + 2(d-1)\mathcal{R}_{AB}\nonumber\\
& - 2(d-1)(\partial_r\alpha)\partial_r\gamma_{AB} - 2(d-1)\alpha\partial_r^2\gamma_{AB}\nonumber\\
& - (d-1)(D^C\beta_C)\partial_r\gamma_{AB} - (d-1)\beta^C D_C\partial_r\gamma_{AB} - (d-1)(\partial_r\beta_A)\partial_r\beta_B\nonumber\\
& - \gamma_{AB}\mathcal{R} + 2\gamma_{AB}\partial_r^2\alpha + \gamma_{AB}D^C\partial_r\beta_C + \gamma_{AB}\gamma^{CD}\partial_r D_C\beta_D + \frac{1}{2}\gamma_{AB}D_E(\beta^E\gamma^{CD}\partial_r\gamma_{CD})\nonumber\\
& + 2\gamma_{AB}\gamma^{CD}\partial_r\partial_u\gamma_{CD} + 2\alpha\gamma_{AB}\gamma^{CD}\partial_r^2\gamma_{CD} + 2(\partial_r\alpha)\gamma_{AB}\gamma^{CD}\partial_r\gamma_{CD} \nonumber\\
& - \frac{d-1}{2}\gamma^{CD}\left((\partial_u\gamma_{AB})\partial_r\gamma_{CD} + (\partial_r\gamma_{AB})\partial_u\gamma_{CD}\right)\nonumber\\
& + (d-1)\gamma^{CD}\left((\partial_r\gamma_{AC})\partial_u\gamma_{BD} + (\partial_u\gamma_{AC})\partial_r\gamma_{BD}\right)\nonumber\\
& - \frac{3}{2}\gamma_{AB}\gamma^{CD}\gamma^{EF}(\partial_r\gamma_{CE})\partial_u\gamma_{DF} + \frac{1}{2}\gamma_{AB}\gamma^{CD}\gamma^{EF}(\partial_r\gamma_{CD})\partial_u\gamma_{EF}\nonumber\\
& - (d-1)\alpha\gamma^{CD}(\partial_r\gamma_{AB})\partial_r\gamma_{CD} + 2(d-1)\alpha\gamma^{CD}(\partial_r\gamma_{AC})\partial_r\gamma_{BD}\nonumber\\
& - \frac{3}{2}\alpha\gamma_{AB}\gamma^{CD}\gamma^{EF}(\partial_r\gamma_{CE})\partial_r\gamma_{DF} + \frac{1}{2}\alpha\gamma_{AB}\gamma^{CD}\gamma^{EF}(\partial_r\gamma_{CD})\partial_r\gamma_{EF}\nonumber\\
& + 2(d-1)\beta^C(\partial_r\beta_{(A})\partial_r\gamma_{B)C} - 2(d-1)\beta^C(\partial_r\beta_C)\partial_r\gamma_{AB}\nonumber\\
& - (d-1)\gamma^{CD}(D_{(A}\beta_{B)})\partial_r\gamma_{CD} + 2(d-1)(D^C\beta_{(A}\partial_r\gamma_{B)C}\nonumber\\
& - \gamma_{AB}\gamma^{CD}\gamma^{EF}(\partial_r\gamma_{CE})D_{(D}\beta_{F)} + \frac{1}{2}\gamma_{AB}(D^E\beta_E)\gamma^{CD}\partial_r\gamma_{CD}\nonumber\\
& + \frac{3}{2}\gamma_{AB}\gamma^{CD}(\partial_r\beta_C)\partial_r\beta_D + 2\gamma_{AB}\beta^C\partial_r^2\beta_C + 2\gamma_{AB}\gamma^{CD}\beta^E D_{[E}\partial_r\gamma_{C]D}\nonumber\\
& - (d-1)\beta_C\beta^C\partial_r^2\gamma_{AB} + 2(d-1)\beta^C\beta^D(\partial_r\gamma_{A[B})\partial_r\gamma_{C]D}\nonumber\\
& - 3\gamma_{AB}\gamma^{CD}\beta^E(\partial_r\gamma_{CE})\partial_r\beta_D + 2\gamma_{AB}\gamma^{CD}\beta^E(\partial_r\gamma_{CD})\partial_r\beta_E\nonumber\\
& - \gamma_{AB}\beta^C\beta^D\partial_r^2\gamma_{CD} + \gamma_{AB}\beta_E\beta^E\gamma^{CD}\partial_r^2\gamma_{CD}\nonumber\\
& - \frac{d-1}{2}\beta_E\beta^E\gamma^{CD}(\partial_r\gamma_{AB})\partial_r\gamma_{CD} + (d-1)\beta_E\beta^E\gamma^{CD}(\partial_r\gamma_{AC})\partial_r\gamma_{BD}\nonumber\\
& + \frac{3}{2}\gamma_{AB}\gamma^{CD}\beta^E\beta^F(\partial_r\gamma_{CE})\partial_r\gamma_{DF} - \gamma_{AB}\gamma^{CD}\beta^E\beta^F(\partial_r\gamma_{CD})\partial_r\gamma_{EF}\nonumber\\
& - \frac{3}{4}\gamma_{AB}\beta_G\beta^G\gamma^{CD}\gamma^{EF}(\partial_r\gamma_{CE})\partial_r\gamma_{DF} + \frac{1}{4}\gamma_{AB}\beta_G\beta^G\gamma^{CD}\gamma^{EF}(\partial_r\gamma_{CD})\partial_r\gamma_{EF}
\bigg]
\end{align}

\subsection{Einstein Equations}
\label{app:grtensors-einstein}

The Einstein equations are given by
\begin{equation}
0 = \tilde{E}_{ab} = \tilde{S}_{ab} + 2r^{-1}\tilde\nabla_a\tilde\nabla_b r - r^{-2}\tilde{g}_{ab}\tilde{g}^{cd}(\tilde\nabla_c r)\tilde\nabla_d r \quad .
\end{equation}

To express this in coordinate form, we note that
\begin{align}
& \tilde\nabla_a\tilde\nabla_b r = \tilde\nabla_a \partial_b r = \partial_a\partial_b r - \tilde{\Gamma}^c{}_{ab} \partial_c r = -\tilde{\Gamma}^r{}_{ab} \quad , \\
& \tilde{g}^{cd}(\tilde\nabla_c r)\tilde\nabla_d r = \tilde{g}^{rr} = 2\alpha + \beta_C\beta^C \quad .
\end{align}
We can therefore rewrite the vacuum Einstein equation as
\begin{equation}
\label{eq:grtensors-einstein}
0 = \tilde{E}_{ab} \equiv \tilde{S}_{ab} - 2r^{-1}\tilde\Gamma^r{}_{ab} - r^{-2}(2\alpha + \beta_C\beta^C)\tilde{g}_{ab}
\end{equation}

The components are given by,
\begin{equation}
\label{eq:grtensors-einstein-compact-rr}
0 = \tilde{E}_{rr} = \tilde{S}_{rr} \quad ,
\end{equation}

\begin{equation}
\label{eq:grtensors-einstein-compact-ru}
0 = \tilde{E}_{ru} = \tilde{S}_{ru} + 2r^{-1}(\partial_r\alpha - r^{-1}\alpha) + r^{-1}\beta^A(\partial_r\beta_A - r^{-1}\beta_A) \quad ,
\end{equation}

\begin{equation}
\label{eq:grtensors-einstein-compact-rA}
0 = \tilde{E}_{rA} = \tilde{S}_{rA} + r^{-1}\partial_r\beta_A - r^{-1}\beta^B\partial_r\gamma_{AB} \quad ,
\end{equation}

\begin{equation}
\label{eq:grtensors-einstein-compact-uu}
0 = \tilde{E}_{uu} = \tilde{S}_{uu} + 2r^{-1}\partial_u\alpha - 2r^{-1}(\partial_r\alpha - r^{-1}\alpha)(2\alpha + \beta_A\beta^A) - 2r^{-1}\beta^A(D_A\alpha - \partial_u\beta_A) \quad ,
\end{equation}

\begin{equation}
\label{eq:grtensors-einstein-compact-uA}
0 = \tilde{E}_{uA} = \tilde{S}_{uA} + 2r^{-1}D_A\alpha - r^{-1}(\partial_r\beta_A - r^{-1}\beta_A)(2\alpha + \beta_B\beta^B) - r^{-1}\beta^B\partial_u\gamma_{AB} + 2r^{-1}\beta^B D_{[A}\beta_{B]} \quad ,
\end{equation}

\begin{equation}
\label{eq:grtensors-einstein-compact-AB}
0 = \tilde{E}_{AB} = \tilde{S}_{AB} + r^{-1}\partial_u\gamma_{AB} + 2r^{-1}D_{(A}\beta_{B)} + r^{-1}(\partial_r\gamma_{AB} - r^{-1}\gamma_{AB})(2\alpha + \beta_C\beta^C) \quad .
\end{equation}

The components of the Einstein tensor $\tilde{E}_{ab}$ are now written out in full.

\begin{equation}
\label{eq:grtensors-einstein-rr}
\tilde{E}_{rr} = \frac{1}{d-2}\bigg[
- \gamma^{AB}\partial_r^2\gamma_{AB} + \frac{1}{2}\gamma^{AB}\gamma^{CD}(\partial_r\gamma_{AC})\partial_r\gamma_{BD}
\bigg]
\end{equation}

\begin{align}
\label{eq:grtensors-einstein-ru}
\tilde{E}_{ru} = \frac{1}{D} \bigg[
& - \mathcal{R} - 2(d-2)\partial_r^2\alpha + 2(d-1)(d-2)r^{-1}(\partial_r\alpha - r^{-1}\alpha)\nonumber\\
& - (d-2)D^A\partial_r\beta_A + \gamma^{AB}\partial_r D_A\beta_B + \frac{1}{2}D_C(\beta^C\gamma^{AB}\partial_r\gamma_{AB})\nonumber\\
& - (d-3)\gamma^{AB}\partial_r\partial_u\gamma_{AB} + 2\alpha\gamma^{AB}\partial_r^2\gamma_{AB} - (d-3)(\partial_r\alpha)\gamma^{AB}\partial_r\gamma_{AB} \nonumber\\
& + \left(\frac{d}{2} - 2\right)\gamma^{AB}\gamma^{CD}(\partial_r\gamma_{AC})\partial_u\gamma_{BD} + \frac{1}{2}\gamma^{AB}\gamma^{CD}(\partial_r\gamma_{AB})\partial_u\gamma_{CD}\nonumber\\
& - \frac{3}{2}\alpha\gamma^{AB}\gamma^{CD}(\partial_r\gamma_{AC})\partial_r\gamma_{BD} + \frac{1}{2}\alpha\gamma^{AB}\gamma^{CD}(\partial_r\gamma_{AB})\partial_r\gamma_{CD}\nonumber\\
& - \gamma^{AB}\gamma^{CD}(\partial_r\gamma_{AC})D_{(B}\beta_{D)} + \frac{1}{2}(D^C\beta_C)\gamma^{AB}\partial_r\gamma_{AB}\nonumber\\
& - \left(d - \frac{5}{2}\right)\gamma^{AB}(\partial_r\beta_A)\partial_r\beta_B - (d-3)\beta^A\partial_r^2\beta_A\nonumber\\
& + (d-1)(d-2)r^{-1}\beta^A(\partial_r\beta_A - r^{-1}\beta_A) + 2\gamma^{AB}\beta^C D_{[C}\partial_r\gamma_{A]B}\nonumber\\
& + (d-4)\gamma^{AB}\beta^C(\partial_r\gamma_{AC})\partial_r\beta_B - \left(\frac{d}{2} - \frac{5}{2}\right)\gamma^{AB}\beta^C(\partial_r\gamma_{AB})\partial_r\beta_C\nonumber\\
& - \beta^A\beta^B\partial_r^2\gamma_{AB} + \beta_C\beta^C\gamma^{AB}\partial_r^2\gamma_{AB}\nonumber\\
& + \frac{3}{2}\gamma^{AB}\beta^C\beta^D(\partial_r\gamma_{AC})\partial_r\gamma_{BD} - \gamma^{AB}\beta^C\beta^D(\partial_r\gamma_{AB})\partial_r\gamma_{CD}\nonumber\\
& - \frac{3}{4}\beta_E\beta^E\gamma^{AB}\gamma^{CD}(\partial_r\gamma_{AC})\partial_r\gamma_{BD} + \frac{1}{4}\beta_E\beta^E\gamma^{AB}\gamma^{CD}(\partial_r\gamma_{AB})\partial_r\gamma_{CD}
\bigg]
\end{align}

\begin{align}
\label{eq:grtensors-einstein-rA}
\tilde{E}_{rA} = \frac{1}{d-2} \bigg[
& - \partial_r^2\beta_A + (d-2)r^{-1}\partial_r\beta_A - 2\gamma^{BC}D_{[A}\partial_r\gamma_{B]C}\nonumber\\
& + \beta^B\partial_r^2\gamma_{AB} - (d-2)r^{-1}\beta^B\partial_r\gamma_{AB}\nonumber\\
& + \gamma^{BC}(\partial_r\gamma_{AB})\partial_r\beta_C - \frac{1}{2}(\partial_r\beta_A)\gamma^{BC}\partial_r\gamma_{BC}\nonumber\\
& - \gamma^{BC}\beta^D(\partial_r\gamma_{AB})\partial_r\gamma_{CD} + \frac{1}{2}\gamma^{BC}\beta^D(\partial_r\gamma_{AD})\partial_r\gamma_{BC}
\bigg]
\end{align}

\begin{align}
\label{eq:grtensors-einstein-uu}
\tilde{E}_{uu} = \frac{1}{D} \bigg[
& \ 2(d-1)(d-2)r^{-1}\partial_u\alpha + 2(d-1)D^2\alpha - 2(d-1)D^A\partial_u\beta_A\nonumber\\
& + 4(d-1)\beta^A D_A\partial_r\alpha - 2(d-1)(d-2)r^{-1}\beta^A D_A\alpha\nonumber\\
& + 2\alpha\mathcal{R} - 2\alpha D^A\partial_r\beta_A - 2\alpha\gamma^{AB}\partial_r D_A\beta_B - \alpha D_C(\beta^C\gamma^{AB}\partial_r\gamma_{AB})\nonumber\\
& + 4(d-2)\alpha\partial_r^2\alpha - 4(d-1)(d-2)r^{-1}\alpha(\partial_r\alpha - r^{-1}\alpha)\nonumber\\
& + 2(d-1)(D^A\beta_A)\partial_r\alpha - (d-1)\gamma^{AB}\partial_u^2\gamma_{AB}\nonumber\\
& + \frac{d-1}{2}\gamma^{AB}\gamma^{CD}(\partial_u\gamma_{AC})\partial_u\gamma_{BD} - (d-1)(\partial_u\alpha)\gamma^{AB}\partial_r\gamma_{AB}\nonumber\\
& - 4\alpha\gamma^{AB}\partial_r\partial_u\gamma_{AB} - 4\alpha^2\gamma^{AB}\partial_r^2\gamma_{AB} \nonumber\\
& + (d-1)(\partial_r\alpha)\gamma^{AB}\partial_u\gamma_{AB} + 2(d-3)\alpha(\partial_r\alpha)\gamma^{AB}\partial_r\gamma_{AB}\nonumber\\
& + 3\alpha\gamma^{AB}\gamma^{CD}(\partial_r\gamma_{AC})\partial_u\gamma_{BD} - \alpha\gamma^{AB}\gamma^{CD}(\partial_r\gamma_{AB})\partial_u\gamma_{CD}\nonumber\\
& + 3\alpha^2\gamma^{AB}\gamma^{CD}(\partial_r\gamma_{AC})\partial_r\gamma_{BD} - \alpha^2\gamma^{AB}\gamma^{CD}(\partial_r\gamma_{AB})\partial_r\gamma_{CD}\nonumber\\
& + 2\alpha\gamma^{AB}\gamma^{CD}(\partial_r\gamma_{AC})D_{(B}\beta_{D)} - \alpha(D^C\beta_C)\gamma^{AB}\partial_r\gamma_{AB}\nonumber\\
& - 2(d-1)\beta^A\partial_r\partial_u\beta_A + 2(d-1)(d-2)r^{-1}\beta^A\partial_u\beta_A - 2(d-1)(D^A\alpha)\partial_r\beta_A\nonumber\\
& + (2d-5)\alpha\gamma^{AB}(\partial_r\beta_A)\partial_r\beta_B + 2(d-1)\gamma^{AB}\gamma^{CD}(D_{[A}\beta_{C]})D_{[B}\beta_{D]}\nonumber\\
& - 4\alpha\beta^A\partial_r^2\beta_A - 2(d-1)(d-2)r^{-1}\beta_A\beta^A(\partial_r\alpha - r^{-1}\alpha)\nonumber\\
& - 4\alpha\gamma^{AB}\beta^C D_{[C}\partial_r\gamma_{A]B} + 2(d-1)\beta_A\beta^A\partial_r^2\alpha - 2(d-1)(D^A\alpha)\beta^B\partial_r\gamma_{AB}\nonumber\\
& + 6\alpha\gamma^{AB}\beta^C(\partial_r\gamma_{AC})\partial_r\beta_B - 4\alpha\gamma^{AB}\beta^C(\partial_r\gamma_{AB})\partial_r\beta_C\nonumber\\
& + 2\alpha\beta^A\beta^B\partial_r^2\gamma_{AB} - 2\alpha\beta_C\beta^C\gamma^{AB}\partial_r^2\gamma_{AB}\nonumber\\
& + (d-1)(\partial_r\alpha)\beta_C\beta^C\gamma^{AB}\partial_r\gamma_{AB} - 2(d-1)(\partial_r\alpha)\beta^A\beta^B\partial_r\gamma_{AB}\nonumber\\
& + (d-1)\beta^C(D_C\alpha)\gamma^{AB}\partial_r\gamma_{AB} - (d-1)\beta^C(\partial_u\beta_C)\gamma^{AB}\partial_r\gamma_{AB}\nonumber\\
& + 2(d-1)\gamma^{AB}\beta^C(\partial_r\gamma_{AC})\partial_u\beta_B - 4(d-1)\gamma^{AB}\beta^C(D_{[A}\beta_{C]})\partial_r\beta_B\nonumber\\
& - 3\alpha\gamma^{AB}\beta^C\beta^D(\partial_r\gamma_{AC})\partial_r\gamma_{BD} + 2\alpha\gamma^{AB}\beta^C\beta^D(\partial_r\gamma_{AB})\partial_r\gamma_{CD}\nonumber\\
& + \frac{3}{2}\alpha\beta_E\beta^E\gamma^{AB}\gamma^{CD}(\partial_r\gamma_{AC})\partial_r\gamma_{BD} - \frac{1}{2}\alpha\beta_E\beta^E\gamma^{AB}\gamma^{CD}(\partial_r\gamma_{AB})\partial_r\gamma_{CD}\nonumber\\
& - (d-1)(\beta^A\partial_r\beta_A)^2 + (d-1)\beta_C\beta^C\gamma^{AB}(\partial_r\beta_A)\partial_r\beta_B
\bigg]
\end{align}

\begin{align}
\label{eq:grtensors-einstein-uA}
\tilde{E}_{uA} = \frac{1}{D} \bigg[
& \ 2(d-1)(d-2)r^{-1}D_A\alpha - 2(d-1)D_A\partial_r\alpha\nonumber\\
& + 2(d-1)\alpha\partial_r^2\beta_A - 2(d-1)(d-2)r^{-1}\alpha(\partial_r\beta_A - r^{-1}\beta_A)\nonumber\\
& - 2(d-1)D^B D_{[A}\beta_{B]} + (d-1)\partial_r\partial_u\beta_A\nonumber\\
& + \beta_A\mathcal{R} - 2(d-1)\gamma^{BC} D_{[A}\partial_u\gamma_{B]C} + (d-1)(D^B \beta_B) \partial_r\beta_A\nonumber\\
& +2(d-1)(d-2)r^{-1}\beta^B D_{[A}\beta_{B]} + (d-1)\beta^B D_B \partial_r\beta_A\nonumber\\
& - 2\beta_A\partial_r^2\alpha - \beta_AD^B\partial_r\beta_B - \beta_A\gamma^{BC}\partial_r D_B\beta_C - \frac{1}{2}\beta_A D_D(\beta^D\gamma^{BC}\partial_r\gamma_{BC})\nonumber\\
& - (d-1)(D_A\alpha)\gamma^{BC}\partial_r\gamma_{BC} + 2(d-1)(D^B\alpha)\partial_r\gamma_{AB}\nonumber\\
& + 2(d-1)(\partial_r\alpha)\beta^B\partial_r\gamma_{AB} - (d-1)(d-2)r^{-1}\beta^B\partial_u\gamma_{AB}\nonumber\\
& + (d-1)\alpha\gamma^{BC}(\partial_r\gamma_{BC})\partial_r\beta_A - 2(d-1)\alpha\gamma^{BC}(\partial_r\gamma_{AB})\partial_r\beta_C\nonumber\\
& + \frac{d-1}{2}\gamma^{BC}(\partial_u\gamma_{BC})\partial_r\beta_A - (d-1)\gamma^{BC}(\partial_r\gamma_{AB})\partial_u\beta_C + (d-1)\beta^B\partial_r\partial_u\gamma_{AB}\nonumber\\
& - 2\beta_A\gamma^{BC}\partial_r\partial_u\gamma_{BC} - 2\alpha\beta_A\gamma^{BC}\partial_r^2\gamma_{BC} - 2(\partial_r\alpha)\beta_A\gamma^{BC}\partial_r\gamma_{BC} \nonumber\\
& + \frac{3}{2}\beta_A\gamma^{BC}\gamma^{DE}(\partial_r\gamma_{BD})\partial_u\gamma_{CE} - \frac{1}{2}\beta_A\gamma^{BC}\gamma^{DE}(\partial_r\gamma_{BC})\partial_u\gamma_{DE}\nonumber\\
& + \frac{3}{2}\alpha\beta_A\gamma^{BC}\gamma^{DE}(\partial_r\gamma_{BD})\partial_r\gamma_{CE} - \frac{1}{2}\alpha\beta_A\gamma^{BC}\gamma^{DE}(\partial_r\gamma_{BC})\partial_r\gamma_{DE}\nonumber\\
& - (d-1)\gamma^{BC}(D_A\beta_B)\partial_r\beta_C - 2(d-1)\beta^B\partial_r D_{[A}\beta_{B]} - (d-1)\gamma^{BC}\beta^D(\partial_u\gamma_{AB})\partial_r\gamma_{CD} \nonumber\\
& + \beta_A\gamma^{BC}\gamma^{DE}(\partial_r\gamma_{BD})D_{(C}\beta_{E)} - \frac{1}{2}\beta_A(D^D\beta_D)\gamma^{BC}\partial_r\gamma_{BC}\nonumber\\
& + (d-1)\beta_B\beta^B\partial_r^2\beta_A - (d-1)(d-2)r^{-1}\beta_B\beta^B(\partial_r\beta_A - r^{-1}\beta_A)\nonumber\\
& + (d-1)\beta^B(\partial_r\beta_B)\partial_r\beta_A + \frac{d-1}{2}\gamma^{BC}\beta^D(\partial_u\gamma_{AD})\partial_r\gamma_{BC}\nonumber\\
& - \frac{3}{2}\beta_A\gamma^{BC}(\partial_r\beta_B)\partial_r\beta_C - 2\beta_A\beta^B\partial_r^2\beta_B - 2\beta_A\gamma^{BC}\beta^D D_{[D}\partial_r\gamma_{B]C}\nonumber\\
& + 2(d-1)\gamma^{BC}\beta^D(D_{[A}\beta_{B]})\partial_r\gamma_{CD} + 2(d-1)\gamma^{BC}\beta^D(\partial_r\gamma_{AB})D_{[C}\beta_{D]}\nonumber\\
& - (d-1)\gamma^{BC}\beta^D(D_{[A}\beta_{D]})\partial_r\gamma_{BC}\nonumber\\
& + 3\beta_A\gamma^{BC}\beta^D(\partial_r\gamma_{BD})\partial_r\beta_C - 2\beta_A\gamma^{BC}\beta^D(\partial_r\gamma_{BC})\partial_r\beta_D\nonumber\\
& + \beta_A\beta^B\beta^C\partial_r^2\gamma_{BC} - \beta_A\beta_D\beta^D\gamma^{BC}\partial_r^2\gamma_{BC} + 2(d-1)\beta^B\beta^C(\partial_r\gamma_{C[A})\partial_r\beta_{B]}\nonumber\\
& + \frac{d-1}{2}\beta_D\beta^D\gamma^{BC}(\partial_r\gamma_{BC})\partial_r\beta_A - (d-1)\beta_D\beta^D\gamma^{BC}(\partial_r\gamma_{AB})\partial_r\beta_C\nonumber\\
& - \frac{3}{2}\beta_A\gamma^{BC}\beta^D\beta^E(\partial_r\gamma_{BD})\partial_r\gamma_{CE} + \beta_A\gamma^{BC}\beta^D\beta^E(\partial_r\gamma_{BC})\partial_r\gamma_{DE}\nonumber\\
& + \frac{3}{4}\beta_A\beta_F\beta^F\gamma^{BC}\gamma^{DE}(\partial_r\gamma_{BD})\partial_r\gamma_{CE} - \frac{1}{4}\beta_A\beta_F\beta^F\gamma^{BC}\gamma^{DE}(\partial_r\gamma_{BC})\partial_r\gamma_{DE}\bigg]
\end{align}

\begin{align}
\label{eq:grtensors-einstein-AB}
\tilde{E}_{AB} = \frac{1}{D} \bigg[
& - 2(d-1)\partial_r\partial_u\gamma_{AB} + (d-1)(d-2)r^{-1}\partial_u\gamma_{AB} + 2(d-1)\mathcal{R}_{AB}\nonumber\\
& - 2(d-1)\partial_r D_{(A}\beta_{B)} + 2(d-1)(d-2)r^{-1}D_{(A}\beta_{B)}\nonumber\\
& - 2(d-1)(\partial_r\alpha)\partial_r\gamma_{AB} - 2(d-1)\alpha\partial_r^2\gamma_{AB}\nonumber\\
& + 2(d-1)(d-2)r^{-1}\alpha(\partial_r\gamma_{AB} - r^{-1}\gamma_{AB})\nonumber\\
& - (d-1)(D^C\beta_C)\partial_r\gamma_{AB} - (d-1)\beta^C D_C\partial_r\gamma_{AB} - (d-1)(\partial_r\beta_A)\partial_r\beta_B\nonumber\\
& - \gamma_{AB}\mathcal{R} + 2\gamma_{AB}\partial_r^2\alpha + \gamma_{AB}D^C\partial_r\beta_C + \gamma_{AB}\gamma^{CD}\partial_r D_C\beta_D\nonumber\\
& + \frac{1}{2}\gamma_{AB}D_E(\beta^E\gamma^{CD}\partial_r\gamma_{CD})\nonumber\\
& + 2\gamma_{AB}\gamma^{CD}\partial_r\partial_u\gamma_{CD} + 2\alpha\gamma_{AB}\gamma^{CD}\partial_r^2\gamma_{CD} + 2(\partial_r\alpha)\gamma_{AB}\gamma^{CD}\partial_r\gamma_{CD} \nonumber\\
& - \frac{d-1}{2}\gamma^{CD}\left((\partial_u\gamma_{AB})\partial_r\gamma_{CD} + (\partial_r\gamma_{AB})\partial_u\gamma_{CD}\right)\nonumber\\
& + (d-1)\gamma^{CD}\left((\partial_r\gamma_{AC})\partial_u\gamma_{BD} + (\partial_u\gamma_{AC})\partial_r\gamma_{BD}\right)\nonumber\\
& - \frac{3}{2}\gamma_{AB}\gamma^{CD}\gamma^{EF}(\partial_r\gamma_{CE})\partial_u\gamma_{DF} + \frac{1}{2}\gamma_{AB}\gamma^{CD}\gamma^{EF}(\partial_r\gamma_{CD})\partial_u\gamma_{EF}\nonumber\\
& - (d-1)\alpha\gamma^{CD}(\partial_r\gamma_{AB})\partial_r\gamma_{CD} + 2(d-1)\alpha\gamma^{CD}(\partial_r\gamma_{AC})\partial_r\gamma_{BD}\nonumber\\
& - \frac{3}{2}\alpha\gamma_{AB}\gamma^{CD}\gamma^{EF}(\partial_r\gamma_{CE})\partial_r\gamma_{DF} + \frac{1}{2}\alpha\gamma_{AB}\gamma^{CD}\gamma^{EF}(\partial_r\gamma_{CD})\partial_r\gamma_{EF}\nonumber\\
& + 2(d-1)\beta^C(\partial_r\beta_{(A})\partial_r\gamma_{B)C} - 2(d-1)\beta^C(\partial_r\beta_C)\partial_r\gamma_{AB}\nonumber\\
& - (d-1)\gamma^{CD}(D_{(A}\beta_{B)})\partial_r\gamma_{CD} + 2(d-1)(D^C\beta_{(A})\partial_r\gamma_{B)C}\nonumber\\
& - \gamma_{AB}\gamma^{CD}\gamma^{EF}(\partial_r\gamma_{CE})D_{(D}\beta_{F)} + \frac{1}{2}\gamma_{AB}(D^E\beta_E)\gamma^{CD}\partial_r\gamma_{CD}\nonumber\\
& + \frac{3}{2}\gamma_{AB}\gamma^{CD}(\partial_r\beta_C)\partial_r\beta_D + 2\gamma_{AB}\beta^C\partial_r^2\beta_C + 2\gamma_{AB}\gamma^{CD}\beta^E D_{[E}\partial_r\gamma_{C]D}\nonumber\\
& - (d-1)\beta_C\beta^C\partial_r^2\gamma_{AB} + (d-1)(d-2)r^{-1}\beta_C\beta^C(\partial_r\gamma_{AB} - r^{-1}\gamma_{AB})\nonumber\\
& + 2(d-1)\beta^C\beta^D(\partial_r\gamma_{A[B})\partial_r\gamma_{C]D}\nonumber\\
& - 3\gamma_{AB}\gamma^{CD}\beta^E(\partial_r\gamma_{CE})\partial_r\beta_D + 2\gamma_{AB}\gamma^{CD}\beta^E(\partial_r\gamma_{CD})\partial_r\beta_E\nonumber\\
& - \gamma_{AB}\beta^C\beta^D\partial_r^2\gamma_{CD} + \gamma_{AB}\beta_E\beta^E\gamma^{CD}\partial_r^2\gamma_{CD}\nonumber\\
& - \frac{d-1}{2}\beta_E\beta^E\gamma^{CD}(\partial_r\gamma_{AB})\partial_r\gamma_{CD} + (d-1)\beta_E\beta^E\gamma^{CD}(\partial_r\gamma_{AC})\partial_r\gamma_{BD}\nonumber\\
& + \frac{3}{2}\gamma_{AB}\gamma^{CD}\beta^E\beta^F(\partial_r\gamma_{CE})\partial_r\gamma_{DF} - \gamma_{AB}\gamma^{CD}\beta^E\beta^F(\partial_r\gamma_{CD})\partial_r\gamma_{EF}\nonumber\\
& - \frac{3}{4}\gamma_{AB}\beta_G\beta^G\gamma^{CD}\gamma^{EF}(\partial_r\gamma_{CE})\partial_r\gamma_{DF} + \frac{1}{4}\gamma_{AB}\beta_G\beta^G\gamma^{CD}\gamma^{EF}(\partial_r\gamma_{CD})\partial_r\gamma_{EF}
\bigg]
\end{align}

\subsection{Riemann Tensor}
\label{app:grtensors-riemann}
Components of the Riemann tensor are required when commuting derivatives on spinors in the Dirac equation.
Two components are also needed to calculate components of the Weyl tensor used in the calculation of the Bondi mass.
These components are provided here, up to the order of $r$ to which they are needed.
\medskip\noindent
The Riemann tensor is given by
\begin{equation}
\label{eq:grtensors-riemann}
\tilde{R}_{abc}{}^d = \partial_b\tilde\Gamma^d{}_{ac} - \partial_a\tilde\Gamma^d{}_{ab} + \tilde\Gamma^e{}_{ac}\tilde\Gamma^d{}_{eb} - \tilde\Gamma^e{}_{bc}\tilde\Gamma^d{}_{ea} \quad .
\end{equation}
\medskip\noindent
The components we require are
\begin{subequations}
\begin{align}
\label{eq:grtensors-riemann-rur-r}
\tilde{R}_{rur}{}^r &= \partial_r^2\alpha + O(r^{d-2}) \quad , \\
\label{eq:grtensors-riemann-rur-A}
\tilde{R}_{rur}{}^B &= O(r^{(d-4)/2}) \quad ,
\end{align}
\begin{align}
\label{eq:grtensors-riemann-rAr-r}
\tilde{R}_{rAr}{}^r &= \frac{1}{2}(\partial_r^2\beta_A) - \frac{1}{2}\beta^B(\partial_r^2\gamma_{AB}) - \frac{1}{4}\gamma^{BC}(\partial_r\gamma_{AB})\partial_r\beta_C \nonumber \\
& + \frac{1}{4}\gamma^{BC}\beta^D(\partial_r\gamma_{AB})\partial_r\gamma_{CD} \quad ,
\end{align}
\begin{equation}
\label{eq:grtensors-riemann-rAu-r}
\tilde{R}_{rAu}{}^r = -\alpha\partial_r^2\beta_A - \frac{1}{2}\beta^B\partial_r\partial_u\gamma_{AB} + O(r^{d-1}) \quad ,
\end{equation}
\begin{align}
\label{eq:grtensors-riemann-rAB-r}
\tilde{R}_{rAB}{}^r &=  \frac{1}{2}\mathscr{D}_B\partial_r\beta_A + \frac{1}{2}\partial_r\partial_u\gamma_{AB} + \frac{1}{2}(\partial_r\alpha)\partial_r\gamma_{AB} + \alpha\partial_r^2\gamma_{AB} \nonumber \\
& - \frac{1}{4}\gamma^{CD}(\partial_u\gamma_{AC})\partial_r\gamma_{BD} \quad ,
\end{align}
\begin{equation}
\label{eq:grtensors-riemann-rAr-u}
\tilde{R}_{rAr}{}^u = 0 \quad ,
\end{equation}
\begin{equation}
\label{eq:grtensors-riemann-rAu-u}
\tilde{R}_{rAu}{}^u = -\frac{1}{2}\partial_r^2\beta_A + \frac{1}{4}\gamma^{BC}(\partial_r\gamma_{AB})\partial_r\beta_C \quad ,
\end{equation}
\begin{equation}
\label{eq:grtensors-riemann-rAB-u}
\tilde{R}_{rAB}{}^u = \frac{1}{2}\partial_r^2\gamma_{AB} - \frac{1}{4}\gamma^{CD}(\partial_r\gamma_{AC})\partial_r\gamma_{BD} \quad ,
\end{equation}
\begin{equation}
\label{eq:grtensors-riemann-rAr-C}
\tilde{R}_{rAr}{}^C = -\frac{1}{2}\gamma^{BC}\partial_r^2\gamma_{AB} + \frac{1}{4}\gamma^{BC}\gamma^{DE}(\partial_r\gamma_{AD})\partial_r\gamma_{BE} \quad ,
\end{equation}
\begin{align}
\tilde{R}_{rAu}{}^C &= -\frac{1}{2}\mathscr{D}^C\partial_r\beta_A - \frac{1}{2}\gamma^{BC}\partial_r\partial_u\gamma_{AB} - \frac{1}{2}(\partial_r\alpha)\gamma^{BC}\partial_r\gamma_{AC} \nonumber \\
& + \frac{1}{4}\gamma^{BC}\gamma^{DE}(\partial_u\gamma_{AD})\partial_r\gamma_{BE} + O(r^{d-2}) \quad ,
\end{align}
\begin{align}
\tilde{R}_{rAB}{}^C &= \frac{1}{2}\gamma^{CD}D_A\partial_r\gamma_{BD} - \partial_r\tilde\Lambda^C{}_{AB} + \frac{1}{2}\beta^C(\partial_r^2\gamma_{AB}) \nonumber \\
& + \frac{1}{2}\gamma^{CD}(\partial_r\gamma_{A[B})\partial_r\beta_{D]} + O(r^{d-1}) \quad ,
\end{align}
\end{subequations}
With all indices lowered, the Riemann tensor components are as follows.
\begin{subequations}
\begin{align}
\label{eq:grtensors-riemann-ruru}
\tilde{R}_{ruru}
& = \tilde{R}_{rur}{}^d\tilde{g}_{ud} \nonumber \\
& = \tilde{R}_{rur}{}^r \nonumber \\
& = \partial_r^2\alpha + O(r^{d-2}) \quad ,
\end{align}
\begin{align}
\label{eq:grtensors-riemann-rurA}
\tilde{R}_{rurA}
& = \tilde{R}_{rur}{}^d\tilde{g}_{dA} \nonumber \\
& = -\beta_A\tilde{R}_{rur}{}^u + \gamma_{AB}\tilde{R}_{rur}{}^B \nonumber\\
& = O(r^{(d-4)/2}) \quad .
\end{align}
\begin{align}
\label{eq:grtensors-riemann-rAru}
\tilde{R}_{rAru} = \frac{1}{2}\partial_r^2\beta_A - \frac{1}{4}\gamma^{BC}(\partial_r\gamma_{AB})\partial_r\beta_C + O(r^{d-1}) \quad ,
\end{align}
\begin{align}
\label{eq:grtensors-riemann-rArB}
\tilde{R}_{rArB} = -\frac{1}{2}\partial_r^2\gamma_{AB} + \frac{1}{4}\gamma^{CD}(\partial_r\gamma_{AC})\partial_r\gamma_{BD} + O(r^{d-1}) \quad ,
\end{align}
\begin{align}
\label{eq:grtensors-riemann-rAuB}
\tilde{R}_{rAuB} &= - \frac{1}{2}\mathscr{D}_B\partial_r\beta_A - \frac{1}{2}\partial_r\partial_u\gamma_{AB} - \frac{1}{2}(\partial_r\alpha)\partial_r\gamma_{AB} \nonumber \\
& + \frac{1}{4}\gamma^{CD}(\partial_u\gamma_{AC})\partial_r\gamma_{BD} + O(r^{d-2}) \quad ,
\end{align}
\begin{align}
\label{eq:grtensors-riemann-rABD}
\tilde{R}_{rABD} = -D_{[B}\partial_r\gamma_{D]A} + \frac{1}{2}(\partial_r\gamma_{A[B})\partial_r\beta_{D]} + O(r^{d-2}) \quad .
\end{align}
\end{subequations}

\subsection{Weyl Tensor}
\label{app:grtensors-weyl}
Two components of the unphysical Weyl tensor are required in the calculation of the Bondi mass formula (\ref{eq:bondi-mass-formula}).
Those components are provided here, only to the order of $r$ to which they are required.
The Weyl tensor is given by
\begin{equation}
\label{eq:grtensors-weyl}
\tilde{C}_{abcd} = \tilde{R}_{abcd} - \frac{2}{d-2}(\tilde{g}_{a[c}\tilde{R}_{d]b} - \tilde{g}_{b[c}\tilde{R}_{d]a} + \frac{2}{(d-1)(d-2)}\tilde{R}\tilde{g}_{a[c}\tilde{g}_{d]b} \quad .
\end{equation}
The components we want are
\begin{subequations}
\begin{align}
\tilde{C}_{ruru}
\iftoggle{complete}{
& = \tilde{R}_{ruru} - \frac{2}{d-2}(\tilde{g}_{r[r}\tilde{R}{u]u} - \tilde{g}_{u[r}\tilde{R}{u]r}) + \frac{2}{(d-1)(d-2)}\tilde{R}\tilde{g}_{r[r}\tilde{g}{u]u} \nonumber \\
& = \tilde{R}_{ruru} + \frac{2}{d-2}\tilde{R}_{ru} - \frac{1}{(d-1)(d-2)}\tilde{R} \nonumber\\
}{}
& = \tilde{R}_{ruru} + \tilde{S}_{ru} \nonumber\\
& = \partial_r^2\alpha - 2r^{-1}(\partial_\alpha - r^{-1}\alpha) + O(r^{d-2}) \\
\tilde{C}_{rurA}
\iftoggle{complete}{
& = \tilde{R}_{rurA} - \frac{2}{d-2}(\tilde{g}_{r[r}\tilde{R}_{A]u} - \tilde{g}_{u[r}\tilde{R}_{A]r}) + \frac{2}{(d-1)(d-2)}\tilde{R}\tilde{g}_{r[r}\tilde{g}_{A]u} \nonumber \\
}{}
& = -\beta_A\tilde{R}_{rur}{}^r + \gamma_{AB}\tilde{R}_{rur}{}^B + \frac{1}{2}\tilde{S}_{rA} \nonumber \\
& = O(r^{(d-4)/2}) \quad .
\end{align}
\end{subequations}

\normalsize

\section{Spinor Formalism in GNCs}
\label{app:grspinors}

Basic spinor formalism and results are given in Gaussian null coordinates.

\subsection{Spinor Basis}
\label{app:grspinors-basis}

\begin{minipage}{.5\columnwidth}
\begin{subequations}
\begin{align}
\label{eq:grspinors-spinor-basis-+_a}
\tilde{e}^+_a &= du_a \\
\label{eq:grspinors-spinor-basis--_a}
\tilde{e}^-_a &= dr_a - \alpha du_a - \beta_A dx^A_a \\
\label{eq:grspinors-spinor-basis-A_a}
\tilde{e}^{\hat{A}}_a &= \hat{\tilde{e}}^{\hat{A}}_A dx^A_a
\end{align}
\end{subequations}
\end{minipage}
\begin{minipage}{.5\columnwidth}
\begin{subequations}
\begin{align}
\label{eq:grspinors-spinor-basis-+^a}
\tilde{e}^{+a} &= \partial_r^a \\
\label{eq:grspinors-spinor-basis--^a}
\tilde{e}^{-a} &= \alpha\partial_r^a + \partial_u^a \\
\label{eq:grspinors-spinor-basis-A^a}
\tilde{e}^{\hat{A}a} &= \beta_A\hat{\tilde{e}}^{\hat{A}A}\partial_r^a + \hat{\tilde{e}}^{\hat{A}A}\partial_A^a
\end{align}
\end{subequations}
\end{minipage}

\subsection{Gamma Matrices}
\label{app:grspinors-gamma-matrices}

\begin{subequations}
\begin{alignat}{4}
\label{eq:grspinors-gamma-matrices}
\tilde{\Gamma}^a &= \Gamma_\mu \tilde{e}^{\mu a} &&= (\Gamma_+ + \alpha\Gamma_- + \beta_A\tilde\Lambda^A)\partial_r^a + \Gamma_-\partial_u^a + \tilde\Lambda^A\partial_A^a \\
\label{eq:grspinors-gamma-matrices-down}
\tilde{\Gamma}_a &= \Gamma_\mu \tilde{e}^\mu_a &&= \Gamma_-dr_a + (\Gamma_+ - \alpha\Gamma_-)du_a + (-\beta_A\Gamma_- + \tilde\Lambda_A)dx^A_a
\end{alignat}
\end{subequations}

\subsection{Spinor Connection}
\label{app:grspinors-spinor-connection}

The covariant derivative satisfying $\tilde\nabla_a\tilde\Gamma_b=0$ is defined by the connection coefficients,
\begin{equation}
\label{eq:grspinors-spinor-connection-coefficients-formula}
\tilde\omega_a = \tilde\omega^{\mu\nu}_a [\Gamma_\mu, \Gamma_\nu] = \frac{1}{8} \tilde{e}^{\mu b} (\tilde\nabla_a \tilde{e}^\nu_b) [\Gamma_\mu, \Gamma_\nu]
\end{equation}
Its components in Gaussian null coordinates are given below.

\begin{subequations}
\label{eq:grspinors-spinor-connection-coefficients}

\begin{align}
\label{eq:grspinors-spinor-connection-coefficient-r}
\tilde\omega_r &= -\frac{1}{4}(\partial_r\beta_A)\tilde\Lambda^A\Gamma_- - \frac{1}{8}\gamma^{AB}\partial_r\gamma_{AB} + \frac{1}{4}\tilde\Lambda^A\partial_r\tilde\Lambda_A
\end{align}

\begin{align}
\label{eq:grspinors-spinor-connection-coefficient-u}
\tilde\omega_u 
& = \frac{1}{2}(\partial_r\alpha)(P_+ - P_-) - \frac{1}{4}(\partial_r\beta_A)\tilde\Lambda^A\Gamma_+ + \frac{1}{4}\tilde\Lambda^A\partial_u\tilde\Lambda_A\nonumber\\
& + \left( -\frac{1}{2}\tilde{D}_A\alpha + \frac{1}{2}\partial_u\beta_A + \frac{1}{4}\alpha\partial_r\beta_A - \frac{1}{2}\beta_A\partial_r\alpha\right)\tilde\Lambda^A\Gamma_-\nonumber\\
& + \left(\frac{1}{4}\beta_{[A}\partial_r\beta_{B]} + \frac{1}{4}\tilde{D}_{[A}\beta_{B]}\right)\tilde\Lambda^A\tilde\Lambda^B - \frac{1}{8}\gamma^{AB}\partial_u\gamma_{AB}
\end{align}

\begin{align}
\label{eq:grspinors-spinor-connection-coefficient-A}
\tilde\omega_A
& = \tilde\Omega_A + \frac{1}{4}(\partial_r\beta_A)(P_+ - P_-) - \frac{1}{4}(\partial_r\gamma_{AB})\tilde\Lambda^B\Gamma_+\nonumber\\
& + \left(\frac{1}{4}\partial_u\gamma_{AB} - \frac{1}{2}\tilde{D}_{[A}\beta_{B]} + \frac{1}{4}\alpha\partial_r\gamma_{AB} + \frac{1}{4}(\partial_r\beta_A)\beta_B\right)\tilde\Lambda^B\Gamma_-\nonumber\\
& + \frac{1}{4}(\partial_r\gamma_{A[B})\beta_{C]}\tilde\Lambda^B\tilde\Lambda^C
\end{align}
\end{subequations}

\iftoggle{complete}{
\subsubsection*{Components}

\begin{subequations}
\begin{align}
\label{eq:grspinors-spinor-connection-coefficient-r+-}
\tilde\omega^{+-}_r &= 0\\
\label{eq:grspinors-spinor-connection-coefficient-r+A}
\tilde\omega^{+\hat{A}}_r &= 0\\
\label{eq:grspinors-spinor-connection-coefficient-r-A}
\tilde\omega^{-\hat{A}}_r &= \frac{1}{16}\hat{\tilde{e}}^{\hat{A}A}\partial_r\beta_A\\
\label{eq:grspinors-spinor-connection-coefficient-rAB}
\tilde\omega^{\hat{A}\hat{B}}_r &= \frac{1}{8}\hat{\tilde{e}}^{\hat{A}A}\partial_r\hat{\tilde{e}}^{\hat{B}}_A - \frac{1}{16}\hat{\tilde{e}}^{\hat{A}A}\hat{\tilde{e}}^{\hat{B}B}\partial_r\gamma_{AB}
\end{align}
\end{subequations}

\begin{subequations}
\begin{align}
\label{eq:grspinors-spinor-connection-coefficient-u+-}
\tilde\omega^{+-}_u &= \frac{1}{8}\partial_r\alpha\\
\label{eq:grspinors-spinor-connection-coefficient-u+A}
\tilde\omega^{+\hat{A}}_u &= \frac{1}{16}\hat{\tilde{e}}^{\hat{A}A}\partial_r\beta_A\\
\label{eq:grspinors-spinor-connection-coefficient-u-A}
\tilde\omega^{-\hat{A}}_u &= \left(-\frac{1}{8}\tilde{D}_A\alpha + \frac{1}{8}\partial_u\beta_A + \frac{1}{16}\alpha\partial_r\beta_A - \frac{1}{8}\beta_A\partial_r\alpha\right)\hat{\tilde{e}}^{\hat{A}A}\\
\label{eq:grspinors-spinor-connection-coefficient-uAB}
\tilde\omega^{\hat{A}\hat{B}}_u &= \frac{1}{8}\hat{\tilde{e}}^{\hat{A}A}\partial_u\hat{\tilde{e}}^{\hat{B}}_A + \left( -\frac{1}{16}\partial_u\gamma_{AB} + \frac{1}{8}\beta_{[A}\partial_r\beta_{B]} + \frac{1}{8}\tilde{D}_{[A}\beta_{B]}\right)\hat{\tilde{e}}^{\hat{A}A}\hat{\tilde{e}}^{\hat{B}B}
\end{align}
\end{subequations}

\begin{subequations}
\begin{align}
\label{eq:grspinors-spinor-connection-coefficient-C+-}
\tilde\omega^{+-}_C &= \frac{1}{16}\partial_r\beta_C\\
\label{eq:grspinors-spinor-connection-coefficient-C+A}
\tilde\omega^{+\hat{A}}_C &= - \frac{1}{16}\hat{\tilde{e}}^{\hat{A}A}\partial_r\gamma_{AC}\\
\label{eq:grspinors-spinor-connection-coefficient-C-A}
\tilde\omega^{-\hat{A}}_C &= \left(-\frac{1}{8}\tilde{D}_{[A}\beta_{C]} - \frac{1}{16}\partial_u\gamma_{AC} - \frac{1}{16}\alpha\partial_r\gamma_{AC} - \frac{1}{16}\beta_A\partial_r\beta_C\right)\hat{\tilde{e}}^{\hat{A}A}\\
\label{eq:grspinors-spinor-connection-coefficient-CAB}
\tilde\omega^{\hat{A}\hat{B}}_C &= \tilde\Omega^{\hat{A}\hat{B}} - \frac{1}{8}\hat{\tilde{e}}^{\hat{A}A}\hat{\tilde{e}}^{\hat{B}B}\beta_{[A}\partial_r\gamma_{B]C}
\end{align}
\end{subequations}
}{}

\subsection{Derivatives of Gamma Matrices}
\label{app:grspinors-gamma-matrices-derivatives}
Here we give expressions for the $r$ derivatives of the Gamma matrices $\Gamma_+,\Gamma_-,\tilde\Lambda^A$.
The derivations of these expressions are given in Note \ref{note:gamma-matrices-derivatives}.
The $n^{th}$ $r$ derivatives of the flat space Gamma matrices $\Gamma_+$ and $\Gamma_-$ are given by, for $n\geq1$,
\begin{subequations}
\label{eq:grspinors-gamma-matrices-derivatives}
\begin{align}
\label{eq:grspinors-gamma-matrices-derivatives-+}
& \tilde\nabla_r^n\Gamma_+ = - \frac{1}{2}(\partial_r^n\beta_A)\tilde\Lambda^A + \frac{1}{4}\sum_{k=1}^{n-1}\frac{(n-1)!}{j!(n-1-j)!}\gamma^{AB}(\partial_r\gamma_{AC}^{(j)})(\partial_r^{n-j}\beta_B)\tilde\Lambda^C + O(r^{d-n}) \quad , \\
\label{eq:grspinors-gamma-matrices-derivatives--}
& \tilde\nabla_r^n\Gamma_- = 0 \quad .
\end{align}
The the $n^{th}$ $r$ derivative of the curved space gamma matrix $\tilde\Lambda^A$ is given by, for $n\geq1$,
\begin{align}
\label{eq:grspinors-gamma-matrices-derivatives-A}
\tilde\nabla_r^n \tilde\Lambda^A
& = \frac{1}{2}\gamma^{AB}(\partial_r^n\beta_B)\Gamma_- - \frac{1}{2}\gamma^{AB}(\partial_r^n\gamma_{BC})\tilde\Lambda^{C(0)} \nonumber \\
& + \frac{1}{4}\sum_{k=1}^{n-1}a_{n,k}\gamma^{AB}\gamma^{CD}(\partial_r^j\gamma_{BC})(\partial_r^{n-j}\gamma_{DE})\tilde\Lambda^E + O(r^{d-n-1}) \quad ,
\end{align}
\end{subequations}
where $a_{n,k}$ are integer coefficients (whose values were difficult to determine, but are not important to the arguments where they are used).

\subsection{Commutation of Derivatives on Spinors}
\label{app:grspinors-derivative-commutation}
The commutation of unphysical covariant $r$ and $A$ derivatives on a spinor $\tilde\psi$ is given by
\begin{equation}
\tilde\nabla_r^n\tilde\nabla_A\tilde\psi = \tilde\nabla_A\tilde\nabla_r^n\tilde\psi + \sum_{k=1}^n \frac{n!}{k!(n-k)!}(\tilde\nabla_r^{k-1}\tilde{Q}_A)\tilde\nabla_r^{n-k}\tilde\psi \quad ,
\end{equation}
where
\begin{equation}
\tilde{Q}_A = \frac{1}{8}\tilde{R}_{rAcd}[\tilde\Gamma^c, \tilde\Gamma^d] \quad .
\end{equation}
Thus, we have
\begin{align}
\label{eq:grspinors-derivative-commutator}
\tilde{Q}_A
& = \frac{1}{8}\tilde{R}_{rAcd}[\tilde\Gamma^c,\tilde\Gamma^d] \nonumber \\
& = \frac{1}{4}\tilde{R}_{rAru}[\tilde\Gamma^r,\tilde\Gamma^u] + \frac{1}{4}\tilde{R}_{rArB}[\tilde\Gamma^r,\tilde\Gamma^B]  + \frac{1}{4}\tilde{R}_{rAuB}[\tilde\Gamma^u,\tilde\Gamma^B] + \frac{1}{8}\tilde{R}_{rABD}[\tilde\Gamma^A,\tilde\Gamma^B] \nonumber \\
& = \frac{1}{2}(P_+ - P_- + \beta_A\tilde\Lambda^A\Gamma_-) - \frac{1}{2}\tilde{R}_{rArB}(\tilde\Lambda^B\Gamma_+ + \alpha\tilde\Lambda^B\Gamma_- - \frac{1}{2}\beta_A[\tilde\Lambda^B,\tilde\Lambda^A]) \nonumber \\
& - \frac{1}{4}\tilde{R}_{rAuB}\tilde\Lambda^B\Gamma_- + \frac{1}{4}\tilde{R}_{rABD}\tilde\Lambda^B\tilde\Lambda^D \quad .
\end{align}
Substituting in the expressions from Appendix \ref{app:grtensors-riemann}, we get the final expression for $\tilde{Q}_A$.
\begin{align}
\label{eq:grspinors-derivative-commutator-formula}
\tilde{Q}_A
& = \left(\frac{1}{4}\partial_r^2\beta_A - \frac{1}{8}\gamma^{BC}(\partial_r\gamma_{AB})\partial_r\beta_C\right)(P_+ - P_-) \nonumber \\
& + \left(\frac{1}{4}\partial_r^2\gamma_{AB} - \frac{1}{8}\gamma^{CD}(\partial_r\gamma_{AC})\partial_r\gamma_{BD}\right)\tilde\Lambda^B\Gamma_+ \nonumber \\
& + \left(\frac{1}{4}\alpha\partial_r^2\gamma_{AB} + \frac{1}{4}(\partial_r\alpha)\partial_r\gamma_{AB}\right)\tilde\Lambda^B\Gamma_- \nonumber \\
& + \left(\frac{1}{4}\mathscr{D}_B\partial_r\beta_A + \frac{1}{4}\partial_r\partial_u\gamma_{AB} - \frac{1}{8}\gamma^{CD}(\partial_u\gamma_{AC})\partial_r\gamma_{BD}\right)\tilde\Lambda^B\Gamma_- \nonumber \\
& + \left(-\frac{1}{4}D_{[B}\partial_r\gamma_{D]A} + \frac{1}{4}(\partial_r^2\gamma_{A[B}\beta_{D]} + \frac{1}{8}(\partial_r\gamma_{A[B})\partial_r\beta_{D]}\right)\tilde\Lambda^B\tilde\Lambda^D \nonumber \\
& + O(r^{d-2}) \quad .
\end{align}
For $0\leq n\leq d-2$, we have
\begin{align}
\tilde\nabla_r^n\tilde{Q}_A
& = \left(\frac{1}{4}\partial_r^{n+2}\beta_A - \frac{1}{8}\sum_{k=0}^n \frac{n!}{k!(n-k)!} \gamma^{BC}(\partial_r^{k+1}\gamma_{AB})\partial_r^{n+1-k}\beta_C\right)(P_+ - P_-) \nonumber \\
& + \left(\frac{1}{4}\partial_r^{n+2}\gamma_{AB} - \frac{1}{8}\sum_{k=0}^n \frac{n!}{k!(n-k)!}\gamma^{BC}(\partial_r^{k+1}\gamma_{AC})\partial_r^{n+1-k}\gamma_{BD}\right)\tilde\Lambda^B\Gamma_+ \nonumber \\
& + \frac{1}{4}\sum_{k=1}^n \frac{n!}{k!(n-k)!}(\partial_r^{n+2-k}\gamma_{AB})(\tilde\nabla_r^k\tilde\Lambda^B)\Gamma_+ \nonumber \\
& + \frac{1}{4}\sum_{k=1}^n\frac{n!}{k!(n-k)!}(\partial_r^{n+2-k}\gamma_{AB})\tilde\Lambda^B\tilde\nabla_r^k\Gamma_+ \nonumber \\
& + \left(\frac{n(n+1)}{8}(\partial_r^2\alpha)\partial_r^n\gamma_{AB} + \frac{n+1}{4}(\partial_r\alpha)\partial_r^{n+1}\gamma_{AB} + \frac{1}{4}\alpha\partial_r^{n+2}\gamma_{AB}\right)\tilde\Lambda^B\Gamma_- \nonumber \\
& + \frac{1}{4}(\mathscr{D}_B\partial_r^{n+1}\beta_A)\tilde\Lambda^B\Gamma_- + \frac{1}{4}\sum_{k=0}^n\frac{n!}{k!(n-k)!}(\partial_r^{k+1}\partial_u\gamma_{AB})(\tilde\nabla_r^{n-k}\tilde\Lambda^B)\Gamma_- \nonumber \\
& - \frac{1}{8}\sum_{k=0}^n \frac{n!}{k!(n-k)!}\gamma^{CD}(\partial_r^k\partial_u\gamma_{AC})(\partial_r^{n+1-k}\gamma_{BD})\tilde\Lambda^B\Gamma_- \nonumber \\
& + \frac{1}{8}\sum_{k=0}^{n+1} b_{k,n} (\partial_r^{n+2-k}\gamma_{A[B})(\partial_r^k\beta_{D]}\tilde\Lambda^B\tilde\Lambda^D \nonumber \\
& - \frac{1}{4}\sum_{k=1}^n \frac{n!}{k!(n-k)!}(\partial_r^{n-k}D_{[B}\partial_r\gamma_{D]A})[(\tilde\nabla^k\tilde\Lambda^B)\tilde\Lambda^D + \tilde\Lambda^B(\tilde\nabla_r^k\tilde\Lambda^D)] \nonumber \\
& - \frac{1}{4}(\partial_r^nD_{[B}\partial_r\gamma_{D]A})\tilde\Lambda^B\tilde\Lambda^D + O(r^{d-2-n}) \quad ,
\end{align}
where in the second to last line $b_{k,n}$ are integer coefficients (which have not been determined since they are not needed).
This equation has already been proved for $n=0$ as in that case it is simply equation (\ref{eq:grspinors-derivative-commutator-formula}) when $n=0$.
It can then be proved for $n>0$ by induction.

\end{document}